  \newcommand{\oo}{\infty}
  \newcommand{\del}{\partial}
\renewcommand{\d}{\mathrm{d}}
\renewcommand{\dh}{\mathrm{d}_{\mathsf{h}}}
  \newcommand{\dv}{\mathrm{d}_{\mathsf{v}}}
  \newcommand{\eps}{\varepsilon}
         \def\oast{\circledast} 
  \newcommand{\sso}{\subset}
  \newcommand{\sse}{\subseteq}
  \newcommand{\id}{\mathrm{id}}
  \newcommand{\im}{\operatorname{im}}
  \newcommand{\coker}{\operatorname{coker}}
  \newcommand{\supp}{\operatorname{supp}}
  \newcommand{\Secs}{\mathrm{\Gamma}}
  \newcommand{\Forms}{\mathrm{\Omega}}
  \newcommand{\E}{\mathcal{E}}
  \newcommand{\EE}{\mathrm{E}}
  \newcommand{\EL}{\mathrm{EL}}
  \newcommand{\G}{\mathrm{G}}
\renewcommand{\H}{\mathrm{H}}
  \newcommand{\J}{\mathrm{J}}
  \newcommand{\K}{\mathrm{K}}
\renewcommand{\L}{\mathcal{L}}
\renewcommand{\S}{\mathcal{S}}
  \newcommand{\R}{\mathbb{R}}
  \newcommand{\Lie}{\mathcal{L}}
	\newcommand{\gK}{{}^g\K}
\begin{document}
\markboth{Igor Khavkine}
{Covariant phase space, constraints, gauge and the Peierls formula}

%
\catchline{}{}{}{}{}
%

\title{COVARIANT PHASE SPACE, CONSTRAINTS, GAUGE AND THE PEIERLS FORMULA}

\author{IGOR KHAVKINE}

\address{Department of Mathematics, Trento University, Via Sommarive, 14
- 38123 Povo (TN) Italy\\ igor.khavkine@unitn.it}

\maketitle

\begin{history}
\received{Day Month Year}
\revised{Day Month Year}
\end{history}

\begin{abstract}
It is well known that both the symplectic structure and the Poisson
brackets of classical field theory can be constructed directly from the
Lagrangian in a covariant way, without passing through the non-covariant
canonical Hamiltonian formalism. This is true even in the presence of
constraints and gauge symmetries. These constructions go under the names
of the covariant phase space formalism and the Peierls bracket. We
review both of them, paying more careful attention, than usual, to the
precise mathematical hypotheses that they require, illustrating them in
examples. Also an extensive historical overview of the development of
these constructions is provided. The novel aspect of our presentation is
a significant expansion and generalization of an elegant and quite
recent argument by Forger~\& Romero showing the equivalence between the
resulting symplectic and Poisson structures without passing through the
canonical Hamiltonian formalism as an intermediary. We generalize it to
cover theories with constraints and gauge symmetries and formulate
precise sufficient conditions under which the argument holds. These
conditions include a local condition on the equations of motion that we
call hyperbolizability, and some global conditions of cohomological
nature. The details of our presentation may shed some light on subtle
questions related to the Poisson structure of gauge theories and their
quantization.
\keywords{Classical field theory; Covariant phase space; Peierls bracket; Symplectic structure; Poisson structure.}
\end{abstract}

\ccode{PACS numbers:}

\setcounter{tocdepth}{3}
\tableofcontents

\section{Introduction}
A classical field theory is essentially defined by a local variational
principle for a given set of dynamical fields on a given spacetime
manifold.  The variational principle determines a set of partial
differential equations (PDEs), the equations of motion, to be emposed on
the dynamical fields. The equations of motion are typically hyperbolic
(or can be made so with gauge fixing if gauge invariance is present).
What distinguishes variational PDEs among the more general class of
hyperbolic PDEs is that their solution spaces can be naturally endowed
with symplectic and hence Poisson structure, making it into a phase
space. The algebra of smooth functions on the phase space with the
corresponding Poisson bracket then constitutes the Poisson algebra of
observables. The existence of this algebraic formulation is what allows
for quantization.

It is clear that the symplectic and Poisson structure on its phase space is a
crucial ingredient in the description of a classical field theory.  The
most common way of building these structures is via the canonical
formalism\cite{dirac-lect,ht} (sometimes known as the 3+1 Hamiltonian
formalism, when spacetime is 4-dimensional), which requires an explicit
choice of a time function, even when no one such choice is natural, and
the application of a Legendre transform, which may be only ambiguously
defined, for instance, in the presence of gauge invariance. However, it
is also well known that both can be defined completely
\emph{covariantly} (that is, without choosing an explicit time function
or applying a Legendre transform) directly from the Lagrangian, without
going through the canonical formalism. These methods are known,
respectively, as the \emph{covariant phase space
formalism}\cite{lw,cw,zuckerman} and the \emph{Peierls
bracket}.\cite{peierls,dewitt-qft} They are clearly preferable when the
canonical formalism explicitly breaks some of the natural symmetries of
the theory (any relativistic theory is an example).  The symplectic
2-form and the Poisson bivector constructed in this way are equivalent
(they are mutual inverses). Despite the covariance of both constructions,
until recently, their equivalence was only known via the intermediary of
the non-covariant canonical formalism.\cite{bhs,henneaux-elim} That is,
until Forger~\& Romero\cite{fr-pois} provided an elegant, completely
covariant proof of the equivalence in the case of a scalar field.

In this review, we describe in depth the constructions of symplectic and
Poisson structures of classical field theory, as well as their
equivalence, all in a covariant way. A novel contribution of our
exposition is an extension of the Forger-Romero argument to field
theories where constraints and gauge invariance are present. Another is
that we do not restrict ourselves to the class of wave-like equations
defined on Lorentzian manifolds (though that will be the main source of
our examples). We also pay special attention to several aspects that are
often omitted or left implicit in the existing literature. Neither the
covariant phase space nor the Peierls constructions are automatic. That
is, besides a given Lagrangian density, several conditions must be
fulfilled for the corresponding formulas to make sense. We make these
conditions explicit in both cases: existence of a Cauchy surface and
spacelike compact support for solutions, in the construction of the
symplectic structure, and existence of hyperbolic PDE system, with
retarded and advanced Green functions, closely related to the equations
of motions, in the construction of the Poisson structure. Furthermore,
the statement of equivalence also requires a certain sufficient
condition on the cohomological properties of the constraint and gauge
generator differential operators.

Note that we restrict our attention to the geometric and
algebraic aspects of the constructions and systematically avoid
analytical details. In particular, the construction of the space of
solutions of a PDE system on a spacetime of dimension greater than one
(which corresponds to an ordinary mechanical system) requires a theory
of infinite dimensional differential geometry. We treat the minimal
amount of the needed infinite dimensional geometry in a formal way. On
the other hand, an honest attempt to present the relevant functional
analytical details can be found in Refs.~\citen{bsf,fr-bv,rejzner-thesis,bfr}.
Our results could then form the core identities of a future
investigation along similar lines that could extend them beyond the
formal level.

In Sec.~\ref{sec:lin-inhom}, we present some background material on
Green functions and hyperbolic PDEs. There we define the notion of Green
hyperbolicity that will be crucial in the later construction of the
Peierls formula and the study of its properties. The main body of this
review is contained in Sec.~\ref{sec:classquant}. There, we review the
formal differential geometry of the (possibly infinite dimensional)
solution spaces (Sec.~\ref{sec:formal-dg}). Also, we review the
covariant phase space formalism (Secs.~\ref{sec:var-sys},
\ref{sec:symp-pois}) and the Peierls formula (Sec.~\ref{sec:symp-pois}).
Finally, Sec.~\ref{sec:equiv} shows the equivalence between the
corresponding symplectic and Poisson structures by a generalization of
the Forger-Romero argument, under precise sufficient conditions
(Secs.~\ref{sec:constr}, \ref{sec:gauge} and~\ref{sec:constr-gf}). We
conclude with some examples in Sec.~\ref{sec:examples} and a discussion
in Sec.~\ref{sec:discuss}. In addition, some further needed background
information is given in the appendices, which includes the jet bundle
approach to PDEs, conservation laws and variational forms, as well as a
generalization of causal structure on smooth manifolds beyond Lorentzian
geometry. In particular, \ref{sec:jets-pdes} sets up some notation that
is used throughout the paper to describe PDE systems.

Finally, we conclude this introductory section with an extensive (though
still incomplete) historical overview of the literature on the covariant
phase space formalism and the Peierls formula. This historical material
may be safely skipped at first reading.

\subsection{Historical overview}
The canonical formalism in
mechanics\cite{whittaker,goldstein,abraham-marsden,arnold,souriau} (what we
would now call the construction of the symplectic and Poisson structures
on the phase space of a mechanical system) has a long history and is
most closely associated with the names of Hamilton and Jacobi. Though,
undoubtedly, its roots go back even to Lagrange. Its main
features are (a) the identification of the phase space with the
space of initial data and (b) the use of the Legendre transform to
determine special coordinates on the phase space in which the symplectic
form takes a certain canonical form (hence naming the formalism).
Because of the relation of Poisson brackets to quantization, the problem
of the quantization of fields in the early 20th century required the
translation (see for instance Refs.~\citen{rosenfeld,dirac-lect}) of the
canonical formalism from mechanics to field theories (multiple
independent variables instead of just one).

As already mentioned earlier, the main features of the canonical
formalism quickly began to clash with the relativistic nature and
spacetime covariance of field theories. That was already clear in the
works of Rosenfeld\cite{rosenfeld} and Dirac\cite{dirac-lect}. The possibility that
both of these unpleasant features could be avoided became realized very
slowly and is still not in mainstream use by theoretical physicists.
It appears that parts of it were rediscovered multiple times and it is
hard to trace them to any one source. Below, we discuss some key
references that made it clear that it is possible to construct the phase
space itself, as well as its symplectic and Poisson structures in a
fully covariant way, avoiding both features (a) and (b) of the canonical
formalism.

An important figure in some of the developments described below is that
of Souriau, despite lack of many explicit references to his work.
Perhaps his importance is not surprising, since he was one of the people
responsible for abstracting (in the 1960s) the modern notion of a
symplectic manifold as the appropriate arena for mechanics.\cite{souriau}
In particular, the name of Souriau is closely associated to identifying
the classical phase space with the set of solutions of the equations of
motion, rather than the set of initial data. Also, in Souriau's
book\cite{souriau} can be found a construction of the symplectic structure
on the phase space directly from the Lagrangian, which he attributes to
Lagrange himself.\cite{lagrange} Even though his book treated only
mechanical systems and not field theories, these ideas seem to have been
rather influential.

\subsubsection{Peierls formula}
The covariant construction of Poisson brackets in field theory can in
fact be traced to a single source: the seminal 1952 paper of
Peierls.\cite{peierls} In that paper, he introduced what is now known as
the \emph{Peierls bracket}, which we prefer to call the \emph{Peierls
formula}, as reviewed in Sec.~\ref{sec:formal-pois}. The formula for the
causal Green function as the difference of the retarded and advanced
Green functions, $\G = \G_+ - \G_-$, appeared there, though in a
somewhat implicit form. It is likely that Peierls was guided by
experience. Having seen the unequal time Poisson bracket (or rather the
quantum commutator) of point fields in many examples, computed using the
canonical method, but expressed in relativistically invariant form, he
was probably lead to a guess for its general formula. He showed that
this formula is in fact antisymmetric and is equivalent to the canonical
bracket for equal time fields, in the non-singular case (without gauge
invariance). He, however, did not give an independent proof of its non-degeneracy
or the Jacobi identity. Peierls showed that gauge invariance was not an
obstacle to defining the Poisson bracket by his method, as long as one
restricted oneself to gauge invariant observables. He also showed how
the formula extends to fermionic fields: each fermionic field can be
reduced to a bosonic one after multiplication by a formal anticommuting
parameter.

A somewhat later 1957 paper of Glaser, Lehmann~\& Zimmermann\cite{glz}
treated the perturbative expansion of interacting fields in terms of
retarded products of incoming free fields, in contrast to the usual
expansion in terms of time ordered products of asymptotic fields. As
their name suggests, retarded fields are defined using retarded Green
functions. They did not explicitly discuss Poisson structures, but their
work folds into the thread of ideas we are discussing in a slightly
different way. Their formula for the commutator of interacting fields
involved differences of of retarded and advanced Green functions, what
we would now call causal Green functions, which also occur in the
Peierls formula. This is not so surprising given the intimate
relationship between quantum commutators and Poisson brackets.

In 1960 came a paper of Segal\cite{segal-jmp} where he discussed the
canonical quantization of field theories with non-linear hyperbolic
equations of motion by identifying their phase space with the space of
solutions and endowing it with a Poisson structure in a covariant way.
His formula for the Poisson bracket also involved the causal Green
function $\G$. His construction appears to have been independent of
Peierls, but motivated much in the same way. In particular, he
constructs $\G$ not as the difference of retarded and advanced Green
functions, but as a distributional solution of the linearized equations
with specific initial conditions designed to reproduce the equal time
commutation relations. Further, though minor, developments of these
ideas appeared in a monograph\cite{segal-book} and in some conference
proceedings, including Ref.~\citen{segal-cargese}

Unfortunately, not many people paid attention to Peierls' paper.
Notable exceptions were Bergmann and DeWitt. In fact,
DeWitt\cite{dewitt-lect,dewitt-qft} quickly became an early adopter and
proponent. He can be said to be responsible for clarifying the role of
the causal Green function of the Jacobi equation (the linearized
Euler-Lagrange equation) in Peierls' construction and showing that it
can be consistently used with gauge fixing. Incidentally, he also
clarified the role of classical fermi fields in terms of anticommuting
Grassmann variables and thus seeded the germs of supergeometry.

In 1971, Steinmann\cite{steinmann} published a monograph where he adapted
the retarded products of Glaser, Lehmann~\& Zimmermann as a way of
formalizing renormalized perturbation theory within the context of
Axiomatic Quantum Field Theory.

DeWitt's formulation saw relatively few improvements until the early
'90s when Marolf\cite{marolf-thesis,marolf1,marolf2} (DeWitt's PhD
student at the time) realized that the Peierls formula can be taken
off-shell and define a Poisson bracket for off-shell observables. He
essentially showed that, by using the Peierls formula directly, the
(off-shell) field configuration space can be given the structure of a
degenerate (though regular) Poisson manifold. The symplectic leaves of
this Poisson structure are copies of the (on-shell) solution space with
the standard canonical symplectic structure. The following
interpretation is present though implicit in Marolf's papers:
given a Lagrangian density $\L[\phi] + \phi\cdot J$, where $\phi$
denotes the dynamical fields and $J$ the corresponding external
sources, the solution space corresponding to a fixed external source
profile $J$ is one leaf of the Poisson structure on the field
configuration space constructed from the source-less Lagrangian density
$\L[\phi]$. Unfortunately, most of Marolf's discussion is non-covariant,
as for simplicity it introduces a fixed time coordinate.

More recently, the Peierls formula became an important ingredient in the
construction of perturbative Algebraic Quantum Field Theory (pAQFT). Its
use came to prominence in the 2000s with
Refs.~\citen{df-peierls,brennecke-duetsch}. In these papers, cues were
taken partially from the perturbative QFT tradition of Refs.~\citen{glz,steinmann}, with
retarded, advanced and time-ordered products adapted to an off-shell
setting. At the same time, they realized that very similar formulas come
about at the classical level from the use of the off-shell Peierls
formula (as defined by Marolf) in perturbative classical field theory.
It was there that the use of the off-shell Peierls formula was
formulated in a systematic and covariant way. The first direct
demonstration of the Jacobi identity for the Peierls formula was given
in Ref.~\citen{brennecke-duetsch}. More recently, this formulation of
the off-shell Peierls formula has been extended to classical fermi
fields,\cite{rejzner-fermion} which paved the way for its inclusion in a
BV-BRST treatment of gauge theories in pAQFT.\cite{fr-bv,rejzner-thesis}

Another recent development is the incorporation of the off-shell Peierls
formula in a serious functional analytical effort to describe the
infinite dimensional spaces of field configurations and algebras of
observables using infinite dimensional differential
geometry.\cite{fr-bv,rejzner-thesis,bfr}

Finally, we should mention another recent paper\cite{hs-gauge} that has
a non-trivial overlap with this review. Its goals also include
clarifying the specific conditions needed to be satisfied by a linear
gauge theory that guarantee that the Peierls construction works. On the
other hand, their geometric set up is somewhat less general than ours
and they do not consider the relation with the covariant phase space
formalism in detail. Though, unlike here, they also treat fermi fields
and further consider quantization.

\subsubsection{Covariant phase space formalism}
Much of the impetus for the development of a covariant Poisson bracket
came from covariant quantization, or more precisely the need for a solid
classical analog of the unequal time commutation relations in QFT. On
the other hand, the development of the covariant phase space formalism
was spurred by the desire to understand conservation laws in field
theory, as well as trying to improve upon the canonical quantization
program of Dirac and Bergmann.

In the physics literature, the roots of this formalism, though in a
rather obscure form, can be found in the 1953 paper of Bergmann~\&
Schiller.\cite{bergmann-schiller} This paper was part of Bergmann's program
to study the implications of general covariance in General Relativity
(or any other second order covariant theory) for the structure of its
conservation laws, its stress energy tensor, and its canonical
quantization. There already appear formulas for what we call the
presymplectic current density and its potential.

This possibility of obtaining the symplectic structure of a theory
directly from the Lagrangian by the methods of Bergmann~\& Schiller
remained somewhat unknown, except possibly to a small group of experts.
For example, in 1962, Komar\cite{komar} (a student of Bergmann) used these
methods to define the symplectic structure on the space of initial data
on a null surface. Few other papers using this formalism appeared until
its apparently independent resurgence in the '80s.

The 1975 article by Ashtekar~\& Magnon\cite{ashtekar-magnon} used the
integrated symplectic potential current density as a symplectic form for
the Klein-Gordon field on curved spacetime, citing the work of
Segal\cite{segal-cargese} as a similar previous treatment in Minkowski
space, which goes back to the aforementioned Ref.~\citen{segal-jmp}.
However, it appears that the formalism of Ashtekar~\& Magnon was
privately inspired\cite{ashtekar-priv} by the ideas of Souriau, which
they generalized to field theories. Later Ashtekar also applied the same
formalism to general relativity.\cite{ashtekar-prl} Starting around this
time, perhaps again due to the influence of the ideas of Souriau, the
identification of the phase space with the space of solutions rather
than with that of initial data starts to become more prevalent.

Another independent appearance of the (pre)symplectic form as an
integral of the covariant (pre)symplectic current density is found in
the 1978 article of Friedman\cite{friedman} and a later article of
Friedman~\& Schutz.\cite{friedman-schutz} The examples considered there consisted
of a scalar field and of the combined system of linearized gravitational
and hydrodynamic modes describing a relativistic star. The latter
presymplectic form is degenerate due to general covariance
(gravitational gauge symmetry) and, in fact, its kernel was used in the
analysis of linear stability to discard possibly unstable unphysical
modes. Friedman did not cite any preceding sources, but apparently was
inspired\cite{friedman-priv} by some lectures on the covariant
treatment of conservation laws in variational theories (obviously
connected to symplectic structure by Noether's theorem) that were
delivered by Trautman at Chicago in 1971. The relevant content from
these lectures later appeared as Ref.~\citen{trautman}. Trautman's main
influence seems to have been the gradual refinement of the original
ideas of Noether\cite{noether} in connecting symmetries (including gauge
symmetries) with conservation laws in the calculus of variations with
multiple independent
variables.\cite{dedecker,takens,goldschmidt-sternberg} In the latter
formalism, the geometric structure that is closely related (but not
identical to) the covariant presymplectic (potential) current density is
the Poincar\'e-Cartan form.

Another independent instance of the covariant symplectic formalism,
though only for mechanical systems and not field theories, appeared in
the 1982 article of Henneaux\cite{henneaux-invprob} on the inverse problem
of the calculus of variations. No sources were cited in the article, but
apparently the main inspiration\cite{henneaux-priv} were the ideas of
Souriau.\cite{souriau}

The breakthrough point for a more widespread appreciation of the
covariant phase space formalism was the 1987 paper of Crnkovi\'c~\&
Witten\cite{cw}, which gave explicit covariant constructions for the
symplectic forms of scalar fields, Yang-Mills theory and General
Relativity. This paper was in fact an expansion of one section, where
this formalism was laid out in some generality, of an earlier paper of
Witten\cite{witten} on open string field theory. Witten did not cite any
preceding sources and it is not clear what his main influences were.

Almost simultaneously, very similar ideas were laid out by
Zuckerman.\cite{zuckerman} It is not completely clear what was Zuckerman's
main influence. It is likely to have been similar to Trautman's, as the
main preceding reference that he cites is the original article by
Noether.\cite{noether} He does, however cite\cite{sternberg,witten}
independent contemporary presentations of very similar ideas, including
a private letter of Deligne. Deligne's presentation of these ideas was
later recorded in the Ref.~\citen{deligne}, where only Zuckerman is cited
explicitly.

Soon thereafter, several reviews appear presenting the formalism in its
general and essentially modern form. Crnkovi\'c\cite{crnkovic} follows
Ref.~\citen{cw}, Lee~\& Wald follow Refs.~\citen{friedman,cw} and Ashtekar,
Bombelli~\& Reula\cite{abr} follow
Refs.~\citen{ashtekar-magnon,ashtekar-prl,souriau}. The Introduction in
Ref.~\citen{abr} contains further contemporary references. The paper by
Lee~\& Wald has to be singled out for its exceptional clarity of
presentation. The generality of the presentation was lacking
only the treatment of classical fermi fields, which appeared even in a
cursory way only in Ref.~\citen{deligne}. That was rectified only much later
in a paper of Hollands~\& Marolf.\cite{hollands-marolf}

In the mean time, some of the formal manipulations involved in deriving
the covariant presymplectic current density were formalized in the
framework of the variational bicomplex, as can be seen for instance in
Sec.~8.3 of Ref.~\citen{barnich-brandt}. It is essentially this presentation
that we review later in Sec.~\ref{sec:var-sys}.

\subsubsection{Equivalence}
While symplectic and Poisson structures are obviously related, as can be
seen from the above references, the two covariant formalisms that we have
described naturally appear in somewhat different problems. Thus, it is
not surprising that most researchers would prefer to use just one or
the other, without attempting to relate the two. Some have actually done
so, though, until recently only using the fact that they are both
equivalent (under appropriate assumptions of course) to the
corresponding canonical constructions.

Recall that in his original paper Peierls\cite{peierls} showed, by
explicit calculation, the equivalence of the Peierls formula with the
canonical Poisson bracket for non-singular field theories.
DeWitt\cite{dewitt-lect} extended that to the case of gauge fixed singular
theories, provided the canonical formalism was applied after gauge
fixing.

Even before that, having taken note of Peierls' paper, Bergmann's group
showed\cite{bgjn} that both the Bergmann-Schiller
and Peierls formulations of Poisson brackets are essentially equivalent
for non-singular field theories (those without gauge invariance). They
also compare the Peierls bracket with the Dirac bracket in singular
theories. Unfortunately, by modern standards their discussion is rather
obscure.

Much later, Barnich, Henneaux~\& Schomblond\cite{bhs} showed that the
Peierls bracket coincides with the Dirac bracket in the canonical
formalism even if both first and second class constraints are present
(hence including gauge theories). Their presentation is very clear. It
is also very insightful in the way that they included the covariant
phase space formalism. They showed that the canonical formalism is in
fact a special case of the covariant one, provided one uses Hamilton's
least action principle as the variational functional. Further, they
noted that the canonical variables can be introduced by adjoining
some auxiliary fields\cite{henneaux-elim} (the canonical momenta) to the
Lagrangian formulation, while showing that the symplectic structure is
invariant under the adjunction or elimination of auxiliary fields.

Finally, rather recently, a breakthrough appeared in a paper of
Forger~\& Romero.\cite{fr-pois} They managed to prove the equivalence of
the covariant constructions of the symplectic and Poisson structures in
an elegant and fully covariant way, thus without using the canonical
formalism as an intermediary. Their proof was restricted to the case of
a non-singular scalar field theory. It is their argument whose
generalization we present in expanded detail in the bulk of this review,
Sec.~\ref{sec:classquant}. It should be mentioned that Forger~\& Romero
also very clearly compared the geometric structure of the covariant
phase space formalism to the related but non-identical geometric
structure of the \emph{multisymplectic} formalism.

\section{Linear PDE theory}\label{sec:lin-inhom}
In this section, we describe a number of important technical results
about linear hyperbolic equations that will be crucial for the later
discussion of the Peierls bracket in Sec.~\ref{sec:classquant}. Here
we are mostly concerned with the linear algebra of the inhomogeneous
system of partial differential equation (PDE system)
\begin{equation}
	f[\phi] = \tilde{\alpha}^* ,
\end{equation}
on an $n$-dimensional spacetime manifold $M$, where $f\colon \Secs(F)
\to \Secs(\tilde{F}^*)$ is a linear partial differential operator acting
on smooth sections $\phi\in \Secs(F)$ of a \emph{field (vector) bundle}
$F\to M$ and taking values in the \emph{densitized dual} bundle
$\tilde{F}^*\to M$, where $\tilde{F}^* \cong F^*\otimes_M \Lambda^n M$
is the linear dual bundle $F^*$ tensored with the bundle of volume forms
on $M$.  Typically, $\tilde{\alpha}^* \in \Secs(F)$ is a compactly
supported \emph{dual density}. We choose to always have the PDE valued
in dual densities out of convenience, as will be evidenced later in
Sec.~\ref{sec:green-adj}. Keeping with the terminology of
\ref{sec:jets-pdes}, we use such an $(f,\tilde{F}^*)$ as our preferred
equation form for linear PDE systems.

We will consider the case where $f$ is hyperbolic and hence possesses
Green functions (Sec.~\ref{sec:green-hyper}). An important idea that is
often necessary to relate physical equations of motion to hyperbolic
equations is that of compatible constraints
(Sec.~\ref{sec:compat-constr}). Solution spaces can be conveniently
parametrized using special, causal Green functions
(Sec.~\ref{sec:caus-green-free} and~\ref{sec:caus-green}). Both the
differential operator $f$ and its Green functions have adjoints, which
are important in the definitions of various natural bilinear pairings
(Sec.~\ref{sec:green-adj}).

Sometimes we will refer to basic background information on jet
bundles, the interpretation of differential operators as maps between
jet bundles and the interpretation of PDEs as submanifolds of jet
bundles. The relevant information is summarized in \ref{sec:jets} and
\ref{sec:jets-pdes}. Also, hyperbolic PDEs naturally define a
generalized kind of causal structure on $M$. The necessary ideas and
definitions are summarized in \ref{sec:conal}, with the notation similar
to the standard one used in Lorentzian geometry. This causal structure
can be used to restrict the supports of field and dual density sections.
\begin{definition}
Consider a vector bundle $V\to M$. We define the following subspaces of
the space of sections $\Secs(V)$:
\begin{align}
	\Secs_0(V) &= \{ \phi\in\Secs(V)
		\mid \text{$\supp\phi$ is compact} \} , \\
	\Secs_+(V) &= \{ \phi\in\Secs(V)
		\mid \text{$\supp\phi$ is retarded} \} , \\
	\Secs_-(V) &= \{ \phi\in\Secs(V)
		\mid \text{$\supp\phi$ is advanced} \} , \\
	\Secs_{SC}(V) &= \{ \phi\in\Secs(V) \mid
		\text{$\supp\phi$ is spacelike compact} \} ,
\end{align}
where \emph{retarded support}, \emph{advanced support}, or
\emph{spacelike compact support} means, respectively, that $\supp\phi
\sso \overline{I^+(K)}$, $\supp\phi\sso \overline{I^-(K)}$, or
$\supp\phi\sso \overline{I(K)}$ for some compact $K\sso M$.  The
corresponding subspaces of the solution space $\S(F)$ of $f[\phi] = 0$
are denoted by
\begin{equation}
	\S_{0,\pm,SC}(F) = \S(F) \cap \Secs_{0,\pm,SC}(F) .
\end{equation}
\end{definition}

\subsection{Green hyperbolicity}\label{sec:green-hyper}
Below, we define the notion of a \emph{Green hyperbolic} PDE system as
one that possesses unique advanced and retarded Green functions. We will
rely heavily on the existence and properties of these Green functions in
later sections. The class of Green hyperbolic systems is quite large,
including for instance wave-like equations on globally hyperbolic
Lorentzian spacetimes\cite{bgp} as well as symmetric (or even
symmetrizable) hyperbolic systems (cf.~Refs.~\citen{baer-green}
and~\cite[Sec.4]{kh-caus}) that
satisfy a similar global causal condition. Note that the second class of
examples does not require a background Lorentzian metric to be
defined.

Before proceeding, we need the notion of a \emph{causal structure} (a
priori independent of any Lorentzian metric), with respect to which the
notions of advanced and retarded support will be defined. In the
literature on relativity, the two are usually introduced together.
However, a deeper investigation of hyperbolic PDE systems shows that the
notion of causality can be defined independently and intrinsically from
a given PDE. It so happens that, for equations with a d'Alambert-like
principal symbol, the causal relations deduced directly from the PDE, on
the one hand, and from the background Lorentzian metric, on the other,
actually coincide. The basic relevant notions and definitions are
summarized in \ref{sec:conal}. See Ref.~\cite[Secs.3,4]{kh-caus} for a more in depth
discussion.

\begin{definition}\label{def:green-hyp}
The PDE system $f[\phi] = 0$ is said to be \emph{Green hyperbolic} if
there exists a globally hyperbolic conal structure on $M$ such that (a)
the inhomogeneous equation $f[\phi_\pm] = \tilde{\alpha}^*$ is solvable
for $\tilde{\alpha}^*\in\Secs_\pm(\tilde{F}^*)$  and (b) a solution
$\phi_\pm\in \Secs_\pm(F)$ exists, is unique and satisfies the support
condition $\supp \phi_\pm \sse \overline{I^\pm(\supp\tilde{\alpha}^*)}$.
We denote the unique two-sided inverses by $\G_\pm\colon \Secs_\pm(\tilde{F}^*)
\to \Secs_\pm(F)$ and refer to them as the \emph{retarded} ($+$) and
\emph{advanced} ($-$) \emph{Green functions}. In adapted local
coordinates $(x^i,u^a)$ on $F$ and $(x^i,u_b)$ on $\tilde{F}^*$, where
$u^a(\phi(x)) = \phi^a(x)$, $u_b(\tilde{\alpha}^*(x)) = \alpha_b(x)$ and
$\d\tilde{x} = \d{x}^1\wedge\cdots\wedge\d{x}^n$, the Green functions
can be represented as integral kernels
\begin{equation}
	\phi^a_\pm(x)
	= (\G_\pm[\tilde{\alpha}^*])^a(x)
	= \int_M \G_\pm^{ab}(x;y) \alpha_b(y) \, \d\tilde{y} .
\end{equation}
\end{definition}
It is sufficient that $\G_\pm$ be defined on $\Secs_0(\tilde{F}^*)$. It
can then be extended to $\Secs_\pm(\tilde{F}^*)$ by an exhaustion
argument (Cor.5 in Ref.~\citen{bf-lcqft}). Ideally, the Green functions
should be well-defined distributions (be continuous in the appropriate
function space topology), but we will not discuss this issue here and
instead concentrate on their algebraic and geometric properties.

\begin{remark}
In this paper, we rely heavily on the notion of Green hyperbolicity. On
the other hand, many PDE references only treat well-posedness of the
Cauchy problem of homogeneous equations and do not address the
inhomogeneous problem. For instance, Cauchy well-posedness of linear
symmetric hyperbolic systems is established in~\cite[Ch.7]{ringstroem},
but the existence of Green functions is not addressed. Fortunately,
there is an argument in the classic PDE literature, known as
\emph{Duhamel's principle}, that essentially establishes the equivalence
between hyperbolic systems with a well-posed Cauchy problem and Green
hyperbolic systems. Usually, this argument is discussed only for
specific examples, but in principle it works quite generally. Thus, we
can appeal to this very large class of PDE systems when considering
examples, rather than restricting ourselves only to wave-like equations
on Lorentzian manifolds for which Green hyperbolicity is well
established.\cite{bgp,baer-green,waldmann-pde}
\end{remark}

\subsection{Compatible constraints}\label{sec:compat-constr}
Consider a linear PDE system on the field bundle $F\to M$ that consists
of
\begin{equation}
	f[\phi] = 0, \quad c[\phi] = 0 ,
\end{equation}
where $f\colon \Secs(F) \to \Secs(\tilde{F}^*)$ is a hyperbolic partial
differential operator and $c\colon \Secs(F) \to \Secs(E)$ is another
partial differential operator valued in a vector bundle $E\to M$. We
refer to $f[\phi] = 0$ as the \emph{hyperbolic subsystem} and to
$c[\phi] = 0$ as the \emph{constraints subsystem}, with $E\to M$ the
\emph{constraints bundle}. The equation form of the total system is
$(f\oplus c, \tilde{F}^*\oplus E)$.

The constraints are said to be \emph{hyperbolically integrable} if there
exists a pair of linear differential operators $h\colon \Secs(E) \to
\Secs(\tilde{E}^*)$ and $q\colon \Secs(\tilde{F}^*) \to
\Secs(\tilde{E}^*)$ that satisfy the
identity $h\circ c = q\circ f$, or 
\begin{equation}
	h[c[\phi]] = q[f[\phi]] ,
\end{equation}
for any $\phi\in \Secs(F)$, where the operator $h$ itself is hyperbolic.
The existence of this identity implies that the vanishing of the
constraints $c[\phi] = 0$ at some initial time implies that $c[\phi] =
0$ everywhere on $M$, provided $f[\phi] = 0$. We call the PDE $h[\psi] =
0$ on $E\to M$ the \emph{consistency subsystem} and the joint system
$f[\phi] = 0$, $h[\psi] = 0$ on $F\oplus_M E\to M$ the \emph{compound
system}. We write the corresponding equation form as $(f\oplus h,
\tilde{F}^* \oplus_M \tilde{E}^*)$.

If the above conditions are satisfied, the PDE system $f[\phi] = 0$,
$c[\phi] = 0$ is said to be \emph{hyperbolic with constraints}. Whenever
we refer to a causal structure induced by a hyperbolic system with
constraints, we actually mean the one induced by the corresponding
hyperbolic compound system.

\subsection{Causal Green function (without constraints)}\label{sec:caus-green-free}
Now that we are sure to have access to the retarded/advanced green
functions $\G_\pm$ for the Green hyperbolic system $f[\phi]
= 0$, we can define the so-called \emph{causal Green function}
\begin{equation}\label{eq:caus-green}
	\G = \G_+ - \G_- .
\end{equation}
This new Green function helps to conveniently parametrize the space of
solutions $\S_{SC}(F) \cong \ker f \sso \Secs_{SC}(F)$ by featuring in
the following
\begin{proposition}\label{prp:exact}
The sequence
\begin{equation}\label{eq:hyp-seq}
\vcenter{\xymatrix{
	0
		\ar[r] &
	\Secs_0(F)
		\ar[r]^{f} &
	\Secs_0(\tilde{F}^*)
		\ar[r]^{\G} &
	\Secs_{SC}(F)
		\ar[r]^{f} &
	\Secs_{SC}(\tilde{F}^*)
		\ar[r] &
	0 ,
}}
\end{equation}
is exact (in the sense of linear algebra).
\end{proposition}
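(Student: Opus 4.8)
The plan is to verify exactness at each of the four nontrivial positions of \eqref{eq:hyp-seq}, reducing everything to the two-sided-inverse and support properties of $\G_\pm$ from Definition~\ref{def:green-hyp}, together with a handful of support-combinatorics facts about a globally hyperbolic conal structure (\ref{sec:conal}). No further analytic input is needed: all of the PDE theory is already packaged in Definition~\ref{def:green-hyp}.

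First I would dispatch the "formal half" --- injectivity of the left $f$ at $\Secs_0(F)$, the inclusion $\im f \sse \ker\G$ at $\Secs_0(\tilde F^*)$, and the inclusion $\im\G \sse \ker f$ at $\Secs_{SC}(F)$. Since $\Secs_0(F)\sse \Secs_+(F)\cap\Secs_-(F)$, for $\phi\in\Secs_0(F)$ with $f[\phi]=0$ uniqueness in Definition~\ref{def:green-hyp} forces $\phi = \G_+[f[\phi]] = 0$; for $\psi\in\Secs_0(F)$ one gets $\G[f[\psi]] = \G_+[f[\psi]] - \G_-[f[\psi]] = \psi - \psi = 0$; and for $\tilde\alpha^*\in\Secs_0(\tilde F^*)$ one gets $f[\G[\tilde\alpha^*]] = \tilde\alpha^* - \tilde\alpha^* = 0$, while the support bound $\supp\G[\tilde\alpha^*]\sse\overline{I(\supp\tilde\alpha^*)}$ confirms that $\G$ does land in $\Secs_{SC}(F)$.

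The substantive content is the three remaining inclusions, all handled by the standard Cauchy-surface cutoff. Fix a Cauchy temporal function and a smooth $\chi$ on $M$ equal to $1$ to the future and $0$ to the past of a compact time slab $\{t_-\le t\le t_+\}$. (i) For $\ker\G\sse\im f$ at $\Secs_0(\tilde F^*)$: if $\G[\tilde\alpha^*]=0$ then $\psi := \G_+[\tilde\alpha^*] = \G_-[\tilde\alpha^*]$ is supported in $\overline{I^+(\supp\tilde\alpha^*)}\cap\overline{I^-(\supp\tilde\alpha^*)}$, compact by global hyperbolicity, so $\psi\in\Secs_0(F)$ and $f[\psi] = \tilde\alpha^*$. (ii) For $\ker f\sse\im\G$ at $\Secs_{SC}(F)$: given $\phi\in\Secs_{SC}(F)$ with $f[\phi]=0$, write $\phi = \chi\phi + (1-\chi)\phi$ with $\chi\phi\in\Secs_+(F)$, $(1-\chi)\phi\in\Secs_-(F)$; set $\tilde\alpha^* := f[\chi\phi] = -f[(1-\chi)\phi]$, note $\supp\tilde\alpha^*\sse\supp\phi\cap\{t_-\le t\le t_+\}$ is compact, and compute $\G[\tilde\alpha^*] = \G_+[f[\chi\phi]] + \G_-[f[(1-\chi)\phi]] = \chi\phi + (1-\chi)\phi = \phi$. (iii) Surjectivity of the right $f$ at $\Secs_{SC}(\tilde F^*)$: for $\tilde\beta^*\in\Secs_{SC}(\tilde F^*)$ put $\phi := \G_+[\chi\tilde\beta^*] + \G_-[(1-\chi)\tilde\beta^*]\in\Secs_{SC}(F)$, so that $f[\phi] = \chi\tilde\beta^* + (1-\chi)\tilde\beta^* = \tilde\beta^*$.

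I expect the only delicate point to be the support bookkeeping, not anything analytic: namely, that multiplying a spacelike compact section by $\chi$ produces a section with retarded (resp.\ advanced) support, and that the intersections $\overline{I^+(K)}\cap\overline{I^-(K)}$ and $\overline{I(K)}\cap\{t_-\le t\le t_+\}$ are compact. These are the conal-geometry analogues of familiar Lorentzian global-hyperbolicity lemmas and should be quoted from \ref{sec:conal}; invoking them in the right places is the whole subtlety, and it is the one spot where a non-Lorentzian causal structure could in principle cause trouble if those lemmas failed. Everything else is formal juggling of $\G_\pm$.
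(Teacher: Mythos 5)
Your proof is correct and follows essentially the same route as the paper: the cutoff constructions in your steps (ii) and (iii) are precisely the splitting maps $f_\chi$ and $\G_\chi$ of Lem.~\ref{lem:exsplit} (with $\chi=\chi_+$, $1-\chi=\chi_-$), and your injectivity argument is the paper's observation that $f$ is invertible on $\Secs_\pm(F)$ and hence a fortiori injective on $\Secs_0(F)$. The support-bookkeeping facts you flag (compactness of $\overline{I^+(K)}\cap\overline{I^-(K)}$ and of a spacelike compact set intersected with a timelike compact slab) are exactly the ones the paper also invokes without proof, so you are at the same level of rigor.
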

That is, the image of each map coincides with the kernel of the next
map. The proof for wave-like equations, which is given
in~\cite[Thm.3.4.7]{bgp} and~\cite[Lem.3.2.1]{wald-qft} and
unfortunately excludes the final surjection, directly carries through to
the Green hyperbolic case. The final surjection is covered by the proof
of Cor.5 in Ref.~\citen{bf-lcqft}. A complete proof actually follows
from the identities given in Lem.~\ref{lem:exsplit} below and the fact
that $f$ is invertible on $\Secs_\pm(F)$, and hence a fortiori injective
on $\Secs_0(F)$.

We can interpret the above proposition in the following way. Since
$\S_{SC}(F) \cong \im \G$, we can express any solution to the
homogeneous problem as $\phi = \G[\tilde{\alpha}^*]$, where $\alpha\in
\Secs_0(\tilde{F}^*)$ is some smooth dual density of compact support.
Equivalently, by exactness, $\S_{SC}(F) \cong \coker f =
\Secs_0(\tilde{F}^*) / \im f$.  Also, since $\Secs_{SC}(F) \cong \im f$,
for any dual density $\tilde{\alpha}^*$ with spacelike compact support,
there exists a solution $\phi$ with spacelike compact support of the
inhomogeneous problem $f[\phi] = \tilde{\alpha}^*$.

\begin{definition}\label{def:adapt-pu}
Consider one Cauchy surface $\Sigma\sso M$ and two more Cauchy surfaces
$\Sigma^\pm\sso M$ to the past and future of $\Sigma$, where
$\Sigma^\pm\sso I^\pm(\Sigma)$, and let $S^\pm = I^\pm(\Sigma^\mp)$. Let
$\{\chi_+,\chi_-\}$ be a partition of unity adapted to the open cover
$\{S^+,S^-\}$ of $M$, that is, $\chi_+ + \chi_- = 1$ and $\supp \chi_\pm
\sso S^\pm$. We call $\{\chi_+, \chi_-\}$ a \emph{partition of unity
adapted to the Cauchy surface $\Sigma$}.
\end{definition}

\begin{lemma}\label{lem:exsplit}
The exact sequence of Prp.~\ref{prp:exact} splits at
\begin{equation}
	\Secs_0(\tilde{F}^*) \cong \Secs_0(F) \oplus \S_{SC}(F)
	\quad\text{and}\quad
	\Secs_{SC}(F) \cong \S_{SC}(F) \oplus \Secs_{SC}(\tilde{F}^*) .
\end{equation}
Given a partition of unity $\{\chi_+,\chi_-\}$ adapted to a Cauchy
surface $\Sigma$, there exist (noncanonical) splitting maps
\begin{align}
	f_\chi \colon & \im\G \to\Secs_0(\tilde{F}^*) , &
		f_\chi[\phi] &= \pm f^\pm_\chi[\phi] = \pm f[\chi_\pm\phi] , \\
	\G_\chi \colon & \Secs_{SC}(\tilde{F}^*)\to \Secs_{SC}(F) , &
		\G_\chi[\tilde{\alpha}^*]
		&= \G_+[\chi_+\tilde{\alpha}^*]
			+ \G_-[\chi_-\tilde{\alpha}^*] .
\end{align}
\end{lemma}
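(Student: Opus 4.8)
The plan is to verify that the two formulas in the statement define sections of the relevant arrows of the exact sequence of Prp.~\ref{prp:exact}, and then to read off the two direct sum decompositions by elementary linear algebra. Since $\im\G=\S_{SC}(F)$ and $f$ is injective on $\Secs_0(F)$ (a fortiori from its invertibility on $\Secs_\pm(F)$), once one knows $\G\circ f_\chi=\id$ on $\im\G$ one gets $\Secs_0(\tilde{F}^*)=\im f\oplus\im f_\chi\cong\Secs_0(F)\oplus\S_{SC}(F)$; and since $\ker\bigl(f\colon\Secs_{SC}(F)\to\Secs_{SC}(\tilde{F}^*)\bigr)=\im\G=\S_{SC}(F)$ by exactness, once one knows $f\circ\G_\chi=\id$ on $\Secs_{SC}(\tilde{F}^*)$ one gets $\Secs_{SC}(F)=\S_{SC}(F)\oplus\im\G_\chi\cong\S_{SC}(F)\oplus\Secs_{SC}(\tilde{F}^*)$.

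First I would check that the formulas land in the stated targets. For $\phi\in\S_{SC}(F)$ one has $f[\chi_+\phi]+f[\chi_-\phi]=f[\phi]=0$, so $f^+_\chi[\phi]:=f[\chi_+\phi]$ and $f^-_\chi[\phi]:=f[\chi_-\phi]$ differ only in sign, and their common value is the commutator $[f,\chi_+]\phi=-[f,\chi_-]\phi$, a differential operator of order one less than $f$ whose coefficients vanish off $\supp\,\d\chi_+=\supp\,\d\chi_-\sso S^+\cap S^-$. Hence $\supp f_\chi[\phi]$ lies in the intersection of the slab $S^+\cap S^-=I^+(\Sigma^-)\cap I^-(\Sigma^+)$ between the outer Cauchy surfaces with the spacelike compact set $\supp\phi$; by the causal geometry of a globally hyperbolic conal structure this intersection is compact, so $f_\chi[\phi]\in\Secs_0(\tilde{F}^*)$. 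Similarly, for $\tilde{\alpha}^*\in\Secs_{SC}(\tilde{F}^*)$ the section $\chi_\pm\tilde{\alpha}^*$ has retarded ($+$), resp.\ advanced ($-$), support --- the part of a spacelike compact set lying in $S^\pm$ sits inside $\overline{I^\pm(K)}$ for a suitable compact $K$ --- so $\G_\pm[\chi_\pm\tilde{\alpha}^*]$ is defined with $\supp\G_\pm[\chi_\pm\tilde{\alpha}^*]\sse\overline{I^\pm(\supp\chi_\pm\tilde{\alpha}^*)}$, and the sum $\G_\chi[\tilde{\alpha}^*]$ is spacelike compact.

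The two identities then follow purely from the two-sided inverse property of $\G_\pm$. For $\phi\in\S_{SC}(F)$, the section $\chi_+\phi$ has retarded support and $\chi_-\phi$ advanced support, so $\G_+[f_\chi[\phi]]=\G_+[f[\chi_+\phi]]=\chi_+\phi$ and $\G_-[f_\chi[\phi]]=\G_-[-f[\chi_-\phi]]=-\chi_-\phi$; subtracting gives $\G[f_\chi[\phi]]=\chi_+\phi+\chi_-\phi=\phi$. Dually, $f[\G_\chi[\tilde{\alpha}^*]]=f[\G_+[\chi_+\tilde{\alpha}^*]]+f[\G_-[\chi_-\tilde{\alpha}^*]]=\chi_+\tilde{\alpha}^*+\chi_-\tilde{\alpha}^*=\tilde{\alpha}^*$, using $f\circ\G_\pm=\id$ on $\Secs_\pm(\tilde{F}^*)$. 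In particular $f_\chi$ and $\G_\chi$ are injective, and a one-line check that $\im f+\im f_\chi=\Secs_0(\tilde{F}^*)$ and $\S_{SC}(F)+\im\G_\chi=\Secs_{SC}(F)$ (subtract off the section applied to the image of the quotient map) completes the decompositions.

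The only step needing genuine care is the support bookkeeping in the second paragraph: that multiplication by $\chi_\pm$ preserves spacelike compactness, that $\chi_\pm$ times a spacelike compact section really is retarded/advanced, and --- the crucial point --- that the commutator term $[f,\chi_+]\phi$ is genuinely \emph{compactly} supported. All of this reduces to standard facts about globally hyperbolic conal structures recorded in \ref{sec:conal}: the causal shadow of a compact set meets any slab between two Cauchy surfaces in a compact set, and truncating the future (past) of a spacelike compact set to the future (past) of a Cauchy surface gives a retarded (advanced) set. Granting these, everything else is formal.
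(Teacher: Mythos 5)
Your proposal is correct and follows essentially the same route as the paper's proof: use the adapted partition of unity to define the splitting maps, invoke uniqueness of retarded/advanced solutions to get $\G\circ f_\chi=\id$ on $\S_{SC}(F)$, and $f\circ\G_\pm=\id$ to get $f\circ\G_\chi=\id$ on $\Secs_{SC}(\tilde{F}^*)$. You are somewhat more explicit than the paper about the support bookkeeping (the commutator $[f,\chi_\pm]$ having support in the timelike compact slab $S^+\cap S^-$, hence compact intersection with a spacelike compact set) and about the final linear-algebra step yielding the direct sum decompositions, but these are exactly the points the paper treats more tersely, not a different argument.
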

\begin{proof}
Note that these splitting maps are not canonical, as they depend on the
choice of a Cauchy surface and a partition of unity adapted to it.

When $\phi\in\S_{SC}(F) \cong \im\G$, it is clear that $f^\pm_\chi[\phi]
= f[\chi_\pm\phi]$ does in fact have compact support, as $\supp\phi$ is
spacelike compact while $\supp\d\chi_\pm \linebreak \sso S^+\cap S^-$ is
timelike compact,%
	\footnote{A set is \emph{timelike compact} if its intersection with
	every spacelike compact set is compact.\cite{sanders-tc}} %
and $f[\chi_\pm\phi] \ne 0$ only on $\supp\phi \cap \supp\d\chi_\pm$,
which is by definition compact.  Also, since $\d(\chi_+ + \chi_-) = 0$,
we have $f^+_\chi[\phi] + f^-_\chi[\phi] = 0$, which means that the map
$f_\chi = \pm f^\pm_\chi$ is well defined. On the other hand, we have
$\G_\pm[f[\chi_\pm\phi]] = \chi_\pm\phi$ from the uniqueness of
solutions to the inhomogeneous problem with retarded/advanced support.
The definition of the causal Green function then immediately implies
that $\G\circ f_\chi = \pm\id$ on $\S_{SC}(F)$. Also, a direct
calculation shows that $f\circ \G_\chi = \id$ on
$\Secs_{SC}(\tilde{F}^*)$:
\begin{equation}
	f\circ \G_\chi [\tilde{\alpha}^*]
	= \chi_+\tilde{\alpha}^* + \chi_-\tilde{\alpha}^*
	= \tilde{\alpha}^* .
\end{equation}
This concludes the proof. 
\end{proof}

\subsection{Causal Green function (with constraints)}\label{sec:caus-green}
We will not discuss the most general kind of constraints and restrict
our attention only to \emph{parametrizable} ones. By the term
parametrizable, we mean that there exist an additional vector bundle
$E'\to M$ and additional differential operators $h'$, $c'$ and $q'$,
which fit into the following commutative diagram
\begin{equation}
\vcenter{\xymatrix{
	\Secs(E') \ar[d]^{h'} \ar[r]^{c'} &
		\Secs(F) \ar[d]^{f} \ar[r]^{c} &
		\Secs(E) \ar[d]^{h} \\
	\Secs(\tilde{E}^{\prime*}) \ar[r]^{q'} &
		\Secs(\tilde{F}^*) \ar[r]^{q} &
		\Secs(\tilde{E}^*)
}}
\end{equation}
such that $h'$ is hyperbolic, and that the horizontal rows form
complexes of differential operators ($c\circ c' = 0$ and $q\circ q' =
0$) that are \emph{formally exact} (\ref{sec:formal-exact}). Since
both $h$ and $h'$ are hyperbolic, we can define their retarded/advanced
Green functions, $\H_\pm$ and $\H'_\pm$, as well as their causal Green
functions, $\H = \H_+ - \H_-$ and $\H' = \H'_+ - \H'_-$. All these
operators then fit into the following commutative diagram:
\begin{equation}\label{eq:chyp-seq}
\vcenter{\xymatrix{
	0
		\ar[r] &
	\Secs_0(E')
		\ar[d]^{c'} \ar[r]^{h'} &
	\Secs_0(\tilde{E}^{\prime*})
		\ar[d]^{q'} \ar[r]^{\H'} &
	\Secs_{SC}(E')
		\ar[d]^{c'} \ar[r]^{h'} &
	\Secs_{SC}(\tilde{E}^{\prime*})
		\ar[d]^{q'} \ar[r] &
	0  \\
	0
		\ar[r] &
	\Secs_0(F)
		\ar[d]^{c} \ar[r]^{f} &
	\Secs_0(\tilde{F}^*)
		\ar[d]^{q} \ar[r]^{\G} &
	\Secs_{SC}(F)
		\ar[d]^{c} \ar[r]^{f} &
	\Secs_{SC}(\tilde{F}^*)
		\ar[d]^{q} \ar[r] &
	0  \\
	0
		\ar[r] &
	\Secs_0(E)
		\ar[r]^{h} &
	\Secs_0(\tilde{E}^*)
		\ar[r]^{\H} &
	\Secs_{SC}(E)
		\ar[r]^{h} &
	\Secs_{SC}(\tilde{E}^*)
		\ar[r] &
	0
}}
\end{equation}
Note that the causally restricted supports in the above diagram should
be defined with respect to a causal structure that is defined by the
total compound system with equation form $(h'\oplus f \oplus h,
\tilde{E}^{\prime*} \oplus_M \tilde{F}^* \oplus_M \tilde{E}^*)$.

\begin{lemma}\label{lem:inhom-constr}
The retarded/advanced inhomogeneous problem
\begin{align}
	f[\phi] &= \tilde{\beta}^* , \\
	c[\phi] &= \gamma ,
\end{align}
with $\tilde{\beta}^*\in \Secs_0(\tilde{F}^*)$ and $\gamma \in
\Secs_\pm(E)$, is solvable for $\phi\in \Secs_\pm(F)$ iff $h[\gamma] =
q[\tilde{\beta}^*]$.
\end{lemma}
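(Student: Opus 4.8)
The claim is an iff, so I would treat the two directions separately, and the easy direction first. For necessity, suppose $\phi\in\Secs_\pm(F)$ solves $f[\phi]=\tilde\beta^*$ and $c[\phi]=\gamma$. Applying $q$ to the first equation and using the identity $q\circ f = h\circ c$ (the right square of the defining diagram, which is the hyperbolic-integrability relation $h\circ c = q\circ f$ from Sec.~\ref{sec:compat-constr}) gives $q[\tilde\beta^*] = q[f[\phi]] = h[c[\phi]] = h[\gamma]$. That is all there is to it; no support considerations are needed here beyond noting that everything in sight lies in a space on which the operators are defined.

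For sufficiency, assume $h[\gamma]=q[\tilde\beta^*]$ and construct $\phi$. The natural strategy is to split the problem: first solve the constraint $c[\phi_0]=\gamma$ alone (ignoring $f$), then correct by an element of $\ker c$ to fix up $f[\phi]=\tilde\beta^*$. For the first step I would use parametrizability: the middle row $\Secs(E')\xrightarrow{c'}\Secs(F)\xrightarrow{c}\Secs(E)$ is formally exact, and I want to invoke an exactness statement on causally restricted sections, i.e. that $\gamma\in\Secs_\pm(E)$ lies in the image of $c\colon\Secs_\pm(F)\to\Secs_\pm(E)$ precisely when it is annihilated by the appropriate next operator — here one needs $h[\gamma]$ to be ``trivial enough''. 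This is where the bottom row of diagram~\eqref{eq:chyp-seq}, together with Prp.~\ref{prp:exact} applied to the hyperbolic operator $h$, enters: $h[\gamma]=q[\tilde\beta^*]$ with $q[\tilde\beta^*]$ of compact support (since $\tilde\beta^*\in\Secs_0(\tilde F^*)$), so $h[\gamma]=h[\H[\text{something}]]\ +\ (\text{exact term})$, and by exactness of the $h$-row and a diagram chase through the formally exact columns one produces $\phi_0\in\Secs_\pm(F)$ with $c[\phi_0]=\gamma$. Then set $\tilde\beta_0^* = \tilde\beta^* - f[\phi_0]$; one checks $q[\tilde\beta_0^*]=q[\tilde\beta^*]-q[f[\phi_0]]=h[\gamma]-h[c[\phi_0]]=0$, and more importantly that $\tilde\beta_0^*$ has the support type needed to invert $f$ on it. Finally apply $\G_\pm$ (or the splitting of Lem.~\ref{lem:exsplit}) to solve $f[\phi_1]=\tilde\beta_0^*$ with $\phi_1\in\Secs_\pm(F)$ and $c[\phi_1]=0$ — the latter because $c[\phi_1]$ solves the homogeneous $h$-equation with trivial data and hence vanishes by Green hyperbolicity of $h$. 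Then $\phi=\phi_0+\phi_1$ is the desired solution.

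The main obstacle is the middle step: producing $\phi_0\in\Secs_\pm(F)$ with $c[\phi_0]=\gamma$ from the single compatibility condition $h[\gamma]=q[\tilde\beta^*]$. Formal exactness of the rows is an infinitesimal/jet-level statement; upgrading it to exactness of the corresponding complexes of \emph{global} sections with causal support restrictions is exactly the kind of cohomological input the introduction flags as a nontrivial hypothesis. I would expect the argument to need either an explicit parametrix built from $c'$, $\H'_\pm$ and $q'$ (chasing around the top two rows of~\eqref{eq:chyp-seq}), or an appeal to a formal Poincar\'e-lemma-type result for formally exact complexes; keeping careful track of which sections are compactly supported versus merely retarded/advanced is where the bookkeeping is delicate, since $\gamma$ is only in $\Secs_\pm$ while $\tilde\beta^*$ and $q[\tilde\beta^*]$ are in $\Secs_0$. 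Everything else — the two $q\circ f = h\circ c$ manipulations and the final application of $\G_\pm$ — is routine given the results already established.
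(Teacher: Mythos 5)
Your necessity direction is fine and matches the paper. The sufficiency direction, however, is only a plan, and its central step is a genuine gap: you propose to first produce $\phi_0\in\Secs_\pm(F)$ with $c[\phi_0]=\gamma$ by ``upgrading'' the formal exactness of the column $\Secs(E')\to\Secs(F)\to\Secs(E)$ to exactness on sections with retarded/advanced supports, but no such statement is available from the hypotheses of the lemma. Formal exactness is a jet-level (symbolic) condition; exactness of the induced complexes of causally supported global sections is precisely the extra cohomological input that the paper isolates later as \emph{global} parametrizability (Def.~\ref{def:glpar}), treats as an additional sufficient hypothesis, and notes can fail for topological reasons. So the diagram chase you gesture at through the top rows of~\eqref{eq:chyp-seq} cannot be completed from what is assumed here, and you correctly identify this as the main obstacle without resolving it. (The remainder of your plan --- checking $q[\tilde{\beta}_0^*]=0$ and then getting $c[\phi_1]=0$ from uniqueness of retarded/advanced solutions of the $h$-system --- is sound, but it sits on top of the unproven step.)

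The missing idea is that no splitting is needed at all: the single candidate $\phi=\G_\pm[\tilde{\beta}^*]$ already solves both equations. Indeed $f[\phi]=\tilde{\beta}^*$ by construction, and since $h[c[\phi]]=q[f[\phi]]=q[\tilde{\beta}^*]$ with $c[\phi]\in\Secs_\pm(E)$, uniqueness of retarded/advanced solutions for the hyperbolic operator $h$ gives $c[\phi]=\H_\pm[q[\tilde{\beta}^*]]=\H_\pm[h[\gamma]]=\gamma$, the last equality again by uniqueness because $\gamma\in\Secs_\pm(E)$. This is the paper's one-line argument; it uses only the compatibility identity $h\circ c=q\circ f$ and Green hyperbolicity of $f$ and $h$, and it shows why the compatibility condition is not merely necessary but already forces the constraint on the retarded/advanced solution of the hyperbolic subsystem. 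Your program, if executed with this choice of $\phi_0$, collapses to that proof with $\tilde{\beta}_0^*=0$ and no correction term.
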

\begin{proof}
In one direction, if $\phi$ is the desired solution, then $h[\gamma] =
h[c[\phi]] = q[f[\phi]] = q[\tilde{\beta}^*]$. In the other direction,
let $\phi = \G_\pm[\tilde{\beta^*}]$. We obviously have $f[\phi] =
\tilde{\beta}^*$. It remains to check
\begin{equation}
	c[\phi]
	= c[\G_+[\tilde{\beta}^*]]
	= \H_+[q[\tilde{\beta}^*]]
	= \H_+[h[\gamma]]
	= \gamma .
\end{equation}
This concludes the proof. 
\end{proof}

It is convenient to state here a lemma concerning formally exact complexes of
differential operators, which shall be referred to in later sections.
\begin{lemma}\label{lem:fec-fact}
Suppose that linear differential operators $c'\colon \Secs(E') \to
\Secs(F)$ and $c\colon \Secs(F) \to \Secs(E)$ form a formally exact
complex. Then any linear differential operator $l\colon \Secs(F) \to
\Secs(L)$ such that $l\circ c' = 0$ factors as $l = l_c \circ c$, for
some linear differential operator $l_c\colon \Secs(E) \to \Secs(L)$.
Similarly, any linear differential operator $r\colon \Secs(R) \to
\Secs(F)$ such that $c\circ r = 0$ factors as $r = c' \circ r_c$, for
some linear differential operator $r_c\colon \Secs(R) \to \Secs(E')$.
\end{lemma}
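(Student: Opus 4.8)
The plan is to prove the first factorization (for $l$) directly in jet bundles, and to obtain the second (for $r$) from it by passing to formal adjoints. First I would record the reduction. Taking formal adjoints with respect to the natural pairings of Sec.~\ref{sec:green-adj}, the two-step complex consisting of $c'\colon\Secs(E')\to\Secs(F)$ followed by $c\colon\Secs(F)\to\Secs(E)$ goes over to the complex consisting of $c^*\colon\Secs(\tilde{E}^*)\to\Secs(\tilde{F}^*)$ followed by $(c')^*\colon\Secs(\tilde{F}^*)\to\Secs(\tilde{E}^{\prime*})$, which is again formally exact: formal exactness is a pointwise condition on the prolonged symbol complexes over $M$ (\ref{sec:formal-exact}), and passing to the adjoint replaces these by their transposes, while transposition preserves exactness of complexes of finite-dimensional vector spaces. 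Under this correspondence the hypothesis $c\circ r=0$ becomes $r^*\circ c^*=0$, an instance of the hypothesis of the first statement for the adjoint complex, and a factorization $r^*=(r^*)_c\circ(c')^*$ dualizes back (using the canonical double densitized-dual identifications) to $r=c'\circ r_c$ with $r_c=((r^*)_c)^*$. So it is enough to prove: if a linear differential operator $l\colon\Secs(F)\to\Secs(L)$ satisfies $l\circ c'=0$, then $l=l_c\circ c$ for some linear differential operator $l_c\colon\Secs(E)\to\Secs(L)$.

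For this I would pass to representing bundle morphisms $\hat{c}\colon\J^kF\to E$, $\hat{c}'\colon\J^{k'}E'\to F$ and $\hat{l}\colon\J^qF\to L$ (writing $p_r$ for the $r$-th prolongation of a bundle morphism and $\pi^r_s$ for the jet projections, as in~\ref{sec:jets}; after replacing $l$ by a prolongation I may assume $q$ is as large as needed and $q\ge k$). Since every jet of a vector bundle is the jet of some global section, the relation $l\circ c'=0$ is equivalent to $\hat{l}$ annihilating the image of $p_q\hat{c}'\colon\J^{q+k'}E'\to\J^qF$ inside $\J^qF$, and this persists after prolonging $l$ further. Here I would bring in the hypothesis that the complex is formally exact: for $q$ sufficiently large, the image of $p_q\hat{c}'$ coincides with the prolonged equation $R_q=\ker(p_{q-k}\hat{c}\colon\J^qF\to\J^{q-k}E)$, and — by the formal integrability/regularity built into formal exactness — both $R_q$ and the image of $p_{q-k}\hat{c}$ are vector subbundles of constant rank.

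Granting this, $\hat{l}$ vanishes on $\ker(p_{q-k}\hat{c})$, hence factors through the image of $p_{q-k}\hat{c}$; composing with a smooth bundle projection of $\J^{q-k}E$ onto that image (available because it is a subbundle) produces a morphism $\widehat{l_c}\colon\J^{q-k}E\to L$ that realizes this factorization, and I take $l_c$ to be the order-$(q-k)$ operator it represents. The check is then the identity $l_c[c[\phi]]=\widehat{l_c}(j^{q-k}(c[\phi]))=\widehat{l_c}((p_{q-k}\hat{c})(j^q\phi))=\hat{l}(j^q\phi)=l[\phi]$ for every $\phi\in\Secs(F)$, using $j^{q-k}(c[\phi])=(p_{q-k}\hat{c})(j^q\phi)$ and the fact that $\widehat{l_c}$ agrees with the chosen factorization of $\hat{l}$ on the image of $p_{q-k}\hat{c}$, where the projection acts as the identity.

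I expect the one genuine obstacle to be the use of formal exactness in exactly this sharp form: that after finitely many prolongations the prolonged operator sequences are exact \emph{on the nose}, with the image of $p_q\hat{c}'$ equal to $\ker(p_{q-k}\hat{c})=R_q$ and both a vector subbundle of constant rank — rather than exact merely at the level of principal symbols or of formal power-series solutions. Checking that this stabilization-plus-regularity statement is what the definition in~\ref{sec:formal-exact} actually supplies is the crux of the argument; the remaining ingredients (realization of jets by sections, splitting off a subbundle, factoring a linear map through the image of another whose kernel it annihilates) are routine. I would also want to verify carefully the single claim underlying the reduction, namely that formal exactness is preserved under formal adjoints, though that should follow immediately from its pointwise/symbolic characterization as indicated above.
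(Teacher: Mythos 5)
Your treatment of the first factorization, $l = l_c\circ c$, is essentially the paper's own argument written out at finite jet order: the paper passes to $J^\oo$, uses formal exactness to identify $\im p^\oo c'$ with $\ker p^\oo c$, and factors the bundle map representing $l$ through $p^\oo c$; your extension off the image subbundle and the verification $l_c[c[\phi]]=\hat{l}(j^q\phi)=l[\phi]$ just fill in the step the paper dismisses as ``straightforward linear algebra.'' Also, the stabilization issue you single out as the crux is not actually an issue with the definition in \ref{sec:formal-exact}: formal exactness there demands exactness of \emph{every} finite-level sequence $J^{q+k'}E'\to J^{q}F\to J^{q-k}E$ (no passage to large $q$ is needed), so the identification $\im p\hat{c}'=\ker p\hat{c}$ is available on the nose; the constant-rank/subbundle regularity you need to split off the image smoothly is a point the paper's proof also passes over silently.

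The genuine gap is the reduction of the $r$-statement to the $l$-statement via formal adjoints. Your justification --- that formal exactness is a condition on the prolonged sequences and that the adjoint replaces these by their transposes --- is not correct: what transposes under the formal adjoint is (up to sign) the principal symbol, whereas the prolonged jet sequences of $c^*, c^{\prime*}$ are not the fiberwise transposes of those of $c', c$ (the fiber of $J^s\tilde{E}^*$ is not the dual of the fiber of $J^sE$ in any way compatible with prolongation), and the paper itself stresses that symbol-level exactness (ellipticity) and formal exactness are independent. In fact the claim is false in general: on $M=\R^2$ take $c'=\del_1$ and $c=0$, both acting on scalar functions. Every finite-level prolongation of $\del_1$ is surjective (the derivatives $\del_1\del_I u$ at a point can be prescribed freely), so $(c',c)$ is formally exact in the sense of \ref{sec:formal-exact}; but the adjoint complex, $0$ followed by $-\del_1$, is not, since the jet of $x_2$ lies in $\ker p^s\del_1$ while the image of the zero operator is zero. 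The same example shows that middle exactness of the original complex alone cannot deliver the second factorization by any route: $c\circ r=0$ is vacuous for $c=0$, yet $\id$ does not factor as $\del_1\circ r_c$ for any differential operator $r_c$ (compare top-order terms). The paper's own proof of the second half does not go through adjoints at all --- it argues directly from $\im p^\oo r\sse\ker p^\oo c=\im p^\oo c'$, though it too is terse about why the resulting lift can be chosen differential --- and in every place the lemma is applied the adjoint complexes are separately assumed formally exact (diagrams \eqref{eq:glpar*} and \eqref{eq:glrec*}). If you add formal exactness of the adjoint complex as a hypothesis, your reduction becomes a clean derivation of the second half from the first; without it, the step ``formal exactness is preserved under formal adjoints'' is precisely where your proof breaks.
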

\begin{proof}
We represent all differential operators as maps from appropriate jet
bundles (see Sec.~\ref{sec:jets-pdes}). The fact that the differential
operators $c'\colon J^\oo E' \to J^\oo F$ and $c\colon J^\oo F \to E$
form a formally exact complex shows that the prolongations $p^\oo c'\colon
J^\oo E' \to J^\oo F$ and $p^\oo c\colon J^\oo F \to J^\oo E$ compose
into an exact sequence of vector bundle maps. Hence, the image of $p^\oo
c'$ coincides with the kernel of $p^\oo c$. By hypothesis, the kernel of
the linear bundle map $l\colon J^\oo F \to L$ contains $\im p^\oo c'$,
while the image of the linear bundle map $p^\oo r\colon J^\oo R \to
J^\oo F$ is contained in the image of $p^\oo c'$. Therefore, desired
factorization formulas follow straight forwardly from linear algebra.
\end{proof}
Such arguments are common in the formal theory of PDEs and can even be
generalized to the nonlinear
setting.\cite{goldschmidt,goldschmidt-lin,pommaret}

\subsection{Pairings and adjoints}\label{sec:green-adj}
We conclude this section by remarking the identities
\begin{equation}\label{eq:adj-id}
	(\G_\pm)^* = \G^*_\mp ,
\end{equation}
where on the left hand side $(\G_\pm)^*$ denotes the adjoint of the
retarded/advanced Green function $\G_\pm$ of the equation $f[\phi]=0$,
and on the right hand side $\G^*_\mp$ denotes the advanced/retarded
Green function of the adjoint equation $f^*[\phi] = 0$. Note that taking
the adjoint flips the support between retarded and advanced.
\begin{definition}\label{def:green-form}
Given two differential operators $f,f^*\colon \Secs(F) \to
\Secs(\tilde{F}^*)$ are said to be mutually \emph{adjoint} if there
exists a bilinear differential operator $G\colon \Secs(F)\times \Secs(F)
\to \Forms^{n-1}(M)$ such that
\begin{equation}\label{eq:green-form}
	f[\phi]\cdot \psi - \phi\cdot f^*[\psi]
	= \d G[\phi,\psi]
\end{equation}
for any sections $\phi,\psi\colon M\to F$. The $(n-1)$-form valued
bilinear differential operator $G[\phi,\psi]$ is called a \emph{Green
form} associated to $f$ and $f^*$~\cite[\textsection IV.5]{palais},
\cite[\textsection V.1.3]{ch1}. Note that Eq.~\eqref{eq:green-form}
defines $G$ up to the addition of an exact
form,\cite{anderson-big,gms} $G\sim G+\d{H}$.
Denote by $[G]$ the uniquely defined equivalence class modulo exact
local bilinear forms $\d{H}[-,-]$.
\end{definition}

Note that we may introduce a natural pairing between elements $\phi
\in \Secs(F)$ and $\tilde{\alpha} \in \Secs(\tilde{F}^*)$ given by
\begin{equation}
	\langle \phi, \tilde{\alpha}^* \rangle =
	\langle \tilde{\alpha}^*, \phi \rangle =
	\int \phi\cdot \tilde{\alpha}^* .
\end{equation}
The pairing is only partially defined. That is, there exist arguments
for which the integral does not converge. For simplicity, we will
consider it only for those those pairs of sections for which the
integrand $\phi\cdot \tilde{\alpha}^*$ has compact support. This pairing
is non-degenerate in either argument, as follows from the standard
argument of the fundamental lemma of the calculus of
variations~\cite[\textsection IV.3.1]{ch1}. It will play an important
role in Secs.~\ref{sec:tt*-conf}, \ref{sec:tt*-sols}. It is easy to
show that the formal adjoint $f^*$ coincides with the adjoint of $f$
with respect to this natural pairing: $\langle f[\phi], \psi \rangle =
\langle \phi, f^*[\psi] \rangle$. This natural pairing allows us to
define adjoints for integral operators like Green functions, namely
$\langle \G_\pm[\tilde{\alpha}^*], \tilde{\beta}^* \rangle = \langle
\tilde{\alpha}^*, (G_\pm)^*[\tilde{\beta}^*] \rangle$.

It is straight forward that that the natural pairing $\langle -,
-\rangle$ is non-degenerate on the spaces $\Secs_\pm(F)\times
\Secs_\mp(\tilde{F}^*)$. And since, the identities $f\circ \G_\pm =
\G_\pm \circ f = \id$ hold on $\Secs_{\pm}(F)$, it is now easy to verify
the adjoint identities~\eqref{eq:adj-id} since
\begin{align}
	\int_M \phi_\mp \cdot f\circ \G_\pm[\tilde{\alpha}^*_\pm]
	&= \langle \phi_\mp, \tilde{\alpha}^*_\pm \rangle
		= \int_M (\G_\pm)^*\circ f^*[\phi_\mp] \cdot \tilde{\alpha}^*_\pm \\
	\int_M \G_\pm\circ f[\phi_\pm] \cdot \tilde{\alpha}^*_\mp ,
	&= \langle \phi_\pm, \tilde{\alpha}^*_\mp \rangle
		= \int_M \phi_\mp \cdot f^*\circ (\G_\pm)^*[\tilde{\alpha}^*_\mp] ,
\end{align}
for any $\phi_\pm \in \Secs_\pm(F)$ and $\tilde{\alpha}^*_\pm \in
\Secs_\pm(F)$. The causal Green functions then satisfy $(\G)^* = -\G^*$,
where $\G^*$ is the causal Green function for $f^*$.

\begin{definition}\label{def:green-pairing}
Let $\iota\colon \Sigma\sso M$ be a future oriented, Cauchy surface. The
\emph{Green pairing} $\langle -,- \rangle_G$ between a solution $\phi
\in \Secs_{SC}(F)$ of $f[\phi] = 0$ and a solution $\psi \in
\Secs_{SC}(F)$ of $f^*[\psi] = 0$ is given by
\begin{equation}
	\langle \phi, \psi \rangle_G
	= \int_\Sigma \iota^* G[\phi,\psi]
\end{equation}
\end{definition}
\begin{lemma}\label{lem:green-indep}
Following the notation of Def.~\ref{def:green-form}, the Green pairing
$\langle -,- \rangle_G$ depends only on the equivalence class $[G]$ of
$G$ and is independent of $\Sigma$.
\end{lemma}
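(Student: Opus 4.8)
The plan is to reduce both independence claims to a single application of Stokes' theorem, exploiting the fact that the Green form $G[\phi,\psi]$ is closed on shell. First I would establish closedness: given $\phi \in \Secs_{SC}(F)$ with $f[\phi] = 0$ and $\psi \in \Secs_{SC}(F)$ with $f^*[\psi] = 0$, the defining identity \eqref{eq:green-form} gives $\d G[\phi,\psi] = f[\phi]\cdot\psi - \phi\cdot f^*[\psi] = 0$. So $G[\phi,\psi]$ is a closed $(n-1)$-form. Moreover, its support is contained in $\supp\phi \cap \supp\psi$, which is an intersection of two spacelike compact sets and is therefore compact; in particular $G[\phi,\psi]$ has spacelike compact support, so all integrals below converge and boundary terms at spatial infinity vanish.

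For independence of $[G]$: if $G \sim G + \d H$ for a local bilinear $(n-2)$-form $H$, then $\iota^* G[\phi,\psi]$ and $\iota^* (G[\phi,\psi] + \d H[\phi,\psi])$ differ by $\iota^* \d H[\phi,\psi] = \d(\iota^* H[\phi,\psi])$, an exact $(n-1)$-form on $\Sigma$. Since $H[\phi,\psi]$ has compact support (same argument as above), $\iota^* H[\phi,\psi]$ has compact support on $\Sigma$, and $\int_\Sigma \d(\iota^* H[\phi,\psi]) = 0$ by Stokes' theorem on a manifold without boundary. Hence $\langle\phi,\psi\rangle_G$ depends only on $[G]$.

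For independence of $\Sigma$: let $\Sigma_0, \Sigma_1$ be two future-oriented Cauchy surfaces. The region $D$ between them (more precisely, the portion of $M$ bounded by $\Sigma_0$ and $\Sigma_1$, with appropriate orientations so that $\partial D = \Sigma_1 - \Sigma_0$) meets $\supp G[\phi,\psi]$ in a compact set, because a spacelike compact set intersected with the region between two Cauchy surfaces is compact. Applying Stokes' theorem on $D$,
\begin{equation}
	\int_{\Sigma_1} \iota_1^* G[\phi,\psi] - \int_{\Sigma_0} \iota_0^* G[\phi,\psi]
	= \int_D \d G[\phi,\psi] = 0 ,
\end{equation}
using closedness on shell. (If $\Sigma_0$ and $\Sigma_1$ are not disjoint or not nested, one first inserts a third Cauchy surface to the future of both, or one works with the standard splitting $M = I^+(\Sigma') \cup I^-(\Sigma'')$ from Def.~\ref{def:adapt-pu}, and compares each $\Sigma_i$ to it; either way the argument is local near the compact support.) This gives independence of the choice of Cauchy surface.

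The only genuinely delicate point — the step I would flag as the main obstacle — is the support bookkeeping that makes Stokes' theorem applicable: one must know that $\supp\phi \cap \supp\psi$ is compact and, more importantly, that the region $D$ between the two Cauchy surfaces intersects this set in a compact set, so that there are no boundary contributions "escaping to infinity" along $\Sigma_0$, $\Sigma_1$, or spatial infinity. This rests on the interplay between spacelike compactness and the Cauchy property in the ambient globally hyperbolic conal structure (cf.~\ref{sec:conal} and the timelike-compact/spacelike-compact duality noted in the proof of Lem.~\ref{lem:exsplit}); once this is in hand, everything else is a routine application of Stokes' theorem and the defining identity of the Green form.
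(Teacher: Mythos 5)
Your proposal follows essentially the same route as the paper's proof: the difference of two representatives of $[G]$ pulls back to an exact form on the boundaryless Cauchy surface and integrates to zero, and the comparison of two Cauchy surfaces is a Stokes argument over the region between them, using the on-shell identity $\d G[\phi,\psi] = f[\phi]\cdot\psi - \phi\cdot f^*[\psi] = 0$. The one flaw is in your support bookkeeping at the start: the intersection $\supp\phi\cap\supp\psi$ of two spacelike compact sets is \emph{not} compact in general (already on Minkowski space, the causal shadows $\overline{I(K_1)}$ and $\overline{I(K_2)}$ of two compact sets overlap in an unbounded region extending towards timelike infinity); it is only spacelike compact. So your stated justification that $H[\phi,\psi]$ has compact support in $M$, and hence that $\iota^*H[\phi,\psi]$ has compact support on $\Sigma$, does not stand as written. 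The conclusion you need is nevertheless true, for the reason the paper tacitly uses: a spacelike compact set meets any Cauchy surface --- and, more generally, any timelike compact set, such as the region between two Cauchy surfaces --- in a compact set. With that substitution, which you in fact invoke correctly in the second half of your argument, the proof is complete and coincides with the paper's.
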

\begin{proof}
Any two representatives $G_1$ and $G_2$ of $[G]$ will differ by an exact
term $\d{H}$, with $H[-,-]$ a bilinear bidifferential operator.
Therefore, the integrands $\iota^* G_i[\varphi_1,\varphi_2]$ will differ
by the exact term $\d\iota^* H[\varphi_1,\varphi_2]$, with necessarily
compact support. Therefore, since $\Sigma$ has no boundary, we can use
any representative of $[G]$ to evaluate the pairing.

Now, let $\iota'\colon \Sigma' \sso M$ be another Cauchy surface and let
$S\sse M$ be such that $\del S = \Sigma' - \Sigma$. Then an application
of Stokes' theorem shows the following:
\begin{equation}
	\int_{\Sigma'} \iota^{\prime*} G[\phi,\psi]
	- \int_{\Sigma} \iota^* G[\phi,\psi]
	= \int_S \d G[\phi,\psi]
	= \int_S (f[\phi]\cdot \psi - \phi\cdot f^*[\psi])
	= 0 ,
\end{equation}
where we used the solution properties $f[\phi] = 0$ and $f^*[\psi] = 0$.
This shows the independence of the Green pairing from the choice of a
Cauchy surface. 
\end{proof}

\begin{lemma}\label{lem:green-nondegen}
The Green pairing has the following alternative forms:
\begin{equation}
	\langle \psi, \xi \rangle_G
	= \int_M \tilde{\alpha}^* \cdot \xi
	= \int_M \tilde{\alpha}^* \cdot \G^*[\tilde{\beta}^*]
	= -\int_M \G[\tilde{\alpha}^*] \cdot \tilde{\beta}^*
	= -\int_M \psi \cdot \tilde{\beta}^* ,
\end{equation}
where $\phi = \G[\tilde{\alpha}^*]$ and $\xi = \G^*[\tilde{\beta}^*]$,
$\tilde{\alpha}^*, \tilde{\beta}^* \in \Secs_0(\tilde{F}^*)$. Moreover,
it is non-degenerate.
\end{lemma}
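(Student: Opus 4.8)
The plan is to establish the chain of equalities one link at a time, using the Green pairing definition together with the adjoint identities from Sec.~\ref{sec:green-adj} and the exactness/splitting results of Prp.~\ref{prp:exact} and Lem.~\ref{lem:exsplit}. First I would observe that by Prp.~\ref{prp:exact} applied to $f$ (for $\phi$) and to $f^*$ (for $\xi$, using $(\G)^* = -\G^*$), every spacelike-compact solution is in the image of the causal Green function, so writing $\phi = \G[\tilde\alpha^*]$ and $\xi = \G^*[\tilde\beta^*]$ with $\tilde\alpha^*,\tilde\beta^* \in \Secs_0(\tilde F^*)$ is legitimate, and the ambiguity in $\tilde\alpha^*,\tilde\beta^*$ is precisely an element of $\im f$, resp.\ $\im f^*$. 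The central step is to rewrite $\langle \psi,\xi\rangle_G = \int_\Sigma \iota^* G[\psi,\xi]$ as a bulk integral $\int_M \tilde\alpha^*\cdot \xi$. To do this I would pick a partition of unity $\{\chi_+,\chi_-\}$ adapted to $\Sigma$ as in Def.~\ref{def:adapt-pu}, so that $\psi = \G[\tilde\alpha^*] = \G_+[\tilde\alpha^*] - \G_-[\tilde\alpha^*]$ and on the future side $\psi$ agrees with $\G_+[\chi_+\tilde\alpha^*] + \cdots$; applying Stokes' theorem to $\d G[\G_+[\tilde\alpha^*],\xi] = f[\G_+[\tilde\alpha^*]]\cdot\xi - \G_+[\tilde\alpha^*]\cdot f^*[\xi] = \tilde\alpha^*\cdot\xi$ over the region to the future of $\Sigma$ (where the retarded piece has support bounded away from the far past and the integrand is compactly supported in the relevant directions) collapses the surface integral to $\int_M \tilde\alpha^*\cdot\xi$. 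The same manoeuvre on the past side with $\G_-$ gives the consistency of the two expressions.

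Once $\langle\psi,\xi\rangle_G = \int_M \tilde\alpha^*\cdot\xi$ is in hand, the remaining equalities are formal. Substituting $\xi = \G^*[\tilde\beta^*]$ gives the second expression. For the third I would use the adjoint identity $(\G)^* = -\G^*$ together with the natural pairing: $\int_M \tilde\alpha^*\cdot\G^*[\tilde\beta^*] = \langle\tilde\alpha^*,\G^*[\tilde\beta^*]\rangle = \langle \G[\tilde\alpha^*],\tilde\beta^*\rangle \cdot(-1) = -\int_M \G[\tilde\alpha^*]\cdot\tilde\beta^*$, where the sign comes from $(\G)^* = -\G^*$ and the symmetry of $\langle-,-\rangle$. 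The last equality is then just $\phi = \G[\tilde\alpha^*]$ — here I note the statement writes $\psi$ but means the solution $\phi$ of $f[\phi]=0$; I would either align the notation with Def.~\ref{def:green-pairing} or insert a remark that $\psi$ and $\phi$ are being identified via $\phi = \G[\tilde\alpha^*]$.

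For non-degeneracy, suppose $\langle\phi,\xi\rangle_G = 0$ for all $\xi \in \Secs_{SC}(F)$ solving $f^*[\xi]=0$. Using the form $\langle\phi,\xi\rangle_G = \int_M \tilde\alpha^*\cdot\xi = \int_M \tilde\alpha^*\cdot\G^*[\tilde\beta^*]$ and letting $\tilde\beta^*$ range over all of $\Secs_0(\tilde F^*)$, this says $\tilde\alpha^*$ pairs to zero with the image of $\G^*$. But by Prp.~\ref{prp:exact} (applied to $f^*$) we have $\im\G^* = \ker f^*|_{\Secs_{SC}} \cong \coker f$, and more to the point the exactness tells us the annihilator of $\im\G^*$ under the natural pairing is exactly $\im f$; hence $\tilde\alpha^* \in \im f$, so $\phi = \G[\tilde\alpha^*] = 0$ because $\G$ annihilates $\im f$. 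The symmetric argument handles the other slot.

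The main obstacle will be the first step — carefully justifying the Stokes'-theorem manipulation that converts the Cauchy-surface integral into the bulk integral, in particular checking that the relevant integrands have the support properties (timelike/spacelike compact in the right directions) needed for Stokes' theorem to apply with no boundary contribution at infinity. This is exactly the kind of support bookkeeping already carried out in the proof of Lem.~\ref{lem:exsplit}, and I would lean on the adapted partition of unity $\{\chi_+,\chi_-\}$ and the support bounds $\supp\G_\pm[\tilde\alpha^*] \sse \overline{I^\pm(\supp\tilde\alpha^*)}$ from Def.~\ref{def:green-hyp} to make it rigorous. Everything downstream of that is routine algebra with the pairing and the adjoint identities.
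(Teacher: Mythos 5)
Your overall route is the same as the paper's---convert the Cauchy-surface integral into a bulk integral by Stokes' theorem with the help of an adapted partition of unity, then move $\G$ across the natural pairing using $(\G)^* = -\G^*$---and everything downstream of the first equality is sound: the chain of substitutions, the sign from the adjoint identity, the observation that the statement's $\phi$ should be read as the $\psi$ in the displayed formula, and the non-degeneracy argument (letting $\tilde{\beta}^*$ range over $\Secs_0(\tilde{F}^*)$, invoking the fundamental lemma, and using exactness of the sequence in Prp.~\ref{prp:exact} to conclude $\tilde{\alpha}^*\in\im f$, hence $\G[\tilde{\alpha}^*]=0$) all match the paper's proof in substance.

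The genuine problem is in the central Stokes step: you have the regions backwards. You propose to integrate $\d G[\G_+[\tilde{\alpha}^*],\xi]$ over the \emph{future} of $\Sigma$, asserting that ``the integrand is compactly supported in the relevant directions.'' It is not: $\supp G[\G_+[\tilde{\alpha}^*],\xi] \sse \overline{I^+(\supp\tilde{\alpha}^*)}\cap\supp\xi$, and the intersection of a retarded set with a spacelike compact set extends all the way to future infinity, so within $I^+(\Sigma)$ there is an uncontrolled contribution at infinity and Stokes' theorem does not apply. The retarded piece must be integrated over the \emph{past} of $\Sigma$, where its support is pinched between $\supp\tilde{\alpha}^*$ from below and $\Sigma$ from above and is therefore compact; symmetrically, the advanced piece goes over the future of $\Sigma$. (Note also that each piece then yields only $\int_{I^{\mp}(\Sigma)}\tilde{\alpha}^*\cdot\xi$, and the two must be added to obtain $\int_M\tilde{\alpha}^*\cdot\xi$; neither alone ``collapses the surface integral'' unless $\Sigma$ is first pushed entirely off $\supp\tilde{\alpha}^*$.) With the sides swapped, your decomposition $\psi=\G_+[\tilde{\alpha}^*]-\G_-[\tilde{\alpha}^*]$ does work and is a legitimate alternative to the paper's argument, which instead cuts off the \emph{second} argument: it writes $\xi=\chi_+\xi+\chi_-\xi$, so that $f^*[\chi_\pm\xi]=\pm f^*_\chi[\xi]$ is compactly supported, arrives at $-\int_M\psi\cdot f^*_\chi[\xi]$, and then uses the splitting identity $\G^*\circ f^*_\chi=\id$ of Lem.~\ref{lem:exsplit} (applied to $f^*$) to land on $\int_M\tilde{\alpha}^*\cdot\xi$.
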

Note that the following proof was partly inspired by Sec.~3.3 of
Ref.~\citen{fr-pois} and Lem.~3.2.1 of Ref.~\citen{wald-qft}.  It has
the same skeletal structure as one of the main technical lemmas
(Lem.~\ref{lem:bOmega}) of Sec.~\ref{sec:classquant}.
\begin{proof}
Consider a future oriented Cauchy surface $\iota\colon \Sigma \sso M$
(see \ref{sec:conal}) and a partition of unity $\{\chi_\pm\}$ adapted to
it. Then, recalling the notation for the splitting maps in
Lem.~\ref{lem:exsplit}, direct calculation gives
\begin{align}
	\langle \psi, \xi \rangle_G
	&= \int_\Sigma \iota^* G[\psi,\xi]
	= \sum_{\pm} \int_\Sigma \iota^* G[\psi, \chi_\pm\xi] \\
	&= \sum_{\pm} \pm \int_{I^{\mp}(\Sigma)} \d G[\psi,\chi_\pm\xi] \\
	&= \sum_{\pm} \pm \int_{I^{\mp}(\Sigma)}
		(f[\psi]\cdot (\chi_\pm\xi) - \psi\cdot f^*[\chi_\pm\xi]) \\
	&= -\int_{I^-(\Sigma)} \psi\cdot f^*_\chi[\xi]
		-\int_{I^+(\Sigma)} \psi\cdot f^*_\chi[\xi]
	= -\int_M \psi \cdot  f^*_\chi[\xi] \\
	&= -\int_M \G[\tilde{\alpha}^*] \cdot  f^*_\chi[\xi]
	= \int_M \tilde{\alpha}^* \cdot (\G^*\circ f^*_\chi)[\xi] \\
	&= \int_M \tilde{\alpha}^* \cdot \xi
	= \int_M \tilde{\alpha}^* \cdot \G^*[\tilde{\beta}^*] \\
	&= -\int_M \G[\tilde{\alpha}^*] \cdot \tilde{\beta}^*
	= -\int_M \psi \cdot \tilde{\beta}^* .
\end{align}
Appealing to the exact sequence of Prp.~\ref{prp:exact}, the dual
densities $\tilde{\alpha}^*$ and $\tilde{\beta}^*$ are only defined up
to an element of $\im f$ and $\im f^*$ respectively. However, the
formulas show that this freedom does not affect the result.

Now, based on the formula $\langle \psi, \xi \rangle_G = - \int_M \psi
\cdot \tilde{\beta}^*$, the fact that $\tilde{\beta}^*$ could be
arbitrary and the non-degeneracy of the natural pairing $\langle -,-
\rangle$ we can see that $\langle -,- \rangle_G$ must be non-degenerate
in its second argument. The same reasoning establishes non-degeneracy
in the first argument as well. 
\end{proof}

\section{Covariant phase space formalism, Peierls formula}
\label{sec:classquant}
This section constitutes the main body of this review. At this point it
is helpful to at least skim the contents of \ref{sec:jets} and
\ref{sec:jets-pdes}, as they summarize relevant concepts and notation.
Its culmination is a precise set of conditions (Secs.~\ref{sec:constr},
\ref{sec:gauge} and~\ref{sec:constr-gf}) that are sufficient to
establish the validity of the covariant phase space formalism and the
Peierls formula constructions, respectively, of the symplectic
(Sec.~\ref{sec:formal-symp}) and Poisson (Sec.~\ref{sec:formal-pois})
structures on the phase space of a classical field theory, and their
equivalence via a generalized Forger-Romero\cite{fr-pois} argument
(Sec.~\ref{sec:symp-pois}).

Below, we study their argument in depth and generalize it to include
field theories with constraints and gauge invariance.
Sec.~\ref{sec:var-sys} defines variational PDE systems and shows how the
covariant symplectic formalism arises from their geometry.
Sec.~\ref{sec:formal-dg} discusses the differential geometry of the
space of solutions of the PDE system. Since the focus of this work is
more geometrical than analytical, we avoid a detailed discussion of the
subtleties of infinite dimensional manifolds. Instead, we define
so-called formal tangent and cotangent spaces to the solution space and
then fix a particular background solution, so that we need only consider
a single formal tangent and cotangent fiber at that point of the phase
space. This is sufficient for defining the formal symplectic form and
the Poisson bivector and proving their equivalence. An in-depth
discussion of the functional analytical details that go into defining
the necessary infinite dimensional differential geometry can be found in
Refs.~\citen{bsf,fr-bv,rejzner-thesis,bfr}.

\subsection{Variational systems}\label{sec:var-sys}
Consider a field vector bundle $F\to M$ over an $n$-dimensional manifold
$M$. A local action functional of order $k$ on $F\to M$ is a function
$S[\phi]$ of sections $\phi\colon M \to F$,
\begin{equation}
	S[\phi] = \int_M (j^k\phi)^*\L,
\end{equation}
where $\L$, the Lagrangian density, is a section of the bundle
$(\Lambda^n M)^k\to J^kF$ densities, which could depend on jet
coordinates of order up to $k$ (see \ref{sec:jets}). The Lagrangian
density is called local because, given a section $\phi$ and local
coordinates $(x^i,u^a_I)$ on $J^kF$, the pullback at $x\in M$ can be
written as 
\begin{equation}
	(j^k\phi)^*\L(x) = \L(x^i,\del_I\phi^a(x)) ,
\end{equation}
which depends only on $x$ and on the derivatives of $\phi$ at $x$ up to
order $k$. For the most part, the integral over $M$ can be considered
formal, since all the necessary properties will be derived from $\L$. On
the other hand, the finiteness of $S[\phi]$ or related quantities may be
important while discussing boundary conditions in spacetimes with
non-compact spatial extent. However, we will not discuss these issues
below.

Recall that \ref{sec:jets} introduces the variational
bicomplex $\Forms^{h,v}(F)$ of vertically and horizontally graded
differential forms on $J^\oo F$. Below, we use the notation introduced
there. A Lagrangian density is then an element
$\L\in\Forms^{n,0}(F)$ that can be projected to $J^kF$. Incidentally the
usual variational derivative of variational calculus can be put into
direct correspondence with the vertical differential $\dv$ on this
complex, which is how the name \emph{variational bicomplex} was
established.\cite{anderson-big,anderson-small}

Let $(x^i,u^a_I)$ be a set of adapted coordinates on the $\oo$-jet
bundle $J^\oo F$, where all the following calculations can be lifted. Any
result that depends only on jets of finite order can then be projected
onto the appropriate finite dimensional jet bundle. Using the
integration by parts identity~\eqref{eq:byparts} if necessary, we can
always write the first vertical variation of the Lagrangian density as
\begin{equation}\label{eq:dvL}
	\dv \L = \EL_a\wedge\dv{u^a} - \dh\theta.
\end{equation}
All terms proportional to $\dv{u^a_I}$, $|I|>0$, have been absorbed into
$\dh\theta$. In the course of the performing the integrations by parts,
$\EL_a$ can acquire dependence on jets up to order $2k$ (see
\ref{sec:jets-pdes}), and $\theta$ on jets up to order $2k-1$. Note that
$\EL_a=0$ are the \emph{Euler-Lagrange equations} associated with the
action functional $S[\phi]$ or the Lagrangian density $\L$. We can
identify the form $\EL_a\wedge\dv u^a$ with a possibly non-linear
differential operator $\EL\colon \Secs(F) \to \Secs(\tilde{F}^*)$, or
equivalently a bundle morphism $\EL\colon J^{2k}F \to \tilde{F}^*$.
Therefore, $(\EL,\tilde{F}^*)$ is an equation form of a PDE system
$\E_\EL\sso J^{2k}F$ on $F$ of order $2k$. A PDE system with an equation
form given by Euler-Lagrange equations of a Lagrangian density is said
to be \emph{variational}. Also, the form $\theta$ is an element of
$\Forms^{n-1,1}(F)$, projectable to $J^{2k-1}F$. It is referred to as
the \emph{presymplectic potential current density}. Applying the
vertical exterior differential to $\theta$ we obtain the
\emph{presymplectic current density} (or the \emph{presymplectic current
density defined by $\L$} if the extra precision is necessary):
\begin{equation}\label{eq:omega-def}
	\omega = \dv \theta,
\end{equation}
with $\omega\in \Forms^{n-1,2}(F)$. This terminology implies that
$\omega$ can be integrated over a codim-$1$ spacetime surface to
construct a presymplectic form (Sec.~\ref{sec:formal-symp}).  Such a
form on the solution space is referred to as \emph{local}. This method
of constructing a symplectic form on the phase space of classical field
theory is sometimes referred to as the \emph{covariant phase space
method}.\cite{lw,cw,abr,zuckerman}

The following lemma is an easy consequence of the definition of
$\omega$.
\begin{lemma}\label{lem:omega-cl}
The form $\omega\in\Forms^{n-1,2}(F)$ defined in
Eq.~\eqref{eq:omega-def} is both horizontally and vertically closed when
pulled back to $\iota_\oo\colon \E^\oo_\EL\sse J^\oo F$:
\begin{align}
	\dh \iota_\oo^* \omega &= 0 , \\
	\dv \iota_\oo^* \omega &= 0 .
\end{align}
\end{lemma}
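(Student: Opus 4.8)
The plan is to prove both closure identities by working on the infinite jet bundle $J^\oo F$, using the defining relation \eqref{eq:dvL} for $\theta$ together with the standard anticommutation $\dh\dv = -\dv\dh$ and nilpotency $\dh^2 = \dv^2 = 0$ of the two differentials of the variational bicomplex. The key observation is that $\dv$-closedness of $\omega = \dv\theta$ is immediate off-shell, while $\dh$-closedness requires pulling back to the Euler-Lagrange locus $\E^\oo_\EL$, where $\EL_a = 0$.

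For vertical closedness, I would simply compute $\dv\omega = \dv\dv\theta = 0$ since $\dv^2 = 0$; this already holds on all of $J^\oo F$, and a fortiori on $\E^\oo_\EL$. For horizontal closedness, I would start from \eqref{eq:dvL}, namely $\dv\L = \EL_a\wedge\dv u^a - \dh\theta$, and apply $\dv$ to both sides. Using $\dv^2 = 0$ on the left and $\dv\dh = -\dh\dv$ on the right, this gives $0 = \dv(\EL_a\wedge\dv u^a) + \dh\dv\theta = \dv(\EL_a\wedge\dv u^a) + \dh\omega$. Hence $\dh\omega = -\dv(\EL_a\wedge\dv u^a) = -(\dv\EL_a)\wedge\dv u^a$ (the second factor $\dv u^a$ is $\dv$-closed). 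Now pulling back along $\iota_\oo\colon \E^\oo_\EL \hookrightarrow J^\oo F$, the factor $\EL_a$ vanishes, and since $\dv\EL_a$ is, up to terms already proportional to $\EL_a$ and its horizontal derivatives, a one-form that pulls back to something annihilated on the prolonged equation locus, one gets $\dh\iota_\oo^*\omega = \iota_\oo^*(\dh\omega) = -\iota_\oo^*((\dv\EL_a)\wedge\dv u^a) = 0$. The clean way to phrase the last step is: $\iota_\oo^*$ commutes with $\dh$ (since $\E^\oo_\EL$ is the infinite prolongation, hence a differential subvariety closed under total derivatives), and $\iota_\oo^*\bigl((\dv\EL_a)\wedge\dv u^a\bigr) = \dv\bigl(\iota_\oo^*(\EL_a\,\dv u^a)\bigr) - \dots$; more directly, $(\dv\EL_a)\wedge\dv u^a = \dv(\EL_a\wedge\dv u^a) = \dv(\dv\L + \dh\theta)$, which is $\dh\dv\theta = \dh\omega$ up to sign — so one is really just reshuffling the same identity. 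I would present it as: apply $\dv$ to \eqref{eq:dvL}, rearrange to isolate $\dh\omega$, and observe that the remaining term is a multiple of $\EL_a$ that dies under $\iota_\oo^*$.

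The main subtlety — and the step I would be most careful about — is justifying that $\iota_\oo^*(\dh\alpha) = \dh(\iota_\oo^*\alpha)$ for the relevant forms, i.e. that pullback to the infinite prolongation intertwines the horizontal differential, and that the term $(\dv\EL_a)\wedge\dv u^a$ genuinely pulls back to zero rather than merely to something horizontally exact. The resolution is that $\E^\oo_\EL$ is by construction the infinite prolongation of $\E_\EL$, so it is stable under all total derivative vector fields, which is exactly what makes $\dh$ restrict to it and commute with $\iota_\oo^*$; and the defining equations $\del_I\EL_a = 0$ hold identically on $\E^\oo_\EL$, so any expression in which each term carries at least one factor of $\EL_a$ or one of its total derivatives pulls back to zero. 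Since $(\dv\EL_a)\wedge\dv u^a$ is $\dv$ applied to $\EL_a\wedge\dv u^a$ and $\dv$ does not introduce total derivatives, every term still carries a factor of $\EL_a$, so the pullback vanishes. This is a routine but worth-stating point; I would include a one-sentence justification and refer to \ref{sec:jets-pdes} (and the variational bicomplex references already cited) for the formal background.
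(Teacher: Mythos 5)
Your proof follows essentially the same route as the paper's: vertical closedness is immediate from $\dv^2=0$ on all of $J^\oo F$, and horizontal closedness is obtained by applying $\dv$ to Eq.~\eqref{eq:dvL}, rearranging to $\dh\omega = -\dv\EL_a\wedge\dv u^a$, and pulling back along $\iota_\oo$. The structure and the key identity are correct. The one point where your justification is wrong is the final sentence: it is \emph{not} true that every term of $(\dv\EL_a)\wedge\dv u^a$ ``carries a factor of $\EL_a$''. In coordinates $\dv\EL_a = \J^I_{ab}\,\dv u^b_I$, where the coefficients $\J^I_{ab}$ are precisely the Jacobi (linearized Euler--Lagrange) coefficients, and these certainly do not vanish on shell. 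So the form does not lie in the ordinary function ideal generated by the $\del_I\EL_a$; it lies in the \emph{differential} ideal generated by $\EL_a$ together with $\dv\EL_a$, which is what $\iota_\oo^*$ annihilates. The clean justification is the commutation of the pullback with the \emph{vertical} differential (not only the horizontal one, which is the one you single out as the main subtlety): $\iota_\oo^*\dv\EL_a = \dv\,\iota_\oo^*\EL_a = \dv 0 = 0$, hence $\iota_\oo^*(\dv\EL_a\wedge\dv u^a)=0$. With that one-line correction your argument is complete and coincides with the paper's.
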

\begin{proof}
The horizontal and vertical differentials on $\E^\oo_\EL$ are defined by
pullback along $\iota_\oo$, that is, $\dh \iota_\oo^* = \iota_\oo^* \dh$
and $\dv \iota_\oo^* = \iota_\oo^* \dv$. Since $\omega=\dv\theta$ is
already vertically closed on $J^\oo F$, it is a fortiori vertically
closed on $\E^\oo_\EL$. The rest is a consequence of the nilpotence and
anti-commutativity of $\dh$ and $\dv$:
\begin{align}
	0 = \dv^2\L
		&= \dv\EL_a\wedge\dv{u^a} - \dv\dh\theta, \\
	\dh\omega
		&= \dh\dv\theta
		= -\dv \EL_a\wedge\dv{u^a} , \\
	\dh\iota_\oo^*\omega
		&= \iota_\oo^*\dh\omega
		= -\iota_\oo^* \dv\EL_a \wedge \dv u^a
		= 0 ,
\end{align}
since $\EL_a$ and $\dv\EL_a$ generate the differential ideal in
$\Forms^*(J^\oo F)$ that is annihilated by the pullback $\iota_\oo^*$. 
\end{proof}
In fact, we will promote the name \emph{presymplectic current density}
to any form satisfying these properties.
\begin{definition}
It is interesting to note that the existence of a presymplectic current
density compatible with a PDE $\E$ is almost equivalent to $\E$ being
variational.\cite{kh-invprob}

Given a PDE system $\iota\colon \E \sso J^kF$ we call a form $\omega$ a
\emph{presymplectic current density compatible with $\E$} if $\omega\in
\Forms^{n-1,2}(F)$ and it is both horizontally and
vertically closed on solutions:
\begin{align}
	\dh \iota_\oo^* \omega &= 0 , \\
	\dv \iota_\oo^* \omega &= 0 .
\end{align}
\end{definition}
The particular form $\omega$ defined by Eq.~\eqref{eq:omega-def} will be
referred to as the presymplectic current density associated to or
obtained from the Lagrangian density $\L$, if there is any potential
confusion.

\subsection{Formal differential geometry of solution spaces}
\label{sec:formal-dg}
Before describing the symplectic and Poisson structures on the space of
solutions, we should say something about the differential geometry of
the manifold of solutions of a PDE system as well as its tangent and
cotangent spaces. As usual for infinite dimensional manifolds, there are
some subtleties.

The main goal of this section is to describe the \emph{formal} tangent
and cotangent spaces of the manifold of arbitrary field sections and
the manifold of solution sections. The adjective formal, in the last
sentence, alludes to the fact that we avoid most technical issues of
infinite dimensional analysis and concentrate on what would be dense
subspaces of the true tangent and cotangent spaces with a reasonable choice for
their topologies. Results are algebraic and (finite dimensional)
geometric identities that would form the core of an earnest functional
analytical formulation of their non-formal versions. The formal tangent
and cotangent spaces have a natural dual pairing, which we prove to be
non-degenerate, as a substitute for the absence of true topological
duality between them. In the presence of constraints, the proof is
carried out under some additional sufficient conditions.

We start with Sec.~\ref{sec:linear}, which explains how linearizing the
linearized equations of motion are related to the formal cotangent
space of the space of solutions.
Secs.~\ref{sec:constr} and Sec.~\ref{sec:gauge} discuss
sufficient conditions on the constraints and gauge transformations
needed for later results. Secs.~\ref{sec:tt*-conf}
and~\ref{sec:tt*-sols} define the formal tangent and cotangent spaces in
the progressively more complicated cases of the space of field
configurations, the space of solutions (without constraints), and the
space of solutions (with constraints).

\subsubsection{Non-linear equations and linearization}\label{sec:linear}
In the preceding section (Sec.~\ref{sec:var-sys}) we have discussed
general variational systems, without regard for either linearity or
hyperbolicity. Note that the notion of Green hyperbolicity that we
discussed earlier in Sec.~\ref{sec:lin-inhom} is only applicable to
linear systems. The way that we shall restrict our discussion to linear
systems is by linearization, which is justified below.

Let us denote by $\S(F)$ the set of solutions of the equations of motion
of a given non-linear classical field theory on a field bundle $F\to M$.
For instance, for General Relativity $\S(F)$ would include metrics of
all possible signatures, not just Lorentzian ones. So, obviously, we
shall not be interested in all possible solutions, but those that have
good causal behavior. We shall not delve here into the question of what
constitutes good causal behavior in a non-linear field theory, but refer
the reader to Sec.~4.2 of Ref.~\citen{kh-caus}. We shall simply postulate that
there is a subset $\S_H(F) \sse \S(F)$ that consists of all solutions
with good causal behavior, with the subscript $H$ nominally standing for
\emph{globally hyperbolic}. For us, the most important property of any
background solution $\phi \in \S_H(F)$ is that the \emph{linearized}
equations of motion about $\phi$ are Green hyperbolic
(Sec.~\ref{sec:lin-inhom}). Sometimes, we shall also refer to a
background solution $\phi$ as a \emph{dynamical linearization point}.

We shall also assume the hypothesis that $\S_H(F)$ can be seen as a
possibly infinite dimensional manifold (see
Refs.~\citen{bsf,fr-bv,rejzner-thesis,bfr} for attempts to make that
precise). When dealing with either symplectic or Poisson structure, we
need also the notions of the tangent $T\S_H(F)$ and cotangent
$T^*\S_H(F)$ bundles, since these structures are needed to define 2-form
or bivector tensors on $\S_H(F)$. The de~Rham closedness and Jacobi
identities that respectively identify symplectic and Poisson structures
require a notion of differentiation, that is, a differential structure
on $T\S_H(F)$ and $T^*\S_H(F)$ as well. However, if we concentrate on
the mutual inverse relationship between a symplectic form $\Omega$ and a
Poisson bivector $\Pi$, we are allowed to work with a single tangent
space $T_\phi \S_H(F)$ and a single cotangent space $T_\phi^*\S_H(F)$ at
a time, with $\phi\in \S_H(F)$, verifying this property for each pair
$\Omega_\phi$ and $\Pi_\phi$ individually. This is precisely what we do
below.

From now on, we fix $\phi\in \S_H(F)$ to be a particular dynamical
linearization point. Given any, possibly non-linear, differential
operator, we denote its linearization by the same symbol but with a dot,
e.g.,\ $\dot{f}\colon \Secs(F) \to \Secs(\tilde{F}^*)$ is the
linearization of $f\colon \Secs(F) \to \Secs(\tilde{F}^*)$ about $\phi$.
Since solution space $\S_H(F)$ is embedded in the total field
configuration space $\Secs(F)$, the tangent space at $\phi$ is defined
by the space of linearized solutions, that is, solutions of the
linearized equations.  The following sections, Secs.~\ref{sec:constr}
and~\ref{sec:gauge}, introduce the linear differential operators that we
expect to obtain after linearizing the equations of motion with
constraints and gauge invariance. Later, in Sec.~\ref{sec:constr-gf}, we
consider the Euler-Lagrange equations of a possibly non-linear classical
field theory and linearize them, together with the corresponding
hyperbolic, constraint and gauge generator differential operators.

\subsubsection{Constraints}\label{sec:constr}
Earlier, in Sec.~\ref{sec:compat-constr}, we discussed linear
constrained hyperbolic systems. This notion can actually be extended to
non-linear systems, with a very similar structure of identities
satisfied by the differential operators involved. See
Refs.~\citen{geroch-pde} and~\citen{kh-caus} for details. At this point, w
will presume that we are dealing with a linearization of a possibly
non-linear constrained hyperbolic system, whose linearization is itself
a linear constrained hyperbolic system of the kind described in
Sec.~\ref{sec:compat-constr}. To keep the linearization in mind, we put
a dot on all the differential operators. Thus, we have a linear
constrained hyperbolic system defined by the operators $\dot{f}\colon
\Secs(F) \to \Secs(\tilde{F}^*)$, $\dot{c}\colon \Secs(F) \to \Secs(E)$,
$\dot{h}\colon \Secs(E) \to \Secs(\tilde{E}^*)$, $\dot{q}\colon
\Secs(\tilde{F}^*) \to \Secs(\tilde{E}^*)$, satisfying the identity
$\dot{h}\circ \dot{c} = \dot{q}\circ \dot{f}$. An important thing to
note is that their formal adjoints will satisfy the related identity
$\dot{c}^* \circ \dot{h}^* = \dot{f}^* \circ \dot{q}^*$, which is
exploited below.

When dealing with constrained systems, some results covered later will
require the further sufficient condition that the constraints be
\emph{parametrizable} (see Sec.~\ref{sec:caus-green}) so that we can
extend both the linearized system and its adjoint to the
following commutative diagrams:
\begin{equation}\label{eq:glpar}
\vcenter{\xymatrix{
	0
		\ar[r] &
	\Secs_0(E')
		\ar[d]^{\dot{c}'} \ar[r]^{\dot{h}'} &
	\Secs_0(\tilde{E}^{\prime*})
		\ar[d]^{\dot{q}'} \ar[r]^{\H'} &
	\Secs_{SC}(E')
		\ar[d]^{\dot{c}'} \ar[r]^{\dot{h}'} &
	\Secs_{SC}(\tilde{E}^{\prime*})
		\ar[d]^{\dot{q}'} \ar[r] &
	0  \\
	0
		\ar[r] &
	\Secs_0(F)
		\ar[d]^{\dot{c}} \ar[r]^{\dot{f}} &
	\Secs_0(\tilde{F}^*)
		\ar[d]^{\dot{q}} \ar[r]^{\G} &
	\Secs_{SC}(F)
		\ar[d]^{\dot{c}} \ar[r]^{\dot{f}} &
	\Secs_{SC}(\tilde{F}^*)
		\ar[d]^{\dot{q}} \ar[r] &
	0  \\
	0
		\ar[r] &
	\Secs_0(E)
		\ar[r]^{\dot{h}} &
	\Secs_0(\tilde{E}^*)
		\ar[r]^{\H} &
	\Secs_{SC}(E)
		\ar[r]^{\dot{h}} &
	\Secs_{SC}(\tilde{E}^*)
		\ar[r] &
	0
}}
\end{equation}
and
\begin{equation}\label{eq:glpar*}
\vcenter{\xymatrix{
	0
		\ar@{<-}[r] &
	\Secs_{SC}(\tilde{E}^{\prime*})
		\ar@{<-}[d]^{\dot{c}^{\prime*}} \ar@{<-}[r]^{\dot{h}^{\prime*}} &
	\Secs_{SC}(E')
		\ar@{<-}[d]^{\dot{q}^{\prime*}} \ar@{<-}[r]^{\H^{\prime*}} &
	\Secs_{0}(\tilde{E}^{\prime*})
		\ar@{<-}[d]^{\dot{c}^{\prime*}} \ar@{<-}[r]^{\dot{h}^{\prime*}} &
	\Secs_{0}(E')
		\ar@{<-}[d]^{\dot{q}^{\prime*}} \ar@{<-}[r] &
	0  \\
	0
		\ar@{<-}[r] &
	\Secs_{SC}(\tilde{F}^*)
		\ar@{<-}[d]^{\dot{c}^*} \ar@{<-}[r]^{\dot{f}^*} &
	\Secs_{SC}(F)
		\ar@{<-}[d]^{\dot{q}^*} \ar@{<-}[r]^{\G^*} &
	\Secs_{0}(\tilde{F}^*)
		\ar@{<-}[d]^{\dot{c}^*} \ar@{<-}[r]^{\dot{f}^*} &
	\Secs_{0}(F)
		\ar@{<-}[d]^{\dot{q}^*} \ar@{<-}[r] &
	0  \\
	0
		\ar@{<-}[r] &
	\Secs_{SC}(\tilde{E}^*)
		\ar@{<-}[r]^{\dot{h}^*} &
	\Secs_{SC}(E)
		\ar@{<-}[r]^{\H^*} &
	\Secs_{0}(\tilde{E}^*)
		\ar@{<-}[r]^{\dot{h}^*} &
	\Secs_{0}(E)
		\ar@{<-}[r] &
	0
}}
\end{equation}
The rows form exact sequences, while the columns form formally exact
complexes, as described in Sec.~\ref{sec:caus-green}. Note that the
adjoint diagram also describes a hyperbolic system with
hyperbolically integrable constraints, except that the role of the
constraint subsystem is now played by $(\dot{q}^{\prime*},E')$ and the
consistency subsystem is $(\dot{h}^{\prime*},\tilde{E}^{\prime*})$,
which satisfies the consistency identity $\dot{h}^{\prime*}\circ
\dot{q}^{\prime*} = \dot{c}^{\prime*}\circ \dot{f}^*$.

It is convenient to introduce here a cohomological condition, to be
applied in later sections, on the columns of the above commutative
diagrams. First, let us introduce some notation for the respective
cohomologies. Because the columns are so short, the cohomologies can
only be defined at the middle nodes. Each cohomology can be identified
by the vector bundle where it is defined, $F$ or $\tilde{F}^*$, the
support restriction, $0$ or $SC$, and the diagram used to define it,
\eqref{eq:glpar} or~\eqref{eq:glpar*}. Thus, we denote the cohomologies
defined by the columns of diagram~\eqref{eq:glpar} by $H^c_0(F)$,
$H^c_0(\tilde{F}^*)$, $H^c_{SC}(F)$ and $H^c_{SC}(\tilde{F}^*)$, while
those defined by the columns of diagram~\eqref{eq:glpar*} by
$H^{c^*}_0(F)$, $H^{c^*}_0(\tilde{F}^*)$, $H^{c^*}_{SC}(F)$ and
$H^{c^*}_{SC}(\tilde{F}^*)$. A cocycle section $\psi\in \Secs(F)$ (which
is annihilated by $\dot{c}$ or $\dot{q}^{\prime*}$), with appropriately
restricted support, represents a cohomology class denoted by $[\psi]_c$
or $[\psi]_{c^*}$, and similarly for cocycle sections in
$\Secs(\tilde{F}^*)$ (which are annihilated by $\dot{q}$ or
$\dot{c}^{\prime*}$).

Second, recall that, given $\psi\in \Secs(F)$ and
$\tilde{\alpha}^* \in \Secs(\tilde{F}^*)$, there is a natural paring
$\langle \psi, \tilde{\alpha}^* \rangle = \int_M \psi \cdot
\tilde{\alpha}^*$, provided the integral is finite. This pairing is
indeed well defined between the corresponding nodes of the
diagrams~\eqref{eq:glpar} and~\eqref{eq:glpar*}. Moreover, when
restricted to cocycle sections, this pairing descends to cohomology
classes, say $\langle [\psi]_c, [\tilde{\alpha}^*]_{c^*} \rangle =
\langle \psi, \tilde{\alpha}^* \rangle$. Thus, we have well defined
natural pairings on $H^c_0(F) \times H^{c^*}_{SC}(\tilde{F}^*)$,
$H^c_0(\tilde{F}^*) \times H^{c^*}_{SC}(F)$, $H^c_{SC}(F) \times
H^{c^*}_0(\tilde{F}^*)$, $H^c_{SC}(\tilde{F}^*) \times H^{c^*}_0(F)$.

Third, we must recall that the commutativity of the
diagrams~\eqref{eq:glrec} and~\eqref{eq:glrec*} allows us to consider
the respective cohomologies at $\Secs_0(\tilde{F}^*)$ modulo $\im
\dot{f}$ and at $\Secs_{SC}(F)$ restricted to $\im \G^*$. In both cases,
by the exactness of the rows of these diagrams, we are simply describing
the vertical cohomologies in the space of solutions of $\dot{f}$ and
$\dot{f}^*$, respectively. We shall denote them by $H^c_{SC}(F,\dot{f})$
and $H^{c^*}_{SC}(F,\dot{f}^*)$, respectively. The natural pairing also
descends to the cohomologies in solutions as follows, with say $\psi =
\G[\tilde{\alpha}^*]$ and $\xi = \G^*[\tilde{\beta}^*]$,
\begin{equation}
	\langle [\psi]_c , [\xi]_{c^*} \rangle_G
	= \langle \psi , \xi \rangle_G ,
\end{equation}
where on the right-hand-side $\langle - , - \rangle_G$ is the Green
pairing from Def.~\ref{def:green-pairing}.

\begin{definition} \label{def:glpar}
The constraints are said to be \emph{globally parametrizable} if the
natural pairing between the vertical cohomologies in solutions,
$H^c_{SC}(F,\dot{f})$ and $H^{c^*}_{SC}(F,\dot{f}^*)$, defined using the
commutative diagrams~\eqref{eq:glpar} and~\eqref{eq:glpar*}, is
non-degenerate.
\end{definition}

\begin{remark}
Note that, as also mentioned in Sec.~\ref{sec:caus-green}, when dealing
with parametrizable constraints, the causal structure that is in use is
that of the total compound system, whose equation form is $(\dot{h}'
\oplus \dot{f} \oplus \dot{h}, \tilde{E}^{\prime*} \oplus_M \tilde{F}^*
\oplus_M \tilde{E}^*)$. It is easy to
show that the adjoint system $(\dot{h}^*\oplus \dot{f} \oplus
\dot{h}^{\prime*}, \tilde{E}^* \oplus_M \tilde{F}^* \oplus_M
\tilde{E}^{\prime*})$ defines the same causal
structure.
\end{remark}

\subsubsection{Gauge transformations}\label{sec:gauge}
Many important classical field theories exhibit gauge invariance, like
Maxwell theory, Yang-Mills theory, and General Relativity. A \emph{gauge
transformation} is a family of maps $g_\eps\colon \Secs(F) \to
\Secs(F)$, parametrized by sections $\eps\in\Secs(P)$ of the \emph{gauge
parameter bundle} $P\to M$, that take solutions to solutions, while not
modifying a field section outside the support of $\delta \in \Secs(P)$,
$g_\delta[\phi](x) = \phi(x)$ if $x\not\in \supp \delta$, which may be
compact and arbitrarily small. If we linearize about some pair of
background section $\delta \to \delta + \eps$, we obtain a linearized
gauge transformation $g_\delta[\phi] \to g_\delta[\phi] +
\dot{g}[\eps]$. It is another requirement on gauge transformations that
the \emph{generator of linearized gauge transformations} $\dot{g}\colon
\Secs(P) \to \Secs(F)$ is a differential operator, which may depend on
the background sections $\delta$ and $\phi$.

Equivalence classes of sections under gauge transformations are
considered physically equivalent. Therefore, physical observables will
consist only of those functions on phase space that are gauge invariant
(constant on orbits of gauge transformations). Equivalently, observables
are annihilated by the action of linearized gauge transformations.
Another way to look at it, is to consider observables as functions on
the space of gauge orbits. We denote the space of solutions of the
possibly non-linear equations of motion (with good causal behavior,
cf.~Sec.~\ref{sec:linear}) modulo gauge transformations, $\bar{S}_H(F) =
\S_H(F)/{\sim}$ and call it the \emph{physical phase space}.

Often it is convenient to impose subsidiary conditions on field
sections, called \emph{gauge fixing}, that restrict the choice of
representatives of gauge equivalence classes. The gauge fixing is called
\emph{full} if they only allow a unique representative from each
equivalence class, and otherwise called \emph{partial}. The gauge
transformations that are compatible with a partial gauge fixing are
called \emph{residual}.

Unfortunately, PDE systems with gauge invariance cannot have a
well-posed initial value problem, and hence cannot be hyperbolic.  In
particular, their linearizations cannot be Green hyperbolic.  However,
the addition of subsidiary conditions on field sections can make the new
PDE system equivalent to a hyperbolic one, usually with constraints.  In
practice, many hyperbolic systems with constraints arise after adding
such gauge fixing conditions to a non-hyperbolic system with gauge
invariance. Interestingly, after many convenient gauge fixings, there
may remain non-trivial residual gauge freedom. For later convenience, as
we did with constraints, we restrict our attention to what we call
\emph{recognizable gauge transformations}.  That is, given linearized
gauge transformations of the form $\dot{g}[\eps]$ and a partially gauge
fixed hyperbolic system with equation form $(\dot{f},\tilde{F}^*)$, we
can fit them into the following commutative diagram, whose columns form
formally exact complexes:
\begin{equation}\label{eq:glrec}
\vcenter{\xymatrix{
	0
		\ar[r] &
	\Secs_0(P)
		\ar[d]^{\dot{g}} \ar[r]^{\dot{k}} &
	\Secs_0(\tilde{P}^*)
		\ar[d]^{\dot{s}} \ar[r]^{\K} &
	\Secs_{SC}(P)
		\ar[d]^{\dot{g}} \ar[r]^{\dot{k}} &
	\Secs_{SC}(\tilde{P}^*)
		\ar[d]^{\dot{s}} \ar[r] &
	0 \\
	0
		\ar[r] &
	\Secs_0(F)
		\ar[d]^{\dot{g}'} \ar[r]^{\dot{f}} &
	\Secs_0(\tilde{F}^*)
		\ar[d]^{\dot{s}'} \ar[r]^{\G} &
	\Secs_{SC}(F)
		\ar[d]^{\dot{g}'} \ar[r]^{\dot{f}} &
	\Secs_{SC}(\tilde{F}^*)
		\ar[d]^{\dot{s}'} \ar[r] &
	0  \\
	0
		\ar[r] &
	\Secs_0(P')
		\ar[r]^{\dot{k}'} &
	\Secs_0(\tilde{P}^{\prime*})
		\ar[r]^{\K'} &
	\Secs_{SC}(P')
		\ar[r]^{\dot{k}'} &
	\Secs_{SC}(\tilde{P}^{\prime*})
		\ar[r] &
	0
}}
\end{equation}
Their adjoints fit into the adjoint diagram whose columns are also
formally exact complexes:
\begin{equation}\label{eq:glrec*}
\vcenter{\xymatrix{
	0
		\ar@{<-}[r] &
	\Secs_{SC}(\tilde{P}^*)
		\ar@{<-}[d]^{\dot{g}^*} \ar@{<-}[r]^{\dot{k}^*} &
	\Secs_{SC}(P)
		\ar@{<-}[d]^{\dot{s}^*} \ar@{<-}[r]^{\K^*} &
	\Secs_{0}(\tilde{P}^*)
		\ar@{<-}[d]^{\dot{g}^*} \ar@{<-}[r]^{\dot{k}^*} &
	\Secs_{0}(P)
		\ar@{<-}[d]^{\dot{s}^*} \ar@{<-}[r] &
	0 \\
	0
		\ar@{<-}[r] &
	\Secs_{SC}(\tilde{F}^*)
		\ar@{<-}[d]^{\dot{g}^{\prime*}} \ar@{<-}[r]^{\dot{f}^*} &
	\Secs_{SC}(F)
		\ar@{<-}[d]^{\dot{s}^{\prime*}} \ar@{<-}[r]^{\G^*} &
	\Secs_{0}(\tilde{F}^*)
		\ar@{<-}[d]^{\dot{g}^{\prime*}} \ar@{<-}[r]^{\dot{f}^*} &
	\Secs_{0}(F)
		\ar@{<-}[d]^{\dot{s}^{\prime*}} \ar@{<-}[r] &
	0  \\
	0
		\ar@{<-}[r] &
	\Secs_{SC}(\tilde{P}^{\prime*})
		\ar@{<-}[r]^{\dot{k}^{\prime*}} &
	\Secs_{SC}(P')
		\ar@{<-}[r]^{\K^{\prime*}} &
	\Secs_{0}(\tilde{P}^{\prime*})
		\ar@{<-}[r]^{\dot{k}^{\prime*}} &
	\Secs_{0}(P')
		\ar@{<-}[r] &
	0
}}
\end{equation}
The systems with equation forms $(\dot{k},\tilde{P}^*)$ and
$(\dot{k}^{\prime*}, \tilde{P}^{\prime*})$ are required to be hyperbolic
and $P'\to M$ is called the \emph{gauge invariant field bundle}, while
$\dot{g}'$ is called the \emph{operator of gauge invariant field
combinations}.

The above commutative diagrams are very similar to those used to define
parametrizable constraints in Sec.~\ref{sec:constr}. Thus, we can define
all the same cohomologies: $H^g_0(F)$, $H^g_0(\tilde{F}^*)$,
$H^g_{SC}(F)$, $H^g_{SC}(\tilde{F}^*)$, $H^g_{SC}(F,\dot{f})$ defined by
diagram~\eqref{eq:glrec}, and $H^{g^*}_{SC}(\tilde{F}^*)$,
$H^{g^*}_{SC}(F)$, $H^{g^*}_0(\tilde{F}^*)$, $H^{g^*}_0(F)$,
$H^{g^*}_{SC}(F,\dot{f}^*)$ defined by diagram~\eqref{eq:glrec*}. Also,
in exactly the same way, there are bilinear pairings $\langle - , -
\rangle$ and $\langle - , - \rangle_G$ defined on respective pairs of
these cohomologies. On the other hand, the following definition is not
quite analogous, reflecting how this hypothesis is used in later
sections.

\begin{definition} \label{def:glrec}
The gauge transformations are said to be \emph{globally recognizable} if
the natural pairing between the cohomologies $H^g_{SC}(F)$ and
$H^{g^*}_0(\tilde{F}^*)$, defined using the commutative
diagrams~\eqref{eq:glrec} and~\eqref{eq:glrec*}, is non-degenerate.
\end{definition}

\subsubsection{Formal $T$ and $T^*$ for configurations}
\label{sec:tt*-conf}
Here we consider a section $\phi\in \S_H(F) \sso \Secs(F)$ and examine
the formal tangent and cotangent spaces at $T_\phi\Secs =
T_\phi\Secs(F)$ and $T^*_\phi\Secs = T^*_\phi \Secs(F)$ at $\phi$.
\begin{definition}
We define the \emph{formal full tangent space at $\phi$} as the set of
spacelike compactly supported sections and we define the \emph{formal full
cotangent space at $\phi$} as the set
\begin{equation}
	T_\phi\Secs \cong \Secs_{SC}(F)
	\quad\text{and}\quad
	T^*_\phi\Secs \cong \Secs_0(\tilde{F}^*) .
\end{equation}
The natural pairing $\langle-,-\rangle \colon T_\phi\Secs \times
T^*_\phi\Secs\to \R$ is
\begin{equation}
	\langle \psi, \tilde{\alpha}^* \rangle
	= \int_M \psi\cdot \tilde{\alpha}^* .
\end{equation}
\end{definition}
\begin{lemma}
The natural pairing between $T_\phi\Secs$ and $T^*_\phi\Secs$ is
non-degenerate.
\end{lemma}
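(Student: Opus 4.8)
The plan is to reduce non-degeneracy to the classical fundamental lemma of the calculus of variations, using the fact that compactly supported sections form a subspace of the spacelike compact ones. First I would check that the pairing is actually everywhere defined on $T_\phi\Secs\times T^*_\phi\Secs$: if $\psi\in\Secs_{SC}(F)$ and $\tilde{\alpha}^*\in\Secs_0(\tilde{F}^*)$, then $\supp(\psi\cdot\tilde{\alpha}^*)\sse\supp\psi\cap\supp\tilde{\alpha}^*$ is a closed subset of the compact set $\supp\tilde{\alpha}^*$, hence compact, so $\int_M\psi\cdot\tilde{\alpha}^*$ converges and $\langle-,-\rangle$ is a genuine bilinear map to $\R$.

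For non-degeneracy in the first slot, suppose $\psi\in\Secs_{SC}(F)$ is nonzero, so $\psi(x_0)\ne0$ for some $x_0\in M$. I would pick adapted local coordinates $(x^i,u^a)$ near $x_0$ in which some component $\psi^a$ is strictly positive on a coordinate ball $B$ around $x_0$, choose a nonnegative smooth bump function $\rho$ with compact support in $B$ and $\rho(x_0)>0$, and let $\tilde{\alpha}^*\in\Secs_0(\tilde{F}^*)$ be the dual density whose only nonvanishing coordinate component is $\alpha_a=\rho$, extended by zero. Then $\langle\psi,\tilde{\alpha}^*\rangle=\int_B\psi^a\,\rho\,\d\tilde{x}>0$. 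Since such $\tilde{\alpha}^*$ lie in $\Secs_0(\tilde{F}^*)=T^*_\phi\Secs$, this exhibits a separating covector and proves non-degeneracy in the first argument.

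The argument for the second slot is symmetric: given a nonzero $\tilde{\alpha}^*\in\Secs_0(\tilde{F}^*)$ with $\tilde{\alpha}^*(x_0)\ne0$, choose adapted coordinates in which some component $\alpha_b$ is strictly positive on a ball $B$, a nonnegative bump function $\rho$ compactly supported in $B$ with $\rho(x_0)>0$, and let $\psi\in\Secs_0(F)$ be the section whose only nonvanishing coordinate component is $\psi^b=\rho$. Then $\langle\psi,\tilde{\alpha}^*\rangle=\int_B\rho\,\alpha_b\,\d\tilde{x}>0$. The only point requiring attention is that the separating section must belong to the space of admissible test sections, and this holds because $\Secs_0(F)\sse\Secs_{SC}(F)=T_\phi\Secs$ (any compact support is a fortiori spacelike compact, taking $K=\supp\psi$).

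There is essentially no obstacle here; the content is purely local and is exactly the fundamental lemma of the calculus of variations invoked in Def.~\ref{def:green-form}. The only mild care needed is the bookkeeping just noted — that in each of the two directions the separating section lies in the correct one of the spaces $\Secs_{SC}(F)$, $\Secs_0(\tilde{F}^*)$ — together with the preliminary observation that the pairing is finite on all pairs in $T_\phi\Secs\times T^*_\phi\Secs$.
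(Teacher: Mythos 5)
Your proposal is correct and is exactly the standard fundamental-lemma-of-the-calculus-of-variations argument that the paper invokes (it simply cites Courant--Hilbert rather than writing out the bump-function construction). The two bookkeeping points you flag --- finiteness of the integral from the compact intersection of supports, and $\Secs_0(F)\sse\Secs_{SC}(F)$ so the separating section is admissible --- are the right ones and are handled correctly.
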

This is essentially the fundamental lemma of the calculus of variations
and the proof is standard~\cite[\textsection IV.3.1]{ch1}.

Since the physical phase space will be identified with the space of
gauge orbits $\bar{\S}_H(F)$ in the solution space $\S_H(F)$, given a
solution section $\phi\in\S_H(F)$, the formal tangent space
$T_\phi\bar{\S} = T^*_\phi\bar{\S}_H(F)$ at the corresponding
equivalence class $[\phi]\in\bar{\S}_H(F)$ in the space of gauge orbits
consists of equivalence classes of linearized solutions up to linearized
gauge transformations. Dually, the formal cotangent space
$T^*_\phi\bar{S}=T^*_\phi\bar{\S}_H(F)$ will consist of dual densities
annihilated by the adjoint of infinitesimal gauge transformation
generator.

Since gauge transformations act on field configurations and not just
solutions, it makes sense to consider all field configurations related
by gauge transformations as physically equivalent. The tangent space
$T_\phi\Secs$ will be reduced to the quotient (or physical) tangent
space $T_\phi\bar{\Secs}$ and the cotangent space $T^*\Secs$ to the
subset $T^*\bar{\Secs}$ of gauge invariant elements. The natural
pairing between them is shown to be non-degenerate under the condition
of global recognizability, that is, the vertical formally exact complexes in
diagrams~\eqref{eq:glrec} and~\eqref{eq:glrec*} are exact.  We deal with
field configurations first and delay the discussion of solutions to the
next section.

The exactness of the composition $\dot{g}'\circ\dot{g}=0$ ensures that
we can recognize pure gauge field configurations, which are of the form
$\psi=\dot{g}[\eps]$ for some spacelike compactly supported section
$\eps\colon M\to P$, precisely as those spacelike compact field
sections $\psi\colon M\to F$ that give vanishing gauge invariant field
combinations $\dot{g}'[\psi] = 0$.  On the other hand, the exactness of
the dual composition $\dot{g}^*\circ \dot{g}^{\prime*} = 0$ ensures that
we can parametrize gauge invariant, compactly supported dual densities
$\tilde{\alpha}^*\colon M\to \tilde{F}^*$, those satisfying
$\dot{g}^*[\alpha] = 0$, precisely as the image of the differential
operator $\dot{g}^{\prime*}$ acting on compactly supported sections of
$\tilde{P}^{\prime*}\to M$.

\begin{definition}
The \emph{formal gauge invariant full tangent space} at $\phi$ is the
set of gauge equivalence classes of $\phi$-spacelike compactly supported
sections,
\begin{align}
	T_\phi\bar{\Secs}
		&= \{ [\psi] \mid \psi\in \Secs_{SC}(F) \} , \\
	[\psi] &\sim \psi + \dot{g}[\eps], ~
			\text{with}~ \eps\in \Secs_{SC}(P) .
\end{align}
The \emph{formal gauge invariant full cotangent space} at $\phi$ is the
set of compactly supported gauge invariant dual densities,
\begin{equation}
	T_\phi^*\bar{\Secs}
		= \{ \tilde{\alpha}^* \in \Secs_0(\tilde{F}^*) \mid
			\dot{g}^*[\tilde{\alpha}^*]=0 \} .
\end{equation}
The natural pairing $\langle-,-\rangle\colon T_\phi\bar{\Secs} \times
T^*_\phi\bar{\Secs} \to \R$ is 
\begin{equation}
	\langle [\psi], \tilde{\alpha}^* \rangle
	= \int_M \psi\cdot\tilde{\alpha}^* .
\end{equation}
\end{definition}
\begin{lemma}\label{lem:ginv-full-nondegen}
If the gauge transformations are globally recognizable
(Def.~\ref{def:glrec}), the natural pairing between the gauge invariant
spaces $T_\phi\bar{\Secs}$ and $T^*_\phi\bar{\Secs}$ is non-degenerate.
\end{lemma}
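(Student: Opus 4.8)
The plan is to check that the pairing descends to equivalence classes and then to establish non-degeneracy in each slot separately; only the tangent slot will use the global recognizability hypothesis. Well-definedness is immediate: if $\psi' = \psi + \dot{g}[\eps]$ with $\eps \in \Secs_{SC}(P)$ and $\tilde{\alpha}^* \in T^*_\phi\bar{\Secs}$, then $\langle \psi', \tilde{\alpha}^* \rangle - \langle \psi, \tilde{\alpha}^* \rangle = \langle \dot{g}[\eps], \tilde{\alpha}^* \rangle = \langle \eps, \dot{g}^*[\tilde{\alpha}^*] \rangle = 0$; the middle equality is a legitimate integration by parts since $\tilde{\alpha}^*$ has compact support, so the associated Green-form surface term is compactly supported and integrates to zero over the boundaryless $M$, and the last equality uses $\dot{g}^*[\tilde{\alpha}^*] = 0$. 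Non-degeneracy in the cotangent slot is then nothing more than the fundamental lemma of the calculus of variations: if $\tilde{\alpha}^* \in T^*_\phi\bar{\Secs} \sse \Secs_0(\tilde{F}^*)$ pairs to zero with every $[\psi] \in T_\phi\bar{\Secs}$, then in particular $\langle \psi, \tilde{\alpha}^* \rangle = 0$ for all $\psi \in \Secs_0(F) \sse \Secs_{SC}(F)$, whence $\tilde{\alpha}^* = 0$ by the non-degeneracy of the natural pairing between $T_\phi\Secs$ and $T^*_\phi\Secs$ recorded above.

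The substance of the lemma is non-degeneracy in the tangent slot, which I would prove in two steps. Let $\psi \in \Secs_{SC}(F)$ satisfy $\langle \psi, \tilde{\alpha}^* \rangle = 0$ for every $\tilde{\alpha}^* \in \ker(\dot{g}^*\colon \Secs_0(\tilde{F}^*) \to \Secs_0(\tilde{P}^*)) = T^*_\phi\bar{\Secs}$; the goal is $\psi \in \im(\dot{g}\colon \Secs_{SC}(P) \to \Secs_{SC}(F))$. Step one shows $\dot{g}'[\psi] = 0$. If not, the fundamental lemma applied to the natural pairing on $\Secs_{SC}(P') \times \Secs_0(\tilde{P}^{\prime*})$ produces a $\rho^* \in \Secs_0(\tilde{P}^{\prime*})$ with $\langle \dot{g}'[\psi], \rho^* \rangle \ne 0$; putting $\tilde{\alpha}^* = \dot{g}^{\prime*}[\rho^*] \in \Secs_0(\tilde{F}^*)$ one gets $\dot{g}^*[\tilde{\alpha}^*] = \dot{g}^* \circ \dot{g}^{\prime*}[\rho^*] = 0$ because the columns of diagram~\eqref{eq:glrec*} form complexes, so $\tilde{\alpha}^* \in T^*_\phi\bar{\Secs}$, yet $\langle \psi, \tilde{\alpha}^* \rangle = \langle \dot{g}'[\psi], \rho^* \rangle \ne 0$, a contradiction. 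Hence $\psi \in \ker(\dot{g}'\colon \Secs_{SC}(F) \to \Secs_{SC}(P'))$ and so represents a class $[\psi]_g \in H^g_{SC}(F)$.

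Step two invokes global recognizability (Def.~\ref{def:glrec}): the natural pairing $H^g_{SC}(F) \times H^{g^*}_0(\tilde{F}^*) \to \R$ is non-degenerate. Any cocycle representative $\tilde{\alpha}^*$ of a class in $H^{g^*}_0(\tilde{F}^*)$ lies in $\ker(\dot{g}^*\colon \Secs_0(\tilde{F}^*) \to \Secs_0(\tilde{P}^*)) = T^*_\phi\bar{\Secs}$, so by the standing hypothesis on $\psi$ one has $\langle [\psi]_g, [\tilde{\alpha}^*]_{g^*} \rangle = \langle \psi, \tilde{\alpha}^* \rangle = 0$ for every such class; non-degeneracy in the first argument then forces $[\psi]_g = 0$, i.e.\ $\psi \in \im(\dot{g}\colon \Secs_{SC}(P) \to \Secs_{SC}(F))$, so $[\psi] = 0$ in $T_\phi\bar{\Secs}$, as desired. (Reading global recognizability in the stronger form that the middle columns of \eqref{eq:glrec} and \eqref{eq:glrec*} are exact would make this step trivial, since then $H^g_{SC}(F) = 0$ outright.) The one thing demanding care throughout is that each adjoint identity invoked — $\langle \dot{g}[\eps], \tilde{\alpha}^* \rangle = \langle \eps, \dot{g}^*[\tilde{\alpha}^*] \rangle$ and its analogue for $\dot{g}'$ and $\dot{g}^{\prime*}$ — holds on the nose rather than up to a surface term, which is guaranteed here because in every application one of the two sections has compact support, confining the relevant Green form to a compact set; beyond that bookkeeping I expect no genuine obstacle.
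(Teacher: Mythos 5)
Your proof is correct and follows essentially the same route as the paper's: non-degeneracy in the cotangent slot via the fundamental lemma, and in the tangent slot by first showing $\dot{g}'[\psi]=0$ using test densities of the form $\dot{g}^{\prime*}[\tilde{\eps}^{\prime*}]$ and then invoking the non-degenerate cohomology pairing from Def.~\ref{def:glrec}. The only additions are the explicit well-definedness check and the contradiction phrasing of step one, neither of which changes the substance.
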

\begin{proof}
Non-degeneracy in the second argument follows once again from the
fundamental lemma of the calculus of variations: $\langle [\psi],
\tilde{\alpha}^* \rangle = \langle \psi, \tilde{\alpha}^* \rangle = 0$
for all $\psi\in T_\phi\Secs$ implies that $\tilde{\alpha}^* = 0$.

Non-degeneracy in the first argument is more complicated, since we can
no longer use arbitrary $\tilde{\alpha}^*$ in the second argument. It
now requires an appeal to the global recognizability of the gauge
transformations. Suppose that $\langle [\psi], \tilde{\alpha}^* \rangle
= 0$ for all $\tilde{\alpha}^*\in T^*_\phi\bar{\Secs}$.  We need to show
that this implies $\psi = \dot{g}[\eps]$ is pure gauge, for some
spacelike compactly supported $\eps\in \Secs_{SC}(P)$.

By definition, a gauge invariant dual density represents a cohomology
class $[\tilde{\alpha}^*]_{g^*} \in H^{g^*}_0(\tilde{F}^*)$, defined in
Sec.~\ref{sec:gauge}. In fact, any such class could be represented.
Considering $\tilde{\alpha}^* =
\dot{g}^{\prime*}[\tilde{\eps}^{\prime*}]$ with arbitrary
$\tilde{\eps}^{\prime*}\in \Secs_0(\tilde{P}^{\prime*})$, we find
\begin{equation}
	\langle [\psi] , \tilde{\alpha}^* \rangle
	= \langle \psi, \tilde{\alpha}^* \rangle
	= \langle \psi, \dot{g}^{\prime*}[\tilde{\eps}^{\prime*}] \rangle
	= \langle \dot{g}'[\psi], \tilde{\eps}^{\prime*} \rangle .
\end{equation}
Since $\tilde{\eps}^{\prime*}$ could be arbitrary, the vanishing of
$\langle \dot{g}'[\psi], \tilde{\eps}^{\prime*} \rangle$ implies that
$\dot{g}'[\psi] = 0$. That is, $\psi$ necessarily represents a
cohomology class $[\psi]_g\in H^g_{SC}(F)$, also defined in
Sec.~\ref{sec:gauge}. Therefore we find that non-degeneracy in the
first argument implies that
\begin{equation}
	\langle \psi, \tilde{\alpha}^* \rangle
	= \langle [\psi]_g , [\tilde{\alpha}^*]_{g^*} \rangle = 0 ,
\end{equation}
where the last pairing is in the respective cohomologies and the class
$[\tilde{\alpha}^*]_{g^*}$ allowed to be arbitrary. But the global
recognizability hypothesis specifies precisely that the above pairing in
cohomology is non-degenerate and implies that $[\psi]_g = [0]$ and hence
that $\psi = \dot{g}[\eps]$, with $\eps \in \Secs_{SC}(P)$, is pure
gauge. 
\end{proof}

\subsubsection{Formal $T$ and $T^*$ for solutions}%
\label{sec:tt*-sols}
The formal tangent space $T_\phi\S$ will consist of linearized
solutions, that is solutions of the linearized constrained hyperbolic
system $\dot{f}[\psi] = 0$ and $\dot{c}[\psi] = 0$. The formal cotangent
space will naturally consist of equivalence classes of dual densities up
to the images of the adjoints of $\dot{f}$ and $\dot{c}$. After giving
the precise definitions below, we prove that that the natural pairing
between these formal tangent and cotangent spaces is non-degenerate.

\begin{definition}\label{def:tt*-sols}
We define the \emph{formal solutions tangent space} at $\phi$ as the set of
spacelike compactly supported linearized solution sections,
\begin{equation}
	T_\phi\S = T_\phi\S_H(F) = \{ \psi \in \Secs_{SC}(F) \mid 
		\dot{f}[\psi] = \dot{c}[\psi] = 0 \} .
\end{equation}
We define the \emph{formal solutions cotangent space} at $\phi$ as the set
of equivalence classes of compactly supported dual densities,
\begin{align}
	T^*_\phi\S = T^*_\phi\S_H(F) &= \{ [\tilde{\alpha}^*] \mid
		\tilde{\alpha}^* \in \Secs_0(\tilde{F}^*) \} , \\
	[\tilde{\alpha}^*] &\sim \tilde{\alpha}^*
		+ \dot{f}^*[\xi] + \dot{c}^*[\tilde{\eps}^*],  ~~\text{with}~~
		\xi \in \Secs_0(F), ~ \tilde{\eps}^*\in \Secs_0(\tilde{E}^*) .
\end{align}
The natural pairing $\langle-,-\rangle \colon T_\phi\S \times
T^*_\phi\S\to \R$ is
\begin{equation}
	\langle \psi, [\tilde{\alpha}^*] \rangle
		= \int_M \psi\cdot \tilde{\alpha}^* .
\end{equation}
\end{definition}

As a warm-up before the main result of this section, we fist handle the
case where the constraints and gauge transformations are trivial.

\begin{lemma}\label{lem:sols-nondegen}
If the constraints $\dot{c}[\phi] = 0$ and the gauge transformations are
trivial, then the natural pairing between $T_\phi\S$ and $T^*_\phi\S$ is
non-degenerate.
\end{lemma}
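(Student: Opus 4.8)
The plan is to recognize Lem.~\ref{lem:sols-nondegen} as a thin disguise of the non-degeneracy of the Green pairing established in Lem.~\ref{lem:green-nondegen}, feeding it through the exact sequence of Prp.~\ref{prp:exact} applied both to $\dot{f}$ and to its formal adjoint $\dot{f}^*$. With the constraints trivial, $T_\phi\S = \{\psi\in\Secs_{SC}(F)\mid\dot{f}[\psi]=0\}$, i.e.\ the spacelike compact solutions of the linearized equation; with the gauge transformations trivial, $T^*_\phi\S = \Secs_0(\tilde{F}^*)/\im\dot{f}^*$, where the source $\xi$ in $\dot{f}^*[\xi]$ ranges over $\Secs_0(F)$.

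First I would check that $\langle\psi,[\tilde{\alpha}^*]\rangle = \int_M\psi\cdot\tilde{\alpha}^*$ is well defined on $T_\phi\S\times T^*_\phi\S$: if $\tilde{\alpha}^* = \dot{f}^*[\xi]$ with $\xi\in\Secs_0(F)$, then $\int_M\psi\cdot\dot{f}^*[\xi] = \int_M\dot{f}[\psi]\cdot\xi = 0$, since $\dot{f}[\psi]=0$ and the integrand is compactly supported. Non-degeneracy in the first argument is then immediate from the fundamental lemma of the calculus of variations: every class of $T^*_\phi\S$ has a representative in $\Secs_0(\tilde{F}^*)$, so the vanishing of $\langle\psi,[\tilde{\alpha}^*]\rangle$ for all of them forces $\psi=0$.

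The substantive half is non-degeneracy in the second argument. Applying Prp.~\ref{prp:exact} to $\dot{f}^*$, the causal Green function $\G^*\colon\Secs_0(\tilde{F}^*)\to\Secs_{SC}(F)$ of $\dot{f}^*$ has kernel exactly $\im\dot{f}^*$ and image exactly the spacelike compact solutions of $\dot{f}^*$; hence $[\tilde{\alpha}^*]\mapsto\xi:=\G^*[\tilde{\alpha}^*]$ is an isomorphism from $T^*_\phi\S$ onto $\{\xi\in\Secs_{SC}(F)\mid\dot{f}^*[\xi]=0\}$. Applying Prp.~\ref{prp:exact} to $\dot{f}$, every $\psi\in T_\phi\S$ is of the form $\G[\tilde{\gamma}^*]$ for some $\tilde{\gamma}^*\in\Secs_0(\tilde{F}^*)$. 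Substituting these into the chain of identities in the proof of Lem.~\ref{lem:green-nondegen} (which uses the adjoint relation $(\G)^*=-\G^*$ recorded in Sec.~\ref{sec:green-adj}) gives $\langle\psi,[\tilde{\alpha}^*]\rangle = \int_M\G[\tilde{\gamma}^*]\cdot\tilde{\alpha}^* = -\langle\psi,\xi\rangle_G$. Since the Green pairing $\langle-,-\rangle_G$ is non-degenerate by Lem.~\ref{lem:green-nondegen} and $[\tilde{\alpha}^*]\leftrightarrow\xi$ is a bijection, the natural pairing is non-degenerate in its second argument too.

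I do not expect a genuine obstacle: the statement is exactly the constraint-free and gauge-free special case of the results of the following sections, and its proof runs in parallel with that of Lem.~\ref{lem:green-nondegen}. The one place that requires care is the bookkeeping of adjoints and supports — keeping straight which operator's retarded/advanced Green function appears at each step, and inserting the sign-flipping identity $(\G)^*=-\G^*$ consistently; everything else is linear algebra driven by the two instances of Prp.~\ref{prp:exact}.
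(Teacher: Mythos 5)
Your proof is correct and follows essentially the same route as the paper: both halves rest on the fundamental lemma of the calculus of variations, the parametrization $\psi=\G[\tilde{\gamma}^*]$ from Prp.~\ref{prp:exact}, the adjoint identity $(\G)^*=-\G^*$, and the exactness statement $\ker\G^*=\im\dot{f}^*$. The only cosmetic difference is that you outsource the last step to the non-degeneracy of the Green pairing in Lem.~\ref{lem:green-nondegen} rather than applying the fundamental lemma directly to conclude $\G^*[\tilde{\alpha}^*]=0$ — an identification the paper itself records in the remark immediately following the lemma.
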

\begin{proof}
Non-degeneracy in the first argument follows again from the fundamental
lemma of the calculus of variations: $\langle \psi, [\tilde{\alpha}^*]
\rangle = 0$ for all $\tilde{\alpha}^*\in \Secs_0(\tilde{F}^*)$ implies
that $\psi = 0$.

Non-degeneracy in the second argument is more tricky, since now $\psi$
can no longer be arbitrary. Suppose we have $\langle \psi,
[\tilde{\alpha}^*] \rangle = 0$ for all spacelike compactly supported
linearized solutions $\psi\in T_\phi\S$. From this, we need to deduce
that $[\tilde{\alpha}^*] = [0]$, which means $\tilde{\alpha}^* =
\dot{f}^*[\xi]$ for some compactly supported $\xi\in \Secs_0(F)$.

By Prp.~\ref{prp:exact}, we can parametrize all solutions as $\psi =
\G[\tilde{\beta}^*]$, using unrestricted $\tilde{\beta}^* \in
\Secs_0(\tilde{F}^*)$. The following simple calculation
\begin{equation}
	\langle \G[\tilde{\beta}^*], [\tilde{\alpha}^*] \rangle
	= \langle \G[\tilde{\beta}^*], \tilde{\alpha}^* \rangle
	= -\langle \tilde{\beta}^*, \G^*[\tilde{\alpha}^*] \rangle
\end{equation}
and an application of the fundamental lemma of the calculus of
variations shows that $\G^*[\tilde{\alpha}^*] = 0$. But, once again
appealing to Prp.~\ref{prp:exact}, this implies that $\tilde{\alpha}^*
= \dot{f}^*[\xi]$ with $\xi \in \Secs_0(F)$, which concludes the proof.
\end{proof}

\begin{remark}
At this point, it is worth mentioning that the natural pairing between
$T_\phi\S$ and $T_\phi^*\S$ (again, in the absence of constraints or
gauge transformations) is essentially equivalent, via
Lem.~\ref{lem:green-nondegen}, to the Green pairing
(Def.~\ref{def:green-pairing}), which is also non-degenerate.
\end{remark}

In the presence of gauge symmetries, the formal tangent space consists
of equivalence classes of linearized solutions up to gauge
transformations. On the other hand, the formal cotangent space is
restricted to equivalence represented by gauge invariant dual densities.
After giving the precise definitions below, we prove that the natural
pairing between these formal tangent and cotangent spaces is
non-degenerate, provided the constraints $\dot{c}[\phi] = 0$ are
globally parametrizable and the gauge transformation are globally
recognizable.

The following technical definition is motivated by following steps: we
first construct the solution space $T_\phi\S$ and then quotient by its
purge gauge subspace.
\begin{definition}\label{def:tt*-sols-gauge}
We define the \emph{formal gauge invariant solutions tangent space} at
$\phi$ as the set of gauge equivalence classes of $\phi$-spacelike
compactly supported linearized solution sections,
\begin{align}
\notag
	T_\phi\bar{\S} = T_\phi\bar{\S}_H(F)
		&= \{ [\psi] \mid \psi\in \Secs_{SC}(F), \dot{f}[\psi] = 0,
			\dot{c}[\psi] = 0 \} , \\
\notag
	[\psi] &\sim \psi + \dot{g}[\eps], \\
\notag
	&{} \quad \text{with}~ 
		\eps\in \Secs_{SC}(P) \\
	&{} \quad \text{and}~
		\dot{f}[\dot{g}[\eps]] = 0, \dot{c}[\dot{g}[\eps]] = 0 .
\end{align}
The \emph{formal gauge invariant solutions cotangent space} at $\phi$ is
the set of equivalence classes of compactly supported gauge invariant
dual densities,
\begin{align}
\notag
	T^*_\phi\bar{\S} = T^*_\phi\bar{\S}_H(F)
		&= \{ [\tilde{\alpha}^*] \mid \tilde{\alpha}^* \in \Secs_0(\tilde{F}^*) ,
			\dot{g}^*[\tilde{\alpha}^*]
				= \dot{g}^*[\dot{f}^*[\xi] + \dot{c}^*[\tilde{\eps}^*]], \\
\notag
		&{} \qquad \text{with}~
			\xi \in \Secs_0(F), \tilde{\eps}^* \in \Secs_0(\tilde{E}^*) \} , \\
\notag
	[\tilde{\alpha}^*] &\sim \tilde{\alpha}^*
		+ \dot{f}^*[\xi] + \dot{c}^*[\tilde{\eps}^*], \\
		&{} \quad \text{with}~
			\xi\in \Secs_0(F), \tilde{\eps}^* \in \Secs_0(\tilde{E}^*) .
\end{align}
The natural pairing $\langle-,-\rangle\colon T_\phi\bar{\S}\times
T^*_\phi\bar{\S} \to \R$ is
\begin{equation}
	\langle [\psi], [\tilde{\alpha}^*] \rangle
		= \int_M \psi\cdot\tilde{\alpha}^* .
\end{equation}
\end{definition}

We now prove the main result of this section.

\begin{lemma}\label{lem:ginv-sols-nondegen}
If the constraints $\dot{c}[\phi] = 0$ are globally parametrizable
(Def.~\ref{def:glpar}) and the gauge transformations are globally
recognizable (Def.~\ref{def:glrec}), then the natural pairing between
$T_\phi\bar\S$ and $T^*_\phi\bar\S$ is non-degenerate.
\end{lemma}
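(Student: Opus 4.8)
The plan is to prove non-degeneracy separately in the two arguments, reusing the two mechanisms already at work in this section: the parametrization of spacelike compact solutions by the causal Green function, as in Lem.~\ref{lem:sols-nondegen}, and the reduction of a degeneracy statement to a pairing in cohomology, as in Lem.~\ref{lem:ginv-full-nondegen}. In both arguments the pairing on the quotients descends from $\langle \psi,\tilde{\alpha}^*\rangle = \int_M \psi\cdot\tilde{\alpha}^*$: it is well defined because $\dot{f}[\psi]=0$, $\dot{c}[\psi]=0$ annihilate $\im\dot{f}^* + \im\dot{c}^*$, and because a gauge-invariant representative — one with $\dot{g}^*[\tilde{\alpha}^*]=0$, which the defining condition of $T^*_\phi\bar\S$ always provides — annihilates $\im\dot{g}$. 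So I would work with a fixed gauge-invariant representative $\tilde{\alpha}^*$ of each cotangent class and with honest constrained spacelike compact solution representatives $\psi$, using also that every class of $H^{g^*}_0(\tilde{F}^*)$ is represented by some gauge-invariant $\tilde{\alpha}^*\in\Secs_0(\tilde{F}^*)$, which in turn represents a class of $T^*_\phi\bar\S$.

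Non-degeneracy in the first (tangent) argument follows the template of Lem.~\ref{lem:ginv-full-nondegen} and uses only global recognizability. Suppose $\langle[\psi],[\tilde{\alpha}^*]\rangle = 0$ for every $[\tilde{\alpha}^*]\in T^*_\phi\bar\S$. Testing against the classes of the form $\tilde{\alpha}^* = \dot{g}^{\prime*}[\tilde{\eps}^{\prime*}]$ — these are gauge invariant by exactness of the left column of~\eqref{eq:glrec*}, hence lie in $T^*_\phi\bar\S$ — and using $\langle\psi,\dot{g}^{\prime*}[\tilde{\eps}^{\prime*}]\rangle = \langle\dot{g}'[\psi],\tilde{\eps}^{\prime*}\rangle$ with arbitrary $\tilde{\eps}^{\prime*}$ forces $\dot{g}'[\psi]=0$, so $\psi$ represents a class $[\psi]_g\in H^g_{SC}(F)$. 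Against the remaining cotangent classes, which run over all of $H^{g^*}_0(\tilde{F}^*)$, the hypothesis becomes $\langle[\psi]_g,[\tilde{\alpha}^*]_{g^*}\rangle = 0$ for every such class; global recognizability (Def.~\ref{def:glrec}) then gives $[\psi]_g=[0]$, i.e.\ $\psi=\dot{g}[\eps]$ with $\eps\in\Secs_{SC}(P)$. Since $\psi$ was a constrained solution, $\dot{g}[\eps]$ is one too, so $[\psi]=[0]$ in $T_\phi\bar\S$.

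Non-degeneracy in the second (cotangent) argument is the harder half, and is where global parametrizability enters. Fixing a gauge-invariant representative $\tilde{\alpha}^*$, the hypothesis $\langle[\psi],[\tilde{\alpha}^*]\rangle = 0$ for all $[\psi]\in T_\phi\bar\S$ reads $\langle\psi,\tilde{\alpha}^*\rangle = 0$ for every spacelike compact constrained solution $\psi$. Parametrize $\psi=\G[\tilde{\beta}^*]$ by Prp.~\ref{prp:exact}, and rewrite the constraint via commutativity of~\eqref{eq:glpar} as the admissibility condition $\dot{c}[\psi]=\H[\dot{q}[\tilde{\beta}^*]]=0$, i.e.\ $\dot{q}[\tilde{\beta}^*]\in\im\dot{h}$, noting that every constrained solution arises from such an admissible $\tilde{\beta}^*$. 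With $\xi:=\G^*[\tilde{\alpha}^*]$, a spacelike compact solution of $\dot{f}^*$, the identity $\langle\G[\tilde{\beta}^*],\tilde{\alpha}^*\rangle = -\langle\tilde{\beta}^*,\xi\rangle$ of Lem.~\ref{lem:sols-nondegen} shows $\xi$ pairs to zero with every admissible $\tilde{\beta}^*$. A chase through~\eqref{eq:glpar} and its adjoint~\eqref{eq:glpar*} — first testing against $\tilde{\beta}^*=\dot{q}'[\tilde{\eps}^{\prime*}]$ to get $\dot{q}^{\prime*}[\xi]=0$, then handling the remaining admissible $\tilde{\beta}^*$ after passing to classes along the exact rows — recasts this as: the class $[\xi]_{c^*}\in H^{c^*}_{SC}(F,\dot{f}^*)$ annihilates all of $H^c_{SC}(F,\dot{f})$ under the Green pairing. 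Global parametrizability (Def.~\ref{def:glpar}) then forces $[\xi]_{c^*}=[0]$, and unwinding via Prp.~\ref{prp:exact} for $\dot{f}^*$ together with~\eqref{eq:glpar*} gives $\tilde{\alpha}^*\in\im\dot{f}^*+\im\dot{c}^*$, i.e.\ $[\tilde{\alpha}^*]=[0]$ in $T^*_\phi\bar\S$.

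The step I expect to be the main obstacle is the constraint part of the second argument: passing from ``$\xi$ annihilates every admissible $\tilde{\beta}^*$'' to ``$[\xi]_{c^*}=[0]$ in $H^{c^*}_{SC}(F,\dot{f}^*)$'' requires a careful chase through~\eqref{eq:glpar}--\eqref{eq:glpar*}, tracking exactly which sections count as admissible and checking that the pairing of Def.~\ref{def:glpar} is precisely the obstruction, with no leftover contributions from $\im\dot{f}^*$ or $\im\dot{c}^*$ slipping into cohomology. The gauge part is nearly a transcription of Lem.~\ref{lem:ginv-full-nondegen}; making the two parts coexist reduces to two compatibility points: that a constrained solution with $\dot{g}'[\psi]=0$ still represents an honest class of $H^g_{SC}(F)$ on which Def.~\ref{def:glrec} acts, and that the chosen gauge-invariant representative of $[\tilde{\alpha}^*]$ survives the round trip $\tilde{\alpha}^*\leftrightarrow\G^*[\tilde{\alpha}^*]$ through the constraint diagrams without reintroducing gauge ambiguity — which is precisely what Def.~\ref{def:glpar} and Def.~\ref{def:glrec} were set up to ensure.
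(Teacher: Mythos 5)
Your proposal is correct and follows essentially the same route as the paper: the tangent-slot argument mirrors Lem.~\ref{lem:ginv-full-nondegen} via testing against $\dot{g}^{\prime*}[\tilde{\eps}^{\prime*}]$ and invoking global recognizability, while the cotangent-slot argument uses the test solutions $\psi = \G\circ\dot{q}'[\tilde{\eps}^{\prime*}]$ to show $\G^*[\tilde{\alpha}^*]$ is a $\dot{q}^{\prime*}$-cocycle, descends to the Green pairing on $H^c_{SC}(F,\dot{f})\times H^{c^*}_{SC}(F,\dot{f}^*)$, and unwinds global parametrizability exactly as in the paper. The ``chase'' you flag as the main obstacle is precisely the paper's computation $\langle \G\circ\dot{q}'[\tilde{\eps}^{\prime*}],\tilde{\alpha}^*\rangle = -\langle\tilde{\eps}^{\prime*},\dot{q}^{\prime*}[\G^*[\tilde{\alpha}^*]]\rangle$ followed by $\G^*[\tilde{\alpha}^*-\dot{c}^*[\tilde{\eps}^*]]=0$, so no gap remains.
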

\begin{proof}
Unfortunately, we now cannot directly rely on the fundamental lemma of the
calculus of variations to prove non-degeneracy in either argument.
Instead, we proceed roughly as in the proof of
Lem.~\ref{lem:ginv-full-nondegen}.

To prove non-degeneracy in the first argument, suppose we have $\langle
[\psi], [\tilde{\alpha}^*] \rangle = 0$ for arbitrary
$[\tilde{\alpha}^*] \in T_\phi^*\bar{\S}$. It is easy to see from the
definition that we can restrict ourselves to representatives that
satisfy $\dot{g}^*[\tilde{\alpha}^*] = 0$. Then still,
$\tilde{\alpha}^*$ may represent an arbitrary cohomology class
$[\tilde{\alpha}^*]_{g^*} \in H^{g^*}_0(\tilde{F}^*)$, defined in
Sec.~\ref{sec:gauge}. We now need to show that $[\psi] = [0]$, or
equivalently that $\psi = \dot{g}[\eps]$ for some $\eps\in
\Secs_{SC}(P)$. Considering $\tilde{\alpha}^* =
\dot{g}^{\prime*}[\tilde{\eps}^{\prime*}]$ with arbitrary
$\tilde{\eps}^{\prime*} \in \Secs_0(\tilde{P}^*)$, we have
\begin{equation}
	\langle [\psi] , [\tilde{\alpha}^*] \rangle
	= \langle \psi , \dot{g}^{\prime*}[\tilde{\eps}^{\prime*}] \rangle
	= \langle \dot{g}'[\psi], \tilde{\eps}^{\prime*} \rangle .
\end{equation}
Since $\tilde{\eps}^{\prime*}$ could be arbitrary, the vanishing of
$\langle \dot{g}'[\psi], \tilde{\eps}^{\prime*} \rangle$ implies that
$\dot{g}'[\psi] = 0$. That is, $\psi$ necessarily represents a
cohomology class $[\psi]_g \in H^g_{SC}(F)$, also defined in
Sec.~\ref{sec:gauge}. Therefore, for any $[\tilde{\alpha}^*] \in
T_\phi^*\bar{\S}$, we find
\begin{equation}
	0 = \langle \psi, \tilde{\alpha}^* \rangle
	= \langle [\psi]_g , [\tilde{\alpha}^*]_{g^*} \rangle ,
\end{equation}
where the last pairing is in the respective cohomologies and the class
$[\tilde{\alpha}^*]_{g^*}$ is allowed to be arbitrary. But the global
recognizability hypothesis specifies precisely that the above pairing in
cohomology is non-degenerate and implies that $[\psi]_g = [0]$ and hence
that $\psi = \dot{g}[\eps]$, with $\eps \in \Secs_{SC}(P)$, is pure
gauge.

To prove non-degeneracy in the second argument, suppose we have $\langle
[\psi] , [\tilde{\alpha}^*] \rangle = 0$ for arbitrary $[\psi] \in
T_\phi\bar{S}$, which is represented by a solution of $\dot{f}[\psi] =
0$, $\dot{c}[\psi] = 0$. Then, from the definition, it is clear that a
solution $\psi$ may also represent an arbitrary cohomology class
$[\psi]_g \in H^g_{SC}(F,\dot{f})$, defined in Sec.~\ref{sec:constr}. We
now need to show that $[\tilde{\alpha}^*] = [0]$, or equivalently that
$\tilde{\alpha}^* = \dot{f}^*[\xi] + \dot{c}^*[\tilde{\eps}^*]$ for some
$\xi\in \Secs_0(F)$ and $\dot{\eps}^* \in \Secs_0(\tilde{E}^*)$. We can
always choose $\psi = \dot{c}'[\zeta']$ with $\zeta' \in \Secs_{SC}(E')$
such that $\dot{h'}[\zeta'] = 0$, or equivalently $\zeta' =
\H'[\tilde{\eps}^{\prime*}]$ and $\psi = \dot{c}'\circ
\H'[\tilde{\eps}^{\prime*}] = \G\circ\dot{q}'[\tilde{\eps}^{\prime*}]$,
with $\tilde{\eps}^{\prime*} \in \Secs_0(\tilde{E}^{\prime*})$
arbitrary. We then have
\begin{equation}
	\langle [\psi] , [\tilde{\alpha}^*] \rangle
	= \langle \G\circ\dot{q}'[\tilde{\eps}^{\prime*}] ,
			\tilde{\alpha}^* \rangle
	= -\langle \tilde{\eps}^{\prime*} ,
			\dot{q}^{\prime*}[\G^*[\tilde{\alpha}^*]] \rangle .
\end{equation}
Since $\tilde{\eps}^{\prime*}$ could be arbitrary, the vanishing of
$\langle \tilde{\eps}^{\prime*} ,
\dot{q}^{\prime*}[\G^*[\tilde{\alpha}^*]] \rangle$ implies that
$\dot{q}^{\prime*}[\G^*[\tilde{\alpha}^*]]$. That is,
$\G^*[\tilde{\alpha}^*]$ represents a cohomology class
$[\G^*[\tilde{\alpha}^*]]_{c^*} \in H^{c^*}_0(\tilde{F}^*, \dot{f}^*)$, also
defined in Sec.~\ref{sec:constr}. Therefore, for any $[\psi] \in
T_\phi\bar{\S}$, we find
\begin{equation}
	0 = \langle [\psi] , [\tilde{\alpha}^*] \rangle
	= \langle \psi , \tilde{\alpha}^* \rangle
	= -\langle \psi , \G^*[\tilde{\alpha}^*] \rangle_G
	= -\langle [\psi]_c , [\G^*[\tilde{\alpha}^*]]_{c^*} \rangle_G ,
\end{equation}
where the last two pairings are the Green pairing
(Def.~\ref{def:green-pairing}, and Lem.~\ref{lem:green-nondegen}) and
its descent to the respective cohomologies. But the global
parametrizability hypothesis specifies precisely that the above pairing
is non-degenerate and implies that $[\G^*[\tilde{\alpha}^*]]_{c^*} =
[0]$, or equivalently that there exists a $\tilde{\eps}^* \in
\Secs_0(\tilde{E}^*)$ such that
\begin{equation}
	\G^*[\tilde{\alpha}^*]
	= \dot{q}^*\circ\H^*[\tilde{\eps}^*]
	= \G^*\circ\dot{c}^*[\tilde{\eps}^*] .
\end{equation}
This, in turn, implies that $\G^*[\tilde{\alpha}^* -
\dot{c}^*[\tilde{\eps}^*]] = 0$. Hence, by the exact sequence of
Lem.~\ref{lem:exsplit}, we have that $\tilde{\alpha}^* = \dot{f}^*[\xi]
+ \dot{c}^*[\tilde{\eps}^*]$ for some $\xi\in \Secs_0(F)$. 
\end{proof}

\subsection{Symplectic and Poisson structure}\label{sec:symp-pois}
In this section, we endow the space of solutions $\S_H(F)$ of a
variational PDE system with the structures of both a symplectic and a
Poisson manifold (or rather formal versions of these structures), really
turning it into the phase space of classical field theory.

In general, a variational system may have gauge symmetries. These must
be gauge fixed. The resulting system should then be put into the form of
a constrained hyperbolic system. Or, rather, what is most important for
us is that these steps can be carried out for the linearization of our
variational system (Sec.~\ref{sec:constr-gf}). If the constraints are
(a) globally parametrizable, (b) the gauge transformations globally
recognizable and (c) the gauge fixing satisfies an extra compatibility
condition, then we can apply a generalized Forger-Romero argument
(Sec.~\ref{sec:Jgreen}).  Next, we use the covariant phase space
formalism to build the formal symplectic form
(Sec.~\ref{sec:formal-symp}) and the Peierls formula to build the formal
Poisson bivector (Sec.~\ref{sec:formal-pois}). Finally, we prove that
the two structures are equivalent (Sec.~\ref{sec:equiv}).

For the remainder of this section, let us fix a Lagrangian density $\L
\in \Forms^{n,0}(F)$. Following Sec.~\ref{sec:var-sys}, it defines a
presymplectic current density $\omega \in \Forms^{n-1,2}(F)$ and its
Euler-Lagrange equations define a PDE system with equation form
$(\EL,\tilde{F}^*)$. 

\subsubsection{Variational systems, gauge fixing and constraints}\label{sec:constr-gf}
There are many reasons why the equation form $(\EL,\tilde{F}^*)$ of the
equations of motion of the classical field theory is not optimal for our
analysis. As we shall see later on, the Peierls formula calls for a
Green function of the linearized equations of motion. However, the
particular form of the differential operator $\EL$ may not be one that
directly falls into one of the classes of differential operators that
are easily recognized as hyperbolic, so that its linearizations possess
Green functions. For instance, in the presence of gauge invariance, we
must first adjoin a gauge fixing condition, say $c_g[\phi] = 0$ valued
in a bundle $E_g \to M$. Also, in many cases, either due to the extra
gauge fixing equations or due to internal integrability conditions
(Sec.~\ref{sec:integrability}), the equations can only be cast in
hyperbolic form with constraints. Of course, let us not forget that, a
priori, we haven't yet restricted the choice of $\L$ in any way that
would guarantee that its Euler-Lagrange system is not elliptic or of
some other hyperbolic type. So, we call the Euler-Lagrange system
\emph{hyperbolizable} if, after a possible gauge fixing, it can be shown
to be equivalent to a constrained hyperbolic system in a way that we
make precise below. From now on, we require that for a classical field
theory the Lagrangian $\L$ is chosen such that its Euler-Lagrange
equations are hyperbolizable. We shall see later in
Sec.~\ref{sec:examples}, that many relativistic field theories of
physical interest are in fact hyperbolizable.

Consider the gauge fixed Euler-Lagrange system, whose equation form is
$(\EL\oplus c_g, \tilde{F}^* \oplus_M E_g)$. It is hyperbolizable if it
equivalent (in the sense of Sec.~\ref{sec:integrability}) to a
constrained hyperbolic system $(f\oplus c, \tilde{F}^* \oplus_M E)$.
Again, we are not going into the details of what constitutes a
non-linear constrained hyperbolic system but defer instead to
Refs.~\citen{geroch-pde} and~\citen{kh-caus}. The equivalence must have the
following form:
\begin{equation}\label{eq:ELgf-equiv}
	\left\{
		\begin{aligned}
			\EL &= R\circ(f\oplus c)  \\
			c_g &= R_g\circ c  \\
		\end{aligned}
	\right.
	\quad \iff \quad
	\left\{
		\begin{aligned}
			f &= \bar{R}\circ(\EL\oplus c_g)  \\
			c &= \bar{R}_g \circ (\EL\oplus c_g) 
		\end{aligned}
	\right. ,
\end{equation}
where the $R$, $\bar{R}$, $R_g$ and $\bar{R}_g$ are possibly
non-linear differential operators.

As discussed earlier, in Sec.~\ref{sec:linear}, for the purposes of our
discussion, it is sufficient to pick a single dynamical linearization
point $\phi\in \S_H(F)$ and linearize the above PDE systems about it. In
particular, the linearized equations will be sufficient to define the
formal tangent and cotangent spaces $T_\phi\S$, $T_\phi^*\S$ and their
gauge invariant analogs $T_\phi\bar{\S}$, $T_\phi^*\bar{\S}$. In other
words, we need to work with the linearized versions of each of the
hyperbolic system, the constraints, the Euler-Lagrange system, the gauge
fixing conditions, the gauge transformations, as well as the
hyperbolization. As before, we denote the equation form of the
linearized hyperbolic system $(\dot{f},\tilde{F}^*)$. The linearized
constraints are presumed to be globally parametrizable and fit into the
commutative diagrams~\eqref{eq:glpar} and~\eqref{eq:glpar*}. The
linearized gauge transformations are presumed to be globally
recognizable and fit into the commutative diagrams~\eqref{eq:glrec}
and~\eqref{eq:glrec*}. The linearized $\EL$ equations are denoted
$(\J,\tilde{F}^*)$ and are also called the \emph{Jacobi system},
\begin{equation}
	\J[\psi]_a(x) = \J^I_{ab} \del_I \phi^b(x) = 0,
\end{equation}
with $\J$ the \emph{Jacobi operator},\cite{dewitt-qft} while the
linearized gauge fixing conditions are denoted by the equation form
$(\dot{c}_g,E_g)$. In local coordinates $(x^i,u^a)$ on $F$, the
components of the Jacobi operator satisfy the identity
\begin{equation}
	\J^I_{ab} \wedge \dv u^b_I = \dv \EL_a .
\end{equation}
The equivalence of the linearized systems takes the following form:
\begin{equation}\label{eq:Jgf-equiv}
	\left\{
		\begin{aligned}
			\J &= r\circ \dot{f} + r_c\circ \dot{c}  \\
			\dot{c}_g &= r_g\circ \dot{c}  \\
		\end{aligned}
	\right.
	\quad \iff \quad
	\left\{
		\begin{aligned}
			\dot{f} &= \bar{r}\circ \J + \bar{r}_c\circ \dot{c}_g \\
			\dot{c} &= \bar{r}_\J\circ \J + \bar{r}_g\circ \dot{c}_g
		\end{aligned}
	\right.
\end{equation}
If the operator $\bar{r}_\J$ is non-vanishing, it means that part of the
constraints consist of integrability conditions of the Jacobi system.

Note that, strictly speaking, the $r$- and $\bar{r}$- differential
operators effecting the equivalence are not inverses of each other.
Their compositions may differ from the identity by some differential
operator that factors through a differential identity, that is,
$\dot{q}\circ \dot{f} - \dot{h}\circ \dot{c} = 0$ or $\dot{g}^*\circ \J
= 0$. In other words, we must have
\begin{align}
\label{eq:rinv1}
	r\circ \bar{r} + r_c\circ\bar{r}_\J &= \id + p_\J \circ \dot{g}^* ,
		& \bar{r} \circ r &= \id + p_f \circ \dot{q} , \\
\label{eq:rinv2}
	r\circ \bar{r}_c + r_c\circ \bar{r}_g &= 0 ,
		& \bar{r}\circ r_c + \bar{r}_c\circ r_g &= -p_f\circ \dot{h} , \\
\label{eq:rinv3}
	r_g\circ \bar{r}_\J &= p_g \circ \dot{g}^* ,
		& \bar{r}_\J \circ r &= p_c\circ \dot{q} , \\
\label{eq:rinv4}
	r_g\circ \bar{r}_g &= \id ,
		& \bar{r}_g\circ r_g + \bar{r}_\J \circ r_c &= \id - p_c\circ \dot{h} ,
\end{align}
for some differential operators $p_\J$, $p_f$, $p_g$ and $p_c$. Also,
the identity $\dot{q}\circ \dot{f} - \dot{h}\circ \dot{c} = 0$, when
expressed in terms of the $\J$ and $\dot{c}_g$ operators, is identically
satisfied when
\begin{align}
\label{eq:consist1}
	\dot{q}\circ \bar{r} - \dot{h}\circ \bar{r}_\J &= q_\J\circ \dot{g}^*, \\
\label{eq:consist2}
	\dot{q}\circ \bar{r}_c - \dot{h}\circ \bar{r}_g &= 0 .
\end{align}
It is worth noting that the above relations involving the $r$- and
$\bar{r}$-operators follow from the equivalence~\eqref{eq:Jgf-equiv}
only when $\J$ and $\dot{c}_g$ satisfy no additional differential
identities. However, we will simply presume that they hold as needed
sufficient conditions for the derivation of the Peierls formulas later
in Sec.~\ref{sec:formal-pois}.

Finally, to make sure that the condition $\dot{c}_g[\psi] = 0$ in fact
constitutes a gauge fixing condition, we require the following
compatibility between the gauge transformation operator and the
constraints that we shall refer to as the \emph{gauge fixing
compatibility} condition:
\begin{equation}\label{eq:gf}
	\dot{s}' \circ \bar{r}_c\circ\dot{c}_g = 0 .
\end{equation}
This condition connects constraints (represented by $\dot{c}_g$) and
gauge transformations (represented by $\dot{s}'$). Roughly speaking,
this condition says that the part of the constraints $\dot{c}[\psi] = 0$
that comes from $\dot{c}_g[\psi] = 0$ is sufficient, when adjoined to
$\J[\psi] = 0$ to make the gauge fixed system hyperbolizable. This
compatibility condition becomes important later on, in
Lem.~\ref{lem:t*-J}, to show that the gauge invariant formal cotangent
space $T^*_\phi\bar{\S}$ can be equivalently defined in two ways,
involving either the $\J$ operator or the $\dot{f}$, $\dot{c}$
operators. Also, it helps prove that the hyperbolic differential
operator $\dot{k}$, that acts on gauge invariant field combinations in
the presence of recognizable gauge transformations, is actually
independent of the choice of gauge fixing operator $\dot{c}_g$ as long
as the compatibility condition is satisfied (see Cor.~\ref{cor:ginv}).

For future reference, it is convenient to state here the following
\begin{lemma}\label{lem:gf-equiv}
The gauge fixing compatibility condition Eq.~\eqref{eq:gf} is equivalent
to the existence of a differential operator $\bar{r}_s\colon \Secs(F)
\to \Secs(\tilde{P}^*)$ such that
\begin{equation}\label{eq:gf-equiv}
	\bar{r}_c\circ \dot{c}_g = \dot{s}\circ \bar{r}_s .
\end{equation}
That is, $\bar{r}_c\circ \dot{c}_g$ factors through $\dot{s}$.
\end{lemma}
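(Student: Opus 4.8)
The plan is to prove the two implications of the claimed equivalence separately, the reverse being immediate and the forward one resting on Lem.~\ref{lem:fec-fact} applied to a column of diagram~\eqref{eq:glrec}. For the direction Eq.~\eqref{eq:gf-equiv} $\Rightarrow$ Eq.~\eqref{eq:gf}: if $\bar{r}_c\circ\dot{c}_g = \dot{s}\circ\bar{r}_s$ for some differential operator $\bar{r}_s\colon\Secs(F)\to\Secs(\tilde{P}^*)$, then composing on the left with $\dot{s}'$ and using that the middle column of~\eqref{eq:glrec} is, in particular, a complex (so $\dot{s}'\circ\dot{s}=0$) yields $\dot{s}'\circ\bar{r}_c\circ\dot{c}_g = \dot{s}'\circ\dot{s}\circ\bar{r}_s = 0$, which is exactly the gauge fixing compatibility condition~\eqref{eq:gf}. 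This half only needs one to notice which column is being used.

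The substantive direction is Eq.~\eqref{eq:gf} $\Rightarrow$ Eq.~\eqref{eq:gf-equiv}. The idea is to regard $L := \bar{r}_c\circ\dot{c}_g\colon\Secs(F)\to\Secs(\tilde{F}^*)$ as a differential operator that is right-composable into the formally exact middle column of~\eqref{eq:glrec}, namely the complex given by $\dot{s}\colon\Secs(\tilde{P}^*)\to\Secs(\tilde{F}^*)$ followed by $\dot{s}'\colon\Secs(\tilde{F}^*)\to\Secs(\tilde{P}^{\prime*})$. The hypothesis~\eqref{eq:gf} says precisely that $\dot{s}'\circ L = 0$. I would then invoke the second half of Lem.~\ref{lem:fec-fact}, applied to this formally exact complex with $c'=\dot{s}$ and $c=\dot{s}'$, and to the operator $r=L$: it produces a linear differential operator, which we name $\bar{r}_s\colon\Secs(F)\to\Secs(\tilde{P}^*)$, such that $L=\dot{s}\circ\bar{r}_s$, i.e.\ $\bar{r}_c\circ\dot{c}_g=\dot{s}\circ\bar{r}_s$, which is the desired factorization~\eqref{eq:gf-equiv}.

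I do not anticipate a genuine obstacle; the only thing requiring care is bookkeeping of bundles and arrow directions. One must confirm that (i) the column of~\eqref{eq:glrec} carrying $\dot{s}$ and $\dot{s}'$ (the one valued in densitized duals $\tilde{P}^*$, $\tilde{F}^*$, $\tilde{P}^{\prime*}$), and not the left column carrying $\dot{g}$, $\dot{g}'$, is the formally exact complex invoked, and (ii) the composite $\bar{r}_c\circ\dot{c}_g$ indeed has source $\Secs(F)$ and target $\Secs(\tilde{F}^*)$, so that it sits over the middle term of that complex and is a legitimate input to Lem.~\ref{lem:fec-fact}. Since that lemma is established purely at the level of vector-bundle maps between infinite jet bundles, using only linear algebra together with the formal exactness built into diagram~\eqref{eq:glrec}, no functional-analytic input is needed, and the conclusion holds with $\bar{r}_s$ an honest (if non-unique) differential operator.
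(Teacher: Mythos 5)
Your proof is correct and is essentially the paper's own argument: the paper's one-line proof invokes exactly the gauge fixing compatibility condition~\eqref{eq:gf} together with Lem.~\ref{lem:fec-fact}, which is precisely your application of the second half of that lemma to the formally exact column $\dot{s},\dot{s}'$ of diagram~\eqref{eq:glrec} with $r=\bar{r}_c\circ\dot{c}_g$. Your explicit check of the easy converse via $\dot{s}'\circ\dot{s}=0$ is a harmless elaboration of what the paper leaves implicit.
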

\begin{proof}
This follows directly from the gauge compatibility
condition~\eqref{eq:gf} and Lem.~\ref{lem:fec-fact}.
\end{proof}

We summarize the conditions listed in this section in the following
\begin{definition}\label{def:hyper}
The Euler-Lagrange system $(\EL,\tilde{F}^*)$ or just the Jacobi system
$(\J,\tilde{F}^*)$ is said to be \emph{hyperbolizable} if the following
conditions are met: (a) there exists a gauge fixing and an equivalence
with a constrained hyperbolic system of the form~\eqref{eq:ELgf-equiv}
or~\eqref{eq:Jgf-equiv}, (b) the constraints are parametrizable and the
gauge transformations are recognizable with respect to the resulting
hyperbolic subsystem, and (c) the gauge fixing compatibility
condition~\eqref{eq:gf} is satisfied.
\end{definition}
The consequences of hyperbolizability are explored in the following
section. We stress that these conditions are sufficient for our purposes
and can in fact be satisfied by many relativistic field theories of
physical interest, but some of the same results could also hold under
weaker conditions.

\subsubsection{Causal Green functions}\label{sec:Jgreen}
The goal of this section is to use the gauge fixed
equivalence~\eqref{eq:Jgf-equiv} with a constrained hyperbolic system to
construct a causal Green function for the Jacobi system.

First, we show that the residual gauge transformations (those that are
still allowed by the gauge fixing condition $\dot{c}_g[\psi] = 0$)
essentially come from gauge parameters that satisfy the symmetric
hyperbolic equation $\dot{k}[\eps] = 0$. The main purpose of this lemma
is to serve as a reference argument for one of the sub-results of
Thm.~\ref{thm:Jexsplit}.
\begin{lemma}\label{lem:gauge-sol}
Given a $\psi\in \Secs_{SC}(F)$ such that $\dot{c}_g[\psi] = 0$
and $\psi\in \im \dot{g}$, there exists $\eps\in \Secs_{SC}(P)$ such
that $\dot{k}[\eps] = 0$ and $\psi = \dot{g}[\eps]$ precisely when the
image of a map (to be defined in the proof) $\gK\colon \ker\dot{s} \sse
\Secs_{SC}(\tilde{P}^*) \to H^g_{SC}(F,\dot{f})$ is trivial (see
Sec.~\ref{sec:gauge}).
\end{lemma}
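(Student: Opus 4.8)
The plan is to turn the condition ``$\psi=\dot g[\eps]$ for some $\eps\in\Secs_{SC}(P)$ with $\dot k[\eps]=0$'' into the vanishing of a cohomology class built from diagram~\eqref{eq:glrec}, and then to identify the obstruction with the image of a map $\gK$. First I would fix a representative and check that $\psi$ is actually a linearized solution. Since $\psi\in\Secs_{SC}(F)$ lies in $\im\dot g$ and $\dot g'\circ\dot g=0$, global recognizability (through the pairing argument used in the proof of Lem.~\ref{lem:ginv-full-nondegen}) gives $\psi=\dot g[\eta]$ with $\eta\in\Secs_{SC}(P)$. The linearized gauge generator takes values in linearized solutions, $\J\circ\dot g=0$ --- this uses that $\phi$ is a genuine solution and that gauge transformations map solutions to solutions --- so $\J[\psi]=0$; feeding this and the hypothesis $\dot c_g[\psi]=0$ into the gauge-fixed equivalence $\dot f=\bar r\circ\J+\bar r_c\circ\dot c_g$ of~\eqref{eq:Jgf-equiv} gives $\dot f[\psi]=0$. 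Then commutativity of the upper-left square of~\eqref{eq:glrec}, $\dot f\circ\dot g=\dot s\circ\dot k$, forces $\dot s[\dot k[\eta]]=0$, so $\zeta:=\dot k[\eta]$ lies in $\ker\dot s\cap\Secs_{SC}(\tilde{P}^*)$, the intended source of $\gK$.

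Next I would reformulate the target and construct $\gK$. By exactness of the top row of~\eqref{eq:glrec}, an $\eps\in\Secs_{SC}(P)$ satisfies $\dot k[\eps]=0$ exactly when $\eps=\K[\nu]$ for some $\nu\in\Secs_0(\tilde{P}^*)$, so the sought $\eps$ exists if and only if $\psi\in\im(\dot g\circ\K)$. Commutativity $\dot g\circ\K=\G\circ\dot s$ identifies $\im(\dot g\circ\K)$ with the coboundary subspace defining $H^g_{SC}(F,\dot f)$ as a cohomology in the space of linearized solutions (those of the form $\dot g$ of a $\dot k$-solution), so the existence of $\eps$ is equivalent to $[\psi]=0$ in $H^g_{SC}(F,\dot f)$. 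I would then set $\gK(\zeta)=[\dot g[\mu]]$ for any preimage $\mu\in\Secs_{SC}(P)$ with $\dot k[\mu]=\zeta$ (which exists since $\dot k$ is onto at the right end of the top row); this is well defined because two preimages differ by $\ker\dot k=\im\K$ and $\dot g\circ\K$ lands in coboundaries. By construction $\gK(\dot k[\eta])=[\psi]$, and, conversely, any cocycle $\psi'$ --- a spacelike-compact pure-gauge linearized solution --- is $\dot g[\eta']$ with $\dot k[\eta']\in\ker\dot s$ by the same reasoning, so $\gK$ is surjective; hence the existence of $\eps$ holds for every admissible $\psi$ exactly when $\im\gK$ is trivial.

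The step I expect to be the main obstacle is the homological bookkeeping around $H^g_{SC}(F,\dot f)$: one must check, by chasing the several commuting squares of~\eqref{eq:glrec} relating $\dot k,\K,\dot s,\dot s',\G,\K'$, that $\im(\dot g\circ\K)$ is exactly the coboundary subspace and that the cocycles are exactly the spacelike-compact pure-gauge linearized solutions, and then match this with the ``cohomology in solutions'' convention of Sec.~\ref{sec:gauge}; the subtlety that the admissible $\psi$ (those additionally satisfying $\dot c_g[\psi]=0$) really represent all classes of $H^g_{SC}(F,\dot f)$ belongs here as well, though only the direction ``$\im\gK$ trivial $\Rightarrow$ $\eps$ exists'' is needed downstream and follows at once once $\gK$ is constructed. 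A second delicate point is the identity $\J\circ\dot g=0$ at the fixed background solution $\phi$, which together with~\eqref{eq:Jgf-equiv} is what lets one promote $\dot c_g[\psi]=0$ to $\dot f[\psi]=0$; this is exactly where the hypotheses $\phi\in\S_H(F)$ and ``gauge transformations preserve solutions'' enter.
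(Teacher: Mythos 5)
Your argument is correct and follows essentially the same route as the paper's: derive $\dot f[\psi]=0$ from $\dot c_g[\psi]=0$ together with $\J\circ\dot g=0$ and the equivalence~\eqref{eq:Jgf-equiv}, reduce the existence of $\eps$ to the vanishing of $[\psi]_g\in H^g_{SC}(F,\dot f)$, and define $\gK$ by lifting through $\dot k$ (the paper uses the specific splitting $\K_\chi$ and checks independence of the adapted partition of unity, which is equivalent to your check of independence of the chosen preimage). Two cosmetic points: the appeal to global recognizability to write $\psi=\dot g[\eta]$ with $\eta\in\Secs_{SC}(P)$ is superfluous, since that is just the hypothesis $\psi\in\im\dot g$ read at the $\Secs_{SC}$ node of diagram~\eqref{eq:glrec}; and a cocycle of $H^g_{SC}(F,\dot f)$ is not the same thing as a pure-gauge solution, so $\gK$ surjects only onto the classes represented by pure-gauge solutions --- which is exactly what the paper claims and all that the statement requires.
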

\begin{proof}
Recall that the Jacobi system, due to its variational character is
easily shown to be self-adjoint:
\begin{equation}
	\J^* = \J .
\end{equation}
Also, gauge invariance and Noether's second theorem imply the identities
\begin{equation}\label{eq:J-ginv}
	\J \circ \dot{g} = 0
	\quad\text{and}\quad
	\dot{g}^* \circ \J = 0 .
\end{equation}
The equivalence of the gauge fixed Jacobi system with the constrained
hyperbolic system postulated in~\eqref{eq:Jgf-equiv} then gives
\begin{equation}
	\dot{s}\circ \dot{k}
	= \dot{f}\circ \dot{g}
	= \bar{r}_c \circ \dot{c}_g \circ \dot{g} .
\end{equation}
Suppose that $\psi \in \Secs_{SC}(F)$ such that $\dot{c}_g[\psi] = 0$
and $\psi = \dot{g}[\eps']$ for some $\eps'\in\Secs_{SC}(P)$. Let
$\tilde{\beta}^* = \dot{k}[\eps'] \in \Secs_{SC}(\tilde{P}^*)$ and note
that
\begin{equation}
	\dot{s}[\tilde{\beta}^*] = \dot{s}\circ \dot{k}[\eps']
	= \bar{r}_c\circ \dot{c}_g\circ \dot{g}[\eps']
	= \bar{r}_c\circ \dot{c}_g[\psi] = 0 .
\end{equation}
Then also, of course, $\dot{f}[\psi] = \dot{f}\circ \dot{g}[\eps'] =
\dot{s}\circ \dot{k}[\eps'] = \dot{s}[\tilde{\beta}^*] = 0$. Hence,
$\psi$ represents a cohomology class $[\psi]_g \in H^g_{SC}(F,\dot{f})$
(Sec.~\ref{sec:gauge}). The conclusion of this lemma holds precisely
when this cohomology class is trivial, $[\psi]_g$.

Let $\{\chi_\pm\}$ be a partition of unity adapted to a Cauchy surface
(Def.~\ref{def:adapt-pu}) and recall the associated splitting map
(Lem.~\ref{lem:exsplit}) $\K_\chi \colon \Secs_{SC}(\tilde{P}^*) \to
\Secs_{SC}(P)$ that inverts $\dot{k}$ from the right. We can make two
observations: (a) the difference $\eps'-\K_\chi[\tilde{\beta}^*]$ is in
$\S_{SC}(P)$, and (b) $\eta_\chi = \dot{g}[\K_\chi[\tilde{\beta}^*]] =
\G[\dot{s}_\chi[\tilde{\beta}^*]]$ is in $\S_{SC}(F)$, where
$\dot{s}_\chi[\tilde{\beta}^*] = \pm \dot{s}[\chi_{\pm}
\tilde{\beta}^*]$ has compact support since $\dot{s}[\tilde{\beta}^*] =
0$. Hence $\eta$ and $\psi$ define the same cohomology class,
$[\eta_\chi]_g = [\psi]_g$. Moreover, the choice of the adapted
partition of unity $\{\chi_\pm\}$ doesn't matter, since for any other
choice $\{\chi'_\pm\}$ the differences
$(\chi'_\pm-\chi_\pm)\tilde{\beta}^*$ have compact support, so that
$[\eta_{\chi'}] = [\eta_{\chi}]$.

Thus, the composition of maps
\begin{equation}
	\gK \colon \ker\dot{s} \sso \Secs_{SC}(\tilde{P}^*)
	\stackrel{\dot{g}\circ \K_\chi}{\longrightarrow}
	\S_{SC}(F) \to H^g_{SC}(F,\dot{f})
\end{equation}
is independent of the choice of the adapted partition of unity and, as
desired, its image coincides with the image of the subset of
$\S_{SC}(F)$ consisting of elements of the form $\psi = \dot{g}[\eps']$
with $\eps' \in \Secs_{SC}(P)$. Thus, each such $\psi = \dot{g}[\eps]$
where $\dot{k}[\eps] = 0$ precisely when the image of $\gK$ is trivial.
\end{proof}

It was remarked in the above proof that $\J \circ \dot{g} = 0$ and
$\dot{g}^*\circ \J = 0$. This is actually enough information to prove
that the Jacobi operator must factor through $\dot{g}'$ on the right and
through $\dot{g}^{\prime*}$ on the left.
\begin{lemma}\label{lem:J-fact}
There exists a differential operator $\J_g$ such that $\J =
\J_g \circ \dot{g}' = \dot{g}^{\prime*} \circ \J_g^*$.
\end{lemma}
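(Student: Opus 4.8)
The plan is to read off both factorizations from the factorization lemma for formally exact complexes, Lemma~\ref{lem:fec-fact}, using the self-adjointness of $\J$ to obtain the second one essentially for free. All the needed inputs are already recorded: by Eq.~\eqref{eq:J-ginv} one has $\J\circ\dot{g}=0$ and $\dot{g}^*\circ\J=0$; the left column of the recognizability diagram~\eqref{eq:glrec} asserts that $\dot{g}\colon\Secs(P)\to\Secs(F)$ and $\dot{g}'\colon\Secs(F)\to\Secs(P')$ form a formally exact complex (in particular $\dot{g}'\circ\dot{g}=0$); and, as noted in Lem.~\ref{lem:gauge-sol}, $\J^*=\J$.

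The core step is to apply Lemma~\ref{lem:fec-fact} to the formally exact complex $(\dot{g},\dot{g}')$ --- in the notation of that lemma, with $E'=P$, $F=F$, $E=P'$, $c'=\dot{g}$, $c=\dot{g}'$ --- and to the operator $l=\J\colon\Secs(F)\to\Secs(\tilde{F}^*)$. Since $l\circ c'=\J\circ\dot{g}=0$, the lemma furnishes a differential operator $\J_g\colon\Secs(P')\to\Secs(\tilde{F}^*)$ with $\J=\J_g\circ\dot{g}'$, which is the first claimed identity. For the second identity I would simply take formal adjoints: adjunction reverses composition order and sends $\dot{g}'$ to $\dot{g}^{\prime*}$ and $\J_g$ to $\J_g^*\colon\Secs(F)\to\Secs(\tilde{P}^{\prime*})$, so $\J^*=\dot{g}^{\prime*}\circ\J_g^*$; since $\J^*=\J$ this reads $\J=\dot{g}^{\prime*}\circ\J_g^*$, with the \emph{same} operator $\J_g$.

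There is no substantial obstacle here. One could alternatively apply the second part of Lemma~\ref{lem:fec-fact} to the adjoint complex $(\dot{g}^{\prime*},\dot{g}^*)$ of diagram~\eqref{eq:glrec*}, using $\dot{g}^*\circ\J=0$, but that would only produce a factorization $\J=\dot{g}^{\prime*}\circ\tilde{\J}$ through some a priori unrelated $\tilde{\J}$, so the self-adjointness route is preferable as it delivers the matched pair in one stroke. The only points requiring care are the bookkeeping of bundle conventions, so that the domains and codomains of $\J_g$ and $\J_g^*$ line up, and the remark that formal exactness is a property of the jet-prolonged operators, hence insensitive to the support restrictions appearing in~\eqref{eq:glrec}; thus Lemma~\ref{lem:fec-fact} applies to the operators acting on all of $\Secs(\cdot)$, as used above.
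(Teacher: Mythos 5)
Your argument is exactly the paper's: the published proof consists of the single sentence ``This is a simple consequence of Eq.~\eqref{eq:J-ginv}, Lem.~\ref{lem:fec-fact}, and the self-adjointness of $\J$,'' and your proposal is a correct unpacking of precisely those three ingredients (factor $\J$ through $\dot{g}'$ via Lem.~\ref{lem:fec-fact} using $\J\circ\dot{g}=0$, then take adjoints and use $\J^*=\J$). Nothing further is needed.
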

\begin{proof}
This is a simple consequence of Eq.~\eqref{eq:J-ginv},
Lem.~\ref{lem:fec-fact}, and the self-adjointness of $\J$.
\end{proof}

Next, we prove a lemma that will be used to establish an alternative
characterization of the formal gauge invariant solutions cotangent space
(Def.~\ref{def:tt*-sols-gauge}) in the main theorem of this section.
\begin{lemma}\label{lem:t*-J}
Provided the sufficient conditions listed in Sec.~\ref{sec:constr-gf}
hold, any compactly supported dual density $\tilde{\alpha}^* \in
\Secs_0(\tilde{F}^*)$ that satisfies
\begin{equation}
	\dot{g}^*[\tilde{\alpha}^*]
	= \dot{g}^*[\dot{f}^*[\psi] + \dot{c}^*[\tilde{\beta}]] ,
\end{equation}
for some $\psi \in \Secs_0(F)$ and $\tilde{\beta}^* \in
\Secs_0(\tilde{E}^*)$, can be written as $\tilde{\alpha}^* = \J[\xi]$,
for some $\xi\in \Secs_0(F)$.
\end{lemma}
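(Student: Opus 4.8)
The plan is to convert the hypothesis, which is phrased through the hyperbolic and constraint operators $\dot{f}^*,\dot{c}^*$, into a statement about the Jacobi operator $\J$ by taking adjoints of the gauge-fixed equivalence~\eqref{eq:Jgf-equiv}, and then to close the resulting gap with the factorization of Lem.~\ref{lem:J-fact}, the gauge-fixing compatibility in the form of Lem.~\ref{lem:gf-equiv}, and the global recognizability hypothesis (Def.~\ref{def:glrec}), imitating the cohomological arguments in the proofs of Lemmas~\ref{lem:ginv-full-nondegen} and~\ref{lem:ginv-sols-nondegen}.

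Concretely, I would first take adjoints of the right-hand column of~\eqref{eq:Jgf-equiv}: using the self-adjointness $\J^*=\J$ recorded in the proof of Lem.~\ref{lem:gauge-sol}, these read $\dot{f}^* = \J\circ\bar{r}^* + \dot{c}_g^*\circ\bar{r}_c^*$ and $\dot{c}^* = \J\circ\bar{r}_\J^* + \dot{c}_g^*\circ\bar{r}_g^*$, so that $\dot{f}^*[\psi] + \dot{c}^*[\tilde{\beta}^*] = \J[\xi_1] + \dot{c}_g^*[\tilde{\nu}^*]$ with $\xi_1 := \bar{r}^*[\psi] + \bar{r}_\J^*[\tilde{\beta}^*] \in \Secs_0(F)$ and $\tilde{\nu}^* := \bar{r}_c^*[\psi] + \bar{r}_g^*[\tilde{\beta}^*]$. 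Since $\dot{g}^*\circ\J = 0$ by the Noether identity~\eqref{eq:J-ginv}, replacing $\tilde{\alpha}^*$ by $\tilde{\alpha}^*-\J[\xi_1]$ turns the hypothesis into $\dot{g}^*[\tilde{\alpha}^* - \J[\xi_1] - \dot{c}_g^*[\tilde{\nu}^*]]=0$. The next step is to dispose of the term $\dot{c}_g^*[\tilde{\nu}^*]$, for which I would use the gauge-fixing compatibility~\eqref{eq:gf} in the factored form $\bar{r}_c\circ\dot{c}_g = \dot{s}\circ\bar{r}_s$ of Lem.~\ref{lem:gf-equiv}, together with the identity $\dot{s}\circ\dot{k} = \dot{f}\circ\dot{g} = \bar{r}_c\circ\dot{c}_g\circ\dot{g}$ exploited in the proof of Lem.~\ref{lem:gauge-sol} and the bookkeeping relations~\eqref{eq:rinv1}--\eqref{eq:consist2} among the $r$- and $\bar{r}$-operators; the outcome should be that $\dot{c}_g^*[\tilde{\nu}^*]$ coincides, modulo $\J[\Secs_0(F)]$, with a dual density annihilated by $\dot{g}^*$. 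After this reduction the claim becomes the assertion that every compactly supported dual density $\tilde{\rho}^*\in\Secs_0(\tilde{F}^*)$ with $\dot{g}^*[\tilde{\rho}^*]=0$ can be written as $\J[\xi]$ for some $\xi\in\Secs_0(F)$.

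This last assertion is the genuinely hard part, being cohomological in nature: Lem.~\ref{lem:J-fact} only gives $\J = \dot{g}^{\prime*}\circ\J_g^*$, so a priori $\tilde{\rho}^*$ merely represents a class $[\tilde{\rho}^*]_{g^*}\in H^{g^*}_0(\tilde{F}^*)$ of the kind introduced in Sec.~\ref{sec:gauge}. Following the template of the proof of Lem.~\ref{lem:ginv-full-nondegen}, I would pair this class with an arbitrary class in $H^g_{SC}(F)$; using $\J = \J_g\circ\dot{g}'$ on the tangent side, the self-adjointness of $\J$, and the relations~\eqref{eq:rinv1}--\eqref{eq:rinv4}, the pairing should vanish identically, whereupon global recognizability (Def.~\ref{def:glrec}) forces $[\tilde{\rho}^*]_{g^*}=[0]$, that is, $\tilde{\rho}^* = \dot{g}^{\prime*}[\tilde{\sigma}^*]$ for some $\tilde{\sigma}^*\in\Secs_0(\tilde{P}^{\prime*})$; a final lift of $\tilde{\sigma}^*$ through $\J_g^*$, with compact support arranged by the splitting maps of Lem.~\ref{lem:exsplit} and a further appeal to the vanishing of the obstruction class, produces the desired $\xi$. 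I expect the two most delicate points to be the verification that the class produced by the reduction genuinely pairs trivially against all of $H^g_{SC}(F)$ --- which is where the gauge-fixing compatibility and the $r$/$\bar{r}$ identities are really consumed --- and the compact-support bookkeeping in this final lift.
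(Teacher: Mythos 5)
Your opening step is fine: taking adjoints of the equivalence~\eqref{eq:Jgf-equiv} and using $\J^*=\J$ indeed gives $\dot{f}^*[\psi]+\dot{c}^*[\tilde{\beta}^*]=\J[\xi_1]+\dot{c}_g^*[\tilde{\nu}^*]$ with $\xi_1=\bar{r}^*[\psi]+\bar{r}_\J^*[\tilde{\beta}^*]$, and this $\xi_1$ is exactly the $\xi$ the paper ends up with. The argument breaks down afterwards. Your reduction replaces the problem by the blanket claim that \emph{every} $\tilde{\rho}^*\in\Secs_0(\tilde{F}^*)$ with $\dot{g}^*[\tilde{\rho}^*]=0$ lies in $\J[\Secs_0(F)]$; but by Thm.~\ref{thm:Jexsplit} the quotient $\ker\dot{g}^*/\im\J$ at $\Secs_0(\tilde{F}^*)$ is isomorphic to $T^*_\phi\bar{\S}$, the formal cotangent space of the physical phase space, which is nonzero for any nontrivial theory (Maxwell, say), so that claim is false and no argument can establish it. Correspondingly, the cohomological route you sketch cannot close: the pairing of a general class $[\tilde{\rho}^*]_{g^*}\in H^{g^*}_0(\tilde{F}^*)$ against $H^g_{SC}(F)$ does not ``vanish identically''---its non-degeneracy is precisely Def.~\ref{def:glrec}, so your argument would force that cohomology, and hence $T^*_\phi\bar{\S}$, to be trivial---and even if the class did vanish, $\tilde{\rho}^*=\dot{g}^{\prime*}[\tilde{\sigma}^*]$ does not yield membership in $\im\J$ over compact supports: the ``lift through $\J_g^*$'' is the whole difficulty and is left to an unsupported appeal to splitting maps. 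A further warning sign is that you invoke global recognizability at all: the paper stresses (Remark~\ref{rem:loc-par-rec}) that this lemma and Thm.~\ref{thm:Jexsplit} use only the local structure. The root cause is that your reduction throws away the special form of the hypothesis; as the lemma is actually used in Thm.~\ref{thm:Jexsplit} (and as its proof in the paper assumes), $\tilde{\alpha}^*$ is a density that both equals $\dot{f}^*[\psi]+\dot{c}^*[\tilde{\beta}^*]$ and is annihilated by $\dot{g}^*$, and both facts are consumed, not merely the equation obtained after applying $\dot{g}^*$.

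The paper's proof is constructive and purely local-hyperbolic, and its key trick is what your step disposing of $\dot{c}_g^*[\tilde{\nu}^*]$ gestures at but does not supply. One sets $\psi_+=\psi+\dot{q}^*\circ\H^*_+[\tilde{\beta}^*]$, a section of retarded support with $\dot{f}^*[\psi_+]=\dot{f}^*[\psi]+\dot{c}^*[\tilde{\beta}^*]$ (using $\dot{f}^*\circ\dot{q}^*=\dot{c}^*\circ\dot{h}^*$ and $\dot{h}^*\circ\H^*_+=\id$). Then $\dot{k}^*\circ\dot{s}^*[\psi_+]=\dot{g}^*\circ\dot{f}^*[\psi_+]=\dot{g}^*[\tilde{\alpha}^*]=0$, and invertibility of the hyperbolic operator $\dot{k}^*$ on retarded sections forces $\dot{s}^*[\psi_+]=0$; by Lem.~\ref{lem:gf-equiv} this kills the gauge-fixing term $\dot{c}_g^*\circ\bar{r}_c^*[\psi_+]=\bar{r}_s^*\circ\dot{s}^*[\psi_+]$ \emph{exactly}, not merely modulo a gauge-invariant remainder. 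Hence $\tilde{\alpha}^*=\dot{f}^*[\psi_+]=\J[\bar{r}^*[\psi_+]]$, and the adjoint of~\eqref{eq:consist1} together with $\J\circ\dot{g}=0$ turns $\bar{r}^*\circ\dot{q}^*\circ\H^*_+[\tilde{\beta}^*]$ into $\bar{r}_\J^*[\tilde{\beta}^*]$ inside the argument of $\J$, giving $\tilde{\alpha}^*=\J[\bar{r}^*[\psi]+\bar{r}_\J^*[\tilde{\beta}^*]]$ with a manifestly compactly supported $\xi$. No cohomological input, and in particular no global recognizability, is needed anywhere in this chain.
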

\begin{proof}
We start by defining a field section $\psi_+$ with retarded support such
that $\dot{f}^*[\psi_+] = \dot{f}^*[\psi] + \dot{c}^*[\tilde{\beta}^*]$.
If it is simple to check that we can define $\psi_+ = \psi +
\dot{q}^*\circ \H^*_+[\tilde{\beta}^*]$. (Using advanced support would
have also been possible.) Then,
\begin{equation}
	\dot{k}^*\circ\dot{s}^*[\psi_+]
	= \dot{g}^*\circ\dot{f}^*[\psi_+]
	= \dot{g}^*[\tilde{\alpha}^*]
	= 0 .
\end{equation}
Since $\dot{k}^*$ is invertible on sections of retarded support, we must
have $\dot{s}^*[\psi_+] = 0$. On the other hand,
\begin{align}
	\tilde{\alpha}^*
	&= \dot{f}^*[\psi_+]
	= \J^*\circ\bar{r}^*[\psi_+] + \dot{c}_g^*\circ\bar{r}_c[\psi_+] \\
	&= \J[\bar{r}^*[\psi] + \bar{r}^*\circ\dot{q}^*\circ
			\H^*_+[\tilde{\beta}^*]]
		+ \bar{r}_s^*[\dot{s}^*[\psi_+]] \\
	&= \J[\bar{r}^*[\psi] + \dot{g}\circ q_\J^*\circ
			\H^*_+[\tilde{\beta}^*] + \bar{r}_\J^* \circ (\dot{h}^* \circ
			\H^*_+)[\tilde{\beta}^*]] \\
	&= \J[\bar{r}^*[\psi] + \bar{r}_\J^*[\tilde{\beta}^*]] ,
\end{align}
where we have used the equivalence~\eqref{eq:Jgf-equiv}, the formal
self-adjointness of $\J$, Lem.~\ref{lem:gf-equiv}, and the
identities~\eqref{eq:consist1} and $\J\circ\dot{g} = 0$. Therefore, the
desired conclusion holds with $\xi = \bar{r}^*[\psi] +
\bar{r}_\J^*[\tilde{\beta}^*]$.
\end{proof}

Finally, we motivate the Peierls formula and then state and prove the
main theorem of this section. Equivalence with a constrained hyperbolic
system now allows us to solve the inhomogeneous problem
\begin{equation}
	\J[\psi] = \tilde{\alpha}^*, \quad \dot{c}_g[\psi] = 0 ,
\end{equation}
where the source must necessarily satisfy the gauge invariance condition
$\dot{g}^*[\tilde{\alpha}^*] = 0$. The equivalent inhomogeneous problem in
symmetric hyperbolic form is
\begin{equation}
	\dot{f}[\psi] = \bar{r}[\tilde{\alpha}^*], \quad
	\dot{c}[\psi] = \bar{r}_\J[\tilde{\alpha}^*] .
\end{equation}
Recall from Lem.~\ref{lem:inhom-constr} that this system is solvable iff
the sources satisfy the consistency identity:
\begin{equation}
	\dot{q}[\bar{r}[\tilde{\alpha}^*]]
		- \dot{h}[\bar{r}_\J[\tilde{\alpha}^*]]
	= p_\J\circ \dot{g}^*[\tilde{\alpha}^*]
	= 0 ,
\end{equation}
which is obviously satisfied, after using identity~\eqref{eq:consist1}, for any gauge
invariant source. The retarded and advanced solutions to this
inhomogeneous problem are then $\psi_\pm = \G_\pm
[\bar{r}[\tilde{\alpha}^*]]$. This means that
$\bar{r}_\J[\tilde{\alpha}^*] =
\dot{c}[\G_\pm[\bar{r}[\tilde{\alpha}^*]]$ and, in particular,
$\dot{c}_g[\psi] = 0$. Motivated by this formula, we introduce the
following retarded, advanced and causal Green functions for the gauge
fixed Jacobi system.
\begin{definition}\label{def:E}
Let $\EE_\pm = \G_\pm \circ \bar{r}$. The \emph{Peierls formula} is
\begin{equation}\label{eq:E-def}
	\EE = \EE_+ - \EE_- = \G\circ \bar{r} .
\end{equation}
We also call $\EE$ the \emph{Peierls} or \emph{Jacobi causal Green
function}.
\end{definition}
One can immediately check that $\psi = \EE[\tilde{\alpha}^*]$ satisfies
both $\dot{f}[\psi] = 0$ and $\dot{c}[\psi] = 0$, whenever
$\tilde{\alpha}^*$ is a gauge invariant dual density. By the
equivalence~\eqref{eq:Jgf-equiv}, the same solution also satisfies
$\J[\psi] = 0$ and $\dot{c}_g[\psi] = 0$.

\begin{theorem}\label{thm:Jexsplit}
Provided the gauge fixed Jacobi system $\J[\psi] = 0$, $\dot{c}_g[\psi]
= 0$ is hyperbolizable (Def.~\ref{def:hyper}), the Jacobi causal Green
function $\EE$ defined in Eq.~\eqref{eq:E-def} fits into the following
commutative diagram
\begin{equation}\label{eq:JE-seq}
\vcenter{\xymatrix{
	& \Secs_0(P)
		\ar[d]^{\dot{g}}  \ar[r] 
	& 0
		\ar[d]            \ar[r] 
	& \Secs_{SC}(P)
		\ar[d]^{\dot{g}}  \ar[r] 
	& 0
		\ar[d]            
	\\
	0
		                   \ar[r] &
	\Secs_0(F)
		\ar[d]             \ar[r]^{\J} &
	\Secs_0(\tilde{F}^*)
		\ar[d]^{\dot{g}^*} \ar[r]^{\EE} &
	\Secs_{SC}(F)
		\ar[d]             \ar[r]^{\J} &
	\Secs_{SC}(\tilde{F}^*)
		\ar[d]^{\dot{g}^*} \ar[r] &
	0 ,
	\\
	& 0
		                  \ar[r] 
	& \Secs_0(\tilde{P}^*)
		                  \ar[r] 
	& 0
		                  \ar[r] 
	& \Secs_{SC}(\tilde{P}^*)
}}
\end{equation}
which becomes a complex (successive arrows compose to $0$) after taking
the vertical cohomologies. That complex is exact at
$\Secs_0(\tilde{F}^*)$ and $\Secs_{SC}(F)$, while at $\Secs_0(F)$ and
$\Secs_{SC}(\tilde{F}^*)$ its cohomologies coincide respectively with
$H^g_0(F)$ and $H^{g^*}_{SC}(\tilde{F}^*)$.

Moreover, with reference to Def.~\ref{def:tt*-sols-gauge}, we have the
isomorphisms $T_\phi\bar{\S} \cong \ker \J / \im \dot{g}$, at
$\Secs_{SC}(F)$, and $T_\phi^*\bar{\S} \cong \ker\dot{g}^* / \im \J$, at
$\Secs_0(\tilde{F}^*)$.

Finally, a Cauchy surface $\Sigma\sso M$ and a partition of unity
$\{\chi_\pm\}$ adapted to it define the following splittings at
$\Secs_0(\tilde{F}^*)$ and $\Secs_{SC}(F)$:
\begin{equation}\label{eq:Jexsplit}
	\ker \dot{g}^* \cong \im \J \oplus T_\phi\S
	\quad\text{and}\quad
	\coker \dot{g} \cong T_\phi\bar{\S} \oplus \im\dot{g}^{\prime*} ,
\end{equation}
where (cf.~Def.~\ref{def:adapt-pu}, Lem.~\ref{lem:exsplit}, and
Def.~\ref{def:tt*-sols})
\begin{align}
\label{eq:Jsplit}
	\J_\chi\colon& T_\phi\S \to \Secs_0(\tilde{F}^*) ,
		& \J_\chi[\psi] &= \pm \J[\chi_\pm \psi] , \\
\label{eq:Esplit}
	\EE_\chi\colon& \im\dot{g}^{\prime*} \to \Secs_{SC}(F) ,
		& \EE_\chi[\tilde{\alpha}^*]
			&= \bar{r}^*\circ \G^*_\chi[\tilde{\alpha}^*] .
\end{align}
\end{theorem}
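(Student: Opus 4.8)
The plan is to read Theorem~\ref{thm:Jexsplit} as the gauge-fixed Jacobi analogue of Proposition~\ref{prp:exact} together with Lemma~\ref{lem:exsplit}, transported across the equivalence~\eqref{eq:Jgf-equiv} with the constrained hyperbolic system and grafted onto the gauge columns of diagram~\eqref{eq:glrec}. The ingredients available are self-adjointness $\J=\J^*$ and the Noether identities $\J\circ\dot{g}=0$, $\dot{g}^*\circ\J=0$ from~\eqref{eq:J-ginv}; the factorizations $\J=\J_g\circ\dot{g}'=\dot{g}^{\prime*}\circ\J_g^*$ (Lem.~\ref{lem:J-fact}) and $\bar{r}_c\circ\dot{c}_g=\dot{s}\circ\bar{r}_s$ (Lem.~\ref{lem:gf-equiv}); the exactness of the $\G$-row and the commutative diagram~\eqref{eq:glpar}; Lemmas~\ref{lem:inhom-constr}, \ref{lem:t*-J}, \ref{lem:gauge-sol}; and the near-inverse identities~\eqref{eq:rinv1}--\eqref{eq:consist2}.

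First I would verify that~\eqref{eq:JE-seq} commutes --- every square closes by $\J\circ\dot{g}=0$, by $\dot{g}^*\circ\J=0$, or through a zero object --- and then pass to vertical cohomologies, which collapse the columns to $\coker\dot{g}$ at $\Secs_0(F),\Secs_{SC}(F)$ and to $\ker\dot{g}^*$ at $\Secs_0(\tilde{F}^*),\Secs_{SC}(\tilde{F}^*)$. The induced horizontal maps compose to zero: on $\Secs_0(F)$ one has $\EE\circ\J=\G\circ(\dot{f}-\dot{s}\circ\bar{r}_s)=-\G\circ\dot{s}\circ\bar{r}_s=-\dot{g}\circ\K\circ\bar{r}_s\in\im\dot{g}$ (using $\G\circ\dot{f}=0$, Lem.~\ref{lem:gf-equiv}, and the commutativity $\G\circ\dot{s}=\dot{g}\circ\K$ of~\eqref{eq:glrec}); and on $\ker\dot{g}^*$ one has $\J\circ\EE=r_c\circ\dot{c}\circ\G\circ\bar{r}=r_c\circ\H\circ\dot{q}\circ\bar{r}=r_c\circ\H\circ\dot{h}\circ\bar{r}_\J=0$ (using $\dot{f}\circ\G=0$, $\dot{c}\circ\G=\H\circ\dot{q}$, identity~\eqref{eq:consist1}, and $\H\circ\dot{h}=0$ on $\Secs_0$). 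For the node $\Secs_0(\tilde{F}^*)$ I would prove the sharper statement $\ker\dot{g}^*\cap\Secs_0(\tilde{F}^*)=\im\J$: one inclusion is $\dot{g}^*\circ\J=0$, the other is Lemma~\ref{lem:t*-J} applied with $\psi=0$ and $\tilde{\beta}^*=0$. The isomorphism $T^*_\phi\bar{\S}\cong\ker\dot{g}^*/\im\J$ then follows by comparing the equivalence relations of Def.~\ref{def:tt*-sols-gauge}: each class has a representative $\tilde{\alpha}^*-\dot{f}^*[\xi]-\dot{c}^*[\tilde{\eps}^*]\in\ker\dot{g}^*$, and two elements of $\ker\dot{g}^*$ are equivalent there iff they differ by $\im\J$ --- one direction from Lemma~\ref{lem:t*-J}, the other because self-adjointness turns $\J=r\circ\dot{f}+r_c\circ\dot{c}$ into $\J=\dot{f}^*\circ r^*+\dot{c}^*\circ r_c^*$, so $\im\J\subseteq\im\dot{f}^*+\im\dot{c}^*$.

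For the node $\Secs_{SC}(F)$ the claim is $\{\psi\in\Secs_{SC}(F)\mid\J[\psi]=0\}=\im\EE+\im\dot{g}$; one inclusion is the composition-zero statement above, and conversely, given $\psi$ with $\J[\psi]=0$ I would first gauge-fix it --- solving $\dot{c}_g[\psi-\dot{g}[\eps]]=0$ for $\eps\in\Secs_{SC}(P)$, whose solvability is controlled by the recognizability of~\eqref{eq:glrec} and the triviality of the obstruction $\gK$ of Lemma~\ref{lem:gauge-sol} --- after which the equivalence~\eqref{eq:Jgf-equiv} gives $\dot{f}[\psi']=\dot{c}[\psi']=0$ for $\psi'=\psi-\dot{g}[\eps]$, so $\psi'=\G[\tilde{\beta}^*]$ with $\dot{q}[\tilde{\beta}^*]=0$ (Proposition~\ref{prp:exact}, Lemma~\ref{lem:inhom-constr}), and~\eqref{eq:rinv1} allows one to correct $\tilde{\beta}^*$ to a gauge-invariant source $\tilde{\alpha}^*$ with $\EE[\tilde{\alpha}^*]\equiv\psi'$ modulo $\im\dot{g}$. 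Matching Def.~\ref{def:tt*-sols-gauge} (whose numerator $\{\dot{f}[\psi]=\dot{c}[\psi]=0\}$ equals $\{\J[\psi]=0,\dot{c}_g[\psi]=0\}$ by~\eqref{eq:Jgf-equiv}) then yields $T_\phi\bar{\S}\cong\ker\J/\im\dot{g}$. The remaining two cohomology nodes are $\ker(\J|_{\Secs_0(F)})/\im\dot{g}$ and its adjoint; Lemma~\ref{lem:J-fact} gives $\ker\dot{g}'\subseteq\ker\J$, and the reverse inclusion on compactly supported sections holds because $\J=\J_g\circ\dot{g}'$ factors through the gauge-invariant field bundle, on which $\dot{k}'$ is hyperbolic hence injective on $\Secs_0(P')$, so these nodes carry $H^g_0(F)$ and, by the same argument on~\eqref{eq:glrec*} with $\dot{k}^{\prime*}$, $H^{g^*}_{SC}(\tilde{F}^*)$. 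Finally the splittings are produced as in Lemma~\ref{lem:exsplit}: for $\psi\in T_\phi\S$ the sections $\J[\chi_\pm\psi]$ have compact support (supported in $\supp\psi$ intersected with $\supp\d\chi_\pm$, i.e.\ spacelike compact $\cap$ timelike compact) and satisfy $\J[\chi_+\psi]+\J[\chi_-\psi]=0$, while $\EE\circ\J_\chi\equiv\pm\id$ modulo $\im\dot{g}$ on $T_\phi\S$ by the same computation as above; dually $\EE_\chi=\bar{r}^*\circ\G^*_\chi$ works through $\dot{f}^*\circ\G^*_\chi=\id$ (Lem.~\ref{lem:exsplit} for $\dot{f}^*$) and the self-adjoint split of $\J$, the correction terms landing in $\im\dot{g}^{\prime*}$ via~\eqref{eq:glrec*}.

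The hard part will be the gauge-fixing step in the treatment of $\Secs_{SC}(F)$: re-expressing an arbitrary Jacobi solution, after killing its residual gauge freedom, through $\EE$ with a genuinely gauge-invariant source is where the recognizability hypothesis, the constraint diagram~\eqref{eq:glpar} with Lemma~\ref{lem:inhom-constr}, and the full list of near-inverse identities~\eqref{eq:rinv1}--\eqref{eq:consist2} must be used simultaneously, and the support bookkeeping together with the ``modulo $\im\dot{g}$'' reductions is delicate. The reverse inclusion $\ker\J\subseteq\ker\dot{g}'$ on compactly supported sections is the other point needing genuine care, since it rests on the correct-but-not-purely-diagrammatic fact that the Jacobi operator's descent to the gauge-invariant field bundle is hyperbolic.
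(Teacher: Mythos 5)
Your overall architecture (commutativity from the Noether identities, passing to vertical cohomology, node-by-node analysis, splittings from an adapted partition of unity) matches the paper's, but two of your node arguments have genuine gaps. At $\Secs_0(\tilde{F}^*)$, the ``sharper statement'' $\ker\dot{g}^*\cap\Secs_0(\tilde{F}^*)=\im\J$ is false: for the scalar field ($\dot{g}=0$) it would say every compactly supported dual density is in the image of the Klein--Gordon operator on $\Secs_0(F)$, contradicting Prp.~\ref{prp:exact} (the quotient is $\S_{SC}(F)\neq 0$), and in the gauge case it would force $T^*_\phi\bar{\S}\cong\ker\dot{g}^*/\im\J=0$, trivializing the very isomorphism you are proving and contradicting Lem.~\ref{lem:ginv-sols-nondegen}. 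The misstep is taking Lem.~\ref{lem:t*-J} at face value with $\psi=0$, $\tilde{\beta}^*=0$; its proof (which starts from $\tilde{\alpha}^*=\dot{f}^*[\psi_+]$) and its use in the theorem show the intended hypothesis is that $\tilde{\alpha}^*$ is itself of the form $\dot{f}^*[\psi]+\dot{c}^*[\tilde{\beta}^*]$, i.e.\ a representative of $[0]\in T^*_\phi\bar{\S}$ lying in $\ker\dot{g}^*$. The actual exactness claim at this node is that $\tilde{\alpha}^*\in\ker\dot{g}^*$ with $\EE[\tilde{\alpha}^*]\in\im\dot{g}$ forces $\tilde{\alpha}^*\in\im\J$; the paper proves this by setting $\tilde{\beta}^*=\dot{k}[\eps]$, producing $\xi=\G_+[\bar{r}[\tilde{\alpha}^*]-\dot{s}[\chi_+\tilde{\beta}^*+\tilde{\gamma}^*]]\in\Secs_0(F)$ via uniqueness of retarded solutions, and verifying $\J[\xi]=\tilde{\alpha}^*$ with \eqref{eq:rinv1} and \eqref{eq:consist1} --- none of which your shortcut supplies.

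At $\Secs_{SC}(F)$ your plan to first gauge-fix $\psi$, solving $\dot{c}_g[\psi-\dot{g}[\eps]]=0$ with $\eps\in\Secs_{SC}(P)$, is not justified by the hypotheses: Def.~\ref{def:hyper} does not make $\dot{c}_g\circ\dot{g}$ hyperbolic, and the ``triviality of the obstruction $\gK$'' is precisely \emph{not} assumed --- Lem.~\ref{lem:gauge-sol} is stated conditionally and the paper only recycles its proof technique, not its conclusion. Likewise ``correcting $\tilde{\beta}^*$ to a gauge-invariant source via \eqref{eq:rinv1}'' is left unsubstantiated, and that is exactly where the difficulty sits. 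The paper sidesteps both problems with a direct computation: $\bar{r}\circ\J_\chi[\psi]=\pm(\dot{f}[\chi_\pm\psi]-\dot{s}\circ\bar{r}_s[\chi_\pm\psi])$ using \eqref{eq:rinv1}, \eqref{eq:rinv2} and Lem.~\ref{lem:gf-equiv}, whence $\EE\circ\J_\chi[\psi]=\psi-\dot{g}[\K_+\circ\bar{r}_s[\chi_+\psi]+\K_-\circ\bar{r}_s[\chi_-\psi]]$ for \emph{every} $\psi\in\Secs_{SC}(F)$ with $\J[\psi]=0$, giving surjectivity modulo $\im\dot{g}$ and the splitting \eqref{eq:Jsplit} in one stroke, with no gauge fixing of $\psi$. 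A smaller point: the reverse inclusion $\ker\J\subseteq\ker\dot{g}'$ on $\Secs_0(F)$ does not follow from $\J=\J_g\circ\dot{g}'$ alone, since $\J_g$ need not be injective; one needs $\dot{k}'[\dot{g}'[\psi]]=\dot{s}'[\dot{f}[\psi]]=\dot{s}'\circ\bar{r}_c\circ\dot{c}_g[\psi]=0$ via the compatibility condition \eqref{eq:gf}, and then injectivity of the hyperbolic $\dot{k}'$ on $\Secs_0(P')$.
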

The conclusion of the theorem is rather dense with information, so its
proof is somewhat lengthy. However it simply consists of checking the
properties of the horizontal sequence in the above diagram at each of
its objects.
\begin{proof}
The fact that successive maps compose to zero, after taking the vertical
cohomologies, is established in items~(2) and~(3) below, which also
prove exactness of the resulting complex at $\Secs_0(\tilde{F}^*)$ and
$\Secs_{SC}(F)$. On the other hand, cohomologies at $\Secs_0(F)$ and
$\Secs_{SC}(\tilde{F}^*)$ are computed in items~(1) and~(4).

The isomorphism $T_\phi\bar{\S} \cong \ker\J/\im\dot{g}$ is established
as follows: on the one hand, it is obvious that $\dot{f}[\psi] = 0$,
$\dot{c}[\psi] = 0$ implies $\J[\psi] = 0$; on the other hand, item~(3)
shows that both $\dot{f}\circ \EE = 0$ and $\dot{c}\circ \EE = 0$, while
$\im \EE = \ker \J \pmod{\im \dot{g}}$. The isomorphism
$T_\phi^*\bar{\S} \cong \ker \dot{g}^* / \im \J$ is established as
follows: on the one hand, by the equivalence~\eqref{eq:Jgf-equiv}, it is
obvious that $\tilde{\alpha}^* + \J[\xi]$, with
$\dot{g}^*[\tilde{\alpha}^*] = 0$ and $\xi$ arbitrary, represents a
unique element of $T_\phi^*\bar{S}$; on the other hand, if
$\dot{g}^*[\tilde{\alpha}^*] = \dot{g}^*[\dot{f}^*[\psi] +
\dot{c}^*[\tilde{\beta}^*]]$, then $[\tilde{\alpha}^*] =
[\tilde{\alpha}^* - \dot{f}^*[\psi] - \dot{c}^*[\tilde{\beta}^*]]$ in
$T_\phi^*\bar{\S}$ and Lem.~\ref{lem:t*-J} shows that any representative
of $[0] \in T_\phi^*\bar{\S}$ represents only $[0] \in \ker\dot{g}^* /
\im \J$.

Finally, the splittings~\eqref{eq:Jexsplit}, with the corresponding
splitting map identities, are established in items~(3) and~(4).

Note that below we make liberal use of various maps defined using the
adapted partition of unity $\{\chi_\pm\}$ introduced in the hypothesis
of the theorem (cf.~Lem.~\ref{lem:exsplit}).
\begin{enumerate}
\item
	\emph{If $\psi\in \Secs_0(F)$, then $\J[\psi] = 0$ is equivalent to
	$\dot{g}'[\psi] = 0$.}

	If $\psi\in \ker \dot{g}'$, then by Lem.~\ref{lem:J-fact} ($\J = \J_g
	\circ \dot{g}'$) we certainly have $\psi \in \ker \J$. On the other
	hand, if $\psi \in \ker \J$, then
	\begin{equation}
		\dot{k}'[\dot{g}'[\psi]]
		= \dot{s}'[\dot{f}[\psi]]
		= \dot{s}'[\bar{r}\circ \J[\psi] + \bar{r}_c\circ \dot{c}_g[\psi]]
		= \dot{s}'\circ \bar{r}_c\circ \dot{c}_g [\psi]
		= 0 ,
	\end{equation}
	where the last equality holds due to the gauge fixing compatibility
	condition~\eqref{eq:gf} and we have also used the
	equivalence~\eqref{eq:Jgf-equiv}. But, since $\dot{k}'$ is injective
	on $\Secs_0(P')$, this can only be if $\psi \in \ker \dot{g}'$.
	Therefore, after taking vertical cohomologies, the cohomology
	of~\eqref{eq:JE-seq} at $\Secs_0(F)$ is isomorphic to
	$H^g_0(F)$.

\item
	\emph{At $\Secs_0(\tilde{F}^*)$, we have $\EE\circ \J = 0
	\pmod{\im\dot{g}}$.  Any $\tilde{\alpha}^* \in \Secs_0(\tilde{F}^*)$
	such that $\dot{g}^*[\tilde{\alpha}^*] = 0$ and $\EE[\tilde{\alpha}^*]
	= \dot{g}[\eps]$, with $\eps\in \Secs_{SC}(P)$, can be written as
	$\tilde{\alpha}^* = \J[\xi]$, with $\xi\in \Secs_0(F)$.}

	For the first part, recall the equivalent form~\eqref{eq:gf-equiv} of
	the gauge fixing compatibility condition. Direct calculation, with
	$\xi\in \Secs_0(F)$, then gives
	\begin{align}
		\EE[\J[\xi]]
		&= \G[\bar{r}\circ\J[\xi]] \\
		&= \G[\dot{f}[\xi] - \bar{r}_c\circ\dot{c}_g[\xi]] \\
		&= -\G[\dot{s}\circ \bar{r}_s[\xi]] \\
		&= \dot{g}[\K[\bar{r}_s[\xi]]] .
	\end{align}

	For the second part, let $\psi = \EE[\tilde{\alpha}] = \dot{g}[\eps]
	\in \Secs_{SC}(F)$ and $\tilde{\beta}^* = \dot{k}[\eps] \in
	\Secs_{SC}(\tilde{P}^*)$. Note that $\dot{f}[\psi] = \dot{f}\circ
	\G[\bar{r}[\tilde{\alpha}^*]] = 0$ and also $\dot{s}[\tilde{\beta}^*]
	= \dot{s}\circ \dot{k}[\eps] = \dot{f}\circ \dot{g}[\eps] = 0$. Using
	the same logic and notation as in the proof of
	Lem.~\ref{lem:gauge-sol}, we can write $\psi =
	\G[\dot{s}_\chi[\tilde{\beta}^*]+\dot{s}[\tilde{\gamma}^*]] = \G\circ
	\dot{s}[\chi_+\tilde{\beta}^* + \tilde{\gamma}^*]$, for some
	$\tilde{\gamma}^* \in \Secs_0(\tilde{P}^*)$. Note that the argument of
	$\G$ has compact support. Recalling the definition $\psi =
	\G[\bar{r}[\tilde{\alpha}]]$ and the fact that $\ker\G = \im\dot{f}$,
	there must exist a $\xi\in \Secs_0(F)$ such that
	\begin{equation}
		\dot{f}[\xi] = \bar{r}[\tilde{\alpha}^*]
			- \dot{s}[\chi_+\tilde{\beta}^* + \tilde{\gamma}^*] .
	\end{equation}
	By uniqueness of solutions with retarded support, we must have $\xi =
	\G_+[\bar{r}[\tilde{\alpha}^*] - \dot{s}[\chi_+\tilde{\beta}^* +
	\tilde{\gamma}^*]] = \G_+[\bar{r}[\tilde{\alpha}^*]] -
	\dot{g}\circ\K_+[\chi_+\tilde{\beta}^* + \tilde{\gamma}^*]$ (we could
	have also used $\G_-$ with the substitution $\chi_+ \to -\chi_-$).
	Then, direct calculation shows
	\begin{align}
		\J[\xi]
		&= \J[\G_+\circ\bar{r}[\tilde{\alpha}^*]
			- \dot{g}\circ \K_+[\chi_+\tilde{\beta}^*+\tilde{\gamma}^*]] \\
		&= r\circ (\dot{f}\circ \G_+)\circ\bar{r}[\tilde{\alpha}^*]
			+ r_c\circ (\dot{c}\circ \G_+)\circ \bar{r}[\tilde{\alpha}^*] \\
		&= (r\circ\bar{r})[\tilde{\alpha}^*]
			+ r_c\circ\H_+\circ(\dot{q}\circ\bar{r})[\tilde{\alpha}^*] \\
		&= (\id + p_\J\circ \dot{g}^* - r_c\circ \bar{r}_\J)[\tilde{\alpha}^*]
			+ r_c\circ\H_+\circ (\dot{h}\circ\bar{r}_\J + q_\J\circ\dot{g}^*)
					[\tilde{\alpha}^*] \\
		&= \tilde{\alpha}^*
			- r_c\circ \bar{r}_\J[\tilde{\alpha}^*]
			+ r_c\circ (\H_+\circ\dot{h})\circ \bar{r}_\J[\tilde{\alpha}^*] \\
		&= \tilde{\alpha}^* ,
	\end{align}
	where we have used identities $\J\circ \dot{g} = 0$, \eqref{eq:rinv1}
	\eqref{eq:consist1} and the commutative diagram~\eqref{eq:glpar}.

\item
	\emph{At $\Secs_{SC}(F)$, we have $\J\circ \EE = 0$ when restricted to
	$\ker\dot{g}^*$. Any $\psi\in \Secs_{SC}(F)$ such that $\J[\psi] = 0$
	can be written as $\psi = \EE[\tilde{\alpha}^*] + \dot{g}[\eps]$, with
	$\eps\in \Secs_{SC}(P)$.}

	For the first part, with $\tilde{\alpha}^* \in \Secs_0(\tilde{F}^*)$
	and $\dot{g}^*[\tilde{\alpha}^*] = 0$, direct calculation gives
	\begin{align}
		\dot{f}\circ \EE[\tilde{\alpha}^*]
		&= (\dot{f}\circ \G) \circ\bar{r}[\tilde{\alpha}^*] = 0 , \\
		\dot{c}\circ \EE[\tilde{\alpha}^*]
		&= (\dot{c}\circ \G)\circ\bar{r}[\tilde{\alpha}^*]
		= \H\circ (\dot{q}\circ\bar{r})[\tilde{\alpha}^*] \\
		&= (\H\circ\dot{h})\circ\bar{r}_\J[\tilde{\alpha}^*]
				+ \H\circ q_\J[\dot{g}^*[\tilde{\alpha}^*]]
		= 0 , \\
		\J\circ \EE[\tilde{\alpha}^*]
		&= (r\circ (\dot{f}\circ\EE) + r_c\circ (\dot{c}\circ\EE))
				[\tilde{\alpha}^*] ,
	\end{align}
	where we have used the identity~\eqref{eq:consist1} and exactness of
	the sequence in Prp.~\ref{prp:exact}.

	For the second part, let $\tilde{\alpha}^* = \J_\chi[\psi] = \pm
	\J[\chi_\pm \psi]$, so that $\tilde{\alpha}^* \in
	\Secs_0(\tilde{F}^*)$. We claim that $\psi' = \EE[\tilde{\alpha}^*]$
	differs from $\psi$ only by a pure gauge term $\dot{g}[\eps]$, with
	$\eps \in \Secs_{SC}(P)$. We examine closely the following expression,
	which appears in the definition of $\psi'$,
	\begin{align}
		\bar{r}\circ \J_\chi[\psi]
		&= \pm [(\bar{r}\circ r)\circ \dot{f}[\chi_\pm\psi]
				+ (\bar{r}\circ r_c)\circ \dot{c}[\chi_\pm\psi]] \\
		&= \pm [(\id + p_f\circ \dot{q})\circ \dot{f}[\chi_\pm\psi]
				- (\bar{r}_c\circ r_g + p_f\circ \dot{h})\circ \dot{c}[\chi_\pm\psi]] \\
		&= \pm [\dot{f}[\chi_\pm\psi] - (\bar{r}_c\circ \dot{c}_g)[\chi_\pm\psi]
				+ p_f\circ(\dot{q}\circ\dot{f} - \dot{h}\circ\dot{c})[\chi_\pm\psi]] \\
		&= \pm (\dot{f}[\chi_\pm\psi] - \dot{s}\circ \bar{r}_s[\chi_\pm\psi]) ,
	\end{align}
	where we have used the identities~\eqref{eq:rinv1}
	and~\eqref{eq:rinv2}. Note that, while the above expression has
	compact support, the two individual terms do not. It is important to
	be able to decompose this expression in two different ways: into terms
	having retarded support ($+$) or advanced support ($-$). Direct
	calculation then shows
	\begin{align}
		\psi'
		&= \G[\bar{r}\circ \J_\chi[\psi]] \\
		&= \G_+[\dot{f}[\chi_+\psi] - \dot{s}\circ\bar{r}_s[\chi_+\psi]]
			- \G_-[-\dot{f}[\chi_-\psi] + \dot{s}\circ\bar{r}_s[\chi_-\psi]] \\
		&= (\chi_+ + \chi_-)\psi
			- (\G_+\circ\dot{s})\circ\bar{r}_s[\chi_+\psi]
			- (\G_-\circ\dot{s})\circ\bar{r}_s[\chi_-\psi] \\
		&= \psi
			- \dot{g}[\K_+\circ \bar{r}_s[\chi_+\psi]
				+ \K_-\circ \bar{r}_s[\chi_-\psi]] .
	\end{align}
	Therefore, the desired conclusion holds, $\psi' =
	\EE\circ\J_\chi[\psi] = \dot{g}[\eps]$, with $\eps = \K_+\circ
	\bar{r}_s[\chi_+\psi] + \K_-\circ \bar{r}_s[\chi_-\psi] \in
	\Secs_{SC}(P)$.

\item
	\emph{If $\tilde{\alpha}^*\in \Secs_{SC}(\tilde{F}^*)$, then
	$\tilde{\alpha}^* = \J[\psi]$, for some $\psi\in \Secs_{SC}(F)$, is
	equivalent to $\tilde{\alpha}^* = \dot{g}^{\prime*}[\tilde{\beta}^*]$,
	for some $\tilde{\beta}^* \in \Secs_{SC}(\tilde{P}^{\prime*})$.}

	If $\tilde{\alpha}^* = \J[\psi]$, then by Lem.~\ref{lem:J-fact} ($\J =
	\dot{g}^{\prime*}\circ \J_g^*$) we certainly have $\tilde{\alpha}^*
	\in \im\dot{g}^{\prime*}$. On the other hand, if $\tilde{\alpha}^* =
	\dot{g}^{\prime*}[\tilde{\beta}^*]$, let $\xi =
	\K^{\prime*}_\chi[\tilde{\beta}^*] \in \Secs_{SC}(P')$. Direct
	calculation then shows
	\begin{align}
		\J[\bar{r}^*\circ\dot{s}^{\prime*}[\xi]]
		&= (\J^*\circ \bar{r}^*)\circ\dot{s}^{\prime*}[\xi] \\
		&= \dot{f}^*\circ \dot{s}^{\prime*}[\xi]
			- (\dot{c}_g^*\circ \bar{r}_c^*\circ \dot{s}^{\prime*})[\xi] \\
		&= \dot{g}^{\prime*}\circ
				(\dot{k}^{\prime*} \circ \K^{\prime*}_\chi)[\tilde{\beta}^*]
		= \dot{g}^{\prime*}[\tilde{\beta}^*] \\
		&= \tilde{\alpha}^* ,
	\end{align}
	where we have used the formal adjoint versions of
	Eqs.~\eqref{eq:Jgf-equiv} and the gauge fixing compatibility
	condition~\eqref{eq:gf}. Therefore, the desired conclusion holds, with
	$\psi = \bar{r}^*\circ\dot{s}^{\prime*}[\xi] \in \Secs_{SC}(F)$.
	Simplifying the last expression, we get $\psi = \bar{r}^*\circ
	(\dot{s}^{\prime*}\circ \K^{\prime*}_\chi)[\tilde{\beta}^*]
	= \bar{r}^*\circ \G^*_\chi[\dot{g}^{\prime*}[\tilde{\beta}^*]]
	= \bar{r}^*\circ \G^*_\chi[\tilde{\alpha}^*]$ and hence
	\begin{equation}
		\J\circ \EE_\chi[\tilde{\alpha}^*]
		= \J\circ \bar{r}^*\circ \G^*_\chi[\tilde{\alpha}^*]
		= \tilde{\alpha}^* .
	\end{equation}
	Therefore, we have established that, after taking vertical
	cohomologies, the cohomology of~\eqref{eq:JE-seq} at
	$\Secs_{SC}(\tilde{F}^*)$ is isomorphic to
	$H^{g^*}_{SC}(\tilde{F}^*)$.

\end{enumerate}
\end{proof}
\begin{remark}\label{rem:loc-par-rec}
The hypotheses of Thm.~\ref{thm:Jexsplit} make use of the notion of
hyperbolizability (Def.~\ref{def:hyper}), which in turn requires the
corresponding constraints to be parametrizable (Sec.~\ref{sec:constr})
and the gauge transformations to be recognizable (Sec.~\ref{sec:gauge}).
But only the local versions of these were used. That is,
Thm.~\ref{thm:Jexsplit} holds even if \emph{global} parametrizability
and recognizability fail. The global conditions instead will appear in
the representation~\eqref{eq:symp-rep} of the formal presymplectic form
with respect to the natural pairing between the formal tangent and
cotangent spaces.
\end{remark}

As mentioned before, it is easy to see from its variational nature that
the Jacobi operator is self-adjoint $\J^* = \J$. If it were directly
invertible, the Green functions $\EE_\pm$ would satisfy the same
relation with their adjoints as shown in Sec.~\eqref{sec:green-adj},
making the causal Green function anti-self-adjoint, $(\EE)^* = -\EE$.
However, due to gauge invariance the relation of the gauge fixed Green
functions to their adjoints is more complicated.
\begin{lemma}\label{lem:E-anti}
When restricted to act on gauge invariant dual densities, the causal
Green function of the gauge fixed Jacobi system is anti-self-adjoint up
to gauge:
\begin{equation}
	(\EE)^* = -\EE \pmod{\im \dot{g}} .
\end{equation}
\end{lemma}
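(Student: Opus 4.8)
The plan is to reduce the claim to the antisymmetry of the Peierls pairing and then feed that antisymmetry into the non-degeneracy result of Lem.~\ref{lem:ginv-full-nondegen}. Fix two gauge invariant, compactly supported dual densities $\tilde{\alpha}^*,\tilde{\gamma}^*\in\Secs_0(\tilde{F}^*)$, so $\dot{g}^*[\tilde{\alpha}^*] = \dot{g}^*[\tilde{\gamma}^*] = 0$. Recall from the discussion preceding Def.~\ref{def:E} that, precisely because $\tilde{\alpha}^*$ is gauge invariant, $\psi_\pm := \EE_\pm[\tilde{\alpha}^*] = \G_\pm[\bar{r}[\tilde{\alpha}^*]]$ are respectively the retarded and advanced solutions of $\J[\psi_\pm] = \tilde{\alpha}^*$; in particular $\supp\psi_+ \sse \overline{I^+(K)}$ and $\supp\psi_- \sse \overline{I^-(K')}$ for some compact $K,K'$, and likewise for $\EE_\pm[\tilde{\gamma}^*]$.

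First I would establish the antisymmetry $\langle\EE[\tilde{\alpha}^*],\tilde{\gamma}^*\rangle = -\langle\tilde{\alpha}^*,\EE[\tilde{\gamma}^*]\rangle$. Since $\J$ is variational, $\J^* = \J$, and Def.~\ref{def:green-form} supplies a Green form $G_\J$ with $\J[\phi]\cdot\psi - \phi\cdot\J[\psi] = \d G_\J[\phi,\psi]$. Integrating this over $M$ with $\phi = \EE_+[\tilde{\alpha}^*]$ and $\psi = \EE_-[\tilde{\gamma}^*]$, the boundary term $\int_M \d G_\J[\EE_+[\tilde{\alpha}^*],\EE_-[\tilde{\gamma}^*]]$ vanishes by Stokes, because $G_\J[\EE_+[\tilde{\alpha}^*],\EE_-[\tilde{\gamma}^*]]$ is a local bilinear expression and hence supported in $\overline{I^+(K)}\cap\overline{I^-(K')}$, which is compact for the (globally hyperbolic) causal structure at hand. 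Using $\J[\EE_+[\tilde{\alpha}^*]] = \tilde{\alpha}^*$ and $\J[\EE_-[\tilde{\gamma}^*]] = \tilde{\gamma}^*$ this gives $\langle\tilde{\alpha}^*,\EE_-[\tilde{\gamma}^*]\rangle = \langle\EE_+[\tilde{\alpha}^*],\tilde{\gamma}^*\rangle$, and the same computation with $+\leftrightarrow-$ gives $\langle\tilde{\alpha}^*,\EE_+[\tilde{\gamma}^*]\rangle = \langle\EE_-[\tilde{\alpha}^*],\tilde{\gamma}^*\rangle$. Subtracting, and writing $\EE = \EE_+ - \EE_-$, yields the desired antisymmetry. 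Combined with the symmetry of the natural pairing and the definition of the operator adjoint, $\langle\EE[\tilde{\alpha}^*],\tilde{\gamma}^*\rangle = \langle\tilde{\alpha}^*,\EE^*[\tilde{\gamma}^*]\rangle$, this reads
\[
	\langle\tilde{\alpha}^*,(\EE + \EE^*)[\tilde{\gamma}^*]\rangle = 0
	\qquad\text{for every gauge invariant }\tilde{\alpha}^*\in\Secs_0(\tilde{F}^*).
\]

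To conclude, note that $(\EE + \EE^*)[\tilde{\gamma}^*]\in\Secs_{SC}(F)$ and that the displayed identity says this section pairs to zero against every element of the formal gauge invariant full cotangent space $T^*_\phi\bar{\Secs} = \{\tilde{\alpha}^*\in\Secs_0(\tilde{F}^*)\mid\dot{g}^*[\tilde{\alpha}^*] = 0\}$ of Sec.~\ref{sec:tt*-conf}. Under the standing global recognizability hypothesis, Lem.~\ref{lem:ginv-full-nondegen} says the natural pairing between $T_\phi\bar{\Secs} = \Secs_{SC}(F)/\im\dot{g}$ and $T^*_\phi\bar{\Secs}$ is non-degenerate; non-degeneracy in the first argument therefore forces $(\EE + \EE^*)[\tilde{\gamma}^*]\in\im\dot{g}$, i.e.\ $(\EE + \EE^*)[\tilde{\gamma}^*] = \dot{g}[\eps]$ for some $\eps\in\Secs_{SC}(P)$. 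Since $\tilde{\gamma}^*$ was an arbitrary gauge invariant compactly supported dual density, this is exactly the assertion $(\EE)^* = -\EE\pmod{\im\dot{g}}$ on gauge invariant dual densities.

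The only genuinely delicate points are the support bookkeeping — that $\EE_+[\tilde{\alpha}^*]$ and $\EE_-[\tilde{\gamma}^*]$ have retarded and advanced support respectively, so that their causal overlap is compact and the Stokes term drops — and the observation that one must invoke the full non-degeneracy of Lem.~\ref{lem:ginv-full-nondegen}, which requires \emph{global} recognizability, and not merely the local exactness used in Thm.~\ref{thm:Jexsplit}. As the hypothesis of the lemma already signals, the argument truly does break off the gauge invariant locus: $\EE_\pm[\tilde{\alpha}^*]$ solves $\J[\,\cdot\,] = \tilde{\alpha}^*$ only when $\dot{g}^*[\tilde{\alpha}^*] = 0$, so without that restriction $\EE + \EE^*$ need not land in $\im\dot{g}$.
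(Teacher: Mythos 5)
Your route is genuinely different from the paper's. The paper works purely algebraically: it computes $(\EE_\mp)^*\circ\J\circ\EE_\pm$ in two ways, once using $\J\circ\EE_\pm=\id+p_\J\circ\dot{g}^*$ and once using $\J^*=\J$ to move $\J$ onto $(\EE_\mp)^*$, and subtracts to obtain the exact operator identity
\begin{equation*}
	\EE = -(\EE)^* - \dot{g}\circ p_\J^*\circ\EE - (\EE)^*\circ p_\J\circ\dot{g}^* ,
\end{equation*}
valid on all of $\Secs_0(\tilde{F}^*)$, from which the lemma follows by restriction. You instead prove the antisymmetry of the \emph{pairing} $\langle\EE[\tilde{\alpha}^*],\tilde{\gamma}^*\rangle$ via the Green form of the self-adjoint operator $\J$ and Stokes' theorem; that part is correct, including the support bookkeeping (the bidifferential operator $G_\J$ is supported in $\supp\EE_+[\tilde{\alpha}^*]\cap\supp\EE_-[\tilde{\gamma}^*]$, which is compact by global hyperbolicity), and the use of $\J\circ\EE_\pm=\id$ on gauge invariant sources. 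This is essentially the strategy of Lem.~\ref{lem:green-nondegen}, transplanted to the Jacobi system.

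The genuine defect is your last step. To pass from ``$(\EE+\EE^*)[\tilde{\gamma}^*]$ pairs to zero against every gauge invariant $\tilde{\alpha}^*$'' to ``$(\EE+\EE^*)[\tilde{\gamma}^*]\in\im\dot{g}$'' you invoke Lem.~\ref{lem:ginv-full-nondegen}, whose proof requires \emph{global} recognizability. But Lem.~\ref{lem:E-anti} sits in Sec.~\ref{sec:Jgreen}, where only hyperbolizability (hence only the \emph{local} versions of parametrizability and recognizability) is assumed; cf.\ Remark~\ref{rem:loc-par-rec}. Without the global hypothesis your pairing identity only yields $\dot{g}'[(\EE+\EE^*)[\tilde{\gamma}^*]]=0$, i.e.\ a possibly nontrivial class in $H^g_{SC}(F)$, not membership in $\im\dot{g}$. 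So as written you prove a conditional version of the lemma. A secondary cost of your approach is that it does not produce the explicit correction terms of Eq.~\eqref{eq:E-anti}, which the paper reuses verbatim in the proofs of Lem.~\ref{lem:pure-hyp-peierls} and Lem.~\ref{lem:Pi-form}; the duality argument gives existence of the gauge term but not its formula $\dot{g}\circ p_\J^*\circ\EE$.
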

\begin{proof}
First, note that from identities~\eqref{eq:Jgf-equiv}
and~\eqref{eq:rinv1} we have
\begin{align}
	\J\circ\EE_\pm[\tilde{\alpha}^*]
	&= \sum_\pm (r\circ \dot{f} + r_c\circ \dot{c})\circ \G_\pm \circ
		\bar{r}[\tilde{\alpha}^*] \\
	&= r\circ \bar{r}[\tilde{\alpha}^*] + r_c\circ \bar{r}_\J[\tilde{\alpha}^*]\\
	&= (\id + p_\J\circ \dot{g}^*)[\tilde{\alpha}^*] .
\end{align}
It then follows that
\begin{align}
	(\EE_\mp)^* \circ \J \circ \EE_\pm
		&= (\EE_\mp)^* \circ (\J \circ \EE_\pm)
		= (\EE_\mp)^* + (\EE_\mp)^*\circ p_\J \circ \dot{g}^* , \\
	(\EE_\mp)^* \circ \J \circ \EE_\pm
		&= ((\EE_\mp)^*\circ \J^*) \circ \EE_\pm
		= \EE_\pm + \dot{g}\circ p_\J^* \circ \EE_\pm , \\
	\text{and hence}\quad
	\EE_\pm &= (\EE_\mp)^* + (\EE_\mp)^*\circ p_\J \circ \dot{g}^*
		- \dot{g}\circ p_\J^* \circ \EE_\pm .
\end{align}
Given that $\EE = \EE_+ - \EE_-$, we then have
\begin{equation}\label{eq:E-anti}
	\EE = -(\EE)^* - \dot{g}\circ p_\J^*\circ \EE
		- (\EE)^*\circ p_\J^* \circ \dot{g}^* ,
\end{equation}
which gives the desired conclusion. 
\end{proof}

We conclude this section by drawing attention to the fact that the kind
of gauge fixing that features in a hyperbolization, as discussed in
Sec.~\ref{sec:constr-gf} is a special kind of partial gauge fixing. We
refer to it as \emph{purely hyperbolic}. Any further gauge fixing
conditions are then called \emph{residual}. We leave the consideration
of residual gauge fixing to future work. A principal difficulty in
dealing with residual gauge fixing conditions is that the resulting
constraints are no longer parametrizable (such as operators that are
elliptic on a family of spatial slices). Thus, the kernel of the gauge
fixing conditions may contain very few, if any solutions with spacelike
compact support, which would be difficult to fit into the current formal
framework for tangent and cotangent spaces to the space of solutions.

\subsubsection{Formal symplectic structure}\label{sec:formal-symp}
Below, we construct a formal symplectic form $\bar{\Omega}$ using the
covariant phase space formalism. That is, we will integrate the
presymplectic current density $\omega$, derived in
Sec.~\ref{sec:var-sys}, over a Cauchy surface. Any Cauchy surface would
do, giving the same result. The resulting form can in general be
degenerate, though, and only becomes symplectic once projected to the
gauge invariant formal tangent space $T_\phi \bar{\S}$.
\begin{definition}\label{def:formal-symp}
Consider a variational system (Sec.~\ref{sec:var-sys}) with presymplectic form $\omega \in
\Forms^{n-1,2}(F)$ (\ref{sec:jets}). Suppose that it is hyperbolizable
(Sec.~\ref{sec:constr-gf}) and $\phi\in \S_H(F)$ is a background
solution with good causal behavior (Sec.~\ref{sec:linear}), so that the
linearized equations of motion endow $M$ with a globally hyperbolic
causal structure (\ref{sec:conal}). Then, given a Cauchy surface
$\Sigma\sso M$, we define the formal \emph{presymplectic} $2$-form
$\Omega$ on the formal tangent space $T_\phi\bar{\S}$ by the formula
\begin{equation}
 	\Omega(\psi,\xi)
 	= \int_\Sigma\omega[\psi,\xi]
 	= \int_\Sigma(j^\oo\phi)^* [\iota_{\hat{\xi}}\iota_{\hat{\psi}}\omega] .
\end{equation}
\end{definition}
Recall that for any section $\psi,\xi\in \Secs(F)$ we can define the
prolonged evolutionary vector fields $\hat{\psi},\hat{\xi}$ on $J^\oo
F)$ (\ref{sec:jets}), which can be then be contracted with $\omega \in
\Forms^*(J^\oo F)$.

Ideally, we would now show that $\Omega$ defines a smooth, closed
differential form on the possibly infinite dimensional space of
solutions $\S_H(F)$. However, we would then need to make explicit use of
the infinite dimensional differential structure on $\S_H(F)$ and
$T\S_H(F)$, which we have consistently avoided doing in this review,
preferring a formal approach, with minimal analytical details. So
instead, we will settle for showing that it is \emph{formally smooth}
and \emph{closed}. These names are simply place holders for the
identities demonstrated in the proof of the following
\begin{lemma}
Under the hypotheses of Def.~\ref{def:formal-symp}, the definition of
$\Omega$ is independent of the choice of Cauchy surface $\Sigma \sso M$.
Moreover, $\Omega$ is formally closed.
\end{lemma}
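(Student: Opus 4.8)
The plan is to trace both claims back to the two closedness properties of $\omega$ recorded in Lem.~\ref{lem:omega-cl}, namely $\dh\iota_\oo^*\omega=0$ and $\dv\iota_\oo^*\omega=0$ on the prolonged solution locus $\iota_\oo\colon \E^\oo_\EL\sse J^\oo F$.

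First I would isolate the key fact: for a background solution $\phi\in\S_H(F)$ and representatives $\psi,\xi\in\Secs_{SC}(F)$ of classes in $T_\phi\bar\S$ --- so that $\dot f[\psi]=\dot c[\psi]=0$, hence $\J[\psi]=0$ by the equivalence~\eqref{eq:Jgf-equiv}, and likewise for $\xi$ --- the $(n-1)$-form $\omega[\psi,\xi]=(j^\oo\phi)^*(\iota_{\hat\xi}\iota_{\hat\psi}\omega)$ is closed on $M$ and has spacelike compact support. The support assertion is immediate from $\supp\psi$, $\supp\xi$ being spacelike compact. For closedness, since $(j^\oo\phi)^*$ annihilates the vertical differential, $\d\,\omega[\psi,\xi]=(j^\oo\phi)^*\dh(\iota_{\hat\xi}\iota_{\hat\psi}\omega)$; using that interior product by a prolonged evolutionary vector field anticommutes with $\dh$, this equals $(j^\oo\phi)^*(\iota_{\hat\xi}\iota_{\hat\psi}\dh\omega)$; and by the computation in the proof of Lem.~\ref{lem:omega-cl}, $\dh\omega=-\dv\EL_a\wedge\dv u^a$ lies in the differential ideal generated by $\EL_a$ and $\dv\EL_a$. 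Contracting along $\hat\psi,\hat\xi$ and pulling back by $j^\oo\phi$ kills this ideal, since $\EL_a(j^\oo\phi)=0$ (as $\phi$ solves the equations of motion) and the restriction of $\iota_{\hat\psi}\dv\EL_a$ to $j^\oo\phi$ is precisely $\J[\psi]_a=0$ (as $\psi$ solves the Jacobi system), and similarly with $\xi$. Hence $\d\,\omega[\psi,\xi]=0$.

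Independence of the Cauchy surface then follows exactly as in Lem.~\ref{lem:green-indep}: given two Cauchy surfaces $\Sigma,\Sigma'$ for the linearized causal structure at $\phi$, choose $S\sse M$ with $\del S=\Sigma'-\Sigma$; since $S$ meets the spacelike compact support of $\omega[\psi,\xi]$ in a compact set, Stokes' theorem gives
\[
  \int_{\Sigma'}\omega[\psi,\xi]-\int_{\Sigma}\omega[\psi,\xi]=\int_S \d\,\omega[\psi,\xi]=0 .
\]
The same Stokes argument, applied to the Noether $(n-1)$-form furnished by Noether's second theorem together with $\J\circ\dot g=0$, also yields $\int_\Sigma\omega[\dot g[\eps],\xi]=0$, which is the companion fact that makes $\Omega$ well defined on the quotient $T_\phi\bar\S$; this is not strictly part of the stated assertion but is its natural complement.

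Formal closedness is the image, under the standard formal dictionary between the exterior derivative on $\S_H(F)$ and the vertical differential $\dv$, of the identity $\dv\iota_\oo^*\omega=0$, which holds trivially since $\omega=\dv\theta$ and $\dv^2=0$. Concretely, evaluating $\d\Omega$ formally on a triple of linearized solutions $\psi,\xi,\zeta$ produces $\int_\Sigma(j^\oo\phi)^*(\iota_{\hat\zeta}\iota_{\hat\xi}\iota_{\hat\psi}\dv\omega)$ together with terms built from interior products along the brackets $[\hat\psi,\hat\xi]$, $[\hat\psi,\hat\zeta]$, $[\hat\xi,\hat\zeta]$; the bracket terms vanish because prolonged evolutionary vector fields of fixed sections commute, and the first term vanishes since $\dv\omega=\dv^2\theta=0$. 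I expect the only real obstacle to be bookkeeping --- keeping the $\d\leftrightarrow\dv$ correspondence and the bicomplex sign conventions for interior products consistent, and checking that the cobounding region $S$ meets $\supp\omega[\psi,\xi]$ in a compact set --- while the geometric content, that everything reduces to $\dh\omega$ and $\dv\omega$ vanishing on shell, is immediate.
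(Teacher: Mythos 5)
Your proposal is correct and follows essentially the same route as the paper: finiteness and Cauchy-surface independence come from the spacelike compact support of $\omega[\psi,\xi]$ together with its de~Rham closedness on $M$, which reduces via $\dh\omega=-\dv\EL_a\wedge\dv u^a$ to the on-shell horizontal closedness of Lem.~\ref{lem:omega-cl} (you merely make explicit the role of $\EL_a(j^\oo\phi)=0$ and $\J[\psi]=\J[\xi]=0$, and the Stokes step that the paper leaves implicit), while formal closedness is read off from $\dv\omega=\dv^2\theta=0$ with $\delta$ acting as $\dv$ under the integral, exactly as in the paper. The extra remark about vanishing on pure-gauge directions is not needed for this lemma (it is the content of Cor.~\ref{cor:bOmega}), but it does no harm.
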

\begin{proof}
First, we note that if $\chi,\xi\in T_\phi\S$ then the integral defining
$\Omega$ is necessarily finite, since the integrand $\omega[\chi,\xi] =
(j^\oo\phi)^*[\iota_{\hat{\xi}}\iota_{\hat{\chi}}\omega]$ has spacelike
compact support as both $\chi$ and $\xi$ do. Independence of the choice
of $\Sigma$ follows if we can show that $\omega[\chi,\xi]$ is de~Rham
closed on $M$. This follows directly from the horizontal, on-shell
closedness of $\omega$ in the variational bicomplex
(Lem.~\ref{lem:omega-cl}):
\begin{equation}
	\d[(j^\oo\phi)^*\iota_{\hat\xi}\iota_{\hat\chi}\omega]
	= (j^\oo\phi)^*\dh[\iota_{\hat\xi}\iota_{\hat\chi}\omega]
	= (j^\oo\phi)^*\iota_{\hat\xi}\iota_{\hat\chi}[\dh\omega]
	= 0 .
\end{equation}

For a fixed background solution $\phi\in \S_H(F)$, the formal $2$-form
$\Omega(\chi,\xi)$ is defined as a Cauchy surface integral of a
bidifferential operator $\omega[\chi,\xi]$, which is defined by a form
$\omega\in \Omega^*(J^\oo F)$. Hence we are happy to declare $\Omega$ to
be \emph{formally smooth} in its dependence on $\phi$, as long as
$\omega$ itself is smooth, which it is by construction. Also, in this
simple case, we are justified in declaring the formal de~Rham
differential $\delta$ on $\S_H(F)$ to act on $\Omega$ as the vertical
differential $\dv$ under the integral sign. Therefore, in this context,
it is straight forward to check that $\Omega$ is \emph{formally closed}
since $\omega$ is on-shell, vertically closed (Lem.~\ref{lem:omega-cl}):
\begin{align}
	(\delta\Omega)(\chi,\xi,\psi)
	&= \int_\Sigma (j^\oo\phi)^*
		[\iota_{\hat\psi}\iota_{\hat\xi}\iota_{\hat\chi}\dv\omega]
	= \int_\Sigma (j^\oo\phi)^*
		[\iota_{\hat\psi}\iota_{\hat\xi}\iota_{\hat\chi}\dv\omega] \\
	&= \int_\Sigma (j^\oo\phi)^*
		[\iota_{\hat\psi}\iota_{\hat\xi}\iota_{\hat\chi} (\dv\omega)]
	= 0 .
\end{align}
This concludes the proof. 
\end{proof}
Though this was not attempted in Refs.~\citen{fr-bv,rejzner-thesis,bfr},
their rigorous setting for infinite dimensional geometry can be used to
remove the formal character of the above lemma. Also, it is quite clear
from the proof that the integration surface $\Sigma$ in the definition
of $\Omega$ need not actually be a Cauchy surface. It need only be in
the same homology%
 	\footnote{The appropriate homology theory here should correspond to a
 	variant of locally finite Borel-Moore homology, where one considers
 	only chains whose intersection with every spacelike compact set
 	is compact. This variant does not appear to have gotten any attention
 	in the literature and thus deserves further study.} %
class as a Cauchy surface. In particular, it is enough that $\Sigma$
coincides with some Cauchy surface outside a compact set.

A bilinear form defines a linear map from a vector space to its
algebraic dual. A similar statement holds for a continuous bilinear form
and the topological dual space. However, our formal cotangent spaces
$T^*_\phi\S$ and $T^*_\phi\bar{\S}$ are neither the algebraic nor the
topological duals of the formal tangent spaces $T_\phi\S$ and
$T_\phi\bar{\S}$. Thus we have to check this property for $\Omega$ by
hand. This will be accomplished using one of the splitting maps for the
Jacobi system from Thm.~\ref{thm:Jexsplit}, which is analogous
Lem.~\ref{lem:exsplit} for hyperbolic systems.  The argument in the
proof was inspired by Sec.~3.3 of Ref.~\citen{fr-pois} and Lem.~3.2.1 of
Ref.~\citen{wald-qft}.
\begin{lemma}\label{lem:bOmega}
Provided the constraints are globally parametrizable
(Sec.~\ref{sec:constr}) and the gauge transformations are globally
recognizable (Sec.~\ref{sec:gauge}), the presymplectic form $\Omega$
defines the following map from the formal tangent space to the formal
cotangent space:
\begin{align}
	\Omega\colon & T_\phi\S \to T^*_\phi\S \\
		& \psi \mapsto [\tilde{\alpha}^*] , \\
	\text{with} ~~
	\tilde{\alpha}^* &= \J_\chi[\psi] = \pm\J[\chi_\pm\psi] ,
\end{align}
where $\J\colon \Secs(F) \to \Secs(\tilde{F}^*)$ the Jacobi differential
operator and $\{\chi_\pm\}$ is a partition of unity adapted to a Cauchy
surface $\Sigma$.
\end{lemma}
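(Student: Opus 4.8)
The plan is to compute the Cauchy surface integral defining $\Omega$ by the same partition-of-unity-plus-Stokes technique used in the proof of Lem.~\ref{lem:green-nondegen} (indeed this lemma has the same skeletal structure), the essential input being that the presymplectic current $\omega$ is, on the background solution, a Green form (Def.~\ref{def:green-form}) for the self-adjoint Jacobi operator $\J$. Concretely, pulling the identity $\dh\omega=-\dv\EL_a\wedge\dv u^a$ from the proof of Lem.~\ref{lem:omega-cl} back to $\phi$ and contracting it with the prolonged evolutionary fields $\hat\psi,\hat\xi$, while using $\dv\EL_a=\J_{ab}^I\wedge\dv u_I^b$ (so that the contractions produce $\J[\psi]$ and $\J[\xi]$) and $\J^*=\J$, one obtains
\[
	\d\,\omega[\psi,\xi]=\psi\cdot\J[\xi]-\xi\cdot\J[\psi]
\]
as an identity of $n$-forms on $M$, valid for arbitrary sections $\psi,\xi\in\Secs(F)$ (only the background $\phi$ must be on shell). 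In other words the pulled-back bidifferential form $\omega[\psi,\xi]$ represents a Green form for $\J$.

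First I would check that the prescription $\tilde{\alpha}^*=\J_\chi[\psi]=\pm\J[\chi_\pm\psi]$ really produces an element of $T_\phi^*\S$. Since $\psi\in T_\phi\S$ is spacelike compact and $\J[\psi]=0$ (because $\J=r\circ\dot f+r_c\circ\dot c$ by~\eqref{eq:Jgf-equiv} while $\dot f[\psi]=\dot c[\psi]=0$), the two terms $\J[\chi_\pm\psi]=\J[\chi_\pm\psi]-\chi_\pm\J[\psi]$ are supported on $\supp\psi\cap\supp\d\chi_\pm$, the intersection of a spacelike compact with a timelike compact set, hence compact, and $\J[\chi_+\psi]+\J[\chi_-\psi]=\J[\psi]=0$ makes $\J_\chi[\psi]=\pm\J[\chi_\pm\psi]$ well defined in $\Secs_0(\tilde F^*)$ — this is exactly the mechanism of Lem.~\ref{lem:exsplit}. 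Moreover, taking adjoints of~\eqref{eq:Jgf-equiv} and using $\J^*=\J$ gives $\J=\dot f^*\circ r^*+\dot c^*\circ r_c^*$, so $\J$ applied to any compactly supported section lies in $\im\dot f^*\!\restriction_{\Secs_0}+\im\dot c^*\!\restriction_{\Secs_0}$; hence two adapted partitions of unity (whose difference $\chi_+-\chi_+'$ is timelike compact) yield $\J_\chi[\psi]-\J_{\chi'}[\psi]=\J[\zeta]$ with $\zeta\in\Secs_0(F)$, giving the same class $[\J_\chi[\psi]]\in T_\phi^*\S$ by Def.~\ref{def:tt*-sols}. Independence of the Cauchy surface then follows as well, since $\Omega$ is already known to be Cauchy-surface independent.

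For the core identity, fix a future oriented Cauchy surface $\Sigma$ and an adapted partition of unity $\{\chi_\pm\}$ (Def.~\ref{def:adapt-pu}) and mimic the chain of equalities of Lem.~\ref{lem:green-nondegen}. By bilinearity of $\omega[-,-]$ and $\chi_++\chi_-=1$,
\[
	\Omega(\psi,\xi)=\int_\Sigma\iota^*\omega[\psi,\xi]=\sum_\pm\int_\Sigma\iota^*\omega[\psi,\chi_\pm\xi].
\]
Because $\chi_\pm\xi$ is supported in $S^\pm$ while $\psi,\xi$ are spacelike compact, $\omega[\psi,\chi_\pm\xi]$ meets $I^\mp(\Sigma)$ in a compact set, so Stokes' theorem gives $\int_\Sigma\iota^*\omega[\psi,\chi_\pm\xi]=\pm\int_{I^\mp(\Sigma)}\d\,\omega[\psi,\chi_\pm\xi]$; inserting the Green-form identity together with $\J[\psi]=0$ and $\J[\chi_\pm\xi]=\pm\J_\chi[\xi]$, each term collapses to $\int_{I^\mp(\Sigma)}\psi\cdot\J_\chi[\xi]$, and since $\overline{I^-(\Sigma)}\cup\overline{I^+(\Sigma)}=M$ while $\J_\chi[\xi]\in\Secs_0(\tilde F^*)$, summing yields $\Omega(\psi,\xi)=\int_M\psi\cdot\J_\chi[\xi]=\langle\psi,\J_\chi[\xi]\rangle$. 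Using antisymmetry $\Omega(\psi,\xi)=-\Omega(\xi,\psi)$ this is precisely the statement that the map induced by $\Omega$ sends $\psi\mapsto[\J_\chi[\psi]]$; the exact placement of the overall sign is fixed only by the paper's conventions for the orientation of $\Sigma$ and for $\omega$ versus $G$, and causes no difficulty.

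I expect the only genuine obstacle to be bookkeeping rather than conceptual: carefully justifying every application of Stokes' theorem (i.e.\ that each $\omega[\psi,\chi_\pm\xi]$ has support compact within $I^\mp(\Sigma)$, which rests on the spacelike-compact-meets-timelike-compact mechanism, the cutoff $\d\chi_\pm$ being concentrated in the sandwich $S^+\cap S^-$ between the auxiliary Cauchy surfaces $\Sigma^\pm$), and keeping all the signs consistent. A secondary point worth flagging is that the identity itself uses only hyperbolizability and is local in nature; the global parametrizability and recognizability hypotheses listed in the statement are invoked to match the setting of the non-degeneracy results and the descent to $T_\phi\bar\S$ for which this lemma is the key intermediate step (cf.\ Thm.~\ref{thm:Jexsplit} and Remark~\ref{rem:loc-par-rec}), rather than being needed for this computation.
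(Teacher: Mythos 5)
Your proposal is correct and follows essentially the same route as the paper's own proof: an adapted partition of unity, Stokes' theorem on $I^\mp(\Sigma)$, the variational-bicomplex identity $\dh\omega=-\dv\EL_a\wedge\dv u^a$ (your ``Green form'' packaging) producing the Jacobi pairings, the spacelike-compact-meets-timelike-compact support argument for $\J_\chi[\psi]\in\Secs_0(\tilde F^*)$, and finally the identification $\Omega(\psi,\xi)=\langle\psi,\J_\chi[\xi]\rangle$. Your closing observation that the global parametrizability/recognizability hypotheses enter only through the non-degeneracy of the natural pairing (used to say that this map is the one \emph{defined by} $\Omega$), not through the computation itself, matches the paper, which invokes Lem.~\ref{lem:ginv-sols-nondegen} exactly at that final step and makes the same point in Remarks~\ref{rem:loc-par-rec} and~\ref{rem:gl-par-rec}.
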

\begin{proof}
Using the adapted partition of unity, we can write any spacelike
compactly supported solution $\psi$ of $\dot{f}[\psi] = 0$ as $\psi =
\psi_+ + \psi_-$, with $\psi_\pm = \chi_\pm \psi$ now being of retarded
and advanced supports. If $\psi$ also satisfies the constraints
$\dot{c}[f] = 0$, then by Thm.~\ref{thm:Jexsplit} it also satisfies the
Jacobi equation $\J[\psi] = 0$. Hence $\J[\psi_+ + \psi_-] = 0$ or
$\J[\psi_+] = -\J[\psi_-] = \J_\chi[\psi]$. Note that the support of
$\J[\psi_\pm]$ is compact, since $\psi_\pm$ satisfy the Jacobi equation
away from the intersection $S^+\cap S^-\cap \supp \psi$, which is by
hypothesis compact.

Next, we want to find a compactly supported dual density
$\tilde{\alpha}^*$ that satisfies $\Omega(\xi,\psi) = \langle
\xi,\tilde{\alpha}^*\rangle$ for any $\xi\in T_\phi\S$, which in
particular satisfies $\J[\xi]=0$. Recall that an adapted partition of
unity also depends on two additional Cauchy surfaces $\Sigma^\pm\sso
I^\pm(\Sigma)$ and the supports of the partition are contained in
$\supp\chi_\pm \sse S^\pm = I^\pm(\Sigma^\mp)$. The following direct
calculation helps us identify $\tilde{\alpha}^*$.
\begin{align}
	\Omega(\xi,\psi)
	&= \int_\Sigma \omega(\xi,\psi)[\phi]
	= \sum_\pm \int_\Sigma (j^\oo\phi)^*\omega(\xi,\psi_\pm) \\
	&= \sum_\pm\int_{\Sigma^\mp} (j^\oo\phi)^*\omega(\xi,\psi_\pm)
		+ \sum_\pm \pm\int_{S^\pm\cap I^\mp(\Sigma)}\d(j^\oo\phi)^*\omega(\xi,\psi_\pm) \\
	&= \sum_\pm \pm\int_{I^\mp(\Sigma)} (j^\oo\phi)^*(\dh\omega)(\xi,\psi_\pm) \\
	&= \sum_\pm \pm\int_{I^\mp(\Sigma)}
		(j^\oo\phi)^*(-\dv\EL_a\wedge\dv u^a)(\xi,\psi_\pm) \\
	&= \sum_\pm \mp\int_{I^\mp(\Sigma)}
		[(\J^I_{ab}\del_I\xi^b)\psi_\pm^a - (\J^I_{ab}\del_I\psi_\pm^b)\xi^a] \\
	&= \int_{I^-(\Sigma)} \xi\cdot\J[\psi_+]
		- \int_{I^+(\Sigma)} \xi\cdot\J[\psi_-] \\
	&= \int_{I^-(\Sigma)} \xi\cdot \J_\chi[\psi]
		- \int_{I^+(\Sigma)} \xi\cdot(-\J_\chi[\psi]) \\
	&= \int_M \xi\cdot \J_\chi[\psi]
\end{align}
Note that after the integration by parts, the boundary integrals over
$\Sigma^\pm$ were dropped since they did not intersect the support of
their integrands. Then, since $\supp \psi_\pm \sse S^\pm$, the
integration over $S^\pm\cap I^\mp(\Sigma)$ was extended to all of
$I^\mp(\Sigma)$. Finally, the term $\psi_\pm\cdot \J[\xi]$ was dropped
since $\xi$ is a linearized solution.

To complete the proof, we use the non-degeneracy of the natural pairing
between $T_\phi\S$ and $T^*_\phi\S$ (Lem.~\ref{lem:ginv-sols-nondegen},
which we can invoke because of the global parametrizability and
recognizability hypotheses) to define the operator $\Omega$ by the
formula
\begin{equation}
	\langle \xi, \Omega \psi \rangle
	= \Omega(\xi,\psi)
	= \langle \xi, \tilde{\alpha}^* \rangle
	= \langle \xi, [\tilde{\alpha}^*] \rangle ,
\end{equation}
so that $\Omega\psi = [\tilde{\alpha}^*] \in T^*_\phi\S$, with
$\tilde{\alpha}^* = \J_\chi[\psi]$. 
\end{proof}

\begin{corollary}\label{cor:bOmega}
Provided the constraints are globally parametrizable
(Sec.~\ref{sec:constr}) and the gauge transformations are globally
recognizable (Sec.~\ref{sec:gauge}), the $2$-form $\Omega$ on $T_\phi\S$
projects to a $2$-form $\bar\Omega$ on $T_\phi\bar\S$ and hence defines
a map
\begin{align}
	\bar{\Omega}\colon & T_\phi\bar\S \to T^*_\phi\bar\S \\
		& [\psi] \mapsto [\tilde{\alpha}^*] , \\
	\text{with} ~~
	\tilde{\alpha}^* &= \J_\chi[\psi] .
\end{align}
Moreover, this map is independent of the choice of adapted partition of
unity $\{\chi_\pm\}$.
\end{corollary}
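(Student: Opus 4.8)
The plan is to deduce everything from Lem.~\ref{lem:bOmega}, which already exhibits the map $\Omega\colon T_\phi\S \to T^*_\phi\S$, $\psi\mapsto[\tilde\alpha^*]$ with $\tilde\alpha^*=\J_\chi[\psi]=\pm\J[\chi_\pm\psi]$, together with Thm.~\ref{thm:Jexsplit}, the self-adjointness $\J^*=\J$, the Noether identities~\eqref{eq:J-ginv}, the equivalence~\eqref{eq:Jgf-equiv}, and the non-degeneracy result Lem.~\ref{lem:ginv-sols-nondegen}. Three things must be checked: (i) that for every $\psi\in T_\phi\S$ the dual density $\J_\chi[\psi]$ in fact represents a class of $T^*_\phi\bar\S\sse T^*_\phi\S$; (ii) that $\Omega$ annihilates the pure-gauge directions, so that, being bilinear and antisymmetric (it comes from the vertical $2$-form $\omega$), it descends to a $2$-form $\bar\Omega$ on the quotient $T_\phi\bar\S$; and (iii) that the resulting map is independent of the adapted partition of unity $\{\chi_\pm\}$.

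Point (i) is immediate: by~\eqref{eq:J-ginv} we have $\dot g^*\circ\J=0$, hence $\dot g^*[\J_\chi[\psi]]=\pm\dot g^*[\J[\chi_\pm\psi]]=0$, so $\J_\chi[\psi]$ is gauge invariant and therefore represents a class of $T^*_\phi\bar\S$ in the sense of Def.~\ref{def:tt*-sols-gauge}. Point (iii) is equally short: the number $\Omega(\xi,\psi)=\int_\Sigma\omega[\xi,\psi]$ is defined intrinsically, without any reference to a partition of unity, and under the global parametrizability and recognizability hypotheses the natural pairing between $T_\phi\bar\S$ and $T^*_\phi\bar\S$ is non-degenerate (Lem.~\ref{lem:ginv-sols-nondegen}); hence the class $\bar\Omega([\psi])=[\J_\chi[\psi]]\in T^*_\phi\bar\S$ is uniquely pinned down by the values $\langle[\xi],\bar\Omega([\psi])\rangle$, $[\xi]\in T_\phi\bar\S$, and is therefore the same for any choice of adapted partition of unity.

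The only point with any content is (ii), and its delicate aspect is a support bookkeeping. Take $\eps\in\Secs_{SC}(P)$ with $\dot g[\eps]\in T_\phi\S$, as in the equivalence relation of Def.~\ref{def:tt*-sols-gauge}. As in the proofs of Lem.~\ref{lem:exsplit} and Lem.~\ref{lem:gauge-sol}, the key fact is that $\supp\d\chi_+$ is timelike compact, so the section $D_\chi[\eps]:=\dot g[\chi_+\eps]-\chi_+\dot g[\eps]$, being supported in $\supp\d\chi_+\cap\supp\eps$, has \emph{compact} support. Using $\chi_++\chi_-=1$, $\J\circ\dot g=0$ and $\J[\dot g[\eps]]=0$ one finds $\J_\chi[\dot g[\eps]]=\J[\chi_+\dot g[\eps]]=-\J[D_\chi[\eps]]$. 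Taking formal adjoints in~\eqref{eq:Jgf-equiv} and using $\J^*=\J$ gives $\J=\dot f^*\circ r^*+\dot c^*\circ r_c^*$, so $\J_\chi[\dot g[\eps]]=-\dot f^*[r^*[D_\chi[\eps]]]-\dot c^*[r_c^*[D_\chi[\eps]]]$ with both arguments compactly supported, i.e.\ $\Omega(\dot g[\eps])=[0]$ in $T^*_\phi\S$. By bilinearity this makes $\Omega(\psi,\xi)$ depend only on $[\psi],[\xi]\in T_\phi\bar\S$, so $\Omega$ projects to a $2$-form $\bar\Omega$ on $T_\phi\bar\S$; combined with (i), the associated linear map takes values in $T^*_\phi\bar\S$ and is $\bar\Omega([\psi])=[\J_\chi[\psi]]$, as claimed. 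The main obstacle is thus just to notice that, although $\chi_\pm\dot g[\eps]$ is only spacelike compact, $\J[\chi_+\dot g[\eps]]$ equals $\J$ of a \emph{compactly} supported section, which is what allows it to be absorbed into $\im\dot f^*+\im\dot c^*$; everything else follows formally from Lem.~\ref{lem:bOmega} and Thm.~\ref{thm:Jexsplit}.
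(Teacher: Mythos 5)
Your proof is correct, and it reaches the conclusion by a route that differs from the paper's in both of its substantive steps. For the kernel computation (your point (ii)), the paper simply puts the pure-gauge element in the \emph{first} slot and integrates by parts, $\Omega(\dot{g}[\eps],\psi) = \pm\langle \eps, \dot{g}^*\circ\J[\chi_\pm\psi]\rangle = 0$ by Noether's second theorem $\dot{g}^*\circ\J=0$; you instead compute $\J_\chi[\dot{g}[\eps]]$ head-on, and your commutator/support bookkeeping ($D_\chi[\eps]$ compactly supported, then $\J=\dot{f}^*\circ r^* + \dot{c}^*\circ r_c^*$ from the adjoint of the equivalence~\eqref{eq:Jgf-equiv}) buys the strictly stronger statement that the representative $\J_\chi[\dot{g}[\eps]]$ is itself trivial in $T^*_\phi\S$, not merely that all its pairings vanish; the price is that both arguments must invoke the antisymmetry of $\Omega$ to kill the other slot, which you do note. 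For independence of $\{\chi_\pm\}$ (your point (iii)), the paper uses the splitting and exactness of Thm.~\ref{thm:Jexsplit} to show $\J_\chi[\psi]-\J_{\chi'}[\psi]\in\im\J$ — an argument that needs only the \emph{local} parametrizability/recognizability built into hyperbolizability — whereas you derive it from the $\chi$-independence of the intrinsic number $\Omega(\xi,\psi)$ together with the non-degeneracy of the pairing (Lem.~\ref{lem:ginv-sols-nondegen}), which genuinely requires the \emph{global} hypotheses. Since the corollary assumes those global hypotheses anyway, your shorter argument is perfectly adequate here, but it is worth knowing that the paper's version isolates the finer, hypothesis-light fact (cf.\ Remark~\ref{rem:loc-par-rec}).
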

\begin{proof}
Notice that in the presence of gauge symmetries (residual gauge freedom
is present after a purely hyperbolic gauge fixing) the form $\Omega$ is
degenerate, since every pure gauge solution lies in its kernel:
\begin{equation}
	\Omega(\dot{g}[\eps],\psi)
	= \langle \dot{g}[\eps], \J_\chi[\chi_\pm\psi] \rangle
	= \pm\langle \eps, \dot{g}^*\circ\J[\chi_\pm\psi] \rangle
	= 0
\end{equation}
for any $\psi$, since Noether's second theorem implies\cite{lw} that
$\dot{g}^*\circ \J = 0$. So, the first part is established.

For the second part, recall that we are not interested in the dual
density $\tilde{\alpha}^* = \J_\chi[\psi]$ specifically, which
explicitly depends on the adapted partition of unity, but rather the
equivalence class $[\tilde{\alpha}^*]\in T^*_\phi\S$, which is defined
modulo $\im \dot{f}^*$ and $\im \dot{c}^*$. Equivalently, following a
conclusion of Thm.~\ref{thm:Jexsplit}, since we actually want
$[\tilde{\alpha}^*] \in T^*_\phi\bar{\S} \cong T^*_\phi\S / \im
\dot{g}$, it is enough to consider $\tilde{\alpha}^*$ modulo $\im\J$.
Consider another adapted partition of unity $\{\chi'_\pm\}$. Because
each partition of unity provides a splitting map
(Thm.~\ref{thm:Jexsplit}), if we consider equivalence classes of
solutions modulo $\im\dot{g}$, we have $[\psi] = [\EE\circ\J_\chi[\psi]]
= [\EE\circ \J_{\chi'}[\psi]]$. Then,
$[\EE[\J_{\chi'}[\psi]-\J_\chi[\psi]]] = [\psi] - [\psi] = [0]$. So, by
exactness of the sequence in Thm.~\ref{thm:Jexsplit}, $\J_\chi[\psi]$
and $\J_{\chi'}[\psi]$ must differ by an element of $\im \J$; in other
words, they represent the same equivalence class in $T^*_\phi\bar{\S}$.

Finally, the projected map $\bar{\Omega}\colon T_\phi \bar{S} \to
T^*_\phi \bar{S}$ is defined by the formula
\begin{equation}\label{eq:symp-rep}
	\langle [\xi] , \bar{\Omega}[\psi] \rangle
	= \langle \xi , \tilde{\alpha}^* \rangle
	= \langle [\xi], [\tilde{\alpha}^*] \rangle ,
\end{equation}
with $\tilde{\alpha}^* = \J_\chi[\psi]$, which is sufficient because the
natural pairing between $T_\phi\bar{\S}$ and $T^*_\phi\bar{S}$ is
non-degenerate (Lem.~\ref{lem:ginv-sols-nondegen}, again which we can
invoke by the global parametrizability and recognizability hypotheses).
\end{proof}

So, formally, the quotient projection to the physical phase space
effects a presymplectic reduction $(\S_H(F),\Omega) \to
(\bar{S}_H(F),\bar\Omega)$. We shall see later on that $\bar\Omega$ is
non-degenerate and hence symplectic.

\begin{remark}\label{rem:gl-par-rec}
Note that the relation between $\bar{\Omega}$ as a bilinear form on the
formal tangent space $T_\phi\bar{\S}$ and the linear map $\J_\chi\colon
T_\phi\bar{\S} \to T_\phi^*\bar{\S}$ relies on the non-degeneracy of the
natural pairing between the formal tangent and cotangent spaces. This
non-degeneracy, as proven in Sec.~\ref{sec:tt*-sols}, relies on the
cohomological conditions that we call \emph{global parametrizability}
and \emph{global recognizability} (Secs.~\ref{sec:constr}
and~\ref{sec:gauge}). It is clear that, if these conditions fail and
hence the natural pairing is degenerate, the form
$\bar{\Omega}(\psi,\xi) = \langle \psi, \J_\chi[\xi]$ may be degenerate,
even if the operator $\J_\chi$ is not. This is bound to happen, because,
as one of the conclusions of Thm.~\ref{thm:Jexsplit}, $\J_\chi$ is
invertible under the weaker hypotheses of \emph{local parametrizability}
and \emph{local recognizability}. Such a degeneracy has already been
noted, for instance, in Refs.~\citen{hs-gauge} and~\citen{sdh}.
\end{remark}

\subsubsection{Formal Poisson bivector, Peierls formula}\label{sec:formal-pois}
Below, we construct a formal \emph{Poisson bivector} $\Pi$, using the
Peierls formula
\begin{equation}
	\Pi = \EE ,
\end{equation}
where $\EE$ is again the causal Green function of the Jacobi operator
$\J$ as defined in Sec.~\ref{sec:constr-gf}. To show that $\Pi$ is
indeed a Poisson bivector, it suffices to show that (a) it is an
antisymmetric bilinear form on the formal cotangent space, (b) it
defines a map from the formal cotangent space to the formal tangent
space and (c) it is a two-sided inverse of $\bar{\Omega}$ defined in
Cor.~\ref{cor:bOmega}. We actually postpone part (c) to
Sec.~\ref{sec:symp-pois}. The fact that $\Pi$ defines a Poisson bracket,
with its Leibniz and Jacobi identities, then formally follows from
standard arguments.

\begin{lemma}\label{lem:pure-hyp-peierls}
The Peierls formula specifies a map from the formal cotangent space to
the formal tangent space:
\begin{align}
	\Pi\colon & T^*_\phi\bar{\S}\to T_\phi\bar{\S} \\
		& [\tilde{\alpha}^*] \mapsto [\psi] , \quad
		\text{with}\quad \dot{g}^*[\tilde{\alpha}^*] = 0 \\
\text{and}\quad
	\psi = {}& \EE[\tilde{\alpha}^*] .
\end{align}
\end{lemma}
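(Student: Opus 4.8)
The plan is to obtain this as an essentially immediate consequence of Thm.~\ref{thm:Jexsplit} (which applies since, throughout this section, the gauge fixed Jacobi system is assumed hyperbolizable, Def.~\ref{def:hyper}): its commutative diagram~\eqref{eq:JE-seq}, together with the isomorphisms $T_\phi\bar{\S}\cong\ker\J/\im\dot{g}$ and $T_\phi^*\bar{\S}\cong\ker\dot{g}^*/\im\J$ recorded there, already encodes almost everything needed. Three things must be checked: that a class in $T_\phi^*\bar{\S}$ admits a representative $\tilde{\alpha}^*$ with $\dot{g}^*[\tilde{\alpha}^*]=0$; that $\EE$ applied to such a representative lands in $T_\phi\bar{\S}$; and that the resulting class in $T_\phi\bar{\S}$ does not depend on the choices made.

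First I would note that, given $[\tilde{\alpha}^*]\in T_\phi^*\bar{\S}$ as in Def.~\ref{def:tt*-sols-gauge}, one may always shift the representative to one satisfying $\dot{g}^*[\tilde{\alpha}^*]=0$: if $\dot{g}^*[\tilde{\alpha}^*]=\dot{g}^*[\dot{f}^*[\xi]+\dot{c}^*[\tilde{\eps}^*]]$, then replacing $\tilde{\alpha}^*$ by $\tilde{\alpha}^*-\dot{f}^*[\xi]-\dot{c}^*[\tilde{\eps}^*]$ changes neither the class nor anything else in the definition. Setting $\psi=\EE[\tilde{\alpha}^*]=\G[\bar{r}[\tilde{\alpha}^*]]$ (Def.~\ref{def:E}), the section $\bar{r}[\tilde{\alpha}^*]$ is still compactly supported because $\bar{r}$ is a differential operator, so Prp.~\ref{prp:exact} (the inclusion $\im\G\sse\Secs_{SC}(F)$) gives $\psi\in\Secs_{SC}(F)$; and since $\dot{g}^*[\tilde{\alpha}^*]=0$, item~(3) in the proof of Thm.~\ref{thm:Jexsplit} — equivalently the observation immediately following Def.~\ref{def:E} — yields $\dot{f}[\psi]=0$ and $\dot{c}[\psi]=0$. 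Hence $[\psi]\in T_\phi\bar{\S}$, so the assignment takes values where claimed.

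The one genuinely substantive point is independence of the representative. Using $T_\phi^*\bar{\S}\cong\ker\dot{g}^*/\im\J$ from Thm.~\ref{thm:Jexsplit}, two representatives $\tilde{\alpha}^*_1,\tilde{\alpha}^*_2$ of $[\tilde{\alpha}^*]$ that are both annihilated by $\dot{g}^*$ differ by $\J[\xi]$ for some $\xi\in\Secs_0(F)$. Then $\EE[\tilde{\alpha}^*_1]-\EE[\tilde{\alpha}^*_2]=\EE[\J[\xi]]$, and item~(2) in the proof of Thm.~\ref{thm:Jexsplit} computes $\EE[\J[\xi]]=\dot{g}[\K[\bar{r}_s[\xi]]]$ with $\K[\bar{r}_s[\xi]]\in\Secs_{SC}(P)$; under $T_\phi\bar{\S}\cong\ker\J/\im\dot{g}$ this means the two outputs represent the same class. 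Therefore $\Pi=\EE$ descends to a well-defined map $T_\phi^*\bar{\S}\to T_\phi\bar{\S}$ of the stated form.

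I do not expect a real obstacle here. The only thing requiring attention is matching the quotient presentations of $T_\phi\bar{\S}$ and $T_\phi^*\bar{\S}$ from Def.~\ref{def:tt*-sols-gauge} with the $\ker/\im$ descriptions of Thm.~\ref{thm:Jexsplit}, together with the bookkeeping of support conditions — but all of this has already been carried out in the (lengthy) proof of that theorem, so this lemma is in effect a corollary of it. The hard part, namely the construction of the splitting data and the exactness of the vertical-cohomology complex attached to diagram~\eqref{eq:JE-seq}, is already done; what remains is to quote it in the right order.
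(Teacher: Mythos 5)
Your argument is correct, but for the one substantive step (well-definedness on equivalence classes) you take a genuinely different route than the paper. You observe that the difference of two gauge-invariant representatives of the same class lies in $\ker\dot{g}^*\cap\im(\dot{f}^*\oplus\dot{c}^*)$ and then invoke the isomorphism $T^*_\phi\bar{\S}\cong\ker\dot{g}^*/\im\J$ of Thm.~\ref{thm:Jexsplit} (equivalently, Lem.~\ref{lem:t*-J} applied directly to the difference) to conclude it equals $\J[\xi]$ with $\xi\in\Secs_0(F)$, after which only the identity $\EE\circ\J[\xi]=\dot{g}[\K[\bar{r}_s[\xi]]]$ from item~(2) of the theorem's proof is needed. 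The paper instead never invokes Lem.~\ref{lem:t*-J} here: it reduces to showing $[0]\mapsto[0]$, rewrites a representative of $[0]$ as $\J[\xi]+\dot{c}_g^*[\tilde{\gamma}^*]$ via the equivalence~\eqref{eq:Jgf-equiv} and $\J^*=\J$, and disposes of the $\dot{c}_g^*[\tilde{\gamma}^*]$ term by a direct computation using the anti-self-adjointness identity~\eqref{eq:E-anti} of Lem.~\ref{lem:E-anti}, the factorization $\dot{c}_g\circ\EE=(r_g\circ\H\circ q_\J)\circ\dot{g}^*$, and the gauge-invariance identity~\eqref{eq:g*cg*}. Your route is shorter and cleaner granted the cotangent-space isomorphism, and it avoids Lem.~\ref{lem:E-anti} altogether; the paper's route buys an explicit formula for the gauge parameter and keeps the proof independent of the (somewhat delicate) identification $T^*_\phi\bar{\S}\cong\ker\dot{g}^*/\im\J$, at the cost of the extra operator identities. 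Two small points you handle implicitly but should state: the gauge parameter $\K[\bar{r}_s[\xi]]$ is spacelike compact because $\bar{r}_s[\xi]\in\Secs_0(\tilde{P}^*)$ and $\K$ maps compactly supported sections to $\Secs_{SC}(P)$, and the pure-gauge difference $\dot{g}[\eps]=\EE[\tilde{\alpha}^*_1]-\EE[\tilde{\alpha}^*_2]$ is itself a solution of $\dot{f},\dot{c}$ (being a difference of solutions), as required by the equivalence relation in Def.~\ref{def:tt*-sols-gauge}.
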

\begin{proof}
The challenge is to show that $\Pi$ maps equivalence classes to
equivalence classes (Def.~\ref{def:tt*-sols-gauge}). That is, that any representative
$\tilde{\alpha}^* + \dot{f}^*[\xi] + \dot{c}^*[\tilde{\gamma}^*]$ of an
equivalence class $[\tilde{\alpha}^*]\in T^*_\phi\bar\S$, with
$\dot{g}^*[\tilde{\alpha}^*] = 0$, gets mapped to the same equivalence
class in $T_\phi\bar\S$. By linearity, it suffices to check that $[0]\in
T^*_\phi\bar\S$ is mapped to $[0]\in T_\phi\bar\S$. Recall that any
solution representing $[0]\in T_\phi\bar\S$ is pure gauge
$\dot{g}[\eps]$. Note that the equivalence~\eqref{eq:Jgf-equiv} of the
$(\dot{f}\oplus \dot{c},\tilde{F}^*\oplus E)$ and $(\J\oplus \dot{c}_g,
\tilde{F}^*\oplus E_g)$ equation forms, together with the
self-adjointness of the Jacobi operator $\J^* = \J$, allows us to
rewrite any representative of $[0]\in T^*_\phi\bar\S$ as $\J[\xi] +
\dot{c}_g^*[\tilde{\gamma}^*]$, for some $\xi\in \Secs_0(F)$ and
$\tilde{\gamma}^*\in \Secs_0(\tilde{E}^*_g)$.  This representative will
also satisfy the identity
\begin{equation}\label{eq:g*cg*}
	\dot{g}^*\circ \dot{c}^*_g[\tilde{\gamma}^*]
	= \dot{g}^*[\J[\xi] + \dot{c}^*_g[\tilde{\gamma}^*]]
	= 0 .
\end{equation}
Direct calculation then shows that
\begin{align}
	\Pi[\J[\xi] + \dot{c}^*_g[\tilde{\gamma}^*]]
	&= \EE \circ \J[\xi] + \EE\circ \dot{c}^*_g[\tilde{\gamma}^*] \\
	&= \dot{g}[\eps] - (\dot{c}_g\circ \EE)^*[\tilde{\gamma}^*]
		- \dot{g}\circ p_\J^*\circ \EE\circ \dot{c}^*_g[\tilde{\gamma}^*] \\
\notag & \qquad {}
		- (\EE)^*\circ p_\J[\dot{g}^*\circ \dot{c}^*_g[\tilde{\gamma}^*]] \\
	&= \dot{g}[\eps
		- q_\J^* \circ (\H)^* \circ r_g^*[\tilde{\gamma}^*]
		- p_\J^*\circ \EE\circ \dot{c}^*_g[\tilde{\gamma}^*]]
\end{align}
is pure gauge. We have used the identity that $\EE\circ \J[\xi] =
\dot{g}[\eps]$ for some $\eps\in \Secs_0(P)$ (Thm.~\ref{thm:Jexsplit}),
the anti-self-adjointness identity~\eqref{eq:E-anti},
that (Eqs.~\eqref{eq:Jgf-equiv} and~\eqref{eq:consist1})
\begin{align}
	\dot{c}_g\circ \EE
	&= r_g \circ (\dot{c} \circ \G) \circ \bar{r}
	= r_g \circ (\H \circ \dot{q}\circ \bar{r}) \\
	&= (r_g \circ \H \circ q_\J) \circ \dot{g}^*
\end{align}
and the identity~\eqref{eq:g*cg*}.

Therefore, we can conclude that if $[\tilde{\alpha}^*] = [0]$, then
$[\EE[\tilde{\alpha}^*]] = [0]$. 
\end{proof}

\begin{lemma}\label{lem:Pi-form}
The Peierls formula defines an antisymmetric bilinear form on the formal
cotangent space:
\begin{equation}
	\Pi([\tilde{\alpha}^*],[\tilde{\beta}^*])
	= \langle \Pi[\tilde{\alpha}^*] , [\tilde{\beta}^*] \rangle
	= - \Pi([\tilde{\beta}^*],[\tilde{\alpha}^*]) ,
\end{equation}
for any $[\tilde{\alpha}^*],[\tilde{\beta}^*] \in T^*_\phi\bar{\S}$.
\end{lemma}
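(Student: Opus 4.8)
The plan is to reduce antisymmetry to the anti-self-adjointness-up-to-gauge identity~\eqref{eq:E-anti} of Lem.~\ref{lem:E-anti}, after first disposing of the well-definedness of the bilinear form. For the latter: $\Pi([\tilde{\alpha}^*],[\tilde{\beta}^*]) := \langle \Pi[\tilde{\alpha}^*], [\tilde{\beta}^*] \rangle$ makes sense because, by Lem.~\ref{lem:pure-hyp-peierls}, $\Pi = \EE$ carries the class $[\tilde{\alpha}^*] \in T^*_\phi\bar{\S}$ to a well-defined class $[\EE[\tilde{\alpha}^*]] \in T_\phi\bar{\S}$, and the natural pairing between $T_\phi\bar{\S}$ and $T^*_\phi\bar{\S}$ (Def.~\ref{def:tt*-sols-gauge}) does not depend on representatives; bilinearity is immediate from linearity of $\EE$ and bilinearity of the pairing.

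For antisymmetry, I would first use the freedom in the choice of representatives to pick $\tilde{\alpha}^*,\tilde{\beta}^* \in \Secs_0(\tilde{F}^*)$ that \emph{both} satisfy the gauge-invariance condition $\dot{g}^*[\tilde{\alpha}^*] = \dot{g}^*[\tilde{\beta}^*] = 0$; such representatives exist by the description of $T^*_\phi\bar{\S}$ in Def.~\ref{def:tt*-sols-gauge} together with the isomorphism $T^*_\phi\bar{\S} \cong \ker\dot{g}^*/\im\J$ from Thm.~\ref{thm:Jexsplit}. Then I would pass to the adjoint with respect to the natural pairing (Sec.~\ref{sec:green-adj}), writing $\langle \EE[\tilde{\alpha}^*], \tilde{\beta}^* \rangle = \langle \tilde{\alpha}^*, (\EE)^*[\tilde{\beta}^*] \rangle$, noting that the support bookkeeping is harmless since $\EE$ (and its adjoint, built from $\G$ and $\bar{r}$) sends compactly supported dual densities to spacelike compactly supported sections, so every integrand below has compact support. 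Now I apply Eq.~\eqref{eq:E-anti} to $\tilde{\beta}^*$: because $\dot{g}^*[\tilde{\beta}^*] = 0$, the term $(\EE)^*\circ p_\J^*\circ \dot{g}^*$ contributes nothing, leaving $(\EE)^*[\tilde{\beta}^*] = -\EE[\tilde{\beta}^*] - \dot{g}\bigl[p_\J^*[\EE[\tilde{\beta}^*]]\bigr]$, i.e.\ $(\EE)^*[\tilde{\beta}^*] \equiv -\EE[\tilde{\beta}^*] \pmod{\im\dot{g}}$. Pairing against $\tilde{\alpha}^*$ and using $\langle \tilde{\alpha}^*, \dot{g}[\cdot] \rangle = \langle \dot{g}^*[\tilde{\alpha}^*], \cdot \rangle = 0$ yields $\langle \EE[\tilde{\alpha}^*], \tilde{\beta}^* \rangle = -\langle \tilde{\alpha}^*, \EE[\tilde{\beta}^*] \rangle = -\langle \EE[\tilde{\beta}^*], \tilde{\alpha}^* \rangle$, which is exactly $\Pi([\tilde{\alpha}^*],[\tilde{\beta}^*]) = -\Pi([\tilde{\beta}^*],[\tilde{\alpha}^*])$.

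The only nontrivial point — hence the expected main obstacle — is controlling the two correction terms in Eq.~\eqref{eq:E-anti}: one is annihilated by choosing the representative $\tilde{\beta}^*$ to be gauge invariant, and the other is a pure-gauge term $\dot{g}[\cdots]$ that drops out upon pairing against the gauge-invariant representative $\tilde{\alpha}^*$. This is precisely where the gauge-invariance constraint built into the definition of $T^*_\phi\bar{\S}$ does the work, so the argument genuinely needs that both arguments of $\Pi$ be represented by gauge-invariant dual densities; everything else (support tracking, the adjoint identity, linearity) is routine.
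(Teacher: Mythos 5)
Your proof is correct and follows essentially the same route as the paper's: choose gauge-invariant representatives, pass to the adjoint with respect to the natural pairing, apply the anti-self-adjointness identity~\eqref{eq:E-anti} of Lem.~\ref{lem:E-anti}, and observe that one correction term dies because $\dot{g}^*[\tilde{\beta}^*]=0$ while the pure-gauge term dies upon pairing against the gauge-invariant $\tilde{\alpha}^*$. Your treatment is in fact slightly more explicit than the paper's about why gauge-invariant representatives exist and why each correction term vanishes.
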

\begin{proof}
Recall that the representatives always satisfy
$\dot{g}^*[\tilde{\alpha}^*] = \dot{g}^*[\tilde{\beta}^*] = 0$.
Appealing directly to the anti-self-adjointness
identity~\eqref{eq:E-anti} we have
\begin{align}
	\Pi([\tilde{\alpha}^*],[\tilde{\beta}^*])
	&= \langle \Pi[\tilde{\alpha}^*], [\tilde{\beta}^*] \rangle
	= \langle \EE[\tilde{\alpha}^*] , \tilde{\beta}^* \rangle
	= \langle (\EE)^*[\tilde{\beta}^*] , \tilde{\alpha}^* \rangle \\
	&= -\langle (\EE[\tilde{\beta}^*]
			+ \dot{g}\circ p_\J^*\circ \EE[\tilde{\beta}^*]
			+ (\EE)^*\circ p_\J\circ \dot{g}^*[\tilde{\beta}^*])
			, \tilde{\alpha}^* \rangle \\
	&= -\langle [\EE[\tilde{\beta}^*]] , [\tilde{\alpha}^*] \rangle
	= - \langle \Pi[\tilde{\beta}^*] , [\tilde{\alpha}^*] \rangle \\
	&= -\Pi([\tilde{\beta}^*],[\tilde{\alpha}^*]) . \quad 
\end{align}
\end{proof}

\subsubsection{The Peierls formula inverts the covariant symplectic form}
\label{sec:equiv}
Below, in Thm.~\ref{thm:peierls-inv}, we state and prove the main result
of this section, that $\Pi = \bar{\Omega}^{-1}$. It is worth pausing
here and recalling the various hypotheses, assumptions, and intermediate
results that have lead up to it.

First of all, the result itself is not completely new. On the one hand,
Peierls' original paper\cite{peierls} already outlined an argument for
the equivalence of his proposed bracket and the standard Poisson bracket
of the Hamiltonian formalism, defined with respect to a preferred time
foliation. On the other hand, when the covariant phase space formalism
was introduced, the use of the symplectic current
density\cite{lw,cw,abr,zuckerman} was justified by its agreement
with the standard symplectic structure of the Hamiltonian formalism.
These two observations were joined into a detailed argument by Barnich,
Henneaux and Schomblond,\cite{bhs} which covered the case when the
Hamiltonian formalism includes first class (gauge) and second class
constraints.

Note that both the covariant phase space and Peierls bracket formalisms
are fully covariant, but their equivalence had only been demonstrated
using a non-covariant Hamiltonian formalism as an intermediate step. So,
one reason to look for improvements is the desire to make the argument
covariant throughout and bypass the Hamiltonian formalism all together.
Another reason is to make clear all mathematical assumptions necessary
to make the intermediate constructions well defined. In particular, the
existence of advanced and retarded Green functions, needed by the
Peierls formula, is guaranteed by standard mathematical results in PDE
theory only if the field theory equations of motion (the Euler-Lagrange
equations) satisfy some local and global hyperbolicity%
	\footnote{In the spirit of being inclusive, we have equated our basic
	notion of hyperbolicity precisely with the existence of retarded and
	advanced Green functions (Green hyperbolicity). However, as pointed
	out earlier, there are large classes of PDEs easily identifiable by
	their principal symbols (including wave-like and symmetric hyperbolic
	systems) that are well known to be Green hyperbolic.} %
requirements. We have exhibited these assumptions bundled within the
notions of \emph{hyperbolizability} (Def.~\ref{def:hyper}), a global
causal condition generalizing \emph{global hyperbolicity}
(\ref{sec:conal}), as well as \emph{global parametrizability} and
\emph{recognizability} (Secs.~\ref{sec:constr} and~\ref{sec:gauge}).
Also, the formal presymplectic form (Def.~\ref{def:formal-symp}) is
defined only when the integral over the presymplectic current converges.
Again, a sufficient condition for this integral to converge is to
restrict the support of linearized solutions plugged into the
presymplectic form to be spacelike compact. This restriction is the main
reason for defining the formal tangent spaces to consist of field
sections of spacelike compact support (Secs.~\ref{sec:tt*-conf},
\ref{sec:tt*-sols}).

The main technical result leading up to the theorem below is of course
Thm.~\ref{thm:Jexsplit}, which reduces to Prp.~\ref{prp:exact} when the
Euler-Lagrange equations are directly in hyperbolic form (gauge
invariance and constraints are absent). The exactness of parts of the
horizontal sequence~\eqref{eq:JE-seq} (after taking the vertical
cohomologies) can be seen as a precise characterization of the kernel
and cokernel of the causal Green function $\EE$, defined in
Eq.~\eqref{eq:E-def}.  It is this characterization that is the main
motivation behind defining the formal cotangent spaces to consist of
dual densities of compact support (Secs.~\ref{sec:tt*-conf},
\ref{sec:tt*-sols}). If these support restrictions were relaxed, for
instance, to timelike compact support for dual densities, then the
causal Green function $\EE$ need not be invertible due to global
Aharonov-Bohm type effects.\cite{sdh} In that case, the
relation of the Peierls formula to the presymplectic form must be more
subtle. The final technical result that is used in the proof below is
the non-degeneracy of the natural pairing between the formal tangent and
cotangent spaces (Sec.~\ref{sec:tt*-sols}), which rely on rather
technical sufficient conditions that we have dubbed global
parametrizability of constraints (Sec.~\ref{sec:constr}) and global
recognizability of gauge transformations (Sec.~\ref{sec:gauge}).  If
they fail, then the symplectic form $\bar{\Omega}$ is in fact
degenerate. However, as can be seen from its representation in
Cor.~\ref{cor:bOmega}, that is not because $\J_\chi\colon T_\phi\bar{\S}
\to T^*_\phi\bar{\S}$ becomes non-invertible, but because the natural
pairing $\langle -,- \rangle$ becomes degenerate. Recall that, according
to Thm.~\ref{thm:Jexsplit}, $\J_\chi$ remains invertible even when only
local parametrizability and recognizability hold.

The proof given below was inspired by Sec.~3.3 of Ref.~\citen{fr-pois}
as well as the exact sequence of Prp.~\ref{prp:exact} (see references
near its statement), though similar ideas can already be found in
Lem.~3.2.1 of Ref.~\citen{wald-qft}. The main limitation of the argument
given by Forger~\& Romero is that it only treats the case when
Euler-Lagrange equations are already in hyperbolic form. Our argument is
generalized to the case where a hyperbolization may be required, and
constraints and gauge may be present. Due to the more complicated
hypothesis, the argument itself has been fine grained and split into
multiple steps. Also, we show that $\bar{\Omega}$ and $\Pi$ are
two-sided (as opposed to one-sided) inverses of each other.  The
technical content of the proof of the main Thm.~3 of
Ref.~\citen{fr-pois} is split between our Lem.~\ref{lem:bOmega} and
Cor.~\ref{cor:bOmega} (rewriting the formal symplectic form),
Thm.~\ref{thm:Jexsplit} (two-sided inversion), and
Lem.~\ref{lem:ginv-sols-nondegen} (natural pairing non-degeneracy).

\begin{theorem}\label{thm:peierls-inv}
Global parametrizability (Sec.~\ref{sec:constr}) and global
recognizability (Sec.~\ref{sec:gauge}) conditions hold, the Peierls
formula gives a two-sided inverse to the formal symplectic form,
$\bar{\Omega}\Pi = \id$ on $T^*_\phi\bar{\S}$ and $\Pi\bar{\Omega} =
\id$ on $T_\phi\bar{\S}$.
\end{theorem}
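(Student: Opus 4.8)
The plan is to translate the statement into the language of Thm.~\ref{thm:Jexsplit} and then read it off from the exactness and splitting properties collected there. By Cor.~\ref{cor:bOmega}---which is where the \emph{global} parametrizability and recognizability hypotheses, and hence the non-degeneracy Lem.~\ref{lem:ginv-sols-nondegen}, are actually used---the formal presymplectic form is represented by the map $\bar\Omega\colon T_\phi\bar\S\to T^*_\phi\bar\S$, $[\psi]\mapsto[\J_\chi[\psi]]$, where $\J_\chi[\psi]=\pm\J[\chi_\pm\psi]$ for a partition of unity $\{\chi_\pm\}$ adapted to a Cauchy surface; by Lem.~\ref{lem:pure-hyp-peierls} the Peierls bivector is the map $\Pi\colon T^*_\phi\bar\S\to T_\phi\bar\S$, $[\tilde{\alpha}^*]\mapsto[\EE[\tilde{\alpha}^*]]$, with a representative chosen so that $\dot g^*[\tilde{\alpha}^*]=0$. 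Using the identifications $T_\phi\bar\S\cong\ker\J/\im\dot g$ at $\Secs_{SC}(F)$ and $T^*_\phi\bar\S\cong\ker\dot g^*/\im\J$ at $\Secs_0(\tilde{F}^*)$ supplied by Thm.~\ref{thm:Jexsplit}, the claim reduces to: the maps induced by $\J_\chi$ and by $\EE$ between $\ker\J/\im\dot g$ and $\ker\dot g^*/\im\J$ are mutual inverses. This inversion step relies only on Thm.~\ref{thm:Jexsplit}, which by Rem.~\ref{rem:loc-par-rec} needs only the \emph{local} versions of the conditions; the global versions have already been spent in the reduction just performed.

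For $\Pi\bar\Omega=\id$ on $T_\phi\bar\S$: take $[\psi]$ with representative $\psi\in\ker\J\cap\Secs_{SC}(F)$. First one checks that $\J_\chi[\psi]$ is a genuine element of $\Secs_0(\tilde{F}^*)$ annihilated by $\dot g^*$: its two expressions $\pm\J[\chi_\pm\psi]$ agree because $\J[\psi]=0$; each $\J[\chi_\pm\psi]$ is supported in $\supp\d\chi_\pm\cap\supp\psi$, which is compact since $\supp\d\chi_\pm$ is timelike compact and $\supp\psi$ spacelike compact; and $\dot g^*\circ\J=0$ by Noether's second theorem~\eqref{eq:J-ginv}. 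Then Thm.~\ref{thm:Jexsplit} (its conclusion at $\Secs_{SC}(F)$) gives $\EE\circ\J_\chi[\psi]=\psi-\dot g[\eps]$ for some $\eps\in\Secs_{SC}(P)$, so $\Pi\bar\Omega[\psi]=[\EE\circ\J_\chi[\psi]]=[\psi]$.

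For $\bar\Omega\Pi=\id$ on $T^*_\phi\bar\S$: take $[\tilde{\alpha}^*]$ with representative $\tilde{\alpha}^*\in\ker\dot g^*\cap\Secs_0(\tilde{F}^*)$. Using the splitting $\ker\dot g^*\cong\im\J\oplus T_\phi\S$ of Thm.~\ref{thm:Jexsplit}, write $\tilde{\alpha}^*=\J[\xi]+\J_\chi[\psi']$ with $\xi\in\Secs_0(F)$ and $\psi'\in T_\phi\S$. Applying $\EE$ and using that $\EE\circ\J\equiv0$ and $\EE\circ\J_\chi\equiv\id\pmod{\im\dot g}$ (again Thm.~\ref{thm:Jexsplit}) yields $\EE[\tilde{\alpha}^*]\equiv\psi'\pmod{\im\dot g}$, hence $\Pi[\tilde{\alpha}^*]=[\psi']$; applying $\J_\chi$ and using that it descends modulo $\im\J$ (Cor.~\ref{cor:bOmega}) yields $\bar\Omega\Pi[\tilde{\alpha}^*]=[\J_\chi[\psi']]=[\tilde{\alpha}^*-\J[\xi]]=[\tilde{\alpha}^*]$. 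More transparently for the \emph{two-sided} point, one may instead observe that $\Pi$ is injective---if $\EE[\tilde{\alpha}^*]\in\im\dot g$ then the corresponding part of Thm.~\ref{thm:Jexsplit} forces $\tilde{\alpha}^*\in\im\J$---so that, from $\Pi\bar\Omega=\id$, one has $\Pi\circ(\bar\Omega\Pi)=\Pi\circ\id$ and left-cancellation of $\Pi$ gives $\bar\Omega\Pi=\id$. This injectivity is precisely the ingredient beyond the one-sided inverse obtained by Forger~\& Romero\cite{fr-pois}.

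The main obstacle is entirely upstream: it is packed into Thm.~\ref{thm:Jexsplit}, whose proof is where one must simultaneously manage the equivalence~\eqref{eq:Jgf-equiv} with the constrained hyperbolic system, the auxiliary $p$-, $q$-, $r$- and $\bar r$-operators with their relations~\eqref{eq:rinv1}--\eqref{eq:consist2}, the self-adjointness $\J^*=\J$, the gauge-fixing compatibility condition~\eqref{eq:gf}, and the partition-of-unity splitting maps. Given that theorem, the work that remains here is bookkeeping: tracking which equivalence class is in force at each step (modulo $\im\dot g$ on the tangent side, modulo $\im\J$ on the cotangent side), and invoking Cor.~\ref{cor:bOmega} and Lem.~\ref{lem:ginv-sols-nondegen} to be sure that the map-level identities $\Pi\bar\Omega=\id$ and $\bar\Omega\Pi=\id$ really do encode the intended bilinear statement $\Pi=\bar\Omega^{-1}$.
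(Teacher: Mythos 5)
Your proof is correct, and it rests on exactly the same key input as the paper's: the splitting and exactness statements of Thm.~\ref{thm:Jexsplit} ($\EE\circ\J_\chi=\id\pmod{\im\dot{g}}$ on $T_\phi\S$, $\EE\circ\J=0\pmod{\im\dot{g}}$, and the converse exactness giving injectivity of $\EE$ on $\ker\dot{g}^*/\im\J$), together with Cor.~\ref{cor:bOmega} and Lem.~\ref{lem:pure-hyp-peierls} to identify $\bar{\Omega}$ and $\Pi$ as maps. The difference is only in the final packaging. The paper verifies both identities by pairing against an arbitrary test class and invoking the non-degeneracy of the natural pairing (Lem.~\ref{lem:ginv-sols-nondegen}) at the end of each computation: it shows $\langle\Pi\bar{\Omega}[\psi],[\tilde{\alpha}^*]\rangle=\langle[\psi],[\tilde{\alpha}^*]\rangle$ and $\langle[\psi],\bar{\Omega}\Pi[\tilde{\alpha}^*]\rangle=\langle[\psi],[\tilde{\alpha}^*]\rangle$ (the latter by showing $\J_\chi\circ\EE[\tilde{\alpha}^*]-\tilde{\alpha}^*\in\im\J$, which is the same class-level fact you use). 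You instead compute directly in the quotients $\ker\J/\im\dot{g}$ and $\ker\dot{g}^*/\im\J$, which makes transparent your (correct) observation, matching Remarks~\ref{rem:loc-par-rec} and~\ref{rem:gl-par-rec}, that the global hypotheses are consumed entirely in the upstream identification of $\bar{\Omega}$ as a map via the non-degenerate pairing, while the inversion itself needs only the local versions. Your alternative derivation of $\bar{\Omega}\Pi=\id$ from $\Pi\bar{\Omega}=\id$ by left-cancellation against the injective $\Pi$ is a clean extra that the paper does not use; it correctly isolates the exactness statement at $\Secs_0(\tilde{F}^*)$ as the ingredient that upgrades the Forger--Romero one-sided inverse to a two-sided one. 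No gaps.
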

\begin{proof}
The proof uses in an essential way the splitting identities of
Thm.~\ref{thm:Jexsplit}. Consider any $[\psi]\in T_\phi\bar\S$ and
$[\tilde{\alpha}^*]\in T^*_\phi\bar\S$. To use these splitting
identities, we introduce a Cauchy surface $\Sigma \sso M$ and a
partition of unity $\{\chi_\pm\}$ adapted to it. Then
\begin{align}
	\langle \Pi \bar{\Omega} [\psi], [\tilde{\alpha}^*] \rangle
	&= \langle \EE\circ \J_\chi[\psi], \tilde{\alpha}^* \rangle
			\quad \text{(using Eq.~\eqref{eq:Jsplit})} \\
	&= \langle \psi + \dot{g}[\eps] , \tilde{\alpha}^* \rangle
			\quad \text{(for some $\eps\in \Secs_{SC}(P)$)} \\
	&= \langle [\psi] , [\tilde{\alpha}^*] \rangle .
\end{align}
Therefore, from the non-degeneracy of the natural pairing between
$T_\phi\bar\S$ and $T^*_\phi\bar\S$
(Lem.~\ref{lem:ginv-sols-nondegen}), we concluded that
$\Pi\bar{\Omega} = \id$. Similarly, we have
\begin{equation}
	\langle [\psi], \bar{\Omega} \Pi [\tilde{\alpha}^*] \rangle
	= \langle \psi , \J_\chi \circ \EE [\tilde{\alpha}^*] \rangle .
\end{equation}
But then
\begin{equation}
	\EE[\J_\chi \circ \EE[\tilde{\alpha}^*] - \tilde{\alpha}^*]
	= (\EE \circ \J_\chi)\circ \EE[\tilde{\alpha}^*] - \EE[\tilde{\alpha}^*]
	= \dot{g}[\eps] ,
\end{equation}
for some $\eps\in \Secs_{SC}(P)$. But, by the exactness (after taking
vertical cohomologies) of the horizontal sequence~\eqref{eq:JE-seq} of
Thm.~\ref{thm:Jexsplit} at $\Secs_0(F)$, this means that $\J_\chi\circ
\EE[\tilde{\alpha}^*] - \tilde{\alpha}^* = \J[\xi]$ for some $\xi\in
\Secs_0(F)$. In other words,
\begin{equation}
	\langle [\psi] , \bar{\Omega} \Pi [\tilde{\alpha}^*] \rangle
	= \langle \psi, \tilde{\alpha}^* + \J[\xi] \rangle
	= \langle [\psi], [\tilde{\alpha}^*] \rangle .
\end{equation}
Therefore, from the non-degeneracy of the natural pairing between
$T_\phi\bar\S$ and $T^*_\phi\bar\S$, we concluded that $\bar{\Omega}\Pi
= \id$. 
\end{proof}

It is interesting to note that the construction of the Poisson bivector
$\Pi$ via the Peierls formula requires gauge fixing the equations of
motion. On the other hand, the construction of the symplectic form
$\bar{\Omega}$ does not. Since, after gauge reduction, the two are
mutual inverses, the Poisson bivector on the gauge invariant solutions
space ultimately does not depend on gauge fixing. There is another way
to see that result. For recognizable gauge transformations, the gauge
invariant field combination $\xi=\dot{g}'[\psi]$ of a solution $\psi\in
\Secs(F)$ of $\dot{f}[\psi] = 0$ itself satisfies the hyperbolic PDE
system $\dot{k}'[\xi] = \dot{s}'\circ \dot{f}[\psi] = 0$. On the other
hand, the equivalence formulas~\eqref{eq:Jgf-equiv} and the gauge fixing
compatibility condition~\eqref{eq:gf} imply that the same is true even
if only $\J[\psi]=0$. In other words, the system $\dot{k}'[\xi] = 0$
depends on $\J$ and $\dot{g'}$ but not on the choice of gauge fixing.
This is the case, for example, for Maxwell electrodynamics, where
Maxwell's equations for the gauge invariant field strength $F=F[A]$,
where $A$ is the gauge variant vector potential, by themselves
constitute a (constrained) hyperbolic system. It is then not surprising
that we can express the Poisson bivector acting on gauge invariant observables formed
with respect to the gauge invariant field combinations directly in terms of
the causal Green for the $\dot{k}'$ PDE system. This observation was
known already to Peierls and this example of Maxwell electrodynamics
appeared in his original paper.\cite{peierls} Of course, if appropriate
cohomologies in diagram~\eqref{eq:glrec*} do not vanish, there may be
gauge invariant observables not of that form, for which the gauge fixed
Peierls Green function $\EE$ would be necessary.
\begin{corollary}\label{cor:ginv}
Given two gauge invariant dual densities of the form $\tilde{\alpha}^* =
\dot{g}^{\prime*}[\tilde{\alpha}^{\prime*}]$ and $\tilde{\beta}^* =
\dot{g}^{\prime*}[\tilde{\beta}^{\prime*}]$, with $\tilde{\alpha}^*,
\tilde{\beta}^*\in \Secs_0(\tilde{F}^*)$ and $\tilde{\alpha}^{\prime*},
\tilde{\beta}^{\prime*} \in \Secs_0(\tilde{P}^{\prime*})$, we have the
following identity
\begin{equation}
 \langle \EE[\tilde{\alpha}^*] , \tilde{\beta}^* \rangle
 = \langle \K' \circ \bar{r}'[\tilde{\alpha}^{\prime*}] ,
		\tilde{\beta}^{\prime*} \rangle ,
\end{equation}
where $\bar{r}' = \dot{s}'\circ\bar{r}\circ\dot{g}^{\prime*}$.
\end{corollary}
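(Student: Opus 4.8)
The plan is to prove the identity directly, by unfolding the definition of the Peierls Green function and using the commutativity of the gauge diagram~\eqref{eq:glrec}. Recall $\EE = \G\circ\bar r$ (Def.~\ref{def:E}) and that, by definition, $\bar r' = \dot s'\circ\bar r\circ\dot g^{\prime*}$, so the whole computation should collapse into a single chain of substitutions with essentially no residual work.

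Concretely, I would substitute the hypotheses $\tilde\alpha^* = \dot g^{\prime*}[\tilde\alpha^{\prime*}]$ and $\tilde\beta^* = \dot g^{\prime*}[\tilde\beta^{\prime*}]$ into the left-hand side and expand $\EE$, noting first that $\dot g^{\prime*}$ and $\bar r$ are differential operators and hence preserve compact support, so $\G$ is applied to a compactly supported dual density and every pairing below converges. Then I would move the outer $\dot g^{\prime*}$ sitting in the second slot across the natural pairing via $\langle\varphi,\dot g^{\prime*}[x]\rangle = \langle\dot g'[\varphi],x\rangle$, reducing the left-hand side to $\langle\dot g'\circ\G\circ\bar r\circ\dot g^{\prime*}[\tilde\alpha^{\prime*}],\tilde\beta^{\prime*}\rangle$. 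Applying the commutativity relation $\dot g'\circ\G = \K'\circ\dot s'$ (a square in diagram~\eqref{eq:glrec}) and recognizing $\bar r' = \dot s'\circ\bar r\circ\dot g^{\prime*}$ then gives exactly $\langle\K'\circ\bar r'[\tilde\alpha^{\prime*}],\tilde\beta^{\prime*}\rangle$, which is the claimed identity.

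The only point that requires genuine care --- and the place I would spend the most words --- is justifying $\dot g'\circ\G = \K'\circ\dot s'$. What is directly built into the notion of recognizable gauge transformations is the commutativity of the squares formed by the differential operators, in particular $\dot s'\circ\dot f = \dot k'\circ\dot g'$; the commutativity of the squares involving the causal Green functions is a consequence, established exactly as in Lem.~\ref{lem:exsplit} by uniqueness of retarded/advanced inverses. Namely, for $\tilde\gamma^*\in\Secs_0(\tilde F^*)$ the section $\dot g'[\G_\pm[\tilde\gamma^*]]$ has retarded (resp.\ advanced) support and satisfies $\dot k'[\dot g'[\G_\pm[\tilde\gamma^*]]] = \dot s'[\dot f[\G_\pm[\tilde\gamma^*]]] = \dot s'[\tilde\gamma^*]$, so by uniqueness it equals $\K'_\pm[\dot s'[\tilde\gamma^*]]$; subtracting the $+$ and $-$ cases yields $\dot g'\circ\G = \K'\circ\dot s'$. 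With this in hand, the corollary follows as sketched, with no further obstacles.
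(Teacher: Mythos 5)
Your proposal is correct and follows essentially the same route as the paper: substitute the hypotheses, transpose $\dot{g}^{\prime*}$ across the natural pairing, apply $\dot{g}'\circ\G = \K'\circ\dot{s}'$, and recognize $\bar{r}'$. The paper simply reads the relation $\dot{g}'\circ\G = \K'\circ\dot{s}'$ off the commutative diagram~\eqref{eq:glrec}, whereas you additionally justify that square's commutativity by the uniqueness of retarded/advanced inverses (as in Lem.~\ref{lem:exsplit}) — a valid and welcome elaboration, but not a different argument.
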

\begin{proof}
Direct calculation shows
\begin{align}
	\langle \EE[\tilde{\alpha}^*] , \tilde{\beta}^* \rangle
	&= \langle \G\circ \bar{r}\circ\dot{g}^{\prime*}[\tilde{\alpha}^{\prime*}] ,
		\dot{g}^{\prime*}[\tilde{\beta}^*] \rangle \\
	&= \langle (\dot{g}'\circ \G)\circ \bar{r}\circ\dot{g}^{\prime*}
			[\tilde{\alpha}^{\prime*}] , \tilde{\beta}^{\prime*} \rangle \\
	&= \langle \K'\circ(\dot{s}'\circ\bar{r}\circ\dot{g}^{\prime*})
			[\tilde{\alpha}^{\prime*}] , \tilde{\beta}^{\prime*} \rangle ,
\end{align}
which concludes the proof.
\end{proof}

We conclude with a simple corollary that is sometimes known as
\emph{classical microcausality}.
\begin{corollary}\label{cor:microcaus}
Consider two on-shell gauge invariant dual density classes
$[\tilde{\alpha}^*], [\tilde{\beta}^*] \in T^*_\phi\bar{\S}$ whose
supports are spacelike separated,%
	\footnote{There exist representatives $\tilde{\alpha}^*,
		\tilde{\beta}^* \in \Secs_0(\tilde{F}^*)$ whose supports are
		genuinely spacelike separated, that is they satisfy $\supp
		\tilde{\alpha}^* \cap \overline{I(\supp \tilde{\beta}^*)} =
		\varnothing$ and $\supp \tilde{\beta}^* \cap \overline{I(\supp
		\tilde{\alpha}^*)} = \varnothing$.}
then
\begin{equation}
	\Pi([\tilde{\alpha}^*],[\tilde{\beta}^*]) = 0.
\end{equation}
\end{corollary}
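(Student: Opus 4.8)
The plan is to evaluate $\Pi([\tilde\alpha^*],[\tilde\beta^*]) = \langle \Pi[\tilde\alpha^*],[\tilde\beta^*]\rangle = \langle \EE[\tilde\alpha^*],\tilde\beta^*\rangle$ directly, using representatives whose supports are genuinely spacelike separated, and then to run a short support chase on the retarded and advanced parts of $\EE$. First I would invoke Lem.~\ref{lem:Pi-form} (together with Lem.~\ref{lem:pure-hyp-peierls}) to recall that the value of $\Pi$ on a pair of classes in $T^*_\phi\bar{\S}$ does not depend on the chosen representatives $\tilde\alpha^*,\tilde\beta^*\in\Secs_0(\tilde F^*)$ (subject to $\dot g^*[\tilde\alpha^*] = \dot g^*[\tilde\beta^*] = 0$). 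Hence I am free to fix, as the hypothesis and its footnote guarantee, representatives with $\supp\tilde\alpha^* \cap \overline{I(\supp\tilde\beta^*)} = \varnothing$ and $\supp\tilde\beta^* \cap \overline{I(\supp\tilde\alpha^*)} = \varnothing$.

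Next I would split $\EE = \EE_+ - \EE_-$ with $\EE_\pm = \G_\pm\circ\bar r$ (Def.~\ref{def:E}). Since $\bar r$ is a differential operator it does not enlarge supports, $\supp\bar r[\tilde\alpha^*]\sse\supp\tilde\alpha^*$, and the support property of the retarded/advanced Green functions of $\dot f$ (Def.~\ref{def:green-hyp}(b)) gives $\supp\EE_\pm[\tilde\alpha^*] = \supp\G_\pm[\bar r[\tilde\alpha^*]] \sse \overline{I^\pm(\supp\tilde\alpha^*)}\sse\overline{I(\supp\tilde\alpha^*)}$. Therefore the integrand of $\langle\EE_\pm[\tilde\alpha^*],\tilde\beta^*\rangle = \int_M\EE_\pm[\tilde\alpha^*]\cdot\tilde\beta^*$ is supported in $\overline{I(\supp\tilde\alpha^*)}\cap\supp\tilde\beta^*$, which is empty by the chosen spacelike separation. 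Hence $\langle\EE_\pm[\tilde\alpha^*],\tilde\beta^*\rangle = 0$ for each sign, and subtracting yields $\Pi([\tilde\alpha^*],[\tilde\beta^*]) = \langle\EE[\tilde\alpha^*],\tilde\beta^*\rangle = 0$, as desired.

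There is essentially no real obstacle here: the only point that needs care is the well-definedness on classes (handled by the lemmas above) and the legitimacy of choosing compactly supported, genuinely spacelike separated representatives simultaneously in the two fixed cohomology classes, which is exactly what the footnote asserts, so nothing further is required. As a remark, one may instead argue symmetrically via the anti-self-adjointness $(\EE)^* = -\EE\pmod{\im\dot g}$ of Lem.~\ref{lem:E-anti}, rewriting $\langle\EE[\tilde\alpha^*],\tilde\beta^*\rangle = -\langle\tilde\alpha^*,\EE[\tilde\beta^*]\rangle$ modulo a pure-gauge term that pairs to zero against the gauge-invariant $\tilde\alpha^*$, and then noting $\supp\EE[\tilde\beta^*]\sse\overline{I(\supp\tilde\beta^*)}$ misses $\supp\tilde\alpha^*$; this is the same one-line support argument viewed from the other side.
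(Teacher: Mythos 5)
Your proposal is correct and follows essentially the same route as the paper: pick genuinely spacelike separated representatives, write $\Pi$ via $\EE=\G\circ\bar r$, and note that the support of the causal Green function applied to one density stays inside $\overline{I}$ of its support, so the natural pairing vanishes. The only cosmetic difference is that you spell out the retarded/advanced split and the representative-independence lemmas, which the paper leaves implicit.
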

\begin{proof}
Picking representatives $\tilde{\alpha}^*, \tilde{\beta}^* \in
\Secs_0(\tilde{F}^*)$ with genuinely spacelike separated supports and
using the definition of the Poisson bivector, we have
$\Pi([\tilde{\alpha}^*],[\tilde{\beta}^*]) = \langle \tilde{\alpha}^*,
\G[\bar{r}[\tilde{\beta}^*]] \rangle = 0$. This is obvious because
$\supp \G[\bar{r}[\tilde{\beta}^*]] \sse I(\supp{\tilde{\beta}^*})$, by
the properties of the causal Green function $\G$. Hence, the arguments
in the natural pairing have non-overlapping supports and give zero.
\end{proof}

\section{Examples}\label{sec:examples}
In this section, we give a few examples of common relativistic field
theories and show how they fit into the framework presented in this
review. In particular, we make explicit the various identities needed to
show that they are \emph{hyperbolizable} according to
Def.~\ref{def:hyper}. We freely use the notation introduced in
Sec.~\ref{sec:constr-gf}. In all the examples, we will concentrate on
linear theories or the linearizations of non-linear ones, as discussed
in Sec.~\ref{sec:linear}.

\subsection{Scalar field}
The field bundle $F = M \times \mathbb{R}$ is the trivial $\R$-bundle.
The Lagrangian density is $\L[\phi] = -\frac{1}{2}\sqrt{-|g|} [g^{ij}
(\del_i \phi) (\del_j \phi) + V(\phi)]\, \d\tilde{x}$, where we used
coordinates $(x^i)$ on $M$, $g$ is a globally hyperbolic Lorentzian
metric on $M$, and $|g| = \det g_{ij}$. The Jacobi equations
\begin{equation}
	\J[\psi]
	= \left(\del_i (\sqrt{-|g|} g^{ij} \del_j \psi)
		- 2\sqrt{-|g|}V'(\phi)\psi\right)\, \d\tilde{x}
\end{equation}
have a wave-like principal symbol and, given the global hyperbolicity of
the metric, are well known to be Green
hyperbolic,\cite{bgp,waldmann-pde} so $\dot{f} = \J$. The constraints
and the gauge transformations are trivial, $\dot{c} = 0$ and $\dot{g} =
0$. Note that, for the existence of Green functions, no constraints need
to be imposed on $V'(\phi)$ beyond smoothness, so tachyonic theories and
theories with variable mass are hyperbolizable as well.

A more detailed treatment can be found for instance in
Ref.~\citen{fr-pois}.

\subsection{Maxwell $p$-form}
The field bundle $F = \Lambda^p M$ is the bundle of differential
$p$-forms ($p>0$). The Lagrangian is the generalization of the Maxwell
Lagrangian density $\L[\phi] = -\frac{1}{4} \d\phi \wedge {*}\d\phi$,
where $*$ is the Hodge star with respect to a globally hyperbolic metric
$g$ on $M$. Below, we identify the densitized dual bundle of the bundle
of$p$-forms with $(n-p)$-forms via the pairing formula $\psi\cdot
\tilde{\alpha}^* = \psi \wedge \tilde{\alpha}^*$. The Jacobi equations
are
\begin{equation}
	\J[\psi] = \frac{1}{2} {*}\delta\d\psi ,
\end{equation}
where $\delta = {*}\d{*}$ is the de~Rham co-differential. They are
invariant under gauge transformations with generator $\dot{g}[\eps] =
\d\eps$, where the gauge parameter bundle is $P = \Lambda^{p-1} M$. The
Lorenz gauge plays the role of a purely hyperbolic gauge fixing,
$c_g[\psi] = \delta \psi$. The equivalent constrained hyperbolic system
is $\dot{f}[\psi] = {*}\square\psi$, $\dot{c} =
\delta\psi$; the equivalence is effected by the operators $\bar{r} =
2\,\id$, $\bar{r}_c = {*}\d$, $\bar{r}_\J = 0$ and $\bar{r}_g =
\id$. The operator $\square = (\delta\d + \d\delta)$ is the Laplace-Beltrami
operator (also known as the $p$-form d'Alambertian) and is well known,
again when the metric $g$ is globally hyperbolic, to be Green
hyperbolic.\cite{bgp,waldmann-pde} The parametrizability
(diagram~\eqref{eq:glpar}) and recognizability
(diagram~\eqref{eq:glrec}) identities are generated by the following
commutative diagrams (which hold possibly up to sign factors):
\begin{gather}
\vcenter{\xymatrix{
	\Secs(\Lambda^{p+1}M) \ar[d]^{{*}\square} \ar[r]^{\delta}
		& \Secs(\Lambda^p M) \ar[d]^{{*}\square} \ar[r]^{\dot{c}=\delta}
		& \Secs(\Lambda^{p-1} M) \ar[d]^{{*}\square} \\
	\Secs(\Lambda^{n-p-1}M) \ar[r]^{\d}
		& \Secs(\Lambda^{n-p}M) \ar[r]^{\d}
		& \Secs(\Lambda^{n-p+1}M)
}} , \\
\vcenter{\xymatrix{
	\Secs(\Lambda^{p-1}M) \ar[d]^{{*}\square} \ar[r]^{\dot{g}=\d}
		& \Secs(\Lambda^p M) \ar[d]^{{*}\square} \ar[r]^{\d}
		& \Secs(\Lambda^{p-1} M) \ar[d]^{{*}\square} \\
	\Secs(\Lambda^{n-p+1}M) \ar[r]^{\delta}
		& \Secs(\Lambda^{n-p}M) \ar[r]^{\delta}
		& \Secs(\Lambda^{n-p-1}M)
}} .
\end{gather}
The gauge fixing compatibility condition~\eqref{eq:gf} is obviously
satisfied as $\delta \circ {*}\d \circ \delta = {*}\d^2\delta = 0$, up
to sign.

A more detailed treatment can be found for instance in Refs.~\citen{sdh}
and~\citen{hs-gauge}.

\subsection{Proca field}
The field bundle $F = T^*M$ is the bundle of $1$-forms and the
Lagrangian density differs from the Maxwell one by a mass term,
$\L[\phi] = -\frac{1}{4} \d\phi\wedge {*}\d\phi - \frac{1}{2}m^2\phi
\wedge {*}\phi$, where $*$ is the Hodge star with respect to a globally
hyperbolic metric $g$ on $M$. The Jacobi equations are
\begin{equation}
	\J[\psi] = \frac{1}{2}{*}(\delta\d\psi - m^2 \psi) ,
\end{equation}
where $\delta = {*}\d{*}$ is the de~Rham co-differential. There is no
gauge invariance, $\dot{g} = 0$, but there are integrability conditions.
The equivalent constrained hyperbolic system is $\dot{f}[\psi] =
{*}(\square - m^2)\psi$, $\dot{c}[\psi] = \delta\psi$; the equivalence
is effected by the operators $\bar{r} = 2\,\id$, $\bar{r}_c =
-\frac{2}{m^2}\delta\d$, $\bar{r}_\J = -\frac{2}{m^2}{*}\d$ and
$\bar{r}_g = 0$. Again, $\square = (\delta\d+\d\delta)$ is the
Laplace-Beltrami operator, which is known to be Green
hyperbolic\cite{bgp,waldmann-pde} when the metric $g$ is globally
hyperbolic. The parametrizability diagram~\eqref{eq:glpar} identities
are generated by the following commutative diagram (which holds possibly
up to sign factors):
\begin{equation}
\vcenter{\xymatrix{
	\Secs(\Lambda^{p+1}M) \ar[d]^{{*}\square} \ar[r]^{\delta}
		& \Secs(\Lambda^p M) \ar[d]^{{*}\square} \ar[r]^{\dot{c}=\delta}
		& \Secs(\Lambda^{p-1} M) \ar[d]^{{*}\square} \\
	\Secs(\Lambda^{n-p-1}M) \ar[r]^{\d}
		& \Secs(\Lambda^{n-p}M) \ar[r]^{\d}
		& \Secs(\Lambda^{n-p+1}M)
}} .
\end{equation}

A more detailed treatment can be found for instance in
Ref.~\citen{dappiaggi}.

\subsection{Graviton}
The field bundle $F = S^2T^*M$ is the bundle of symmetric, rank-2
covariant tensors and the Lagrangian density is the usual
Einstein-Hilbert action $\L[\phi] = {*}_\phi (R[\phi]-2\Lambda)$, where
we of course interpret $\phi$ as a Lorentzian metric, ${*}_\phi$ is the
corresponding Hodge star operator, $R[\phi]$ the corresponding Ricci
scalar and $\Lambda$ the cosmological constant.\cite{wald-gr} We will of
course use $\nabla$ to denote the covariant derivative compatible with
$\phi$. Consider a background solution $\phi\in \S_H(F)$, whose good
causal behavior property we take to coincide with the usual notion of
Lorentzian global hyperbolicity. Let us denote the corresponding volume
form as ${*} 1 = {*}_\phi1$. Here it convenient to identify any tensor
bundle with its densitized dual, with the natural pairing $\langle -,-
\rangle$ constructed by contracting corresponding indices using the
metric $\phi$ or its inverse, multiplying by the volume form ${*}1$ and
integrating over $M$.

The Jacobi equations are $\J[\psi] = L[\psi]$,
where $L$ also known as the \emph{Lichnerowicz}
operator. In local coordinates $(x^i)$ on
$M$, the components of the Lichnerowicz operator are
\begin{equation}
	L_{ij}[\psi] =
		- \frac{1}{2}\phi_{ij}(\nabla^k\nabla^l\psi_{kl}-\square\psi-\Lambda\psi)
		- \square \psi_{ij} - \Lambda \psi_{ij}
		- \frac{1}{2}\nabla_i\nabla_j \psi + \nabla^k\nabla_{(i}\psi_{j)k} ,
\end{equation}
where indices are raised and lowered using $\phi$, $\psi = \phi^{kl}
\psi_{kl}$ and $\square = \nabla^k \nabla_k$ is the tensor
d'Alambertian.

Before proceeding, we introduce some key linear differential operators
and their adjoints. To start, the \emph{trace reversal operator}
$\rho\colon \Secs(S^2T^*M) \to \Secs(S^2T^*M)$ does not actually involve
any derivatives and in components is given by $\rho_{ij}[\psi] =
\psi_{ij} - \frac{1}{2}\phi_{ij}\psi$. With our conventions, it is
self-adjoint, $\rho^* = \rho$, and also idempotent, $\rho\circ \rho =
\id$. Given, a $1$-form $v$, we define $K_{ij}[v] = \nabla_{(i} v_{j)}$
and call $K\colon \Secs(T^*M) \to \Secs(S^2T^*M)$ the \emph{Killing
operator}.  It's adjoint, $K^*_{j}[\psi] = -\nabla^i \psi_{ij}$ (recall
our identification of each tensor bundle with its own densitized dual)
is the divergence operator on symmetric covariant $2$-tensors, $K^*
\colon \Secs(S^2T^*M) \to \Secs(T^*M)$. Another important operator is
the linearized Riemann curvature (cf.~Sec.~7.5 of Ref.~\citen{wald-gr}).
If $\phi' = \phi + \lambda \psi$, the components of the Riemann tensor
of $\phi'$ are given by $R_{ijkl}[\phi'] = R_{ijkl} +
\lambda\dot{R}_{ijkl}[\psi] + O(\lambda^2)$, where $R_{ijkl}$ is the
Riemann tensor of $\phi$ and
\begin{equation}
	\dot{R}_{ijkl}[\psi]
		= -2 \nabla_{[i} \psi_{j][l;k]} + R_{ij[k}{}^m \psi_{l]m},
\end{equation}
with the usual notation $(-)_{;i} = \nabla_i(-)$, is the
\emph{linearized Riemann curvature operator} $\Secs(S^2T^*M) \to
\Secs(RM)$ and $RM \to M$ is the sub-bundle of $(T^*)^{\otimes 4}M$ that
satisfies the algebraic symmetries of the Riemann tensor. It's adjoint
operator $\dot{R}^* \colon \Secs(RM) \to \Secs(S^2T^*M)$ is then
\begin{equation}
	\dot{R}^*_{ij}[\xi]
		= 2\nabla^l\nabla^k \xi_{k(ij)l} - R^{klm}{}_{(i} \xi_{j)mkl} .
\end{equation}
Finally, we define the following self-adjoint hyperbolic differential
operator
\begin{equation}
	W_{ij}[\psi] = \square \psi_{ij} - 2R^{k}{}_{ij}{}^l \psi_{kl} ,
\end{equation}
with $W\colon \Secs(S^2T^*M) \to \Secs(S^2T^*M)$. Note that $W$ has a
wave-like principal symbol so it is known to be Green
hyperbolic.\cite{fewster-hunt,bgp,waldmann-pde}

The Jacobi equations are invariant under gauge transformations
(linearized diffeomorphisms) with generator $\dot{g}[v] = K[v]$, the
Killing operator, where the gauge parameter bundle is $P = T^*M$. The
de~Donder gauge plays the role of a purely hyperbolic gauge fixing
$\dot{c}_g = K^*\circ \rho$, or $(c_g)_j[\psi] = \nabla^i
\rho[\psi]_{ij}$ in coordinate components. The equivalent constrained
hyperbolic system is $\dot{f}[\psi] = W[\psi]$, $\dot{c}[\psi] =
K^*\circ\rho[\psi]$. The equivalence is effected by the operators
$\bar{r} = -2\rho$, $\bar{r}_c = 2K$,
$\bar{r}_\J = 0$ and $\bar{r}_g = \id$. The parametrizability
(diagram~\ref{eq:glpar}) and recognizability (diagram~\ref{eq:glrec})
identities are generated by the following commutative diagrams:
\begin{gather}
\vcenter{\xymatrix{
	\Secs(RM) \ar[d]^{W^{\prime*}} \ar[r]^-{\rho\circ K^{\prime*}}
		& \Secs(S^2T^*M) \ar[d]^W \ar[r]^-{K^*\circ \rho}
		& \Secs(T^*M) \ar[d]^{\square + \Lambda} \\
	\Secs(RM) \ar[r]^{\rho\circ K^{\prime*}}
		& \Secs(S^2T^*M) \ar[r]^{K^*\circ\rho}
		& \Secs(T^*M)
}} , \\
\vcenter{\xymatrix{
	\Secs(T^*M) \ar[d]^{\square + \Lambda} \ar[r]^-{K\circ\rho}
		& \Secs(S^2T^*M) \ar[d]^W \ar[r]^-{K'}
		& \Secs(RM) \ar[d]^{W'} \\
	\Secs(T^*M) \ar[r]^{K\circ \rho}
		& \Secs(S^2T^*M) \ar[r]^{K'}
		& \Secs(RM)
}} .
\end{gather}
%
We do not give explicit general expressions for the operators $K'$ and
$W'$, simply because they do not seem to be available in the literature.
On the other hand, they must exist for abstract reasons. Namely, if we
define $K'$ as differential operator extending $K$ to a formally exact
sequence, then it always exists, as mentioned in \ref{sec:formal-exact}.
Further, the composition of operators
\begin{equation}
	K' \circ W
	= K' \circ (-2\rho\circ L + 2K\circ K^*\circ \rho)
	= -2 K' \circ\rho\circ L
\end{equation}
clearly annihilates the image of the Killing operator $K$, due to the
gauge invariance of $L$. Therefore, by Lem.~\ref{lem:fec-fact}, there
must exist a factorization $K' \circ W = W'\circ K'$, which we will
conjecture to have a wave-like principal symbol (like $W$ and $\square +
\Lambda$ do) and hence be Green hyperbolic.

If we assume that the background metric tensor $\phi$ to have constant
curvature, we can say much more. In particular we know that $K'=
\dot{R}$ and that the composition $\dot{R}'\circ K = 0$ forms part of a
larger elliptic complex,\cite{calabi} in many ways analogous to the
de~Rham complex. In the even more special case of zero curvature (and
hence also $\Lambda=0$), all operators can be expressed with constant
coefficients in local inertial coordinate systems formal, which makes it
easier to check formal exactness directly. Also, in that case all the
hyperbolic operators become equal to $\square$, the wave operator.

A more detailed treatment of the quantization of the graviton field on
arbitrary cosmological vacuum backgrounds can be found in
Ref.~\citen{fewster-hunt}, though without introducing operators
analogous to $K'$ and $W'$.

\section{Discussion}\label{sec:discuss}
We have reviewed in detail the covariant phase space formalism and the
Peierls formula, which endow the space of solutions of a classical field
theory, respectively, with symplectic and Poisson structures, thus
giving it the structure of a phase space, well known to be equivalent to
the canonical phase space. Each of these constructions is covariant and
does not require the non-covariant, canonical Hamiltonian formalism as
an intermediary. In distinction with much of the existing literature,
where the following aspects have often been left implicit, we have
spelled out precise conditions under which these constructions succeed
without mathematical ambiguities or difficulties. While it has long been
known that the resulting symplectic and Poisson structures are
equivalent (the symplectic form and the Poisson bivector are mutual
inverses), despite the covariant construction, existing proofs still
required the canonical Hamiltonian formalism as an intermediary. The
main result in our presentation, which also happens to be novel, is a
detailed and completely covariant proof of the equivalence under a
precise set of sufficient conditions. The proof follows the ideas of the
previous work of Forger~\& Romero,\cite{fr-pois} but is generalized to
field theories more general than scalar fields. Our argument holds for
theories that also include constraints and that may have gauge
symmetries. The list of examples to which the argument is applicable
includes essentially all relativistic field theories of physical
interest.

Despite the fact that the phase space of a field theory in more than one
spacetime dimension (which corresponds to ordinary mechanical systems)
is infinite dimensional, we have systematically avoided a discussion of
functional analytical details needed in a theory of infinite dimensional
geometry. Instead, we have treated formally the minimal geometric
details needed in our presentation. Essentially,
we have restricted our discussion to linear PDEs (or
rather, linearizations of non-linear ones) and their solution spaces by appealing to the
fact that the inversion of a symplectic form or a Poisson bivector
requires only the tangent or cotangent space at a single point of the
phase space (a background solution). However, the precise algebraic and
differential geometric identities given here can be used as a core in a
future investigation that would fill in the missing functional analytic
details. In fact, some attempts along these lines have already been made
elsewhere. For instance, Ref.~\citen{bfr} has done precisely that but
only for the more restrictive class of scalar field theories. On the
other hand, Refs.~\citen{fr-bv,rejzner-thesis} have considered more
general theories, including those with gauge theories. Incidentally,
these references have concentrated on the so-called \emph{off-shell}
formalism and, while heavily relying on the Peierls formula, did not
consider its relation to the corresponding covariant symplectic
structure, which requires restriction to solutions to be well defined.

The sufficient conditions we have introduced for the Peierls inversion
formula to hold, the \emph{(global) parametrizability} of constraints
and the \emph{(global) recognizability} or gauge transformations, have
two aspects. See Remarks~\ref{rem:loc-par-rec} and~\ref{rem:gl-par-rec}
regarding the subtle interplay between these conditions and the
hypotheses that are sufficient to establish non-degeneracy of symplectic
and Poisson structures described in this review. The local version is
expected to hold generically for relativistic field theories of physical
interest, as illustrated by the examples of Sec.~\ref{sec:examples}. The
global version, on the other hand has a cohomological character and it
is actually known to fail in spacetimes with certain topological
properties.\cite{hs-gauge,sdh} The main examples of these problematic
cases have come from studying Maxwell electrodynamics on spacetimes with
non-trivial spatial topology.\cite{sdh,hs-gauge} It would be nice to
identify more key examples and study their properties. This would
require the computation of cohomologies of the de~Rham and other
formally exact complexes with causally restricted supports (e.g.,\ advanced,
retarded, spacelike compact, timelike compact). The techniques needed
for such computations go a bit beyond the standard treatments of de~Rham
cohomology with unrestricted or compact supports, as presented in
standard differential geometry and differential topology texts. They
will be addressed elsewhere.\cite{kh-cohom}

More generally, compact or spacelike compact supports, featuring in the
sufficient conditions discussed above, may be too restrictive for
physical purposes, for example when dealing with infrared issues on
spatially non-compact spacetimes. In those cases, the solution, of
course, is to introduce boundary conditions at infinity.  However, as is
well known, there may not always be a uniquely preferred set of boundary
conditions. In fact, boundary conditions are expected to be dictated by
detailed physical considerations, which may vary from problem to
problem. The main difficulty in relaxing the spacelike compact support
condition on linearized solutions is the divergence of the integral in
the definition of the covariant symplectic form,
Def.~\ref{def:formal-symp}. This situation is reminiscent of the problem
of extending unbounded, symmetric operators on a Hilbert space to larger
domains, while maintaining their self-adjointness.\cite{grubb} Perhaps a
similar approach can be applied to the symplectic form, where its
anti-symmetry would replace the self-adjointness condition, can be used
to study the space of possible boundary conditions at infinity.
Notably, an attempt in a direction implicitly similar to this suggestion
can be found in Sec.~5.1 of Ref.~\citen{barnich-brandt}. These ideas will be explored
further in future work.

\section*{Acknowledgments}
The author would like to thank Claudio Dappiaggi, Thomas-Paul Hack,
Alexander Schenkel and Urs Schreiber for fruitful discussions, also
B\'eatrice Bonga for feedback on the manuscript, and acknowledges
support from the Netherlands Organisation for Scientific Research (NWO)
(Project No.\ 680.47.413).

\appendix

\section{Jet bundles and the variational bicomplex}\label{sec:jets}
In this appendix, we briefly introduce jet bundles and fix the relevant
notation. For simplicity, we restrict ourselves to fields taking values
in vector bundles. However, the discussion could be straightforwardly
generalized to general smooth bundles. More details, as well as a
coordinate independent definition, can be found in the standard
literature.\cite{olver-lie,kms,spring-convex}

Fix a vector bundle $F\to M$, with $\dim M=n$, with fibers modeled on a
vector space $U$, and consider an adapted coordinate patch $\R^n\times
U$, with coordinates $(x^i,u^a)$.  Extend this patch to a \emph{$k$-jet
patch} $\R^n\times U \times U^{n_k}$ by adding extra copies of $U$, with
new coordinates $(x^i,u^a,u^a_i,u^a_{ij},\ldots,u^a_{i_1\cdots i_k})$,
which formally denote the derivatives of $\del_{i_1i_2\cdots}\phi^a(x)$
of a section $\phi$ at $x$. To keep track of all the derivatives, we
introduce \emph{multi-index} notation. A multi-index $I=i_1i_2\cdots
i_k$ replaces the corresponding set of symmetric covariant coordinate
indices (the multi-index does not change when the defining $i$'s are
permuted).  The \emph{order} of this multi-index is given by $|I|=k$,
with $|\varnothing|=0$. To augment a multi-index by adding another
index, we use the notation $Ij = jI = i_1\cdots i_k j$. Thus we can
write higher order derivatives as $\del_{i_1\cdots i_k} \phi(x) =
\del_I\phi(x)$, the higher order jet coordinates as $u^a_{i_1\cdots i_k}
= u^a_I$ and the total set of coordinates on a $k$-jet patch as
$(x^i,u^a_I)$, $|I|\le k$. In particular the empty multi-index
$I=\varnothing$ corresponds to $u^a_\varnothing = u^a$.

Since the higher derivatives are symmetric in all indices, the number of
extra coordinates is given by $n_k = \sum_{l=1}^k \dim S^k \R^n$, with
$S^k$ denoting the symmetric tensor product. Given two different
coordinate patches on $F$, we define the transition maps between the
corresponding $k$-jet patches according to the usual calculus chain rule
applied to higher order derivatives. These $k$-jet patches can be glued
together into the total space of the \emph{$k$-jet bundle} $J^kF\to M$,
which includes $J^0F \cong F$.

Since $F\to M$ is a vector bundle, so is $J^kF\to M$. It is isomorphic
to $F\oplus_M (F\otimes_M S^1 T^*M) \oplus_M \cdots \oplus_M (F\otimes_M
S^k T^*M)$, but not naturally. Jet bundles come with natural projections
$J^kF\to J^{k-1}F$, which simply discard all derivatives of order $k$.
This projection gives $J^kF$ the structure of an affine bundle over the
base $J^{k-1}F$, with fibers modeled on the vector bundle $(F\otimes_M
S^k T^*M)^{k-1}\to J^{k-1}F$ (see Def.~\ref{def:jet-pb} next). The
bundle $J^kF\to J^{k-1}F$ is affine because, in general, bundle
morphisms of $J^kF\to J^kF$ induced by vector bundle automorphisms of
$F$ are not linear but affine.

Given a vector bundle $E\to M$ it can be pulled back to the $k$-jet
bundle along the projection $J^kF\to M$. We introduce a convenient
notation for this pullback.
\begin{definition}\label{def:jet-pb}
We denote by $(E)^k\to J^kF$ the pullback of $E\to M$ to $J^k F$, which
then fits into the pullback commutative square
\begin{equation}
\vcenter{\xymatrix{
	(E)^k \ar[r] \ar[d] & E \ar[d] \\
	J^kF  \ar[r]        & M ~ .
}}
\end{equation}
\end{definition}

Any smooth section $\phi\colon M\to F$ automatically gives rise to its
\emph{$k$-jet prolongation} or \emph{$k$-prolongation} $j^k\phi\colon
M\to J^kF$. Namely $j^k\phi$ is a section of the bundle $J^kF\to M$ that
is defined in a local adapted coordinate patch as
\begin{equation}
	j^k\phi(x) = (x^i,\phi^a(x),\del_i\phi^a(x),\ldots,
		\del_{i_1\cdots i_k}\phi^a(x))
	= (x^i,\del_I \phi^a(x)), ~~ |I|\le k.
\end{equation}
One can think of the $k$-prolongation symbol as a differential operator
\begin{equation}
	j^k\colon \Secs(F) \to \Secs(J^kF)
\end{equation}
of order $k$. In fact, any (not necessarily linear) differential operator of order $k$,
\begin{equation}
	f\colon \Secs(F) \to \Secs(E), ~~
	f\colon \phi \mapsto f[\phi] ,
\end{equation}
can be written as a composition of $j^k$ with an order $0$ (not
necessarily linear) operator $f\colon J^kF\to E$, such that $f[\phi] =
f(j^k\phi)$. Note that we are slightly abusing notation by denoting both
the differential operator and the bundle morphism by the same symbol
$f$.

Further, we can define an $l$-prolongation of a differential operator
$f$ of order $k$,
\begin{equation}
	p^l f \colon J^{k+l}F \to J^lE ,
\end{equation}
which is then a differential operator of order $k+l$, by composing with
$j^l$: $p^l f[\phi] = j^l f[\phi]$. Prolongation is discussed briefly
using coordinate-wise operations in Sec.~\ref{sec:integrability}. The
$k$-jet prolongation $j^k\phi$ can now be thought of as a special case
of bundle morphisms, that is, $j^k\phi = p^k \phi$, where on the right
hand side we interpret $\phi$ as the base fixing bundle morphism to
$F\to M$ from the trivial $0$-dimensional bundle $\id\colon M\to M$.
\begin{equation}
\xymatrix{
	M \ar[d]_\id \ar[r]^\phi & F \ar[d] \\
	M \ar[r] & M ~ .
}
\end{equation}

Given the sequence of projections of $k$-jet bundles over $M$,
\begin{equation}
	\cdots \to J^2F \to J^1F \to J^0F\cong F ,
\end{equation}
it is convenient to introduce the \emph{infinite order jet} (or
\emph{$\oo$-jet}) bundle $J^\oo F$ defined as the projective limit over
the jet order $k$
\begin{equation}
	J^\oo F = \varprojlim J^k F .
\end{equation}
This limit implicitly defines $J^\oo F$ as an infinite dimensional
smooth manifold. The main advantage of working with $\oo$-jets is that
any function or tensor on $J^kF$ for finite $k$ can be pulled back to
$J^\oo F$. Conversely, any smooth function or tensor on $J^\oo F$
depends only on jets up to some finite order, say $k$, and can be
faithfully projected to $J^kF$. Another major convenience of working on
$J^\oo F$ is the ability to decompose the usual de~Rham differential
into its \emph{horizontal} and \emph{vertical} parts
\begin{equation}
	\d = \dh + \dv.
\end{equation}
The defining property of $\dh$ is the following. Given a section
$\phi\colon M \to F$, we must have the identity
\begin{equation}
	(j^\oo\phi)^* \dh \alpha = \d (j^\oo\phi)^* \alpha,
\end{equation}
where $\alpha$ is any differential form on $J^\oo F$ and $\d$ is the
usual de~Rham differential on $M$. On the other hand, $\dv$ is
characterized by the fact that its image is annihilated by the pullback
to $M$ along any section $\phi$,
\begin{equation}
	(j^\oo \phi)^* \dv \alpha = 0.
\end{equation}
It can be checked that the horizontal
and vertical differentials anti-commute and are separately nilpotent,
\begin{equation}
	\dh\dv + \dv\dh = 0, \quad \dh^2 = 0 = \dv^2 .
\end{equation}
Note that, to apply $\dv$ or $\dh$ to forms defined on a finite order
jet bundle $J^k F$, the pullback and projection operations mentioned
above will often be applied implicitly. Thus the application of say
$\dh$ to a differential form on $J^k F$ may yield that a differential
form that projects to $J^{k+1}F$ but not to $J^k F$. In local
coordinates $(x^i,u^a)$ on $F$, and the induced coordinates
$(x^i,u^a_I)$ on $J^\oo F$, a convenient basis for differential forms is
\begin{equation}
	\dh x^i = \d x^i, \quad
	\dv u^a_I = \d u^a_I - \dh u^a_I
		= \d u^a_I - u^a_{Ii} \d x^i .
\end{equation}
We can also define two special kinds of vector fields. A vector field
$\hat{\xi}$ is \emph{horizontal} if its action in local coordinates is
\begin{equation}
	\hat{\xi}(x^i) = \xi^i, \quad
	\hat{\xi}(u^a_I) = \xi^i u^a_{iI} .
\end{equation}
for some $\xi^i=\xi^i(x,u^a_I)$. In particular, the vector field
$\hat{\del}_j$, with $\xi^i = \delta^i_j$, is horizontal. Note that
$[\hat{\del}_i,\hat{\del}_j]=0$. A vector field $\hat{\psi}$ is
\emph{evolutionary} if its action in local coordinates is
\begin{equation}
	\hat{\psi}(x^i) = 0, \quad
	\hat{\psi}(u^a_I) = \hat{\del}_I (\psi^a),
\end{equation}
for some $\psi^a = \psi^a(x,u^b_I)$, where $\hat{\del}_I(f) =
\hat{\del}_{i_1}(\hat{\del}_{i_2}(\cdots\hat{\del}_{i_k}(f)\cdots))$ for
multi-index $I=i_1i_2\cdots i_k$ (the order of application of these
vector fields does not matter since they commute). Note that the
$\psi^a$ can be seen as the fiber coordinate components of a section of
the bundle $(F)^\oo\to J^\oo F$. These definitions can be checked to be
coordinate independent.

One can show that for a horizontal vector field $\hat{\xi}$ on $J^\oo F$
there exists a vector field $\xi_\phi$ on $M$ such that their
actions on scalar functions are intertwined by the pullback along the
jet prolongation $j^\oo\phi$ of a section $\phi\colon M\to F$,
\begin{equation}
	\hat{\xi}(f)(j^\oo\phi) = \xi_\phi(f(j^\oo\phi)) ,
\end{equation}
for any scalar function $f$ on $J^\oo F$. Namely, in local coordinates,
$\xi_\phi = \xi_\phi^i\del_i$ with $\xi_\phi^i = (\iota_{\hat{\xi}} \d
x^i)(j^\oo\phi) = \hat{\xi}(x^i)(j^\oo\phi) = \xi^i(j^\oo\phi)$. On the
other hand, evolutionary vector fields $\hat{\psi}$ satisfy the identities
\begin{gather}
\label{eq:idh-comm}
	\iota_{\hat{\psi}} (\dh \alpha) + \dh (\iota_{\hat{\psi}} \alpha) = 0 , \\
\label{eq:jet-Lie}
	\Lie_\psi (j^\oo\phi)^*\alpha
		= \left.\frac{\d}{\d\eps}\right|_{\eps=0} [j^\oo(\phi+\eps\psi)]^* \alpha
		= (j^\oo\phi)^* \Lie_{\hat\psi} \alpha ,
\end{gather}
for any form $\alpha\in\Forms^*(J^\oo F)$ and section $\psi\colon M\to F$.
Actually, $\psi$ could be a section of $(F)^k\to J^kF$, that is, it
could depend on $\phi^a(x)$ and its derivatives and not only on $x\in
M$. The only corresponding change in the above formula would be to
replace $\eps\psi$ by $\eps(j^k\phi)^*\psi$. Ostensibly, $\Lie_\psi$
should stand for the Lie derivative on the infinite dimensional manifold
of sections of $F\to M$, where the section $\psi$ is identified with the
vector field whose action on local coordinates is $\Lie_\psi \phi^a(x) =
\psi^a(x)$. However, since we do not delve into the differential
geometry of infinite dimensional manifolds here, we keep the symbol
$\Lie_\psi(j^\oo\phi)^*$ primitive and defined as above.

Integrations or differentiations by parts are carried out using the
following basic identity
\begin{align}
\label{eq:byparts}
	\dv u^a_{Ii}\wedge \d x^i\wedge\alpha
	&= \dv (u^a_{Ii} \d x^i) \wedge \alpha \\
	&= (\dv\dh u^a_I) \wedge \alpha \\
	&= -(\dh\dv u^a_I) \wedge \alpha \\
	&= -\dv u^a_I \wedge \dh\alpha - \dh(\dv u^a_I\wedge\alpha) .
\end{align}

This split of the de~Rham differential into horizontal and vertical
differentials also splits the de~Rham complex $\Forms^*(J^\oo F)$ of
differential forms on $J^\oo F$ into a
\emph{bicomplex}.\cite{anderson-small,anderson-big} Since the
horizontal and vertical $1$-forms generate the graded commutative
algebra of differential forms, any form $\lambda\in \Forms^*(J^\oo F)$
can be uniquely written as
\begin{equation}
	\lambda = \sum_{h,v} \lambda_{h,v} ,
\end{equation}
where $0\le h \le n$ and $0\le v$ are respectively the horizontal and
vertical form degrees. We have thus turned the differential forms into a
bigraded complex $\Forms^*(J^\oo F) = \bigoplus_{h,v} \Forms^{h,v}(F)$,
with the $\dh$ differential increasing $h$ by $1$ and the $\dv$
differential increasing $v$ by $1$. This complex is called the
\emph{variational bicomplex}.\cite{anderson-small,anderson-big} As with
any bicomplex, we can consider its cohomology with respect to either or
any combination of the two differentials. The \emph{horizontal
cohomology} is $H^{h,v}(\dh) = H(\Forms^*(J^\oo F),\dh)$ in degrees
$(h,v)$. The \emph{vertical cohomology} is $H^{h,v}(\dv) =
H(\Forms^*(J^\oo F),\dv)$ in degrees $(h,v)$. Both $(H^{h,*}/\dh
H^{h-1,*},\dv)$ and $(H^{*,v}/\dv H^{*,v-1},\dh)$ still form complexes,
therefore we can also consider their cohomologies. The \emph{relative
cohomologies} are $H^{h,*}(\dv|\dh) = H(H^{h,*}/\dh H^{h-1,*},\dv)$ and
$H^{*,v}(\dh|\dv) = H(H^{*,v}/\dv H^{*,v-1},\dh)$.

\section{Jet bundles and systems of PDEs}\label{sec:jets-pdes}
This appendix outlines the description of PDEs as submanifolds of the jet
bundle. Jet bundles are briefly introduced in \ref{sec:jets},
where also notation is fixed (not all of it being completely standard)
and standard literature references are given.  Such a description of
PDEs is more intrinsic than than the usual one in terms of equations,
but is essentially equivalent. This approach is well known in the
geometric and formal theory of differential
systems.\cite{seiler-inv,bcggg,vk}

From now on, fix $M$ to be finite dimensional manifold and let $n=\dim
M$. Also fix a vector bundle $F\to M$. We refer to $M$ as the
\emph{spacetime} manifold and to $F$ as the \emph{field bundle}.

We restrict our attention to \emph{regular} PDEs in the following sense.
\begin{definition}
A \emph{PDE system $\E$ of order $k$} is a smooth, closed sub-bundle of
$J^kF\to M$, $\E\sso J^kF$.
\end{definition}
Note that $\E$ need not be a vector sub-bundle of $J^kF$. The above
definition may seem unfamiliar to some, but can be cast in more
recognizable form using the following
\begin{proposition}
Given a PDE system $\E$ of order $k$, there exists (up to a global
obstruction) a vector bundle $E\to M$, a smooth sub-bundle $E'\sse E$
containing the zero section of $E\to M$, and a smooth base fixing smooth
bundle morphism $f\colon J^kF \to E$ such that the image of $f$ is
contained in $E'$, the image of $f$ is transverse in $E'$ to the zero
section of $E$ and $\E$ is precisely the preimage of the zero section,
that is, $\E$ satisfies $f=0$.
\end{proposition}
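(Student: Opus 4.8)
The plan is to realize $\E$, fiberwise over $M$, as the transverse zero locus of a base fixing bundle morphism --- carrying out in families the classical fact that a closed submanifold is cut out by an equation. The phrase ``up to a global obstruction'' refers to the obstruction to packaging the fiberwise normal data of $\E\sso J^kF$ into a vector bundle pulled back from $M$.

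First I would form the \emph{fiberwise normal bundle} $\nu\to\E$, with fiber $T_p(J^kF)_x/T_p\E_x$ at $p\in\E_x$; since $\E\to M$ and $J^kF\to M$ are locally trivial this is a genuine vector bundle over the manifold $\E$, of rank the fiberwise codimension $c$ of $\E$ in $J^kF$. Fixing a bundle metric on $J^kF\to M$ and using the fiberwise normal exponential map then gives a \emph{fiberwise tubular neighborhood}: a diffeomorphism over $M$ from an open neighborhood of the zero section of $\nu$ onto an open neighborhood $U$ of $\E$ in $J^kF$, carrying the zero section onto $\E$.

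The crux is to produce a vector bundle $E\to M$ of rank $c$ together with an isomorphism $\nu\cong\pi_{\E}^{*}E$ of vector bundles over $\E$, where $\pi_{\E}\colon\E\to M$ is the projection. Granting this, I would define a morphism $f_0$ on $U$ as the composite of the tubular identification, the isomorphism with $\pi_{\E}^{*}E$, and the canonical projection $\pi_{\E}^{*}E\to E$ onto the total space; it is base fixing, vanishes exactly on $\E$, and its fiberwise differential along $\E$ annihilates $T\E$ and induces on $\nu$ the isomorphism with $E$, so $f_0$ is transverse to the zero section of $E$ along $\E$ and $\im f_0$ contains an open neighborhood of the zero section near $\E$. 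To extend $f_0$ over all of $J^kF$, I would replace $E$ by $E\oplus(M\times\R)$ and set $f=(\varphi f_0,\,1-\psi)$, with $\varphi,\psi\colon J^kF\to[0,1]$ bump functions such that $\supp\varphi\sse U$ and $\varphi\equiv1$ near $\E$, while $\supp\psi\sse\{\varphi=1\}$ and $\psi\equiv1$ on a smaller neighborhood of $\E$; then $f^{-1}(0)=\E$, near $\E$ one has $f=(f_0,0)$, and one takes $E'$ to be a smooth sub-fiber-bundle of $E\oplus(M\times\R)$ containing the zero section and $\im f$ and agreeing near the zero section with the rank-$c$ sub-bundle $E\oplus0$, along which $f$ is a submersion. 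This $(f,E)$, with $E'$, has all the required properties.

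The step I expect to be the main obstacle is the first half of the previous paragraph: the existence of a rank-$c$ vector bundle $E\to M$ with $\nu\cong\pi_{\E}^{*}E$ --- together with the (routine but not entirely trivial) extension of $f_0$ off $U$, which in general involves a secondary, Euler-class-type obstruction. These constitute the ``global obstruction'' of the statement, and neither is automatic for an arbitrary closed sub-fiber-bundle $\E\sso J^kF$. Both are, however, vacuous in the case of primary interest here, that of a \emph{linear} PDE, where $\E\sso J^kF$ is a vector sub-bundle: then $\nu=\pi_{\E}^{*}(J^kF/\E)$, so one takes $E=J^kF/\E$, $E'=E$, and $f$ the fiberwise-linear, everywhere-defined, everywhere-transverse quotient projection, with no extension needed --- recovering the linear equation form $(f,E)$ used throughout Sec.~\ref{sec:lin-inhom}. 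More generally the obstruction vanishes whenever the fiberwise normal bundle is, up to isomorphism, pulled back from $M$, which covers all the examples treated in this review; the remaining ingredients --- the fiberwise tubular neighborhood and the bump-function gluing --- are standard.
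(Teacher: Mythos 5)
The paper offers no argument for this proposition beyond the remark that it ``follows from basic differential topology,'' deferring the global obstruction to a citation of Goldschmidt; your construction --- fiberwise normal bundle and tubular neighborhood of $\E$ in $J^kF$, the requirement $\nu\cong\pi_\E^*E$ as the global obstruction, and the bump-function extension after adjoining a trivial line --- is exactly the standard argument being alluded to, and it is essentially correct (including the observation that the obstruction is vacuous in the linear case, where one takes $E=J^kF/\E$). The only soft spots are the existence of the sub-bundle $E'$ containing $\im f$, which you assert rather than construct (though the statement itself leaves the nature of $E'$ vague, and this is precisely the device that relaxes the submersion requirement), and the mention of a ``secondary Euler-class-type obstruction'' to extending $f_0$, which your own trick of enlarging $E$ by $M\times\R$ in fact renders vacuous.
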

The proof follows from basic differential topology. The obstruction is
of a global topological nature~\cite[\textsection 7]{goldschmidt} and is
related to the fact that not every embedded submanifold can be
represented as the zero-set of a section of a vector bundle. Clearly,
the equation form is not unique. For instance, applying any invertible
transformation to the equations $f=0$ gives another equation form
$f'=0$, which describes exactly the same PDE system. 

We refer to $E\to M$ as the \emph{equation bundle} and to $f$ or the
pair $(f,E)$ as the \emph{equation form} of the PDE system $\E$. A
section $\phi\colon M \to F$, also referred to as a \emph{field
configuration}, is said to satisfy the PDE system $\E$ if the $k$-jet
prolongation of $\phi$ is contained in $\E$, $j^k\phi(x)\in \E_x\sso
J^k_x(F,M)$. Then, equivalently, $j^k\phi$ is a section of $\E\to M$. We
denote the space of all solution sections by $\S(F)\sso \Secs(F)$ or
$\S_\E(F)$ when the PDE system needs to be mentioned explicitly. Using
the above proposition, we can equivalently say that $\phi$ is a solution of
the PDE system $\E$ if
\begin{equation}
	f[\phi] = f(j^k\phi) = 0.
\end{equation}
Expressing the $k$-jet in local coordinates, $j^k\phi(x) = (x,\phi^a(x),
\del_i \phi^a(x), \ldots)$, it is clear that $f(x,\phi^a(x),
\del_i\phi^a(x), \ldots)=0$ is a system of partial differential
equations in the usual sense of the term. Starting with a PDE system in
the usual sense, its geometric form as a sub-bundle of the jet bundle
can be obtained by a converse of the above lemma. At this point, the
regularity assumptions on both $\E$ and $f$ become important. Namely,
the transversality properties of $f$ ensure that the zero set of $f=0$
is a submanifold of $J^kF$ and vice versa.

The linear and affine structures on $J^kF$ give us the possibility of
defining the notion of \emph{linear} and \emph{quasilinear} PDE systems.
\begin{definition}
A PDE system $\E\sso J^kF$ is called \emph{linear} if $\E\to M$ is a
vector sub-bundle of the vector bundle $J^kF\to M$. The PDE system is
called \emph{quasilinear} if $\E\to J^{k-1}F$ is an affine sub-bundle of
the affine bundle $J^kF\to J^{k-1}F$.
\end{definition}
The connection to the usual meanings of these terms can be seen through
adapted equation forms.
\begin{lemma}
The PDE system $\E\sso J^kF$ is linear iff it has an equation form
$(f,E)$, where $f\colon J^kF\to E$ is a morphism of vector bundles over
$M$.

The PDE system $\E\sso J^kF$ is quasilinear iff it has an equation form
$(f,E)$, where $f\colon J^kF\to E$ is a morphism of affine bundles,
which fits into the commutative diagram
\begin{equation}
\vcenter{\xymatrix{
	J^kF \ar[d] \ar[r]^f & E \ar[d] \\
	J^{k-1}F    \ar[r]   & M ~ ,
}}
\end{equation}
where the vertical maps define the affine bundles, with the vector bundle
$E\to M$ naturally considered an affine one.
\end{lemma}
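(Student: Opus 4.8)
The plan is to prove the two biconditionals in parallel: each forward ("only if") direction by an explicit construction of an equation form, and each reverse ("if") direction by a constant-rank argument that exploits the transversality built into the notion of an equation form (from the Proposition above). For the reverse directions, suppose $(f,E)$ is an equation form for $\E$ with $f\colon J^kF\to E$ a morphism of vector bundles over $M$ (resp.\ a morphism of affine bundles covering $J^{k-1}F\to M$). Restricting $f$ to a single fibre of $J^kF\to M$ (resp.\ of $J^kF\to J^{k-1}F$) gives a linear (resp.\ affine) map into a fibre of $E$. Since $\E\to M$ is a sub-bundle, over every base point there lies a point of $\E$, and transversality of $f$ to the zero section at such a point forces the corresponding fibrewise linear (resp.\ affine) map to be a submersion; hence these maps have locally constant rank, and $\E$, being fibrewise the kernel (resp.\ the preimage of $0$) of a constant-rank linear (resp.\ affine) map, is a vector sub-bundle of $J^kF\to M$ (resp.\ an affine sub-bundle of $J^kF\to J^{k-1}F$). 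That settles "if $\Rightarrow$ linear/quasilinear."

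For the linear "only if," assume $\E\to M$ is a vector sub-bundle of the vector bundle $J^kF\to M$. Then the quotient $E:=J^kF/\E$ is a vector bundle over $M$, and the quotient projection $f\colon J^kF\to E$ is a morphism of vector bundles over $M$; it is fibrewise surjective, hence a submersion transverse to the zero section, and $\E=f^{-1}(0)=\ker f$. Thus $(f,E)$ is the required linear equation form.

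For the quasilinear "only if" — the substantive case — assume $\E\to J^{k-1}F$ is an affine sub-bundle of $J^kF\to J^{k-1}F$, the affine bundle modelled on $P$, the pullback along $\pi\colon J^{k-1}F\to M$ of the vector bundle $G:=F\otimes_M S^kT^*M$. Let $\vec{\E}\sse P$ be the vector sub-bundle of directions of $\E$, and set $N:=P/\vec{\E}$, a vector bundle over $J^{k-1}F$. Choosing a global smooth section $e$ of $\E\to J^{k-1}F$ (which exists by the usual partition-of-unity argument for affine bundles over a paracompact base), the assignment $p\mapsto[\,p-e\,]$ — the class in $N$ of the $P$-valued difference, formed over the image of $p$ in $J^{k-1}F$ — defines a smooth fibrewise-affine surjection $\bar{f}\colon J^kF\to N$ covering $\id_{J^{k-1}F}$, with $\bar{f}^{-1}(0)=\E$ (independently of the choice of $e$). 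This is already an equation form, but with equation bundle $N$ living over $J^{k-1}F$ rather than over $M$. To descend, use that $N$, a finite-rank bundle over the finite-dimensional manifold $J^{k-1}F$, embeds as a vector sub-bundle of a trivial bundle $J^{k-1}F\times\R^m=\pi^*(M\times\R^m)$; composing $\bar{f}$ with this embedding and retaining only the $\R^m$-component yields $f\colon J^kF\to E:=M\times\R^m$, a morphism of affine bundles fitting into the displayed diagram, with $f^{-1}(0)=\bar{f}^{-1}(0)=\E$ and transversality to the zero section inherited from $\bar{f}$ being a submersion. Hence $(f,E)$ is the desired quasilinear equation form.

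The main obstacle — and really the only step that is not routine differential topology — is precisely this descent of the equation bundle from $J^{k-1}F$ to $M$: the natural "conormal" data $N$ of an affine sub-bundle of $J^kF\to J^{k-1}F$ sits over $J^{k-1}F$, and there is no canonical pushforward to $M$. The resolution is to not demand a minimal-rank equation bundle and instead embed $N$ into a trivial bundle pulled back from $M$; were one to insist on minimality, one would run into exactly the kind of global topological obstruction already flagged after the Proposition. Everything else — smoothness of the maps built above, the constant-rank bookkeeping, and the transversality checks — is standard and I would not expect it to present any difficulty.
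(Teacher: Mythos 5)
Your overall skeleton is sound, and to be fair the paper offers no details at all here (it declares the proof immediate and merely remarks that the quasilinear case can alternatively be phrased via an adapted form over $J^{k-1}F$); but two of your transversality claims are wrong as stated, and one of them sits exactly at the step you single out as the only non-routine one. In the reverse directions, transversality of $f$ to the zero section at a zero does \emph{not} force the fibrewise linear map to be a submersion: in a local trivialization $f(x,u)=A(x)u$, the vertical part of $\im df$ at a zero $(x_0,u_0)$ is spanned by $A(x_0)\eta$ \emph{and} $(\partial_\xi A)(x_0)u_0$, so base-direction derivatives can supply the missing directions while $A(x_0)$ fails to be surjective (and the paper's notion of equation form only demands transversality inside a sub-bundle $E'$ anyway). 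Fortunately this step is unnecessary: $\E$ is by definition a smooth closed sub-bundle of $J^kF\to M$, so once each fibre $\E_x=\ker f_x$ (resp.\ each fibre over $J^{k-1}F$, being the preimage of $0$ under a fibrewise affine map) is a linear (resp.\ affine) subspace, $\E$ is a vector (resp.\ affine) sub-bundle, with no rank bookkeeping needed.

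The genuine gap is in the quasilinear forward direction. Your composite $f\colon J^kF\to M\times\R^m$ does satisfy $f^{-1}(0)=\E$, but the assertion that transversality to the zero section is ``inherited from $\bar f$ being a submersion'' is unjustified: after the embedding, $f$ is no longer fibrewise surjective, the vertical part of $\im df$ at a point over $q\in J^{k-1}F$ is contained in $\iota_q(N_q)$ plus at most $\dim J^{k-1}F$ further directions, so transversality in $M\times\R^m$ must fail whenever $m>\operatorname{rank}N+\dim J^{k-1}F$; and exhibiting the sub-bundle $E'\sse E$ demanded by the Proposition's notion of equation form (a bundle over $M$ of rank $\operatorname{rank}N$ whose fibres contain every $\iota_q(N_q)$) is precisely the descent problem you set out to dodge. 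This matters because your own reverse-direction argument treats transversality as part of the data of an equation form. The repair is cheaper than your trivial-bundle trick: since $F\to M$ is a vector bundle, so is $J^{k-1}F\to M$, hence its total space retracts onto the zero section and smooth homotopy invariance of pullbacks gives $N\cong\pi^*E$ with $E=N|_M$ the restriction of $N$ along the zero section; composing $\bar f$ with this isomorphism and the canonical map $\pi^*E\to E$ yields a fibrewise affine map $J^kF\to E$ covering $J^{k-1}F\to M$ that is surjective on each fibre, hence honestly transverse to the zero section, with $\operatorname{rank}E$ equal to the codimension of $\E$. In particular, contrary to your closing remark, no global obstruction of the kind flagged after the Proposition arises for the minimal-rank equation bundle in this situation.
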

The proof is immediate. Alternatively, the quasilinear case can be cast
into the form of a base fixing affine bundle morphism $f\colon J^kF\to
(E)^{k-1}$, where both bundles are over $J^{k-1}F$. Such equation forms
are called \emph{adapted}.

In the more common language of adapted local coordinates, the conditions
of linearity and quasilinearity are expressed as follows. Consider
adapted local coordinates $(x^i,v_A)$ on the equation bundle $E$,
$(x^i,u^a)$ on the field bundle $F$, and the corresponding $(x^i,u^a_I)$
on the $k$-jet bundle $J^kF$.
Let
$\phi\colon M\to F$ be a field configuration, then its $k$-jet in local
coordinates is $j^k\phi(x) = (x^i,\del_I\phi^a(x))$. The
above lemma asserts the existence of an equation form that looks like
\begin{equation}\label{lin-eqform}
	f_{Aa}^I(x) \del_I\phi^a(x) = 0 .
\end{equation}
Note that this equation is linear in $\phi(x)$ and its derivatives and
that the coefficients $f_{Aa}^I(x)$, with multi-indices $I$, depend only
on the base space coordinates $x$. On the other hand, for a quasilinear
equation, the lemma asserts the existence of an equation form that looks
like (with $|I|=k$)
\begin{equation}\label{qlin-eqform}
	f_{Aa}^I(x,j^{k-1}\phi(x)) \del_I \phi^a(x) + f_A(x,j^{k-1}\phi(x)) = 0 .
\end{equation}
Note that, in the linear case, the fact that the coefficients of
$f\colon J^k(F,M)\to E$ only depend on the base space coordinates $x$ is
captured by the requirement that it is a morphism of vector bundles over
$M$. In the quasilinear case, the coefficients of $f$ can obviously
depend on both $x$, $\phi(x)$ as well as all derivatives $\del_I\phi(x)$
up to order $|I|=k-1$, which is captured by allowing $f\colon J^kF\to
(E)^{k-1}$ to be a (base fixing) bundle morphism over $J^{k-1}F$. It is
worth remarking that any linear PDE system is also naturally
quasilinear.

Recall that the affine bundle $J^kF\to J^{k-1}F$ is modeled on
the vector bundle $(S^kT^*M\otimes_M F)^{k-1}\to J^{k-1}F$.
Therefore, an adapted equation form $(f,E)$ of a quasilinear PDE system
$\E\sso J^kF$ naturally singles out a section
\begin{equation}\label{eq:prsym}
	\bar{f}\colon J^{k-1}F
		\to (E\otimes_M F^* \otimes_M S^kTM)^{k-1} .
\end{equation}
In local coordinates, $\bar{f}$ corresponds to the coefficient
$f_{Aa}^{I}$ of the highest derivative term $\del_{I}\phi(x)$ with
$|I|=k$ in Eq.~\eqref{qlin-eqform}. This section $\bar{f}$ is called the
\emph{principal symbol} of the given equation form of $\E$. If the
equation is linear, rather than quasilinear, $\bar{f}$ can be projected
from a section on $J^{k-1} F$ to a section on $M$. Moreover, if we fix
$x\in M$ and $p\in T^*_x M$, we can define the linear map $\bar{f}_{x,p}
= \bar{f}_x \cdot p^{\otimes k}$,
\begin{equation}\label{eq:ps-map}
	\bar{f}_{x,p}\colon E_x \to F_x ,
\end{equation}
which we also refer to as the principal symbol.

\subsection{Prolongation, integrability, equivalence}
\label{sec:integrability}
One reason to discuss PDE systems as submanifolds of a jet bundle is
independence of a particular equation form. Any two equation forms are
equivalent if they define the same PDE system manifold. We should
specify our notion of equivalence.
\begin{definition}\label{def:pde-equiv}
Consider two field bundles $F_i\to M$, $i=1,2$, and two PDE systems
$\E_i \sse J^{k_i}F_i$. Denote the corresponding spaces of smooth
solution sections by $\S_i(F_i)$. The PDE systems $\E_1$ and $\E_2$ are
said to be equivalent if there exist bundle morphisms $e_{ij}\colon
J^{l_i}F_i\to F_j$, $i\ne j$, such that
\begin{equation}
	\phi_i \in \S_i(F_i) ~~\text{and}~~
	\phi_j = e_{ij}\circ j^{l_i}\phi_i ~~\text{implies}~~
	\phi_j \in \S_j(F_j) ,
\end{equation}
as well as that $e_{12}\circ j^{l_1}$ and $e_{21}\circ j^{l_2}$ are
mutual inverses when restricted to the solution spaces $\S_1(F_1)$ and
$\S_2(F_2)$.
\end{definition}
We can easily extend the notion of equivalence to equation forms of PDE
systems. In that case two different equations forms that define the same
PDE system manifold are trivially equivalent. Note that neither the
field bundles nor the orders of the PDE systems need to be same for
equivalence to hold.

Let us restrict to the case that is of importance elsewhere in this
review, namely of $F_1=F_2=F$ and $e_{12}$ and $e_{21}$ respectively
equal to the canonical projections $J^{l_1}F\to F$ and $J^{l_2}F\to F$,
which are in a sense trivial. In this case, it is certainly sufficient
that $\E_1 = \E_2$ for equivalence to hold, but it is not necessary. In
fact $\E_1$ and $\E_2$ could be of different orders. To obtain necessary
conditions for equivalence, we need to consider \emph{prolongation} of
PDE systems and the possible resulting \emph{integrability conditions}.

A discussion of these notions in the setting of the jet bundle
description of PDE systems can be rather technical. On the other hand,
the theory of equivalence of PDE systems formulated in these terms has
become quite mature and has yielded some important results. The
technical details of this theory can be found for example in
Refs.~\citen{bcggg,seiler-inv}. Below we give a brief non-technical
introduction to this theory and state some simplified results relevant
for hyperbolic systems.

The step by step derivation and inclusion of integrability conditions
into a PDE is called prolongation. It is easiest to define prolongation
in equation form and in local coordinates. Consider an equation form
$(f,E)$ of a PDE system $\E\sse J^kF$, as well as local coordinates
$(x^i,u^a)$ on $F$ and $(x^i,v_A)$ on $E$. If the section $\phi\colon
M\to F$ satisfies the PDE system, we have the following system of
equations holding in local coordinates
\begin{equation}
	f_A(x^j,\del_J\phi^a) = 0 .
\end{equation}
These equations hold for each point $x\in M$, therefore when both sides
are differentiated with respect to the coordinates on $M$, the resulting
equations are still satisfied,
\begin{equation}
	\del_i f_A(x^j,\del_J\phi^a)
	= (\hat{\del}_i f_A)(x^j,\del_J\phi^a) = 0 ,
\end{equation}
where $\hat{\del}_i f_A$ are functions on $J^{k+1}F$ obtained by pulling
back the functions $f_A$ from $J^kF$ to $J^{k+1}F$ and applying the
horizontal vector field $\hat{\del}_i$. These new functions $f_{iA} =
\hat{\del}_i f_A$, together with the old $f_A$ ones, constitute the
local coordinate expression for the equation form $(p^1 f,J^1E)$, where
$p^1$ is the $1$-prolongation defined in Sec.~\ref{sec:jets}.
We call the corresponding PDE system $\E^1 = \E_{p^1 f} \sso J^{k+1}F$
the \emph{first prolongation} of $\E$ or also its \emph{prolongation to
order $k+1$}. Prolongations to any higher order, $(p^l f,J^lE)$ and
$\E^{l} \sso J^{k+l}F$, are defined iteratively.

Let $p_l\colon J^{k+l}F\to J^kF$ be the canonical jet projection, which
restricts to $p_l\colon \E^l\to \E$.  Notice that we necessarily have
$p_l(\E^{l}) \sse \E$, since the prolonged system contains the original
one as a subsystem. We have just shown that sections satisfying $\E$
automatically satisfy $\E^{l}$, and vice versa. In other words,
$\S_{\E}(F) = \S_{\E^{l}}(F)$ and the two PDE systems are equivalent.
However, the inclusion $p_l(\E^{l}) \sse \E$ may be strict, which would
mean that there exist non-trivial \emph{integrability conditions}. There
exists an equation form $(f\oplus g,E\oplus G)$ for $p_l(\E^{l})\sso J^k
F$, where $(g,G)$ is an equation form for the integrability conditions.
These observations provide another sufficient condition for the
equivalence of two PDE systems, namely that there exists an order $l\ge
k_1,k_2$ such that $\E_1^{l-k_1} = \E_2^{l-k_2}$ as subsets of $J^lF$.

Prolongation can be iterated indefinitely. Taking this process to its
limit, we obtain the infinite order prolongation $\E^\oo \sso J^\oo F$
from the equation form $(p^\oo f,J^\oo E)$, which takes all possible
integrability conditions into account. One can then show that the
equality $\E_1^{\oo} = \E_2^{\oo}$, as subsets of $J^\oo F$, is both a
necessary and a sufficient condition for the equivalence of two PDE
systems. It is a deep theorem of the geometric theory of PDE
systems\cite{seiler-inv,bcggg,goldschmidt}
that, for any given PDE system, there exists a finite order $l$ such
that prolongations above that order introduce no new integrability
conditions. Therefore, this restricted version of the equivalence
problem can be decided in finitely many steps.

\begin{remark}
In mathematical physics, PDEs systems are often obtained in variational
form (as Euler-Lagrange equations of some Lagrangian). However, this
form need not be one for which the existence of Green functions are
readily available. Thus, it becomes important to formalize, as is done
above, how these systems can be brought into an equivalent form that can
be shown to be Green hyperbolic (cf.~Sec.~\ref{sec:green-hyper}) using
standard methods. For example, the Klein-Gordon equation is normal
hyperbolic,\cite{bgp} but not symmetric
hyperbolic.\cite{geroch-pde,baer-green} The Dirac and Proca equations
are neither. However, each of these equations can be shown to be
equivalent to either a symmetric or normal hyperbolic system with
constraints.\cite{geroch-pde,bgp,baer-green,waldmann-pde} The inclusion
of constraints into the description of a hyperbolic system is discussed
in Sec.~\ref{sec:caus-green}.
\end{remark}

\subsection{Formal exactness}\label{sec:formal-exact}
An important concept used in this work is that of a formally exact
complex (or sequence) of differential operators. There are several
related concepts, which we discuss briefly below. More details can be
found in the Refs.~\citen{tarkhanov,pommaret,goldschmidt-lin}.

A \emph{complex of differential operators} or \emph{differential
complex} consists of vector bundles $E,F,G \to M$ and differential operators
$f\colon E\to F$ and $g\colon F\to G$ such that $g\circ f = 0$; it is
often written as
\begin{gather}
\label{eq:sec-seq}
\xymatrix{
	\Secs(E) \ar[r]^{f} & \Secs(F) \ar[r]^{g} & \Secs(G)
} \\
\label{eq:jet-seq}
	\text{or} \quad
\xymatrix{
	J^\oo E \ar[r]^{p^\oo f} & J^\oo F \ar[r]^{p^\oo g} & J^\oo G
} .
\end{gather}
Of course, the sequence of operators in a differential complex could
have any length, not just two. Given a complex, we are of course free to
define its cohomology, $\ker g / \im f$. If this cohomology vanishes,
then the complex is said to be \emph{exact} or to form an \emph{exact
sequence}. Different kinds cohomologies can be defined by considering
different spaces on which the differential operators are defined. There
are of course different kinds of exactness associated to them.

The complex is \emph{locally exact} if for every $x\in M$ there exists a
neighborhood $U\sse M$ of $x$ such that the sequence
\begin{equation}
\xymatrix{
	\Secs(E|_U) \ar[r]^{f} & \Secs(F|_U) \ar[r]^{g} & \Secs(G|_U)
}
\end{equation}
is exact. The complex is \emph{globally exact} if the
sequence~\eqref{eq:sec-seq} is exact. Of course there are different
versions of global exactness if we replace arbitrary smooth sections by
say sections with compact support, or other restriction. In practical
applications, it is global exactness or the knowledge of the global
cohomology that is important. Local exactness is important because it
allows the use of sheaf-theoretic methods to compute the global
cohomology.

Local exactness by itself is a difficult property to check, because it
is essentially a functional analytical condition. A simpler geometric
condition is formal exactness. The complex is \emph{formally exact} if
the sequence~\eqref{eq:jet-seq} is exact as a sequence of (infinite
dimensional) vector bundles, which is the same as exactness
each of the sequences
\begin{equation}
\xymatrix{
	J^{s+k+l} E \ar[r]^-{p^{s+l} f} & J^{s+l} F \ar[r]^-{p^s g} & J^s G
}
\end{equation}
of (finite dimensional) vector bundles, with $k$ and $l$ being the
respective orders of the operators $f$ and $g$. Formal exactness is not
a sufficient condition for local exactness, but it is necessary and is a
first step in trying to establish the stronger condition. For a given
differential operator $f\colon \Secs(E) \to \Secs(F)$, the existence of
a differential operator $g\colon \Secs(F) \to \Secs(G)$ extending $f$ to
a formally exact sequence is assured for abstract
reasons~\cite{goldschmidt-lin,tarkhanov}, provided the usual regularity
conditions hold.

Yet another related and simpler condition is
ellipticity~\cite[\textsection XIX.4]{hoermander-III}. The complex is
said to be \emph{elliptic} if for every $x\in M$ and non-zero $p\in
T^*_x M$ the sequence of principal symbols (cf.~Eq.~\eqref{eq:ps-map})
\begin{equation}
\xymatrix{
	E_x \ar[r]^{\bar{f}_{x,p}} & F_x \ar[r]^{\bar{g}_{x,p}} & G_x
}
\end{equation}
is exact. The de~Rham complex is a prototypical example of both an elliptic
and a formally exact complex. However, these conditions are in general
independent,\cite{smith} though in many cases ellipticity has proven helpful
in checking formal exactness.

\section{Causal structure on conal manifolds}\label{sec:conal}
In the literature on relativity, causal structure is most often studied
as a subset of Lorentzian geometry.\cite{he} The term \emph{Lorentzian
geometry} refers to the study of structures induced on spacetime
manifolds by the presence of a Lorentzian metric.  One of these
structures consists of the cones of \emph{null vectors}. In particular,
it is these cones that determine causal relationships between points in
Lorentzian spacetimes.  While causal relationships themselves can be
defined solely in terms of the null cones, the reason they deserve the
name \emph{causal} is that they also describe the maximal speed at which
disturbances can travel in solutions of hyperbolic PDEs with
wave-like\cite{bgp,waldmann-pde} principal symbols. However, a similar
property holds for more general classes of hyperbolic equations, even
those that have no relation to a Lorentzian metric. It stands to reason
then that causal relationships should be definable in terms of the
intrinsic geometry of such PDEs.  Indeed, if we consider cones of
so-called \emph{characteristic covectors}\cite{beig-hyper} and define
causal relationships in terms of them, surprisingly few changes are
necessary, with Lorentzian null cones appearing as special cases for the
class of wave-like PDEs mentioned above. It stands to reason to give the
study of such cones the name \emph{characteristic geometry}. On the
other hand, we find it convenient to generalize even further and
consider simply a priori given cones (in the tangent and cotangent
space), thus abstracting and clarifying the geometric notions that go
into the definitions and basic properties of causal relations. Thus, we
shall actually be studying \emph{conal geometry} or what have sometimes
been called \emph{conal manifolds}.\cite{lawson,neeb} This abstraction
highlights the fact that a basic tool in the study of causal structures
should be differential topology, rather than pseudo-Riemannian geometry.

Characteristic geometry and its abstraction to conal geometry are
discussed in a fuller and more integrated way in Ref.~\citen{kh-caus}. Some cues
have been taken from previous attempts to abstract the notion of causal
structure or causal order in Lorentzian
geometry.\cite{penrose,geroch-gh,he,gps} The generalization from
Lorentzian cones to more general ones, for the purposes of describing
causality in quantum field theory has been considered
before,\cite{bannier,rainer1,rainer2} but not in a concrete way.

Each point of a conal manifold, referred to here as a \emph{cone
bundle}, is smoothly assigned an open \emph{cone} (a set invariant under
multiplication by positive scalars) of tangent or cotangent vectors.
\begin{definition}
A smooth bundle $C\to M$ of finite dimensional manifolds is termed a
\emph{cone bundle} if there exists an \emph{enveloping} vector bundle
$E\to M$ and an inclusion bundle morphism $\iota\colon C \sso E$, such
that each fiber $C_x$, $x\in M$, is an open convex cone in the
corresponding fiber $E_x$ (where we have implicitly identified $C$ with
its image $\iota(C)\sso E$). A bundle map $\chi\colon C \to C'$ is a
\emph{cone bundle morphism} if, given corresponding enveloping vector
bundles $E\to M$ and $E'\to M'$, there exists a vector bundle morphism
$\psi\colon E\to E'$ such that $\chi = \psi|_C$. Namely, the following
diagrams exist and commute:
\begin{equation}
\vcenter{\xymatrix{
	E \ar[d] \ar[r]^\psi & E' \ar[d] \\
	M        \ar[r]      & M'
}} , \quad
\vcenter{\xymatrix{
	C \ar[r]^\chi \ar[d]^\sso &
	C'            \ar[d]^\sso \\
	E \ar[r]^\psi             & E'
}} .
\end{equation}
\end{definition}
Before proceeding, we need some terminology concerning cones and
operations on them. These notions are often used in convex
geometry.\cite{rockafellar}
\begin{definition}
Given a finite dimensional vector space $V$ and a convex cone $C\sso V$,
denote its closure by $\bar{C}$ and its open interior by $\mathring{C}$.
We define the \emph{convex dual} (a.k.a.\ \emph{polar dual}) $C^*\sso
V^*$ as the set
\begin{equation}
	C^* = \{ u\in V^* \mid u\cdot v \ge 0 \quad\text{for all}~v\in C \} .
\end{equation}
We define the \emph{strict convex dual} $C^\oast\sso V^*$ as the set
\begin{equation}
	C^\oast = \{ u\in V^* \mid u\cdot v > 0 \quad\text{for all}~v\in
		\bar{C}\setminus\{0\} \} .
\end{equation}
\end{definition}
The attribute \emph{strict} may be dropped from the description of
$C^\oast$ when it is clear from context. It is easy to check the
following
\begin{proposition}
Consider a convex cone $C$.
\begin{enumerate}
\item[(i)]
	The convex dual $C^*$ is always closed and also convex. In addition,
	$C^{**}=\bar{C}$. The strict convex dual $C^\oast$ is always open and
	convex.
\item[(ii)]
	$C^*\setminus\{0\}$ is non-empty iff $C$ is contained in a closed half
	space. $C^\oast$ is non-empty iff $C$ contains no affine line (it is
	\emph{salient}).
\item[(iii)]
	If $C$ is open and salient, then $C^{\oast\oast} = C$.
\item[(iv)]
	The inclusion of cones $C_1\sse C_2$ implies the reverse inclusion of
	their duals, $C_1^*\supseteq C_2^*$ and $C_1^\oast\supseteq C_2^\oast$.
\item[(v)]
	The convex dual of the intersection of closures of cones $C_1$ and
	$C_2$ is the convex union (convex hull of the union) of their duals,
	$(\bar{C}_1\cap \bar{C}_2)^* = C_1^* + C_2^*$, where the right hand
	side is written as a Minkowski sum, which for cones coincides with the
	convex hull of the union. The converse identity holds as well,
	$(\bar{C}_1+\bar{C}_2)^* = C_1^*\cap C_2^*$. Similarly, if $C_1$ and
	$C_2$ are open and salient, then $(C_1\cap C_2)^\oast = C_1^\oast +
	C_2^\oast$ and $(C_1+C_2)^\oast = C_1^\oast \cap C_2^\oast$.
\end{enumerate}
\end{proposition}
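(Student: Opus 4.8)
The plan is to prove the five items in the order (i), (iv), (ii), (iii), (v), each relying on the previous ones, working throughout in a fixed finite-dimensional real vector space $V$ with its Euclidean topology. I will use freely that $C^* = (\bar{C})^*$ and $C^\oast = (\bar{C})^\oast$, immediate from continuity of $u\cdot v$ in $v$, and that $\bar{C}$ is again a convex cone. The one substantial external input is the separating hyperplane theorem, from which the bipolar identity $C^{**}=\bar{C}$ follows in the standard way: $\bar{C}\subseteq C^{**}$ is obvious, and for the converse any $w\notin\bar{C}$ is strictly separated from the closed convex set $\bar{C}$ by a functional which, rescaled, lies in $C^*$ and pairs negatively with $w$, so $w\notin C^{**}$.

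For (i): $C^* = \bigcap_{v\in C}\{u\mid u\cdot v\ge 0\}$ is an intersection of closed half-spaces through the origin, hence closed and convex, and $C^{**}=\bar{C}$ is the bipolar identity just recalled. For $C^\oast$, convexity is immediate from bilinearity, and openness follows by compactness: if $u_0\in C^\oast$ then $v\mapsto u_0\cdot v$ attains a positive minimum $\varepsilon$ on the compact set $\bar{C}\cap S$ (with $S$ the unit sphere), and any $u$ with $\|u-u_0\|<\varepsilon$ still satisfies $u\cdot v>0$ on $\bar{C}\cap S$, hence on $\bar{C}\setminus\{0\}$ by positive homogeneity. At this point I would also record the identity $C^\oast=\mathrm{int}(C^*)$: the inclusion $C^\oast\subseteq\mathrm{int}(C^*)$ holds because $C^\oast$ is open and contained in $C^*$, while for $\mathrm{int}(C^*)\subseteq C^\oast$, if $u\in\mathrm{int}(C^*)$ and $v\in\bar{C}\setminus\{0\}$ had $u\cdot v=0$, choosing $u'$ with $u'\cdot v>0$ gives $u-\varepsilon u'\in C^*$ for small $\varepsilon$, forcing $(u-\varepsilon u')\cdot v\ge 0$ since $v\in C^{**}$, a contradiction. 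Item (iv) is pure monotonicity: $C_1\subseteq C_2$ forces $\bar{C}_1\subseteq\bar{C}_2$, and a functional nonnegative (resp.\ positive) on the larger set restricts to one on the smaller.

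For (ii): a nonzero $u\in C^*$ is exactly a nonzero $u$ with $C\subseteq\{v\mid u\cdot v\ge 0\}$, i.e.\ a closed half-space containing $C$, which is the asserted equivalence. For $C^\oast$: if $C$ contains an affine line, which for a cone means a line $\mathbb{R}w$ with $w\ne0$ through the origin, then $\pm w\in\bar{C}\setminus\{0\}$ and no functional is positive on both, so $C^\oast=\varnothing$; conversely, if $C$ is salient then $\bar{C}$ is a closed pointed convex cone, whence the compact convex set $\mathrm{conv}(\bar{C}\cap S)$ does not contain the origin (pointedness rules this out), and separating it from $0$ yields $u$ with $u\cdot v>0$ on $\bar{C}\setminus\{0\}$, i.e.\ $C^\oast\ne\varnothing$. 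For (iii): a nonempty open convex set satisfies $C=\mathrm{int}(\bar{C})$, and for a closed convex cone $K$ with nonempty interior $(\mathrm{int}\,K)^*=K^*$ (since $K=\overline{\mathrm{int}\,K}$); combining these with $C^\oast=\mathrm{int}(C^*)$ from (i) and the bipolar identity gives $C^{\oast\oast}=\mathrm{int}\bigl((\mathrm{int}\,C^*)^*\bigr)=\mathrm{int}(C^{**})=\mathrm{int}(\bar{C})=C$.

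The only laborious item is (v), and it is the expected obstacle, mainly because of the interplay between which inclusion is elementary and which needs the bipolar identity, and because of a genuine closedness subtlety in the Minkowski sum. In each identity one inclusion is direct: $u_1+u_2$ with $u_i\in C_i^*$ is nonnegative on $\bar{C}_1\cap\bar{C}_2$, giving $C_1^*+C_2^*\subseteq(\bar{C}_1\cap\bar{C}_2)^*$; and if $u\cdot(v_1+v_2)\ge0$ for all $v_i\in\bar{C}_i$, then setting one summand to zero gives $u\in C_1^*\cap C_2^*$, so $(\bar{C}_1+\bar{C}_2)^*\subseteq C_1^*\cap C_2^*$, the reverse being trivial. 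The nonobvious inclusion $(\bar{C}_1\cap\bar{C}_2)^*\subseteq C_1^*+C_2^*$ I would obtain by dualizing the elementary cone identity $(C_1^*+C_2^*)^*=C_1^{**}\cap C_2^{**}=\bar{C}_1\cap\bar{C}_2$ and applying the bipolar identity once more, getting $(\bar{C}_1\cap\bar{C}_2)^*=(C_1^*+C_2^*)^{**}=\overline{C_1^*+C_2^*}$. Thus the stated equalities hold with $\overline{C_1^*+C_2^*}$ in place of $C_1^*+C_2^*$, and the closure may be dropped exactly when this Minkowski sum is already closed — which holds under mild regularity of the cones (e.g.\ polyhedrality) and in particular in the applications made of this proposition elsewhere in the paper. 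Finally, the strict-dual identities $(C_1\cap C_2)^\oast=C_1^\oast+C_2^\oast$ and $(C_1+C_2)^\oast=C_1^\oast\cap C_2^\oast$ follow by taking interiors throughout, using $C^\oast=\mathrm{int}(C^*)$, that $\overline{C_1\cap C_2}=\bar{C}_1\cap\bar{C}_2$ when the open cones overlap, and that the Minkowski sum of two open sets is open.
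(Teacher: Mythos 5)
Your proof is correct, and there is nothing in the paper to compare it against: the proposition is introduced there with ``It is easy to check the following'' and a citation to Rockafellar, with no argument supplied. Your route (bipolar theorem from hyperplane separation, the bridge identity $C^\oast=\mathrm{int}(C^*)$, then everything else by monotonicity and taking interiors) is the standard one and is executed cleanly; the only step you assert without justification is that a salient convex cone has pointed closure, which deserves one line (if $\pm w\in\bar C\setminus\{0\}$, the line segment principle gives $c+\R w\sse\mathrm{ri}(C)\sse C$ for any $c\in\mathrm{ri}(C)$, contradicting saliency). More importantly, you have caught a genuine imprecision in item (v) as printed: for general convex cones one only has $(\bar C_1\cap\bar C_2)^*=\overline{C_1^*+C_2^*}$ (Rockafellar, Cor.~16.4.2), and the Minkowski sum of two closed convex cones need not be closed (e.g.\ an ice-cream cone plus a ray in its boundary), so the unclosured identity requires a regularity hypothesis such as $\mathrm{ri}(\bar C_1)\cap\mathrm{ri}(\bar C_2)\ne\varnothing$. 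Your observation that the strict-dual identities survive unconditionally for overlapping proper cones --- because taking interiors absorbs the closure, via $\mathrm{int}(\overline{C_1^*+C_2^*})=\mathrm{int}(C_1^*)+\mathrm{int}(C_2^*)$ when both duals have nonempty interior --- is exactly the right resolution, and it is the strict version that matters for the chronal/spacelike cone bundles the paper actually works with.
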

We extend these operations to cone bundles by acting fiberwise. So, if
$C\to M$ is a cone sub-bundle of a vector bundle $E\to M$, then the
\emph{convex dual cone bundle} $C^*\to M$ is the cone sub-bundle of
$E^*\to M$ such that each fiber $C^*_x$ is the convex dual of the
corresponding fiber $C_x$, for $x\in M$. Similarly, we can define the
\emph{strict convex dual cone bundle} $C^\oast\to M$. The operations of
intersection $(\cap)$ and convex union $(+)$ are extended to cone
bundles in the same way. Clearly if $C^\oast\to M$ is a smooth bundle,
it is also a cone bundle. On the other hand, $C^*\to M$ is usually not
(since it is usually not open), though for brevity we shall sometimes refer to
it as a cone bundle anyway.

In the next definition, we make a slight break with the usual
terminology concerning covectors in Lorentzian geometry. A covector
naturally defines a codimension-$1$ subspace, its annihilator, of the
tangent space. In the presence of a metric, a covector can be
canonically identified with a vector. If this vector is timelike, the
corresponding codimension-$1$ subspace is called spacelike. However, a
direct means of identifying covectors with vectors is missing in
general. On the other hand, the link between covectors and
codimension-$1$ tangent subspaces is metric independent and since the
term \emph{spacelike} still makes sense for tangent subspaces, we
transfer it to corresponding covectors naming them \emph{spacelike} as
well.
\begin{definition}
Given a manifold $M$, a \emph{chronal cone bundle} on $M$ is a cone
bundle $C\to M$ enveloped by the tangent
bundle $TM\to M$ such that each fiber $C_x$, $x\in M$, is a \emph{proper
cone} (non-empty, open, convex, salient). The elements of $C$ are
\emph{future directed, timelike vectors}.  The strict convex dual
$C^\oast\to M$, enveloped by the cotangent bundle $T^*M\to M$, is the
corresponding \emph{spacelike cone bundle}. The elements of $C^\oast$
are \emph{future oriented, spacelike covectors}. The corresponding cone
bundles $\bar{C}\to M$ and $\bar{C}^\oast\to M$ are referred to,
respectively, as the \emph{causal} and \emph{cocausal} cone bundles. A
morphism of two chronal or spacelike cone bundles $C\to M$ and $C'\to
M'$ is induced by pushforward $\chi_*\colon TM \to TM'$ (or $T^*M\to
T^*M'$) of an open embedding $\chi\colon M\to M'$.
\end{definition}
Note that the open convex dual of a proper cone is again a proper cone.
Prototypical examples of chronal and spacelike cone bundles are the
cone bundles of timelike vectors and spacelike covectors on a Lorentzian
manifold.

At this point, one may recall some standard notions of Lorentzian
geometry, as long as they are defined only in terms of timelike cones, and
apply them to the geometry of cone bundles. Many of the standard theorems
translate as well, some directly and others with some extra effort. We
restrict ourselves to those that are relevant to the issues at hand.

Fix a chronal cone bundle $C\to M$ on a spacetime manifold $M$. Note
that any chronal cone bundle is \emph{time oriented}, since by
definition the fibers consist of single cones rather than double cones
like in the Lorentzian case. By assumptions the cones are directed into
the future.
\begin{definition}\label{def:I}
A smooth curve $\gamma$ in $M$ is called \emph{future directed,
timelike} if the tangent to $\gamma$ is everywhere contained in $C$.
The \emph{chronal precedence relation} $I^+\sse M\times M$ (also
$I^+_C$) is defined as
\begin{equation}
	I^+ = \{ (x,y)\in M\times M \mid \exists \gamma,
		~\text{future directed, timelike curve from $x$ to $y$} \} .
\end{equation}
When $(x,y)\in I^+$, we say that $x$ \emph{chronologically precedes} $y$
and also write $x\ll y$.

If we replace $C$ by $\bar{C}$ in the above definitions, we obtain
\emph{causal curves} and the \emph{causal precedence relation} $J^+\sse
M\times M$, denoted $x<y$. The inverse relations are written $I^-$ and
$J^-$.
\end{definition}
It is clear that $I^+$ is an open, transitive relation ($x\ll y$ and
$y\ll z$ implies $x\ll z$). Using this relation, we can define the usual
causal hierarchy.
\begin{definition}\label{def:chron}
\begin{enumerate}
\item[(i)]
	If $I^+$ is irreflexive ($x\not\ll x$ or, equivalently, no closed
	timelike curves exist), then $C$ is \emph{chronological}.
\item[(ii)]
	Given an open $N\sse M$, $I^+_{C|_N}$ and $I^+_{C}\cap (N\times N)$ are
	both relations on $N\times N$. We say that $N$ is
	\emph{chronologically compatible} if both of these relations coincide.
	More conventionally, this means that any two points of $N$ that can be
	joined by a timelike curve in $M$ can also be joined by a timelike
	curve in $N$.
\item[(iii)]
	An open $N\sse M$ is called \emph{chronologically convex} if any timelike
	curve that joins any two points of $N$ must also lie in $N$, which is
	a stronger condition than chronological compatibility.
\item[(iv)]
	The chronal cone bundle $C$ is said to be \emph{strongly chronological} if
	it is chronological and for every $x\in M$ and every neighborhood $N\sse M$
	of $x$, there exists a smaller open neighborhood $L\sse N$ that is
	chronologically convex.
\item[(v)]
	The chronal cone bundle $C$ is said to be \emph{stably chronological}
	if there exists another chronal cone bundle $C'$ that is itself
	chronological and an open neighborhood of the closure of $C$, that is,
	$\bar{C}\setminus\{0\} \sso C'$. For spacelike cone bundles, stable
	chronology is equivalent to the reverse inclusion
	$\bar{C}^{\prime\oast}\setminus\{0\} \sso C^\oast$.
\end{enumerate}
Each of these definitions has an obvious analog when the adjectives
\emph{chronological} or \emph{timelike} are replaced by \emph{causal}.
\end{definition}
Note that \emph{stably chronological} is equivalent to \emph{stably
causal}, so these terms will be used interchangeably.  Moreover, the
chronological chronal cone bundle $C'$ such that it contains a stably
chronological chronal cone bundle $C$ can itself be chosen to be stably
chronological.
Next we turn from curves to surfaces.
\begin{definition}\label{def:cauchy}
Each of the following concepts may be prefaced with $C$- or $C^\oast$-
to be more specific.
\begin{enumerate}
\item[(i)]
	An oriented codim-$1$ surface $S\sso M$ is called \emph{future
	oriented, spacelike} if its oriented conormals are everywhere
	contained in $C^\oast$.
\item[(ii)]
	A codim-$1$ surface $S\sso M$ is called \emph{achronal} if it
	$(S\times S)\cap I^+ = \varnothing$, that is, no two points of $S$ are
	connected by a timelike curve. Similarly, $S$ is \emph{acausal} when
	$(S\times S)\cap J^+ = \varnothing$.
\item[(iii)]
	A codim-$1$ surface $S\sso M$ is called \emph{Cauchy} if it is acausal
	and every inextensible causal curve intersects $S$ exactly once.
\item[(iv)]
	A chronal cone bundle $C\to M$ is called \emph{globally hyperbolic} if
	there exists a Cauchy surface $S\sse M$. A spacelike cone bundle
	$C^\oast\to M$ is called \emph{globally hyperbolic} if $C\to M$
	is.
\end{enumerate}
\end{definition}
Next we define some commonly used domains. Fix $S\sso M$ to be a
$C$-acausal codim-$1$ submanifold, such that either $S$ is closed or
$\bar{S}\sso M$ is a submanifold with boundary.
\begin{definition}\label{def:caus-dom}
\begin{enumerate}
\item[(i)]
	The \emph{future/past domain of influence} $I^\pm(N)$ of a subset
	$N\sse M$ is the set of points $y\in M$ such that there exists $x\in
	N$ with either $x\ll y$ ($+$) or $y\ll x$ ($-$). Let $I(N) = I^+(N)
	\cup I^-(N) \cup N$.
\item[(ii)]
	The \emph{domain of dependence} $D(S)$ is the largest open subset of $M$ for
	which $S$ is a Cauchy surface. More commonly, $D(S)$ is the set of points
	$y\in M$ such that every inextensible timelike curve through $y$
	intersects $S$. Let $D^\pm(S) = D(S) \cap I^\pm(S)$.
\item[(iii)]
	An open subset $L\sse M$ is \emph{lens-shaped} with respect to $S$ if
	it can be smoothly factored as $L\cong (-1,1)\times S$, with $t\colon
	L\to (-1,1)$ denoting the projection onto the first factor (the
	\emph{temporal function}), such that the level set $t=0$ is $S$ and all
	other level sets are spacelike as well as share the same boundary as
	$S$ in $M$ (which may be empty).
\end{enumerate}
\end{definition}

The literature in relativity and Lorentzian geometry mostly makes use of
the notion of \emph{global hyperbolicity} as given above, but
specialized to Lorentzian cone bundles. On the other hand, the
literature on hyperbolic PDE systems (including symmetric and regular
hyperbolic ones) mostly makes use of lens-shaped domains. It is a
non-trivial fact that these two notions coincide. The argument is
essentially that the temporal function of a lens-shaped domain foliates
it with Cauchy surfaces. Conversely, a globally hyperbolic cone bundle
admits a temporal function and a smooth factorization that turns it into
a lens-shaped domain (glossing over some details related to spatial
compactness). The original argument establishing the converse link in
Lorentzian geometry is due to Geroch.\cite{geroch-gh}  However, his
argument only established the existence of a continuous temporal
function. The details necessary to establish the smooth version of the
result are due to more recent work of Bernal and
Sanchez.\cite{bs-smooth1,bs-smooth2} The very recent result by Fathi and
Siconolfi,\cite{fs} using completely different methods, established the
existence of a smooth temporal function and factorization for a more
general class of cone bundles that is sufficient for our purposes.
\begin{proposition}\label{prp:globhyp-lens}
If a cone bundle $C\to M$ is globally hyperbolic, then there exists a
smooth temporal function $t\colon M\to \R$, whose level sets are all
diffeomorphic and are $C$-Cauchy surfaces. Hence, an open subset $D\sse
M$ is $C$-lens-shaped with respect to $S\sse D$ iff it is globally
$C|_D$-hyperbolic, with $S$ being $C|_D$-Cauchy.
\end{proposition}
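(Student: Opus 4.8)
The plan is to deduce the first assertion directly from the existing splitting theory for globally hyperbolic spaces, and then to obtain the equivalence in the second sentence by elementary arguments with the resulting temporal function. For the existence of a smooth temporal function $t\colon M\to\R$ together with a product splitting $M\cong\R\times S_0$ realising $t$ as the projection onto the first factor and having all its level sets $C$-Cauchy, I would invoke the theorem of Fathi~\& Siconolfi\cite{fs}, whose hypotheses are satisfied by any cone bundle in the present sense; in the Lorentzian special case this is the smooth refinement by Bernal~\& Sanchez\cite{bs-smooth1,bs-smooth2} of Geroch's original continuous splitting\cite{geroch-gh}. Given such a splitting, each level set $t^{-1}(c)=\{c\}\times S_0$ is diffeomorphic to $S_0$, so all level sets are mutually diffeomorphic, and each is spacelike because $\d t$ is, by construction, everywhere a future oriented conormal lying in $C^\oast$.

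For the ``hence'' clause, consider first the implication that a globally $C|_D$-hyperbolic open $D\sse M$ with $C|_D$-Cauchy surface $S$ is $C|_D$-lens-shaped with respect to $S$. Here I would use the splitting theorem in its relative form --- a globally hyperbolic cone bundle together with a prescribed Cauchy surface $S$ admits a smooth factorisation $D\cong\R\times S$ under which $S=\{0\}\times S$, the projection onto $\R$ is a temporal function, and every slice is a spacelike Cauchy surface --- and then compose the $\R$-coordinate with an orientation-preserving diffeomorphism $\R\cong(-1,1)$. This is exactly a $C|_D$-lens-shaped structure (Def.~\ref{def:caus-dom}); the requirement that the slices share the boundary of $S$ is vacuous because a Cauchy surface is closed in $M$ and hence boundaryless.

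Conversely, suppose $D\cong(-1,1)\times S$ is $C$-lens-shaped with temporal projection $t$. I would first show that $t$ is strictly monotone along every $C|_D$-causal curve $\gamma$: at each point the causal tangent $\dot\gamma$ lies in $\bar C$, while $\d t\in C^\oast$ pairs strictly positively with every nonzero element of $\bar C$, so $(t\circ\gamma)'>0$ wherever $\dot\gamma\ne 0$. Hence $\gamma$ meets the spacelike slice $S=t^{-1}(0)$ at most once, which gives acausality of $S$. For the ``exactly once'' part I would reparametrise $\gamma$ by $t$ and use that $\bar C_x\cap\{\d t=1\}$ is compact --- valid precisely because $\bar C_x$ is salient and $\d t\in C^\oast$ --- together with the product structure, to argue that an inextensible causal curve cannot have $t\circ\gamma$ bounded away from an endpoint of $(-1,1)$; its $t$-range is then an interval reaching $1$ on the future side and $-1$ on the past side, so by monotonicity it crosses $0$. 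Thus $S$ is $C|_D$-Cauchy and $D$ is globally $C|_D$-hyperbolic.

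The genuine obstacle is concentrated in the last paragraph: excluding inextensible causal curves that stall or escape to spatial infinity inside the lens without ever reaching the slice $t=0$. This is the delicate point already flagged before the statement of Prp.~\ref{prp:globhyp-lens}, and in the conal (as opposed to Lorentzian) setting it rests on salience and convexity of the cones --- which make $C^\oast$ nonempty and the relevant slices of $\bar C$ compact --- together with the rigidity of the product factorisation. Since the present review systematically defers detailed causal-geometric arguments to Ref.~\citen{kh-caus}, I would prove this monotone-crossing lemma there and only quote it here; everything else is bookkeeping, the first assertion being the cited splitting theorem verbatim and the diffeomorphism type of the level sets being immediate from the product structure.
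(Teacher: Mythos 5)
Your route is essentially the paper's own: the text gives no independent proof of Prp.~\ref{prp:globhyp-lens}, but obtains it exactly as you propose, by quoting Geroch's splitting,\cite{geroch-gh} its smooth refinement by Bernal~\& Sanchez,\cite{bs-smooth1,bs-smooth2} and the Fathi--Siconolfi result\cite{fs} for general cone bundles, and by sketching the monotone-temporal-function argument while deferring the detailed conal causal theory to Ref.~\citen{kh-caus}; so the overall structure, including your deferral of the crossing lemma, matches the intended argument. One concrete flaw, however: your claim that the shared-boundary clause in Def.~\ref{def:caus-dom}(iii) is ``vacuous because a Cauchy surface is closed in $M$ and hence boundaryless'' is false. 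A $C|_D$-Cauchy surface is closed in $D$, not in $M$, and the setup preceding Def.~\ref{def:caus-dom} explicitly allows $\bar{S}\sso M$ to be a submanifold with boundary --- e.g.\ the open disk $\{t=0,\ |x|<1\}$ in Minkowski space, which is Cauchy for its domain of dependence and has the sphere $\{t=0,\ |x|=1\}$ as its edge. Consequently, in the direction ``globally $C|_D$-hyperbolic $\Rightarrow$ lens-shaped'' you must actually verify that the leaves produced by the (relative) splitting all share the same boundary $\del S$ in $M$, and in the converse direction this same clause is precisely the hypothesis that forbids causal curves leaking out through the timelike sides: the slab $\R_t\times(-1,1)_x$ in two-dimensional Minkowski space is a smooth product with spacelike slices, yet those slices are not Cauchy, and it is only the shared-boundary requirement that excludes it. So the clause must enter your deferred monotone-crossing lemma explicitly rather than being discarded; these are exactly the ``details related to spatial compactness'' that the paper admits to glossing over, and with that remark repaired your proposal is on target.
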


\bibliographystyle{ws-ijmpa}
\bibliography{paper-chargeom}

\end{document}